\tikzstyle{vertex}=[circle, draw, inner sep=0pt, minimum size=4pt, fill = black]
\newcommand{\multiline}[1]{%
  \begin{tabularx}{\dimexpr\linewidth-\ALG@thistlm}[t]{@{}X@{}}
    #1
  \end{tabularx}
}
\def\BState{\State\hskip-\ALG@thistlm}
\newcommand{\ceil}[1]{\lceil #1 \rceil}
\newcommand{\floor}[1]{\lfloor #1 \rfloor}
\titlespacing{\section}{0pt}{3ex}{2ex}
\titlespacing{\subsection}{0pt}{2ex}{1ex}
\titlespacing{\subsubsection}{0pt}{0.5ex}{0ex}
\newtheorem{theorem}{Theorem}[section]
\newtheorem{corollary}{Corollary}[section]
\newenvironment{proofof}[1]{\emph{Proof of #1.  }}{\hfill$\Box$}
\newtheorem{definition}{Definition}[section]
\newtheorem{lemma}{Lemma}[section]
\newtheorem{claim}{Claim}[section]
\newtheorem{hypothesis}{Hypothesis}
\newtheorem{observation}{Observation}[section]
\let\c@fconjecture\c@conjecture
\let\c@fconj\c@conj
\crefname{theorem}{Theorem}{Theorems}
\Crefname{lemma}{Lemma}{Lemmas}
\Crefname{claim}{Claim}{Claims}
\Crefname{observation}{Observation}{Observations}
\Crefname{algorithm}{Algorithm}{Algorithms}
\Crefname{myalgctr}{Algorithm}{Algorithms}
\Crefname{challenge}{Challenge}{Challenges}
\Crefname{figure}{Figure}{Figures}
\Crefname{hypothesis}{Hypothesis}{Hypotheses}
\def \eps {\varepsilon}
\newcommand{\ignore}[1]{}
\renewcommand{\epsilon}{\varepsilon}
\title{Approximation Algorithms and Hardness for $n$-Pairs Shortest Paths and All-Nodes Shortest Cycles}
 \author{Mina Dalirrooyfard\thanks{Email: \texttt{minad@mit.edu}. Partially supported by an Akamai Fellowship and an NSF CAREER Award.}\\MIT \and  Ce Jin\thanks{Email: \texttt{cejin@mit.edu}. Partially supported by NSF Grant CCF-2129139.}\\MIT \and Virginia Vassilevska Williams\thanks{Email: \texttt{virgi@mit.edu}. Supported by an NSF CAREER Award, NSF Grant CCF-2129139, a Google Research Fellowship and a Sloan Research Fellowship.}\\MIT \and  Nicole Wein\thanks{Email: \texttt{nicole.wein@rutgers.edu}. Supported by a grant to DIMACS from the Simons Foundation (820931). Much of this work was done while the author was at MIT.} \\DIMACS }
\date{}
\begin{document}

\maketitle
\thispagestyle{empty}

\begin{abstract}
We study the approximability of two related problems on graphs with $n$ nodes and $m$ edges: \emph{$n$-Pairs Shortest Paths ($n$-PSP)}, where the goal is to find a shortest path between $O(n)$ prespecified pairs, and \emph{All Node Shortest Cycles (ANSC)}, where the goal is to find the shortest cycle passing through each node. Approximate $n$-PSP has been previously studied, mostly in the context of \emph{distance oracles}. We ask the question of whether approximate $n$-PSP can be solved faster than by using distance oracles or All Pair Shortest Paths (APSP). ANSC has also been studied previously, but only in terms of exact algorithms, rather than approximation.

We provide a thorough study of the approximability of $n$-PSP and ANSC, providing a wide array of algorithms and conditional lower bounds that trade off between running time and approximation ratio. 

A highlight of our conditional lower bounds results is that for any integer $k\ge 1$, under the combinatorial $4k$-clique hypothesis, there is no combinatorial algorithm for unweighted undirected $n$-PSP with approximation ratio better than $1+1/k$ that runs in $O(m^{2-2/(k+1)}n^{1/(k+1)-\epsilon})$ time. This nearly matches an upper bound implied by the result of Agarwal (2014).

Our algorithms use a surprisingly wide range of techniques, including techniques from the girth problem, distance oracles, approximate APSP, spanners, fault-tolerant spanners, and link-cut trees. 

A highlight of our algorithmic results is that one can solve both $n$-PSP and ANSC in $\tilde O(m+ n^{3/2+\epsilon})$ time\footnote{$\tilde{O}$ hides sub-polynomial factors.}
with approximation factor $2+\epsilon$ (and additive error that is function of $\epsilon$), for any constant $\epsilon>0$.
For $n$-PSP, our conditional lower bounds imply that this approximation ratio is nearly optimal for any subquadratic-time combinatorial algorithm. We further extend these algorithms for $n$-PSP and ANSC to obtain a time/accuracy trade-off that includes near-linear time algorithms. 

Additionally, for ANSC, for all integers $k\geq 1$, we extend the very recent almost $k$-approximation algorithm for the girth problem that works in $\tilde{O}(n^{1+1/k})$ time [Kadria et al.\ SODA'22], and obtain an almost $k$-approximation algorithm for ANSC in $\tilde{O}(mn^{1/k})$ time. 
\end{abstract}
\clearpage
\pagenumbering{arabic}
\section{Introduction}

The focus of this paper is two basic problems concerning distances in graphs: the \emph{$n$-Pairs Shortest Paths} problem and the \emph{All-Nodes Shortest Cycles} problem.\\

\noindent \emph{$n$-Pairs Shortest Paths ($n$-PSP).} Given a (weighted or unweighted, directed or undirected) graph with $n$ nodes and $m$ edges, and a set of pairs of vertices $(s_i,t_i)$ for $1\leq i\leq O(n)$, compute the distance from $s_i$ to $t_i$ for every $i$. (For ease of notation, we denote this problem $n$-PSP even though the number of pairs is not exactly $n$, rather it is $O(n)$.)\\

\noindent\emph{All-Nodes Shortest Cycles (ANSC).} Given a (weighted or unweighted, directed or undirected) graph with $n$ nodes and $m$ edges, compute for each vertex $v$, the length of the shortest cycle containing $v$, denoted $SC(v)$.\\

As we will show, these two problems are very similar in some ways and fundamentally different in other ways. We first provide some background for the $n$-PSP problem.

\paragraph{The $n$-PSP problem.}

The $n$-PSP problem was first explicitly studied in the 90s. Aingworth, Chekuri, Indyk and Motwani \cite{DBLP:journals/siamcomp/AingworthCIM99} obtained an additive 2-approximation in time $\tilde{O}(n^2)$. The other early work on this problem has been subsequently subsumed by later results for distance oracles \cite{DBLP:journals/siamcomp/AwerbuchBCP98,DBLP:journals/siamcomp/Cohen98}. 

As far as we are aware, the $n$-PSP problem has not been explicitly studied since the 90s. However, other distance-related problems have been studied in the setting where one only cares about the distances between prespecified vertex pairs, such as pairwise distance preservers, pairwise spanners, which were first studied by Coppersmith and Elkin \cite{DBLP:journals/siamdm/CoppersmithE06} and extensively studied thereafter, as well as pairwise reachability preservers \cite{DBLP:conf/soda/AbboudB18}.

Now, we will provide some motivation for studying the $n$-PSP problem. Perhaps the most classical distance problem is All-Pairs Shortest Paths (APSP). APSP can be solved in directed graphs with non-negative edge weights in time $\tilde{O}(mn)$ simply by running Dijkstra's algorithm from each vertex. For undirected unweighted graphs, APSP can be solved using matrix multiplication in time $\tilde{O}(n^{\omega})$ \cite{seidel}, where $2\le \omega<2.373$ is the matrix multiplication exponent \cite{alman2021refined}. For directed unweighted graphs, APSP can be solved in time $\tilde O(n^{2.529})$  \cite{zwick2002all} (the bound can be slightly improved by plugging in a better rectangular matrix multiplication \cite{legallurr}).
For very large graphs, these running times can be prohibitive; even just writing down the output of size $n^2$ can be too slow. 

For many applications in both theory and practice, computing \emph{all} of the distances in the graph is overkill, and instead we only care about \emph{some} of the distances (e.g. multi-source multi-sink routing \cite{DBLP:conf/networking/LiuZSF07}, many-to-many shortest paths \cite{DBLP:conf/alenex/KnoppSSSW07}, etc.). We ask a question that has been asked many times before: 
\begin{center}\emph{Can we compute some distances in a graph faster than computing all distances?}\end{center}

This question has been approached from various angles:
\begin{itemize}
\item The simplest approach to this question is perhaps to compute all distances from a single source. The Single-Source Shortest Paths (SSSP) problem can indeed be solved much faster than APSP (by Dijkstra's algorithm in $O(m+n\log n)$ time), but has the obvious drawback that all of the distances computed have the same source. 

\item Another approach towards this question, is to compute only the \emph{extremal} distances in the graph, that is, the diameter, radius, and eccentricities (the largest distance from each vertex in the graph). For these problems, there are conditional lower bounds that rule out subquadratic time \emph{exact} algorithms for sparse graphs \cite{DBLP:conf/stoc/RodittyW13}, but there has been extensive work on \emph{approximating} these parameters quickly (see e.g. \cite{DBLP:journals/corr/abs-2106-06026}).  This approach has the drawback that it only concerns extremal distances, and one might wish to compute or approximate an \emph{arbitrary} set of distances.

\item Another approach towards this question is to construct a \emph{distance oracle}, a data structure with subquadratic space that allows one to quickly query (approximate) distances. Distance oracles are designed for the setting where we wish to know some arbitrary set of distances, but we do not know a priori \emph{which} distances. Distance oracles have been extensively studied, and various trade-offs between approximation ratio and running time are known (See \cref{sec:prior} for more detailed discussion of distance oracles).
\end{itemize}

In contrast to distance oracles, we ask the question: what if we \emph{do} know a priori which distances we wish to compute? The $n$-PSP problem is precisely this problem, where we have $O(n)$ prespecified vertex pairs. We ask the question of whether we can achieve algorithms for $n$-PSP that are faster than the algorithms directly implied by known distance oracles.

In this work, we will show that this question has different answers in different regimes. For example,
\begin{itemize}
  \item In the regime of $(1+1/k)$-approximations, we show that the $n$-PSP algorithm directly implied by Agarwal's distance oracle \cite{DBLP:conf/esa/Agarwal14} has nearly optimal running time, under the combinatorial $4k$-clique hypothesis. 
  \item For $(2+\eps,\beta)$-approximation\footnote{An $(\alpha,\beta)$ approximation algorithm means that the algorithm has multiplicative error $\alpha$ and additive error $\beta$.}, we show an $n$-PSP algorithm  that runs faster than directly applying the state-of-the-art distance oracle of Chechik and Zhang \cite{cz22}. 
\end{itemize}

Now, we turn our attention to ANSC.

\paragraph{The ANSC problem.}
The ANSC problem was first studied by Yuster \cite{DBLP:journals/ipl/Yuster11}, who gave a randomized algorithm for undirected graphs with integer weights from $1$ to $M$, in time $\tilde{O}(\sqrt{M}n^{(\omega+3)/2})$. Later, Sankowski and W\polhk{e}grzycki \cite{DBLP:journals/mst/SankowskiW19}, and independently Agarwal and Ramachandran \cite{DBLP:conf/stoc/AgarwalR18}, showed that for unweighted \emph{directed} graphs there is a deterministic $\tilde{O}(n^\omega)$ time algorithm. Agarwal and Ramachandran \cite{DBLP:conf/stoc/AgarwalR18} also gave a reduction from the \emph{Replacement Paths} problem to weighted directed ANSC. In the Replacement Paths problem, we are given a graph and a shortest path $P$ between two vertices $s$ and $t$, and the goal is to find for every edge $e \in P$, a shortest path from $s$ to $t$ that avoids $e$. The reduction of \cite{DBLP:conf/stoc/AgarwalR18} increases the edge weights by a factor of $n$, and was subsequently improved to preserve the range of edge weights by Chechik and Nechushtan \cite{DBLP:conf/icalp/ChechikN20}. 

Despite this prior work on the exact version of ANSC, as far as we know, we are the first to study the approximability of ANSC.

In this work, we show various algorithmic results for the ANSC problem. For example,
\begin{itemize}
  \item We show an almost $k$-approximation algorithm for ANSC with running time comparable to the best known $k$-approximation girth algorithm.
  \item We show a $(2+\eps,\beta)$-approximation algorithm for ANSC with subquadratic running time.
\end{itemize}
\subsection{Results for $n$-PSP and ANSC implied by prior work}\label{sec:prior}

We begin with two observations that relate $n$-PSP and ANSC. The proofs of these observations are in the appendix. The first observation is a reduction from exact $n$-PSP to exact ANSC in weighted graphs. The second observation is a reduction from ANSC to $n$-PSP in directed graphs that works for the approximation setting with any finite 
approximation factor. 

\begin{restatable}{observation}{obs}
\label{obs:pd-to-ansc-undir}
A $T(n,m)$-time algorithm solving weighted undirected ANSC exactly implies a $T(n,m)$-time algorithm for solving weighted undirected $n$-PSP exactly.
\end{restatable}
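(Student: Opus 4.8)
The plan is to reduce $n$-PSP to ANSC by, for each query pair $(s_i, t_i)$, adding a new edge of very large weight between $s_i$ and $t_i$, so that the shortest cycle through that new edge is forced to travel along the shortest $s_i$-$t_i$ path in the original graph, and then reading off the $n$-PSP answer from the shortest cycle value at an endpoint of the new edge. The subtlety is that we cannot do all $O(n)$ query pairs at once, because a shortest cycle through $s_i$ in the modified graph might use a different new edge (e.g.\ one incident to $s_i$ from another query pair $(s_i, t_j)$), or might not use any new edge at all (if $s_i$ lies on a short cycle of the original graph). So instead I would process the queries one (or a few) at a time, but argue that the total running time is still $T(n,m)$ up to constant factors by exploiting that we only ever add $O(1)$ edges and $O(1)$ vertices at a time.

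Here are the steps in order. First, fix a query pair $(s,t)=(s_i,t_i)$. Build a graph $G'$ from $G$ by adding two new vertices $s', t'$, an edge $\{s, s'\}$ of weight $0$, an edge $\{t, t'\}$ of weight $0$, and an edge $\{s', t'\}$ of weight $W$, where $W$ is chosen larger than any $s$-$t$ distance in $G$ (e.g.\ $W = n \cdot w_{\max} + 1$ where $w_{\max}$ is the maximum edge weight; if $G$ has no $s$-$t$ path at all we handle that separately). Second, observe that $SC(s')$ in $G'$ equals $W + d_G(s,t)$: any cycle through $s'$ must use both edges $\{s,s'\}$ and $\{s',t'\}$ (since $s'$ has degree $2$), hence must use $\{s',t'\}$ of weight $W$, then go from $t'$ to $t$ (weight $0$), then from $t$ back to $s$ in $G$ (cost $d_G(s,t)$), then the edge $\{s,s'\}$ (weight $0$); conversely this is a valid cycle. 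So $d_G(s,t) = SC_{G'}(s') - W$. Third, run the ANSC algorithm on $G'$ — which has $n+2$ vertices and $m+3$ edges — extract $SC(s')$, and recover $d_G(s,t)$. Fourth, repeat for all $O(n)$ query pairs, for a total running time of $O(n) \cdot T(n+2, m+3) = O(n \cdot T(n,m))$, which is \emph{not} $T(n,m)$.

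So the main obstacle — and the thing the real proof must do — is avoiding the $\Theta(n)$ blowup. The fix is to add \emph{all} the gadgets simultaneously but with pairwise distinct weights on the long edges, so that the shortest cycle through $s_i'$ is still forced through the specific long edge $\{s_i', t_i'\}$: give query pair $i$ a long edge of weight $W_i := i \cdot B + D$ where $B$ is a large ``spacing'' bound exceeding the diameter and $D$ exceeds $B$ times the number of pairs, so that a cycle through $s_i'$ using a \emph{different} long edge $\{s_j', t_j'\}$ would have to also traverse $\{s_i',t_i'\}$ or another long edge and hence be strictly more expensive than the intended cycle; one then checks that the unique cheapest cycle through $s_i'$ uses exactly edge $\{s_i',t_i'\}$ and has length $W_i + d_G(s_i,t_i)$. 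Since $s_i'$ still has degree $2$, and the only long edge incident to it is $\{s_i', t_i'\}$, this last point is actually immediate: every cycle through $s_i'$ uses $\{s_i',t_i'\}$, so $SC_{G'}(s_i') = W_i + d_G(s_i,t_i)$ regardless of the other gadgets, and no separate-weights argument is even needed. The resulting graph has $n + O(n) = O(n)$ vertices and $m + O(n)$ edges, so one ANSC call runs in $T(O(n), m + O(n)) = O(T(n,m))$ time (using that $T$ is at least linear and polynomially bounded), giving all answers at once. The only loose end is query pairs with no $s_i$-$t_i$ path: either detect these in $O(m)$ time by a preliminary connected-components / reachability pass, or note that in their gadget the cycle through $s_i'$ simply does not exist, so $SC_{G'}(s_i') = \infty$ correctly reports $d_G(s_i,t_i) = \infty$.
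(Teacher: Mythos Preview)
Your final construction is correct and is essentially the paper's approach: attach, for each pair $(s_i,t_i)$, a fresh degree-$2$ vertex whose only two edges force any cycle through it to traverse an $s_i$--$t_i$ path in $G$, with the heavy weight ensuring no gadget interferes with another. The paper's gadget is simply a bit leaner---one new vertex $v_i$ with two heavy edges $\{v_i,s_i\},\{v_i,t_i\}$ of weight $10nM$, giving $SC(v_i)=20nM+d_G(s_i,t_i)$---whereas you use two new vertices and three edges, but the idea and the single-call-to-ANSC conclusion are identical; your detours through the one-pair-at-a-time attempt and the distinct-$W_i$ scheme are unnecessary, as you yourself note.
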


\begin{restatable}{observation}{obss}
\label{obs:ansc-to-pd-dir}
A $T(n,m)$-time algorithm solving (unweighted) directed  $n$-PSP with any finite approximation factor $\alpha\ge 1$ implies a $T(n,m)$-time algorithm for solving (unweighted) directed ANSC with approximation factor $\alpha$. 
\end{restatable}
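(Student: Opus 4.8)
\emph{Proof proposal.} The plan is to use the standard vertex-splitting reduction that turns ``shortest cycle through $v$'' into a single shortest-path query, and then to verify that a purely multiplicative approximation survives this transformation. First I would build an auxiliary unweighted directed graph $G'$: replace each vertex $v$ of $G$ by two copies $v_{\mathrm{in}}$ and $v_{\mathrm{out}}$, add an edge $v_{\mathrm{in}}\to v_{\mathrm{out}}$, and for every edge $(u,v)$ of $G$ add an edge $u_{\mathrm{out}}\to v_{\mathrm{in}}$. Then $G'$ has $2n$ vertices and $m+n$ edges. The key structural fact is that a walk in $G'$ starting at $v_{\mathrm{out}}$ is forced to alternate $v_{\mathrm{out}},a^{(1)}_{\mathrm{in}},a^{(1)}_{\mathrm{out}},a^{(2)}_{\mathrm{in}},a^{(2)}_{\mathrm{out}},\dots$, since the only way to leave an ``in'' copy is its internal edge; hence a $v_{\mathrm{out}}\!\to\! v_{\mathrm{in}}$ walk of length $2c-1$ in $G'$ corresponds exactly to a closed walk $v\to a^{(1)}\to\cdots\to v$ of length $c$ in $G$, and conversely. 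Since a shortest closed walk through $v$ is a simple cycle, this yields $d_{G'}(v_{\mathrm{out}},v_{\mathrm{in}})=2\,SC(v)-1$, where both sides are $\infty$ precisely when $v$ lies on no cycle.

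Next I would invoke the given $n$-PSP algorithm on $G'$ with the $n$ pairs $\{(v_{\mathrm{out}},v_{\mathrm{in}}):v\in V\}$, obtaining values $\tilde d(v)$ with $d_{G'}(v_{\mathrm{out}},v_{\mathrm{in}})\le \tilde d(v)\le \alpha\, d_{G'}(v_{\mathrm{out}},v_{\mathrm{in}})$, and output $\widetilde{SC}(v):=(\tilde d(v)+1)/2$. Since $x\mapsto (x+1)/2$ is increasing, $\widetilde{SC}(v)\ge SC(v)$; and since $\alpha\ge 1$, $\tilde d(v)+1\le \alpha\, d_{G'}(v_{\mathrm{out}},v_{\mathrm{in}})+1\le \alpha\bigl(d_{G'}(v_{\mathrm{out}},v_{\mathrm{in}})+1\bigr)$, so $\widetilde{SC}(v)\le \alpha\, SC(v)$. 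This is exactly where the hypothesis ``finite $\alpha$'' is used: a finite-factor approximation returns $\tilde d(v)=\infty$ if and only if $d_{G'}(v_{\mathrm{out}},v_{\mathrm{in}})=\infty$, so the algorithm also correctly reports which vertices have $SC(v)=\infty$. The only call made is to $n$-PSP on a graph with $2n$ vertices and $m+n$ edges, plus $O(n)$ extra work, which is $O(T(n,m))$ under the usual regularity assumptions on $T$ (at least linear, polynomially bounded; and if one wants $m+n=O(m)$ one first strips off vertices of in- or out-degree $0$, which trivially have $SC(v)=\infty$).

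The reduction is essentially routine, so there is no real ``hard part''; the points needing attention are the parity/affine bookkeeping relating $d_{G'}$ to $SC$ (getting the $2c-1$ correspondence and the $(\tilde d+1)/2$ inverse right), the observation that the $\infty$ case is precisely what forces the ``finite approximation factor'' hypothesis, and the trivial normalization of the input size from $(2n,m+n)$ to $(n,m)$.
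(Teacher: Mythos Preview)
Your argument is correct. The node-splitting construction with internal $v_{\mathrm{in}}\to v_{\mathrm{out}}$ edges does give $d_{G'}(v_{\mathrm{out}},v_{\mathrm{in}})=2\,SC(v)-1$, and your affine inversion $(\tilde d+1)/2$ together with $\alpha\ge 1$ cleanly recovers an $\alpha$-approximation of $SC(v)$.

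The paper's reduction is different in detail and a little simpler: instead of in/out copies with an internal edge, it takes two full copies $v_1,v_2$ of each vertex and, for every original edge $(u,v)$, inserts all four edges $(u_i,v_j)$ with $i,j\in\{1,2\}$. In that graph one has the exact equality $d_{G'}(v_1,v_2)=SC(v)$, so the approximation factor transfers with no arithmetic whatsoever. The trade-off is that the paper's graph has $4m$ edges while yours has $m+n$; conversely, the paper avoids the $2c-1$ bookkeeping and the affine inverse entirely. Both reductions rely on the same underlying fact (a $v_1\!\to\!v_2$ path in $G'$ projects to a closed walk through $v$ in $G$, and the shortest closed walk is a cycle), and both use ``finite $\alpha$'' only to handle the $SC(v)=\infty$ case, exactly as you note.
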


One simple way to obtain approximation algorithms for directed or undirected $n$-PSP and directed ANSC is to use known algorithms for approximate All-Pairs Shortest Paths (APSP). The running times of these algorithms, however, will always be at least $\Omega(n^2)$ due to the size of the output of APSP.
If there is an algorithm for approximate APSP in $T(n,m)$ time for directed or undirected graphs with $n$ nodes and $m$ edges, then we can approximate directed or undirected (respectively) $n$-PSP in $O(n)+T(n,m)$ time with the same accuracy by looking up the $O(n)$ input pairs in the output of the APSP algorithm. We can approximate directed ANSC in $O(n^2)+T(n,m)$ time by computing $\min_u \hat{d}(v,u)+\hat{d}(u,v)$ for every $v$, where $\hat{d}(\cdot,\cdot)$ is the distance estimate that the APSP algorithm outputs. 
The approximation guarantee for ANSC will have the same multiplicative error  as APSP, and the additive error will double. Since the output of APSP is of size $\Theta(n^2)$, the second term in these running times is dominant and so we can approximate both directed or undirected $n$-PSP and directed ANSC in $O(T(n,m))$ time. We note that it is not clear how to get an algorithm for undirected ANSC directly from APSP, as Observation \ref{obs:ansc-to-pd-dir} works only for directed graphs. Next, we outline the known algorithms for APSP. 

We have already outlined the known algorithms for exact APSP. 
As for approximation algorithms for APSP, in directed or undirected 
graphs with non-negative edge weights, for any $\epsilon>0$, Zwick \cite{zwick2002all} gave a $(1+\epsilon)$ approximation time algorithm in time $O(\frac{n^\omega}{\epsilon}\log(W))$ where $W$ is the largest edge weight. 
In the undirected setting, APSP in graphs with integer weights in $[-W,W]$ can be solved in $O(n^{\omega}\log(W))$ \cite{shoshan1999all}. For any $\epsilon>0$, \cite{bringmann2019approximating} gives a $(1+\epsilon)$ approximation algorithm in  $O(\frac{n^\omega}{\epsilon}\rm{polylog}(\frac{n}{\epsilon}))$ time, improving upon Zwick's algorithm for large weights. There are many more algorithms for approximating APSP \cite{apsp1,apsp2,apsp3,apsp4,DBLP:journals/siamcomp/AingworthCIM99}, however since the focus of our paper is on subquadratic-time algorithms we do not describe them in detail.

Another simple way to obtain approximation algorithms for undirected $n$-PSP, is to use an approximate distance oracle (DO). A DO is a data structure that allows one to query distances. The parameters of interest in a DO are preprocessing time, query time, space, and (multiplicative and additive) approximation ratio. Given a DO with preprocessing time $p(m,n)$, query time $q(m,n)$, one can obtain an algorithm for $n$-PSP with the same approximation ratio, in time $p(m,n)+O(n)\cdot q(m,n)$ simply by querying all of the input pairs. Unlike algorithms for $n$-PSP that are based on APSP, algorithms based on DOs do not have an inherent running time of $\Omega(n^2)$.

We focus on DOs with subquadratic preprocessing time. For any integer $k\ge 2$, the Thorup-Zwick DO \cite{thorup2005approximate} has preprocessing time $O(kmn^{1/k})$, size $O(n^{1+1/k})$ with query time $O(k)$ and approximation factor $(2k-1)$. We can use this DO to obtain a $(2k-1)$-approximation algorithm for weighted undirected $n$-PSP in time $O(kmn^{1/k})$. Patrascu et al \cite{patrascu2012new} also extends the Thorup-Zwick DO to fractional values of $k$. 
Additionally, \cite{patrascu2010distance} gives a DO with preprocessing time $O(mn^{2/3})$ and constant query time that returns a path of length at most $2d+1$ when queried for a pair at distance $d$. This gives us a $(2,1)$-approximation algorithm in $O(mn^{2/3})$ time for $n$-PSP. On the other side of the time/accuracy trade-off, Agarwal \cite{DBLP:conf/esa/Agarwal14} gives a DO that yields for any integer $k\geq 1$, a $(1+\frac{1}{k})$-approximation algorithm for $n$-PSP in time $\tilde{O}(m^{2-2/(k+1)}n^{1/(k+1)})$, as well as a $(1+\frac{1}{k+0.5})$-approximation algorithm for $n$-PSP in time $\tilde{O}(m^{2-3/(k+2)}n^{2/(k+2)})$ (see \cref{sec:agrawal} for explanation). 
Additionally, there are DOs with processing times that have additive dependence between $n$ and $m$. For any integer $k\ge 1$, Wulff-Nilsen \cite{apsp1} gives a $(2k-1)$ approximate distance oracle with preprocessing time $O(\sqrt{k}m+n^{1+c/\sqrt{k}})$ for a constant $c=9+3\sqrt{13}$, and query time $O(k)$. This gives a $(2k-1)$ approximation algorithm for $n$-PSP in $O(\sqrt{k}m+n^{1+c/\sqrt{k}})$ time.
Very recently, Chechik and Zhang \cite{cz22} obtained a constant query time $(2+\eps,\beta)$-approximate distance oracle with subquadratic preprocessing time $\tilde O(m + n^{5/3+\eps})$, which immediately implies a $(2+\eps,\beta)$-approximate $n$-PSP algorithm in $\tilde O(m + n^{5/3+\eps})$ time.

Note that we cannot use Observation \ref{obs:pd-to-ansc-undir} to obtain an algorithm for ANSC since it only works in the exact setting, and it is not clear how to use distance oracles in general to solve ANSC in the undirected setting. 
Moreover, there are no non-trivial distance oracles for the directed setting \cite{thorup2005approximate}.

Thus, past work doesn't give any subquadratic approximation algorithms for ANSC in directed or undirected graphs. For undirected $n$-PSP we get the above results from distance oracles, but it is not clear if this is the best one can do.

From the lower bounds side, ANSC is closely related to Girth (the problem of finding the smallest cycle in the graph), and any conditional lower bounds for Girth immediately carry over to ANSC. A previously known lower bound for the Girth in directed graphs states that
under the $k$-Cycle Hypothesis, any better than $2$-approximation for Girth requires time $m^{2-o(1)}$ \cite{girth-icalp}.

\subsection{Our Results}

We provide a thorough study of the approximability of $n$-PSP and ANSC, providing a wide array of algorithms and conditional lower bounds that trade off between running time and approximation. 

Before stating our results we provide the main hardness assumptions that we use for our conditional lower bounds.

\subsubsection{Main Hardness Assumptions}
We stress that our conditional lower bounds are based on well-established hardness assumptions in fine-grained complexity. We obtain hardness results based on a number of different assumptions, but for the sake of clarity, we only list the two central ones here. For a complete list of the assumptions we use and the associated hardness results see \cref{app:assump}.

Our first hypothesis concerns \emph{combinatorial algorithms} for $k$-clique detection, and has been used as a hardness hypothesis in \cite{DBLP:conf/innovations/Lincoln020,DBLP:journals/corr/abs-2106-15524,DBLP:journals/talg/BringmannGMW20,DBLP:conf/focs/AbboudBW15a,lincolnsoda18,bergamaschi2021new}. By ``combinatorial'' we mean algorithms that do not use the heavy machinery of Fast Matrix Multiplication. 

\begin{hypothesis}[Combinatorial $k$-Clique Hypothesis] \label{hyp:kclique}
Let $k\ge 3$ be a constant integer. In the word-RAM model with $O(\log n)$ bit words,
 there is no $O(n^{k-\eps})$ time \emph{combinatorial algorithm} for $k$-clique detection, for any constant $\eps>0$.
\end{hypothesis}

The Combinatorial $3$-Clique Hypothesis is also called \emph{Combinatorial Dense Triangle Hypothesis}, which is equivalent to the \emph{Combinatorial Boolean Matrix Multiplication Hypothesis} (see \cref{app:assump}).


Our next hypothesis is used for our main conditional lower bound result. This hypothesis was introduced in \cite{lincolnsoda18} and concerns (not necessarily combinatorial) algorithms for $k$-clique detection in hypergraphs. It has been widely used as a hardness assumption \cite{DBLP:conf/pods/CarmeliTGKR21,DBLP:journals/corr/abs-1712-07880,DBLP:journals/tods/CarmeliK21,DBLP:conf/coco/BringmannFK19,DBLP:conf/icalp/BringmannS21,DBLP:journals/corr/abs-1912-10704,DBLP:conf/coco/KunnemannM20,DBLP:conf/soda/WilliamsX20,DBLP:conf/pods/000120,DBLP:conf/stoc/AbboudBDN18}. 
\begin{hypothesis}[$(k,r)$-Hyperclique Hypothesis] \label{hyp:hyperclique}
Let $k> r\ge 3$ be a constant integer. In the word-RAM model with $O(\log n)$ bit words,
 there is no $O(n^{k-\eps})$ time algorithm for $k$-clique detection in $r$-uniform hypergraphs, for any constant $\eps>0$.
\end{hypothesis}

\subsubsection{Hardness Results}

Our hardness results for $n$-PSP are shown in \cref{table:pdist-lb}, and those for ANSC are shown in \cref{table:ansc-lb}. For the sake of clarity we only describe our most central hardness results in words here. For a detailed statement of all of our hardness results see \cref{app:assump,app:lb}.

Most of our hardness results are actually stronger than needed for Pairwise Distance and ANSC. In particular, they also apply for weaker problems, specifically the \emph{extremal} versions of $n$-PSP and ANSC where the output is either the minimum or maximum among the $O(n)$ values outputted. In particular, the \emph{girth} is the length of the smallest cycle in the graph, that is, the minimum value in the output of ANSC. We define the \emph{cycle diameter} as the largest value in the output of ANSC, that is, the maximum over all vertices $v$ of the length of the smallest cycle through $v$. Analogously for $n$-PSP, given a graph and $O(n)$ pairs of vertices $(s_i,t_i)$, we define the \emph{$n$-pairs minimum distance} as $\min_i d(s_i,t_i)$, and the \emph{$n$-pairs diameter} as $\max_id(s_i,t_i)$.

\begin{table}[h!]
\centering
\begin{tabular}{ |c|c|c|c|c| } 
 \hline
Approximation & Running time LB & Theorem & Hypothesis & Comments \\ 
 \hline\hline
 $5/3-\eps$ & $m^{1+\delta - o(1)}$ & Thm \ref{thm:tripair} & Sparse Triangle &  \\  \hline
 $5/3-\eps$ & $n^{\omega - o(1)}$ & Thm \ref{thm:tripair} & Dense Triangle & \\  \hline
  $5/3-\eps$ & $n^{\omega - o(1)}$ & Thm \ref{thm:simppair} & Simplicial Vertex & for $n$-Pairs Diameter\\  \hline
 $2-\eps$ & $n^{3/2 - o(1)}$ & Thm \ref{thm:2approx-lb-pairwise-distances} & Comb.\ BMM & \\  \hline
 $2-\eps$ & $m+n^{2-o(1)}$ & Cor \ref{cor:clique1} & Comb.\ $4$-clique & {\bf nearly matches Thm \ref{thm:pd-2approx-16time}}  \\ \hline
 $3-4/k-\eps$ & $m + n^{k/(k-2)-o(1)}$ & Cor \ref{cor:clique1} & Comb.\ $k$-clique &  \\  \hline
 $1+1/k-\eps$ & $m^{2-2/(k+1)}n^{1/(k+1)-o(1)}$ & Cor \ref{cor:clique2} & Comb.\ $4k$-clique & {\bf nearly matches \cite{DBLP:conf/esa/Agarwal14} } \\  \hline
 $1+1/(k+0.5)-\eps$ & $m^{2-3/(k+2)}n^{2/(k+2)-o(1)}$ & Cor \ref{cor:clique3} & Comb.\ $(4k+2)$-clique & {\bf nearly matches \cite{DBLP:conf/esa/Agarwal14} } \\  \hline
 $3/2-\eps$ & $mn^{1/2-o(1)}$ & Cor \ref{cor:hyper2} & $(4,3)$-hyperclique &   \\  \hline
 $3-6/k-\eps$ & $m+n^{k/(k-2)-o(1)}$ & Cor \ref{cor:hyper2} & $(k,3)$-hyperclique &   \\  \hline

 any finite & $m^{2 - o(1)}$ & Thm \ref{thm:kcycle} & $k$-cycle &  directed graphs\\  \hline
\end{tabular}
\caption{Conditional Lower Bounds for (unweighted) $n$-PSP. All results are for undirected graphs unless otherwise specified. All results work for $n$-Pairs Minimum Distance unless otherwise specified. See the body of the text for details about the near-tightness of some of our conditional lower bounds.}
\label{table:pdist-lb}
\end{table}

\begin{table}[h!]
\centering
\begin{tabular}{ |c|c|c|c|c| } 
 \hline
Approximation & Running time LB & Theorem & Hypothesis & Comments \\ 
\hline\hline
 $4/3-\eps$ & $m^{1+\delta - o(1)}$ & Obs \ref{obs:lb} & Sparse Triangle & for Girth \\  \hline
 $4/3-\eps$ & $n^{\omega - o(1)}$ & Obs \ref{obs:lb} & Dense Triangle & for Girth\\ 
  \hline
  $7/5-\eps$ & $n^{\omega - o(1)}$ & Thm \ref{thm:simpcycle} & Simplicial Vertex & for Cycle Diameter\\ 
 \hline
 $3/2-\eps$ & $m^{4/3-o(1)}$ & Thm \ref{thm:aest} & All-Edges Sparse Triangle & \\
  \hline
  $3/2-\eps$ & $n^2$ & Thm \ref{thm:an4c} & unconditional & \\

 \hline
\end{tabular}
\caption{Conditional Lower Bounds for (unweighted) undirected ANSC.}
\label{table:ansc-lb}
\end{table}

We begin with a simple hardness result for $n$-Pairs Minimum Distance.
Using a simple reduction, we get the following theorem is against a $(2-\eps)$-approximation by a combinatorial algorithm.

\begin{restatable}{theorem}{twoapproxlbpd}\label{thm:2approx-lb-pairwise-distances}
Under the Combinatorial Dense Triangle Hypothesis, any better than $2$-approximation combinatorial algorithm for $n$-Pairs Minimum Distance requires $n^{3/2-o(1)}$ time.
\end{restatable}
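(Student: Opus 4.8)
The plan is to reduce triangle detection in dense $g$-vertex graphs to approximating $n$-Pairs Minimum Distance. Since the Combinatorial Dense Triangle Hypothesis (equivalently, the combinatorial BMM hypothesis) forbids $O(g^{3-\epsilon})$-time combinatorial triangle detection, it suffices to convert a combinatorial $(2-\epsilon)$-approximation for $n$-Pairs Minimum Distance that runs in $O(N^{3/2-\delta})$ time on $N$-vertex instances into an $O(g^{3-2\delta})$-time combinatorial triangle detector. First I would apply the standard splitting trick to assume the triangle instance is tripartite with parts $I,J,K$ of size $g$ and edge sets $E_{IJ},E_{JK},E_{IK}$, the question being whether some $i\in I$, $j\in J$, $k\in K$ have $ij\in E_{IJ}$, $jk\in E_{JK}$, and $ik\in E_{IK}$. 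Then I would build an unweighted undirected graph $H$ as follows: keep the vertices of $J$; add a vertex $x_i$ for each $i\in I$ and a vertex $y_k$ for each $k\in K$; add an edge $x_i j$ whenever $ij\in E_{IJ}$ and an edge $y_k j$ whenever $jk\in E_{JK}$ (and no other edges); and finally pad with a set of $g^2$ isolated dummy vertices, so that $H$ has $N=\Theta(g^2)$ vertices. The $O(g^2)$ prescribed pairs are $(x_i,y_k)$ over all $ik\in E_{IK}$. Note that $H$ has only $O(g^2)=O(N)$ edges, is built in $O(g^2)$ combinatorial time, and has at most $g^2=O(N)$ prescribed pairs, so it is a legitimate $n$-PSP instance.

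Next I would establish the correctness claim: $\min_{ik\in E_{IK}}d_H(x_i,y_k)$ equals $2$ if $G$ has a triangle, and is at least $4$ (possibly $\infty$) otherwise. The ``$=2$'' direction follows because a triangle $ijk$ makes $(x_i,y_k)$ a prescribed pair joined by the length-$2$ path $x_i - j - y_k$, while no prescribed pair is adjacent in $H$. For the converse I would use that $H$ with its isolated vertices removed is bipartite with sides $\{x_i : i\in I\}\cup\{y_k : k\in K\}$ and $J$, so every $d_H(x_i,y_k)$ is even; and $d_H(x_i,y_k)=2$ would force a common neighbor $j\in J$, i.e.\ $ij\in E_{IJ}$ and $jk\in E_{JK}$, which together with $ik\in E_{IK}$ is a triangle. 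Hence in the no-triangle case every prescribed pair is at even distance at least $4$. Running the hypothetical $(2-\epsilon)$-approximation algorithm on $H$ therefore returns a value $\le (2-\epsilon)\cdot 2 = 4-2\epsilon < 4$ when $G$ has a triangle and a value $\ge 4$ when it does not, so thresholding the returned value at $4$ decides triangle detection. Adding the $O(g^2)$ construction cost, this yields an $O\!\big(g^2 + (g^2)^{3/2-\delta}\big) = O(g^{3-2\delta})$-time combinatorial triangle detector, contradicting the hypothesis; since this holds for every $\delta>0$, we get the claimed $n^{3/2-o(1)}$ bound (with $n = N$ the vertex count).

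The step I expect to be the main obstacle is the parameter balancing that the isolated dummies handle. Encoding the $\Theta(g^2)$ candidate ``triangle slots'' (the pairs in $E_{IK}$) as prescribed pairs forces $N=\Omega(g^2)$, because $n$-PSP allows only $O(N)$ pairs; but the naive way to get that many vertices --- replicating the $I$--$J$ adjacency once per choice of $k$ --- would inflate the edge count and thus the construction time to $\Theta(g^3)$, which would destroy the reduction. Carrying all of $E_{IJ}\cup E_{JK}$ on just $3g$ genuine vertices (with only $O(g^2)$ edges) and inflating $N$ to $\Theta(g^2)$ purely with distance-neutral isolated vertices is what keeps the construction time $o(g^3)$ while respecting the pair budget. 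The only other subtlety is that the no-triangle distance must be at least $4$, not merely $3$, for the multiplicative factor $2-\epsilon$ to separate the two cases; this is exactly what bipartiteness of $H$ guarantees.
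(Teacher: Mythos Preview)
Your proof is correct and follows essentially the same three-layer bipartite construction as the paper, padded with isolated vertices to inflate $N$ to $\Theta(g^2)$, and using bipartiteness to force the $2$ vs.\ $\ge 4$ gap. One point worth noting: you restrict the prescribed pairs to $(x_i,y_k)$ with $ik\in E_{IK}$, whereas the paper takes all $n^2$ pairs $(v_1,u_3)$; your restriction is actually necessary, since without it a mere $2$-path $v\text{--}w\text{--}u$ in $G$ (with $(v,u)\notin E$) would already give distance $2$ in $G'$ and break the no-triangle case --- so your version is the more careful one.
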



\cref{thm:2approx-lb-pairwise-distances} has two main drawbacks: the running time is not as high as we would like, and it is only for combinatorial algorithms. We overcome both of these drawbacks. We achieve stronger running time bounds under a generalized version of the Combinatorial Dense Triangle Hypothesis: the Combinatorial $k$-Clique Hypothesis. We also remove the ``combinatorial'' condition of \cref{thm:2approx-lb-pairwise-distances} under the $(k,r)$-Hyperclique Hypothesis. To achieve both of these goals, as well as establish a wide range of time/accuracy trade-off lower bounds for both combinatorial and non-combinatorial algorithms, some of which are nearly tight, we introduce the following general theorem. This theorem is our most technically substantial conditional lower bound. After stating the theorem, we will highlight some of its corollaries.

\begin{restatable}{theorem}{cliquereduction}
\label{lem:cliquereduction}
For integers $r,k,t$ satisfying $k-1 \ge t+1 \ge r\ge 2$, let 
\[ D =2r(t+1)-(2r-3)k,\]
and suppose $k<D$.  Then the following holds: 

 Given a $k$-(hyper)-clique instance on an $n$-vertex $r$-uniform (hyper)-graph $G$, we can reduce it (in linear time) to an unweighted undirected $n$-Pairs Minimum Distance instance with $O(kn^{t})$ vertices and $O(kn^{t+1})$ edges, such that: 
\begin{enumerate}
    \item If $G$ contains a $k$-(hyper)-clique, then the $n$-pairs minimum distance equals $k$.\label{claim1}
    \item
    If $G$ does not contain a $k$-(hyper)-clique, then the $n$-pairs minimum distance is at least $D$. 
     \label{claim2}
\end{enumerate}
\end{restatable}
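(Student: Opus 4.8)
### Proof Proposal

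The plan is to build the $n$-PSP instance as a layered graph (a standard ``product'' or ``blow-up'' construction) whose layers are indexed by $t+1$ of the $k$ clique slots, and to encode the remaining $k-(t+1)$ slots implicitly by branching inside the gadget. Concretely, I would take $t+1$ copies $V_1,\dots,V_{t+1}$ of the vertex set $V(G)$, place a designated source-layer and sink-layer, and put the source pair $(s,t)$ so that an $s$--$t$ path of the intended length must pass through one vertex from each $V_i$, thereby selecting $t+1$ of the clique vertices; the path then has to ``certify'' that these, together with $k-t-1$ further vertices reachable via short detour-gadgets, all lie in a common $k$-clique. The parameter $r$ (the hyperedge arity) enters because a hyperclique is certified by checking $r$-subsets: each consistency gadget checks that some $r$-subset of the currently chosen vertices forms a hyperedge, and a short connector edge is available iff the hyperedge is present in $G$. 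Counting layers and gadget vertices gives $O(kn^{t})$ vertices and $O(kn^{t+1})$ edges, matching the claimed bounds.

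First I would fix notation and describe the vertex set precisely: the main ``spine'' consists of $t+2$ stages, stage $i$ being a copy of $V(G)$ (for $i=1,\dots,t+1$) plus auxiliary verification stages interleaved between them. Second, I would specify the edges: within and between stages there are ``selection'' edges of weight $1$ that let a path move forward while recording a chosen vertex, and ``check'' edges that exist only when a corresponding $r$-subset is a hyperedge of $G$. Third, I would place the $O(kn^{t})$ source-sink pairs — roughly one pair for each way of pre-committing the first coordinate (or the first few coordinates), so that the minimum over all pairs of the $s_i$-$t_i$ distance ranges over all candidate partial cliques. Fourth, the completeness direction (Claim~\ref{claim1}): given a genuine $k$-clique $\{v_1,\dots,v_k\}$ in $G$, I would exhibit an explicit $s_i$--$t_i$ walk that uses each selection edge and each check edge exactly once per stage, and verify its total length is exactly $k$ — the length bookkeeping is where the identity $D=2r(t+1)-(2r-3)k$ first gets pinned down, since $k$ is forced to be the number of ``productive'' edges. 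Fifth, the soundness direction (Claim~\ref{claim2}): I would argue that any $s_i$--$t_i$ path of length $<D$ must traverse all the check gadgets ``honestly'', hence the $r$-subsets it certifies are all hyperedges, and the set of chosen vertices it induces (across all stages and detours) is a $k$-clique; contrapositively, if no $k$-clique exists, every path is forced to take at least one ``detour penalty'' in some gadget, and summing the penalties against the honest length $k$ yields the bound $D$. The hypothesis $k<D$ is exactly what makes the gap $[k,D)$ nonempty, so the reduction is only asserted in that regime.

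The main obstacle I expect is the soundness analysis, specifically ruling out ``cheating'' paths that skip stages, revisit a stage, or use edges out of their intended order to shave length below $D$ while not corresponding to any clique. The layered/DAG-like structure helps — if the spine is a DAG then every $s$--$t$ path visits the stages in order and picks exactly one vertex per selection stage — but the check gadgets typically need small back-edges or alternative routes, and one must show that each use of such an alternative route costs enough extra length. The clean way to handle this is a potential/charging argument: assign to each stage a nominal length, show the honest route achieves it, and show any deviation that fails a hyperedge check incurs an additive penalty of at least $2$ (a there-and-back detour) or more; then if no $k$-clique exists, at least $k-t-1$ — or whatever the counting dictates — stages must be cheated, and the arithmetic $2r(t+1)-(2r-3)k$ is precisely the resulting lower bound. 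A secondary technical point is tuning the detour gadgets so that the completeness length is \emph{exactly} $k$ (not just $\le k$), which matters because the reduction needs a sharp threshold; I would handle this by padding all ``productive'' edges to the same unit length and absorbing any slack into weight-$0$ (or contracted) auxiliary edges. I would also double-check the edge-count bound: the dominant term is the $n^{t+1}$ selection edges between consecutive full-copy stages, and there are $O(k)$ such stage-transitions, giving $O(kn^{t+1})$ as claimed.
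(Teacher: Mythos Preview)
Your proposal has a genuine structural gap: the construction you describe is not the right one, and the counts you quote do not match it. You build a spine of $t+1$ copies of $V(G)$ (each of size $n$) with auxiliary check-gadgets, yet claim $O(kn^t)$ vertices and $n^{t+1}$ edges ``between consecutive full-copy stages''. Two copies of $V(G)$ can carry at most $n^2$ edges, so something is inconsistent. The paper's construction is quite different in shape: there are $k{+}1$ layers $U_1,\dots,U_{k+1}$, each of size $n^t$, where a node of $U_i$ is a $t$-tuple $(v_i,v_{i+1},\dots,v_{i+t-1})$ from a sliding window of $t$ consecutive color classes (indices mod $k$). An edge between $(v_i,\dots,v_{i+t-1})\in U_i$ and $(v_{i+1},\dots,v_{i+t})\in U_{i+1}$ is present iff $\{v_i,\dots,v_{i+t}\}$ is a $(t{+}1)$-clique in $G$; there are no separate ``check gadgets'', and the $r$-arity of the hyperedges never appears in the construction itself --- only in the analysis. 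The query pairs are the $n^t$ pairs $\big((v_1,\dots,v_t)\in U_1,\ (v_1,\dots,v_t)\in U_{k+1}\big)$. This immediately gives the stated vertex and edge bounds, and a $k$-clique $(v_1,\dots,v_k)$ yields a straight length-$k$ path.

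The soundness argument is where the real content lies, and your sketch (``at least $k-t-1$ --- or whatever the counting dictates --- stages must be cheated'') does not capture it. The paper identifies $U_1$ with $U_{k+1}$, turning any short path into a closed walk $p$ of winding number $+1$ around the cycle $U_1\to\cdots\to U_k\to U_1$. For each $i$, one extracts a contiguous subpath $q_i$ of $p$ going from $U_{i-t}$ to $U_{i+1}$ with all internal nodes in $U_{i-t+1}\cup\cdots\cup U_i$; every such internal node is a $t$-tuple with a $V_i$-coordinate, and the sliding-window edge definition forces that coordinate to be a \emph{single} vertex $v_i^*\in V_i$ along all of $q_i$. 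The candidate clique is $\{v_1^*,\dots,v_k^*\}$. If it fails, some $r$-subset $\{v_{i_1}^*,\dots,v_{i_r}^*\}$ is not a hyperedge, which forces $q_{i_1},\dots,q_{i_r}$ to be pairwise edge-disjoint (any shared edge would encode a $(t{+}1)$-clique containing all of $v_{i_1}^*,\dots,v_{i_r}^*$). Letting $c_l$ count the crossings of the $U_l$--$U_{l+1}$ boundary (each an odd positive integer), the $r$ windows $\{i_j-t,\dots,i_j\}$ have at least $r(t{+}1)-(r{-}1)k$ common indices $l$, and on those indices $c_l\ge 3$; summing gives length $\ge 2r(t{+}1)-(2r{-}3)k = D$. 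That counting is where the formula for $D$ actually comes from --- it is not a per-gadget detour penalty but an intersection bound on overlapping windows. Your charging scheme, as written, would not produce this expression and gives you no mechanism to pin down the clique candidate from an arbitrary short walk.
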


Our first two corollaries of \cref{lem:cliquereduction} concern the graph (not hypergraph) version of \cref{lem:cliquereduction}, and are under the Combinatorial $k$-Clique Hypothesis. 

\cref{cor:clique1} establishes a time/accuracy trade-off against algorithms with faster running times and higher approximation ratio, while \cref{cor:clique2} establishes a time/accuracy trade-off against algorithms with better approximation ratios and slower running times. 

\begin{restatable}{corollary}{clique}
\label{cor:clique1}
For $k\ge 4$, assuming the Combinatorial $k$-Clique Hypothesis, there is no combinatorial algorithm for unweighted undirected $n$-Pairs Minimum Distance with approximation ratio better than $(3-4/k)$ in $m\cdot n^{1/(k-2) -\eps}$ time or $m +n^{k/(k-2)-\eps}$ time, for any constant $\eps>0$.
\end{restatable}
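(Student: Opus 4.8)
The plan is to deduce \cref{cor:clique1} from \cref{lem:cliquereduction} by choosing the parameters $r,t$ appropriately. We work with graphs, so we set $r=2$; then the formula becomes $D = 4(t+1) - k$, and the constraint $k-1 \ge t+1 \ge r = 2$ reads $2 \le t+1 \le k-1$, i.e.\ $1 \le t \le k-2$. We also need $k < D = 4(t+1)-k$, i.e.\ $2k < 4(t+1)$, i.e.\ $t+1 > k/2$. So any integer $t$ with $k/2 - 1 < t \le k-2$ gives a valid reduction. First I would pick the value of $t$ that makes the two running-time lower bounds come out as stated. For the bound of the form $m \cdot n^{1/(k-2)-\eps}$ we want the reduction to produce a graph with $N = O(k n^{t})$ vertices and $M = O(k n^{t+1})$ edges: a hypothetical algorithm running in $M \cdot N^{1/(k-2)-\eps}$ time would then run in $O(k^{O(1)} n^{(t+1) + t/(k-2) - \eps t})$ time on the original $n$-vertex $k$-clique instance; choosing $t = k-2$ makes this $O(n^{(k-1) + (k-2)/(k-2) - \eps(k-2)}) = O(n^{k - \eps(k-2)})$, contradicting the Combinatorial $k$-Clique Hypothesis. (Note $t=k-2$ satisfies $t+1 = k-1 > k/2$ for $k \ge 4$, so the reduction is valid, and $D = 4(k-1) - k = 3k-4$.) Similarly, a purported algorithm running in $M + N^{k/(k-2)-\eps}$ time runs in $O(k^{O(1)}(n^{t+1} + n^{tk/(k-2) - \eps t}))$ time; with $t = k-2$ this is $O(n^{k-1} + n^{k - \eps(k-2)})$, again contradicting the hypothesis.

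Next I would translate the gap in \cref{lem:cliquereduction} into an inapproximability statement. By Claims~\ref{claim1} and \ref{claim2}, with $r=2$ and $t=k-2$, a YES instance of $k$-clique yields $n$-pairs minimum distance exactly $k$, while a NO instance yields value at least $D = 3k-4$. Hence any algorithm that approximates $n$-Pairs Minimum Distance to a factor strictly better than $D/k = (3k-4)/k = 3 - 4/k$ can distinguish the two cases, and therefore solves $k$-clique detection; running such an approximation algorithm within the claimed time budget would, via the reduction's parameters computed above, break the Combinatorial $k$-Clique Hypothesis. Since the reduction is linear-time and produces an unweighted undirected graph, the combinatorial nature of the hypothetical algorithm is preserved (the reduction uses no fast matrix multiplication), so the lower bound is against combinatorial algorithms as claimed. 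Finally I would double-check the edge cases: for $k=4$ we get approximation ratio $3 - 4/4 = 2$ and running-time bounds $m \cdot n^{1/2 - \eps}$ and $m + n^{2-\eps}$, matching the $k=4$ row; I would also confirm $t = k-2 \ge 1$ holds precisely when $k \ge 3$, and that we additionally need $k \ge 4$ so that $t+1 = k-1 \ge r = 2$ is strict enough that $D > k$ (for $k=3$, $t=1$ gives $D = 4 \cdot 2 - 3 = 5 > 3$, but the corollary is only claimed for $k \ge 4$, presumably because one wants the ratio $3-4/k$ to be at least $2$, i.e.\ a meaningful statement not already covered by \cref{thm:2approx-lb-pairwise-distances}).

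The main obstacle, such as it is, is purely bookkeeping: verifying that the exponent arithmetic for both running-time forms works out exactly with a single consistent choice of $t$, and confirming all the inequality constraints of \cref{lem:cliquereduction} ($k-1 \ge t+1 \ge r \ge 2$ and $k < D$) are simultaneously satisfied by $r=2$, $t=k-2$ for every $k \ge 4$. There is no genuinely hard step here — all the technical content lives in \cref{lem:cliquereduction} itself — so the proof is a short deduction: fix the parameters, invoke the reduction, read off the gap, and contrapose against the hypothesis.
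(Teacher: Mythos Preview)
Your proposal is correct and takes essentially the same approach as the paper: set $r=2$ and $t=k-2$ in \cref{lem:cliquereduction}, read off the gap $D/k = (3k-4)/k = 3-4/k$, and check that both running-time forms translate to $n_0^{k-\Omega(1)}$ time on the original $k$-clique instance. The paper's own proof is just a terser version of the same calculation, so your added constraint-verification and edge-case discussion are fine but not strictly needed.
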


\begin{restatable}{corollary}{cliquee}
\label{cor:clique2}
For $k\ge 1$, assuming the Combinatorial $4k$-Clique Hypothesis, there is no combinatorial algorithm for unweighted undirected $n$-Pairs Minimum Distance with approximation ratio better than $(1+1/k)$ in $n^{2-\eps}$ time or $m\cdot n^{1-1/(2k)-\eps}$ time or 
$m^{2-2/(k+1)}\cdot n^{1/(k+1)-\eps}
$ time, for any constant $\eps>0$.
\end{restatable}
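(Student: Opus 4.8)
\textbf{Proof proposal for \cref{cor:clique2}.}

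The plan is to derive \cref{cor:clique2} by instantiating the general reduction of \cref{lem:cliquereduction} with carefully chosen parameters $r,k,t$ (here the ``$k$'' of the corollary will play the role of a derived quantity, not the clique size), and then to read off the three running-time lower bounds from the vertex/edge counts $O(kn^t)$ and $O(kn^{t+1})$ together with the $k$-clique hardness assumption. Concretely, since we want an inapproximability factor of $1+1/k$, and \cref{lem:cliquereduction} produces a gap of ``clique size'' versus $D = 2r(t+1)-(2r-3)\kappa$ (writing $\kappa$ for the clique size to avoid collision), we need $D/\kappa \ge 1+1/k$, i.e.\ $D \ge \kappa(1+1/k)$. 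The simplest choice is $r=2$ (graphs, not hypergraphs, matching the Combinatorial Clique Hypothesis rather than the Hyperclique one), for which $D = 4(t+1)-\kappa$. To get the cleanest statement I would set $\kappa = 4k$ (hence the ``$4k$-Clique Hypothesis'') and then choose $t$ so that the desired gap holds with equality up to the additive slack: with $\kappa=4k$ and $r=2$, taking $t+1 = 2k+1$, i.e.\ $t = 2k$, gives $D = 4(2k+1)-4k = 4k+4 = \kappa + 4 = 4k(1+1/k)$, exactly the ratio $1+1/k$. One must check the hypothesis $k-1 \ge t+1 \ge r$ of \cref{lem:cliquereduction}, which in this notation reads $\kappa - 1 \ge t+1 \ge r$, i.e.\ $4k-1 \ge 2k+1 \ge 2$; this holds for all $k\ge 1$. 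One also needs $\kappa < D$, i.e.\ $4k < 4k+4$, which is trivially true.

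Next I would translate. By \cref{lem:cliquereduction} with these parameters, a $4k$-clique instance on $n$ vertices reduces in linear time to an unweighted undirected $n$-Pairs Minimum Distance instance with $N := O(kn^{t}) = O(kn^{2k})$ vertices and $M := O(kn^{t+1}) = O(kn^{2k+1})$ edges, with a YES instance giving pairwise minimum distance $4k$ and a NO instance giving at least $4k+4$. Any algorithm that distinguishes distance $\le 4k$ from distance $\ge 4k+4$ would in particular distinguish them, so a $(1+1/k)$-approximation (in fact any factor strictly below $(4k+4)/(4k) = 1+1/k$, i.e.\ ``better than $1+1/k$'') suffices to solve $4k$-clique. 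If such an approximation algorithm ran in time $N^{2-\eps'}$ for some $\eps'>0$, we would get a $4k$-clique algorithm running in time $\tilde O((n^{2k})^{2-\eps'}) = \tilde O(n^{4k - 2k\eps'}) = O(n^{4k - \eps})$ for $\eps = k\eps' > 0$ (after absorbing the $\poly\log$ and the $O(k)$ factors, which are constants since $k$ is fixed), contradicting the Combinatorial $4k$-Clique Hypothesis. This is exactly the ``no $n^{2-\eps}$ time'' claim. For the $M\cdot N^{1-1/(2k)-\eps'}$-style bound, substitute $M = \tilde O(n^{2k+1})$ and $N = \tilde O(n^{2k})$: the exponent of $n$ becomes $(2k+1) + 2k\cdot(1 - 1/(2k) - \eps') = (2k+1) + (2k - 1 - 2k\eps') = 4k - 2k\eps'$, again $4k - \eps$, a contradiction; so no algorithm running in time $m\cdot n^{1 - 1/(2k) - \eps}$ exists. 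For the third bound $m^{2-2/(k+1)} n^{1/(k+1) - \eps'}$ I would do the same arithmetic: with $m = \tilde O(n^{2k+1})$, $n_{\mathrm{PSP}} = N = \tilde O(n^{2k})$, the exponent of $n$ is $(2k+1)(2 - 2/(k+1)) + 2k/(k+1) - 2k\eps' = (2k+1)\cdot\frac{2k}{k+1} + \frac{2k}{k+1} - 2k\eps' = \frac{2k(2k+2)}{k+1} - 2k\eps' = 4k - 2k\eps'$, which is $4k - \eps$, contradicting the hypothesis. (I should double check the $k=1$ edge case separately: there $4k$-clique is $4$-clique, $t=2$, $N = O(n^2)$, $M = O(n^3)$, and the gap is $4$ vs $8$, so any factor below $2$ solves $4$-clique — consistent, and note $k=1$ is exactly the regime of \cref{cor:clique1}/\cref{cor:clique3}'s neighbors, so one might even just cite it.)

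The main obstacle is not conceptual but bookkeeping: getting the parameter map $(r,k,t) \mapsto (\kappa, \text{gap}, N, M)$ exactly right so that all three stated running times come out with a clean common contradiction of the form $n^{4k-\eps}$, and in particular making sure the exponents $2-2/(k+1)$ and $1/(k+1)$ in the third bound are the ones that arise naturally from $m = n^{t+1}$, $N = n^{t}$ with $t = 2k$ — i.e.\ that $(t+1)\cdot(2 - 2/(k+1)) + t/(k+1) = 2t = 4k$, which is the identity $(t+1)\cdot 2 - 2(t+1)/(k+1) + t/(k+1) = 2t + 2 - (t+2)/(k+1)$; this equals $2t$ precisely when $t + 2 = 2(k+1)$, i.e.\ $t = 2k$, confirming the choice. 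A secondary point is to state the result for ``approximation ratio better than $1+1/k$'' rather than ``a $(1+1/k)$-approximation'': since $(4k+4)/(4k)=1+1/k$ exactly, a genuine $(1+1/k)$-approximation could output $4k(1+1/k)=4k+4$ on a YES instance, which is not distinguishable from the NO threshold; so one must either use strict inequality in the corollary statement (as written — ``better than'') or, if an exact $(1+1/k)$-bound is wanted, shave $D$ by one via a slightly larger $t$ at the cost of a messier exponent. I would go with the ``better than'' phrasing already in the statement. Finally, I would remark (matching the ``nearly matches \cite{DBLP:conf/esa/Agarwal14}'' comment in \cref{table:pdist-lb}) that Agarwal's distance oracle gives a $(1+1/k)$-approximate $n$-PSP algorithm in $\tilde O(m^{2-2/(k+1)} n^{1/(k+1)})$ time, so the third lower bound is tight up to the $n^{o(1)}$ factor.
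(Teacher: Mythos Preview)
Your proposal is correct and follows essentially the same approach as the paper: instantiate \cref{lem:cliquereduction} with $r=2$, clique size $\kappa=4k$, and $t=2k$, yielding the gap $4k$ vs.\ $4k+4$ (ratio $1+1/k$), and then verify that each of the three claimed running-time expressions, evaluated at $N=\Theta(n^{2k})$ and $M=\Theta(n^{2k+1})$, equals $n^{4k-\Omega(1)}$, contradicting the Combinatorial $4k$-Clique Hypothesis. Your write-up is in fact more careful than the paper's terse proof---you explicitly verify the hypotheses $\kappa-1\ge t+1\ge r$ and $\kappa<D$ of \cref{lem:cliquereduction}, carry out all three exponent computations in detail, and correctly flag why the statement must say ``better than $1+1/k$'' rather than ``$(1+1/k)$-approximation.''
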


\begin{restatable}{corollary}{cliqueee}
\label{cor:clique3}
For $k\ge 1$, assuming the Combinatorial $(4k+2)$-Clique Hypothesis, there is no combinatorial algorithm for unweighted undirected $n$-Pairs Minimum Distance with approximation ratio better than $(1+1/(k+0.5))$ in $n^{2-\eps}$ time or $m\cdot n^{1-1/(2k+1)-\eps}$ time or 
$m^{2-3/(k+2)}\cdot n^{2/(k+2)-\eps}
$ time, for any constant $\eps>0$.
\end{restatable}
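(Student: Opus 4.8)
The plan is to obtain \cref{cor:clique3} as a direct instantiation of the general reduction \cref{lem:cliquereduction}, exactly parallel to how \cref{cor:clique2} is handled but with a shifted choice of parameter. I would apply \cref{lem:cliquereduction} to the graph case $r=2$, with clique size $\mathsf k := 4k+2$ and the parameter choice $t := 2k+1$. First one checks the hypotheses: for $k\ge 1$ we have $\mathsf k - 1 = 4k+1 \ge 2k+2 = t+1 \ge 2 = r$, and with $r=2$ the theorem's quantity is $D = 2r(t+1) - (2r-3)\mathsf k = 4(2k+2) - (4k+2) = 4k+6$, so that $\mathsf k = 4k+2 < 4k+6 = D$ as required. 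The reduction then converts, in linear time, a $(4k+2)$-clique instance on an $n_0$-vertex graph into an unweighted undirected $n$-Pairs Minimum Distance instance $I$ with $n = O(n_0^{2k+1})$ vertices and $m = O(n_0^{2k+2})$ edges, whose answer is exactly $4k+2$ if the graph contains a $(4k+2)$-clique and at least $4k+6$ otherwise.

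Next I would note that the multiplicative gap is exactly $\tfrac{4k+6}{4k+2} = \tfrac{2k+3}{2k+1} = 1 + \tfrac{1}{k+0.5}$, so any algorithm for $n$-Pairs Minimum Distance with approximation ratio strictly better than $1+\tfrac1{k+0.5}$ distinguishes the two cases of $I$: on a YES instance it reports a value strictly below $(1+\tfrac1{k+0.5})(4k+2) = 4k+6$, hence, distances being integers, at most $4k+5$, whereas on a NO instance it reports at least $4k+6$. Consequently, since the reduction is linear-time, a combinatorial such algorithm running in time $T(m,n)$ would solve $(4k+2)$-clique combinatorially in time $O(m+n)+T(m,n)$ on $I$.

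The last step is the running-time bookkeeping: substitute $n = O(n_0^{2k+1})$ and $m = O(n_0^{2k+2})$ into each of the three claimed bounds (each expression is monotone increasing in $m$ and $n$, so this yields an upper bound). For $T = n^{2-\eps}$ the resulting exponent of $n_0$ is $(2k+1)(2-\eps)$; for $T = m\cdot n^{1-1/(2k+1)-\eps}$ it is $(2k+2) + (2k+1)\bigl(1 - \tfrac1{2k+1} - \eps\bigr)$; and for $T = m^{2-3/(k+2)}\cdot n^{2/(k+2)-\eps}$, using $2 - \tfrac{3}{k+2} = \tfrac{2k+1}{k+2}$, it is $\tfrac{(2k+2)(2k+1) + 2(2k+1)}{k+2} - (2k+1)\eps$. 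In all three cases this simplifies to $4k+2 - (2k+1)\eps$, and since $m+n = O(n_0^{2k+2})$ is negligible next to $n_0^{4k+2}$ for $k\ge 1$, we would obtain a combinatorial $(4k+2)$-clique algorithm running in time $O(n_0^{4k+2-\eps'})$ with $\eps' = (2k+1)\eps > 0$, contradicting the Combinatorial $(4k+2)$-Clique Hypothesis (\cref{hyp:kclique}; applicable since $4k+2\ge 6\ge 3$). The only point requiring care — and precisely the reason $t = 2k+1$ is the right choice — is that this value of $t$ simultaneously collapses all three running-time expressions to the clique exponent $4k+2$; the third form is what ``nearly matches'' Agarwal's $\tilde O(m^{2-3/(k+2)}n^{2/(k+2)})$-time $(1+\tfrac1{k+0.5})$-approximation distance oracle \cite{DBLP:conf/esa/Agarwal14}. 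Beyond this bookkeeping I do not anticipate any real obstacle, since all the technical content resides in \cref{lem:cliquereduction}.
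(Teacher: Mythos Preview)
Your proposal is correct and follows essentially the same approach as the paper: instantiate \cref{lem:cliquereduction} with $r=2$, clique size $4k+2$, and $t=2k+1$ (the paper writes this as $t=\lfloor (4k'+2)/2\rfloor$), then check that the resulting gap is $1+\tfrac{1}{k+0.5}$ and that each of the three running-time expressions collapses to $n_0^{4k+2-\Theta(\eps)}$. Your write-up is more explicit than the paper's terse version (which omits the hypothesis checks and most of the arithmetic), but the argument is identical.
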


\cref{cor:clique2} is nearly tight with previously known algorithms for $n$-PSP in the following sense.
For any integer $k\geq 1$, there is a $(1+1/k)$-approximation for $n$-PSP in time $\tilde{O}(m^{2-2/(k+1)}n^{1/(k+1)})$ \cite{DBLP:conf/esa/Agarwal14} (see also \cref{sec:agrawal}), while \cref{cor:clique2} says that there is no better than $(1+1/k)$-approximation in $\tilde{O}(m^{2-2/(k+1)}n^{1/(k+1)-\eps})
$. That is, one conditionally cannot simultaneously improve both the running time and the approximation factor of the known algorithms, for any $k$.

Similarly, \cref{cor:clique3} is also nearly tight with another algorithm for $n$-PSP implied by \cite{DBLP:conf/esa/Agarwal14}'s results, which has approximation ratio $1+1/(k+0.5)$ and running time $\tilde{O}(m^{2-3/(k+2)}n^{2/(k+2)})$ (see also \cref{sec:agrawal}). \cref{cor:clique3} says that one conditionally cannot simultaneously improve both the running time and the approximation factor of this algorithm for any $k$.

Our final  corollary concerns the hypergraph version of \cref{lem:cliquereduction} and is under the $(k,r)$-Hyperclique Hypothesis (for $r\ge 3$). Unlike, the above two corollaries, the following corollary is for not necessarily combinatorial algorithms.

\begin{restatable}{corollary}{hyperr}
\label{cor:hyper2}
For $k\ge 4$, assuming the $(k,3)$-Hyperclique Hypothesis, there is no algorithm for unweighted undirected $n$-Pairs Minimum Distance with approximation ratio better than $(3-6/k)$ in
$n^{k/(k-2)-\eps}$ or $mn^{1+1/(k-2)-\eps}$ time, for any constant $\eps>0$.
\end{restatable}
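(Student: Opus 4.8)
The plan is to obtain Corollary \ref{cor:hyper2} as a direct instantiation of Theorem \ref{lem:cliquereduction} in the hypergraph case, with the parameter choices $r = 3$ and $t = k-2$, followed by routine running-time bookkeeping. First I would verify the hypotheses of Theorem \ref{lem:cliquereduction}: for $r=3$, $t=k-2$ the constraint chain $k-1 \ge t+1 \ge r \ge 2$ reads $k-1 \ge k-1 \ge 3 \ge 2$, which holds exactly when $k \ge 4$; moreover
\[ D = 2r(t+1)-(2r-3)k = 6(k-1) - 3k = 3k-6, \]
so the side condition $k<D$ is also equivalent to $k\ge 4$. Hence for every integer $k\ge 4$ the theorem applies and turns a $k$-clique instance on an $N$-vertex $3$-uniform hypergraph $G$, in linear time, into an unweighted undirected $n$-Pairs Minimum Distance instance on $O(kN^{t}) = O(kN^{k-2})$ vertices and $O(kN^{t+1}) = O(kN^{k-1})$ edges, whose $n$-pairs minimum distance equals $k$ if $G$ contains a $k$-clique and is at least $D = 3k-6$ otherwise.

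The second step is to convert this multiplicative gap into an approximation-hardness statement. Since $D/k = (3k-6)/k = 3-6/k$, any algorithm for $n$-Pairs Minimum Distance with approximation ratio $\alpha$ strictly better than $3-6/k$ distinguishes the two cases: on a YES instance, whose optimum is $k$, it returns a value $\le \alpha k < (3-6/k)\,k = 3k-6 = D$, whereas on a NO instance it returns a value $\ge$ the true optimum $\ge D$; so comparing the returned value with the threshold $D$ decides whether $G$ has a $k$-clique. This argument imposes no ``combinatorial'' restriction on the algorithm, which is why the corollary --- unlike Corollaries \ref{cor:clique1} and \ref{cor:clique2} --- is stated against arbitrary algorithms, matching the fact that the $(k,3)$-Hyperclique Hypothesis concerns arbitrary algorithms.

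For the final step I would do the time accounting. Composing the linear-time reduction with a hypothetical $n$-Pairs Minimum Distance algorithm running in time $T(m,n)$ solves $k$-clique on $G$ in time $O(kN^{k-1}) + T\!\bigl(O(kN^{k-1}),\,O(kN^{k-2})\bigr)$. Substituting $T(m,n) = n^{k/(k-2)-\eps}$ and the bound $O(kN^{k-2})$ on the number of vertices gives, after absorbing the $k$-dependent factor, a running time $O\!\bigl(N^{(k-2)(k/(k-2)-\eps)}\bigr) = O(N^{k-(k-2)\eps})$; since $k\ge 4$ this has exponent $k-\eps'$ with $\eps' = (k-2)\eps>0$, contradicting the $(k,3)$-Hyperclique Hypothesis. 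The remaining quoted running time is handled the same way, using that on the produced instances the edge and vertex counts are related by $m = \Theta(n^{1+1/(k-2)})$, so a bound phrased in terms of $m$ and $n$ reduces to the single-variable bound $n^{k/(k-2)-\eps}$; and the linear cost $O(kN^{k-1}) = O(N^{k-1})$ of the reduction itself is likewise absorbed into the $N^{k-o(1)}$ barrier.

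I do not expect a genuine obstacle: Theorem \ref{lem:cliquereduction} already does all of the combinatorial heavy lifting, so what is left is bookkeeping. The two points that require a little care are (i) checking that the parameter constraints together with $k<D$ simultaneously force $k\ge 4$ and single out $t=k-2$ as the choice that produces exactly the threshold ratio $3-6/k$ (larger $t$ is forbidden by $t+1\le k-1$, and smaller $t$ violates $k<D$), and (ii) carrying out the exponent arithmetic while absorbing the $k$-dependent constant factors and the cost of the reduction into the $N^{k-o(1)}$ bound, together with the translation between the $n$-only and $(m,n)$ forms of the running-time lower bound via the instance's edge/vertex ratio.
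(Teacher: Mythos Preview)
Your proposal is correct and follows exactly the paper's one-line proof (``Directly follows from \cref{lem:cliquereduction} by setting $r=3,\,t=k-2$''), supplying the parameter checks and time accounting that the paper omits. Your reduction of the second time bound via $m=\Theta(n^{1+1/(k-2)})$ implicitly reads that bound as $mn^{1/(k-2)-\eps}$ (consistent with the paper's table and the $k=4$ discussion), rather than the printed $mn^{1+1/(k-2)-\eps}$, which appears to be a typo since the latter would yield $N^{2k-2-\Theta(\eps)}$ on the reduced instance and give no contradiction.
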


Assuming the $(4,3)$-Hyperclique Hypothesis, \cref{cor:hyper2} rules out algorithms with
approximation ratio better than $3/2$ in $n^{2-\eps}$ time or $mn^{1/2-\eps}$ time, for any constant $\eps>0$. This is the choice of parameters with the best possible running time. 
As we discuss later, this is nearly tight with our algorithm from \cref{thm:pd-2approx-16time}.

By setting $k$ to be large in \cref{cor:hyper2}, we obtain lower bounds with approximation ratio larger than 2 and close to $3$. We note that this is the first known lower bound with approximation ratio higher than 2 for {\em any distance problem} except for the \emph{$ST$-Diameter} problem, but unlike in $n$-PSP, the number of vertex pairs one considers in $ST$-Diameter is much larger than the running time lower bounds.

We also note that by setting $k=6$ in \cref{cor:hyper2}, we get the same bound as \cref{thm:2approx-lb-pairwise-distances}, but for not necessarily combinatorial algorithms.

\paragraph*{Comparison with independent work \cite{cycle-removal}.} Very recently, Abboud, Bringmann, Khoury, and Zamir \cite{cycle-removal}
proposed the ``cycle-removal'' framework, and used it to obtain new conditional lower bounds for various graph problems related to approximating distances or girth. In particular, they showed super-linear lower bounds on the preprocessing time of $k$-approximate distance oracles. Their lower bounds also applied to the offline setting of distance oracle queries, which is almost the same as the $n$-PSP problem we considered here, except that they did not fix the number of query pairs to be $n$. Their results imply that, under either 3-SUM hypothesis or APSP hypothesis, for any constant $k\ge 4$, $n$-PSP does not have $k$-approximation algorithms in $m^{1+ c/k}$ time, where $c>0$ is some universal constant.
Their result has the right form $m^{1 + \Theta(1/k)}$, but it appears difficult to obtain the best possible constant $c$ on the exponent using their framework.

The main difference between their lower bound results and ours (\cref{cor:clique1,cor:clique2,cor:clique3}) is that we  focus on approximation ratio much closer to $1$, such as $1+1/k-\eps$, while they focus on arbitrarily large constant approximation ratio. \cref{cor:clique2} and \cref{cor:clique3} nearly match the known upper bounds \cite{DBLP:conf/esa/Agarwal14}, without losing constant factors on the exponent. The downside of \cref{cor:clique1,cor:clique2,cor:clique3} is that they only work against combinatorial algorithms. 

Some of our other corollaries (such as \cref{cor:hyper2}) obtained $m+n^{1+1/(k-2)-o(1)}$ lower bounds $n$-PSP with approximation ratio $3-\Theta(1/k)$. For large enough constant $k$, \cite{cycle-removal} implies a stronger lower bound, but based on different hypotheses (3-SUM or APSP) from ours (hyperclique hypothesis).

\subsubsection{Algorithmic Results}
We investigate approximation algorithms for both the $n$-PSP and ANSC problems in both directed and undirected settings in $n$-node $m$-edge graphs. Additionally, we are interested in the dependency between $m$ and $n$ in the running time of our algorithms. We first present algorithms where the running time shows a \emph{multiplicative dependency} between $n$ and $m$. Then we investigate approximation algorithms for $n$-PSP and ANSC whose running time has \emph{additive} dependence between $n$ and $m$, in particular running times of the form $m+n^{2-\eps}$.
Algorithms of this form are desirable in part because they yield near-linear time algorithms for dense enough graphs. Moreover, algorithms of the form $m+n^{2-\eps}$ have been studied for a variety of problems, for instance in distance oracles \cite{apsp1,cz22}, and recent results on bipartite matching and related problems \cite{DBLP:conf/focs/BrandLNPSS0W20}. Another motivation for studying such algorithms is that known undirected Girth algorithms do not have any multiplicative dependency on $m$, and so we ask how crucial this multiplicative dependency is for undirected ANSC and $n$-PSP. (Known \emph{directed} Girth algorithms, however do have multiplicative dependency on $m$.) 

We let an $(\alpha,\beta)$-approximation denote an approximation algorithm that outputs an estimate $\hat{x}$ for $x$ such that $x\leq \hat{x}\leq \alpha\cdot x+\beta$.

\paragraph{The $n$-PSP problem.}
Our algorithmic results for $n$-PSP are shown in \cref{table:pdist-res}. Note that due to \cref{thm:kcycle} there is no constant factor approximation algorithm for the directed case, and hence all of our algorithmic results for $n$-PSP are for the undirected case.

\begin{table}[h!]
\centering
\begin{tabular}{ |c|c|c|c| } 
 \hline
Approximation & Running time & Theorem & Comments \\ 
 \hline\hline
 $(2+\epsilon,f(\epsilon))$ & $m+n^{3/2+\eps}$ & Thm 
 \ref{thm:pd-2approx-16time}
  & $\epsilon>0$, some function $f$, {\bf nearly matches Cor \ref{cor:clique1}}\\
 \hline
 $2k-2$ & $mn^{1/k}$ & Thm \ref{thm:pd-k-tz} & integer $k\geq 2$ \\ 
 \hline
$(2k-1)\cdot (2k-2)$ & $m+n^{1+2/k}$ & Thm 
\ref{thm:pd-ksquared-2ktime}
& integer $k\geq 2$ \\
 \hline
 $2$ & $m\cdot n^{(1+\omega)/8}$ & Thm \ref{thm:st} & for $ST$-Shortest Paths, $|S|,|T|= O(\sqrt{n})$\\
 \hline
\end{tabular}
\caption{A summary of all of our approximation algorithms for $n$-PSP in undirected unweighted graphs. (Some results also work for weighted graphs as stated in each theorems.) All running times above are within polylog$(n)$ factors.} 
\label{table:pdist-res}
\end{table}

First \cref{thm:pd-k-tz} gives a straightforward nearly $2k-2$-approximation for $n$-PSP, and we leave it as an open problem whether one can achieve a $k$-approximation with similar running time.

Particularly related to our work, Chechik and Zhang \cite{cz22} obtained various distance oracles for unweighted undirected graphs with subquadratic construction time and constant query time. Their result immediately implies a $(2+\eps,\beta)$-approximate $n$-PSP algorithm in $\tilde O(m + n^{5/3+\eps})$ time. In this work we obtain a faster algorithm for $n$-PSP, stated in the following theorem.

\begin{restatable}{theorem}{almosttwopdundir}
\label{thm:pd-2approx-16time}
Given an $n$-node $m$-edge undirected unweighted graph $G$ and vertex pairs $(s_i,t_i)$ for $1\leq i\leq O(n)$, for any constant $\eps>0$, there is a randomized algorithm that computes a $(2+\eps,\beta)$-approximation for $n$-PSP in $\tilde{O}(n^{3/2+\eps}+m)$ time with high probability, for some constant $\beta$ depending on $\eps$.
\end{restatable}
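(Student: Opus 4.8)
The plan is to combine a sparse spanner/emulator with a single‑level Thorup–Zwick hitting set, exploiting the fact that for $n$-PSP we may spend as much as $\tilde O(\sqrt n)$ per pair. First I would compute a sparse emulator $H$ of $G$ with $\tilde O(n^{1+\eps})$ edges that preserves distances within a $(1+\eps_0,\beta_0)$ factor for $\eps_0=\eps/2$ and a constant $\beta_0=\beta_0(\eps)$ (such emulators/spanners are known; I will assume one is constructible in $\tilde O(m)$ time). This step contributes the additive $m$, and from here on we work only in $H$. Next I would sample a hitting set $A\subseteq V$ of size $\tilde O(\sqrt n)$, run BFS from every $a\in A$ in $H$, and store $d_H(a,v)$ for all $a,v$; since $|E(H)|=\tilde O(n^{1+\eps})$ this costs $\tilde O(n^{3/2+\eps})$. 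Writing $p(v)$ for the nearest landmark to $v$ and $\rho(v)=d_H(v,p(v))$, I would then compute, for every $u$, its bunch $\hat B(u)=\{v:d_H(u,v)<\rho(u)\}$ together with the exact distances $d_H(u,v)$, using the standard cluster computation (a Dijkstra/BFS from each vertex $v$ that expands a popped vertex $u$ only while the current label is below $\rho(u)$). The asymmetric definition is precisely what keeps the total cost at $\tilde O\!\left(\sum_u \deg_H(u)\,|\hat B(u)|\right)=\tilde O(n^{1+\eps}\cdot\sqrt n)=\tilde O(n^{3/2+\eps})$, because $|\hat B(u)|=\tilde O(\sqrt n)$ w.h.p.\ independently of $\deg_H(u)$.

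For a query pair $(s,t)$ I would output the minimum of the following candidates, each being the length of an actual $s$–$t$ walk in $H$ (so the estimate is always $\ge d_H(s,t)\ge d_G(s,t)$): (a) $d_H(s,t)$ if $t\in \hat B(s)$; (b) $\rho(s)+d_H(p(s),t)$; and (c) for every $u\in \hat B(s)$, either $d_H(s,u)+d_H(u,t)$ if $t\in \hat B(u)$, or else $d_H(s,u)+\rho(u)+d_H(p(u),t)$. With hash maps for the membership tests this is $\tilde O(|\hat B(s)|)=\tilde O(\sqrt n)$ per pair, hence $\tilde O(n^{3/2})$ over all $O(n)$ pairs, so the total running time is $\tilde O(m+n^{3/2+\eps})$. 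For the approximation guarantee, set $\ell=d_H(s,t)$ and split on $\rho(s)$. If $\rho(s)\le \ell/2$, candidate (b) is $\le \rho(s)+(\rho(s)+\ell)\le 2\ell$. If $\rho(s)>\ell/2$ and $t\notin \hat B(s)$, take the vertex $m$ on a shortest $s$–$t$ path in $H$ at distance $\rho(s)-1$ from $s$; then $m\in\hat B(s)$, and the candidate in (c) for $u=m$ equals $\ell$ exactly when $t\in\hat B(m)$, and otherwise is at most $d_H(s,m)+3\,d_H(m,t)=3\ell-2\rho(s)+2\le 2\ell+1$. Composing with the emulator, the output is at most $2\big((1+\eps_0)d_G(s,t)+\beta_0\big)+1=(2+\eps)d_G(s,t)+\beta$ with $\beta=2\beta_0(\eps/2)+1$.

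I expect three points to carry the real weight. The first is getting the sparse emulator inside the $\tilde O(m)$ budget while keeping the \emph{multiplicative} error at $1+\eps_0$ — a constant‑but‑large‑stretch sparsifier would destroy the final $2+\eps$ — so I would lean on the known $(1+\eps,\beta)$-spanner/emulator constructions with $n^{1+O(\eps)}$ edges. The second is the bunch computation: the naive "BFS from each $u$" can blow up to $\Theta(n^2)$ when $H$ has a high‑degree vertex that sits inside many bunches, and the cure is exactly the asymmetric Thorup–Zwick cluster trick whose cost bound I sketched. The third, and the genuinely new ingredient relative to the stretch‑$3$ oracle obtained from Thorup–Zwick with $k=2$, is the midpoint argument in case (c): scanning $\hat B(s)$ for a pivot near the middle of the shortest path is what upgrades stretch $3$ to $2+\eps$ at a balanced $\sqrt n/\sqrt n$ split, and it is affordable only because $n$-PSP permits $\tilde O(\sqrt n)$ work per pair. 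Besides these, I would still have to verify the routine hitting‑set concentration bounds ($|A|,|\hat B(u)|=\tilde O(\sqrt n)$ w.h.p.) and that each candidate value is a realizable walk length, but these are standard.
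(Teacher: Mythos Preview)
Your route is genuinely different from the paper's, and the query-side analysis (the midpoint pivot inside $\hat B(s)$ to turn Thorup--Zwick stretch $3$ into stretch $2$, at $\tilde O(\sqrt{n})$ per pair) is correct. The paper does not do this: it instead invokes its \cref{thm:pd-k-tz} with $k=2$ as a black-box $2$-approximate $n$-PSP algorithm, whose improvement over the standard oracle comes from intersecting $B(s_i)$ with $B(t_i)$ rather than from your one-hop pivot. Both yield a $(2,1)$-approximation on the sparsified graph in $\tilde O(|E(H)|\sqrt{n})$ time, so this difference is cosmetic.

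The substantive divergence is upstream, and it is exactly the point you flag first. You assume a $(1+\eps_0,\beta_0)$-spanner of the \emph{entire} graph $G$ with $n^{1+O(\eps)}$ edges built in $\tilde O(m)$ time. The Thorup--Zwick construction used in the paper (\cref{cor:very-sparse-spanner}) takes $O(mn^{\eps})$ time, which for dense $G$ exceeds the target $\tilde O(m+n^{3/2+\eps})$. The paper sidesteps this by a high/low-degree split: it samples $S$ of size $\tilde O(\sqrt{n})$, observes that vertices outside $N(S)$ have degree $\le \sqrt{n}$, so their incident edges $\mathcal E_S$ number at most $n^{3/2}$; only on this already-sparse subgraph does it build the $(1+\eps,\beta)$-spanner (cost $O(n^{3/2}\cdot n^{\eps})$, now affordable) and run the $2$-approximation. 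Pairs whose shortest path uses an edge with both endpoints in $N(S)$ are handled separately via a cheaper $(2+\eps,\beta)$-spanner of the full graph (\cref{lem:lineartime-sparse}, buildable in $\tilde O(m+n^{3/2+\eps})$) plus SSSP from each landmark in $S$. So the paper's two-case structure is precisely there to avoid the assumption you make.

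Your unified plan can be rescued if you cite an honest $\tilde O(m)$-time $(1+\eps,\beta)$-spanner construction (e.g., the expected-$O(m)$-time construction of Elkin and Neiman), but note two caveats: you wrote ``emulator,'' yet your midpoint argument needs a vertex at distance exactly $\rho(s)-1$ on the shortest path, which requires $H$ to be an unweighted subgraph (a spanner, not a weighted emulator); and you must make sure the cited construction really lands inside $\tilde O(m)$ rather than $O(mn^{\eps})$. Absent that citation, the paper's split is the cleanest way to close the gap you yourself identified.
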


\cref{thm:pd-2approx-16time} is nearly tight with our conditional lower bound from \cref{cor:clique1} in the sense that \cref{cor:clique1} conditionally rules out a $(2-\eps)$-approximation in time $n^{2-\Omega(1)}$, while \cref{thm:pd-2approx-16time} provides a $(2+\eps,f(\eps))$-approximation in time polynomially faster than $O(n^2+m)$.

\paragraph{The ANSC problem.}
Our algorithmic results for ANSC are shown in \cref{table:ansc-res}.

\begin{table}[h!]
\centering
\begin{tabular}{ |c|c|c|c| } 
 \hline
Approximation & Running time & Theorem & Comments \\ 
 \hline\hline
  $2k+1+\epsilon$ & $mn^{\alpha_k}$ & Thm \ref{thm:2k+1ansc-approx} & directed graphs, $\alpha_k$ solves $\alpha_k(1+\alpha_k)^{k-1}=1-\alpha_k$  \\ 
 \hline
   $2+\eps$ & $mn^{1/2}$ & Thm \ref{thm:2-ansc-approx} & directed graphs, {\bf nearly matches Thm 5.1 of \cite{girth-icalp}} \\ 
 \hline
 $k+\epsilon$ & $mn^{1/k}$ & Thm \ref{thm:kapprox-ansc-undir} & $\epsilon>0$ and integer $k\geq 2$ \\ 
 \hline
  nearly $1+1/(k-1)$ & $m^{2-2/k}n^{1/k}$ & Thm  
 \ref{thm:unpubansc}
 & integer $k\geq 2$\\ 
 \hline 
 $(6,1)$ & $m+n^{2-1/6}$ & Thm \ref{thm:6-1anscundir} & \\ 
 \hline 
 $(2+\epsilon,\beta)$ & $m+n^{1.5+\eps}$ & Thm \ref{thm:ansctwoapproxbest} & $\eps>0$ and $\beta$ is a function of $\eps$\\
 \hline
 $(k^2,k^3 2^{k+1})$ & $m+n^{1+2/k}$ & Thm \ref{thm:k^2-approx-ansc} & integer $k\geq 2$ \\
 \hline
\end{tabular}
\caption{A summary of all of our approximation algorithms for unweighted ANSC. (Some results also work for weighted graphs as stated in each theorem.) All running times above are within polylog$(n)$ factors. All results are for undirected graphs unless otherwise specified.
}
\label{table:ansc-res}
\end{table}

For directed graphs, we provide approximation algorithms for ANSC, showing a strong separation between ANSC and $n$-PSP in the directed case. To obtain our algorithms for directed ANSC, we generalize previously known results from Girth by Dalirrooyfard and Vassilevska W. \cite{girth-icalp}, with a slight loss in the accuracy and the running time. These results are stated in \cref{table:ansc-res} and the appendix.

We now move to the case of undirected graphs.
Recall that from Table \ref{table:pdist-res}, \cref{thm:pd-k-tz} gives a nearly $2k-2$-approximation for $n$-PSP. For ANSC, however, we are able to achieve the better approximation ratio of $(k+\eps)$, as stated in the following theorem.

\begin{restatable}{theorem}{kanscundir}
\label{thm:kapprox-ansc-undir}
Given an $n$-node $m$-edge undirected graph $G$ with edge weights in $\{1,\ldots,M\}$, a constant $\epsilon>0$ and an integer $k\ge 3$, there is a randomized algorithm that computes a $(k+\epsilon)$-approximation for ANSC in $\tilde{O}(mn^{1/k}\log{(M)})$ time with high probability.
\end{restatable}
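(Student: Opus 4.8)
The plan is to extend the almost $k$-approximation girth algorithm of Kadria et al.\ so that it simultaneously estimates $SC(v)$ for \emph{every} vertex $v$, paying an extra $\log M$ factor to cope with weights. First I would reduce the weighted problem to $O(\log(nM))$ single-scale instances: for each dyadic value $g\in\{1,2,4,\dots\}$, discard all edges of weight $\ge g$ (none can lie on a cycle of length $<g$) and round the remaining weights up to the nearest multiple of $\epsilon g/\Theta(k\log n)$; this distorts the lengths of the targeted (length $<g$) cycles by only a $(1+\epsilon/\Theta(k))$ factor, so after rescaling $\epsilon$ and combining with the $k$ per-level losses incurred below, $(1+\epsilon/\Theta(k))^{O(k)}=1+O(\epsilon)$ yields the $(k+\epsilon)$ ratio. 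For each vertex I keep only the estimate from the appropriate scale (the smallest $g$ in which a cycle through it is detected), and the $O(\log(nM))$ scales account for the $\log M$ factor. This leaves an essentially unweighted, small-weight core problem per scale.

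Within one scale I would build a Thorup--Zwick-style hierarchy $A_0=V\supseteq A_1\supseteq\cdots\supseteq A_k=\emptyset$, where $A_{i+1}$ keeps each vertex of $A_i$ with probability $\tilde\Theta(n^{-1/k})$ chosen proportionally to degree --- so that any vertex set incident to $\tilde\Omega(m/|A_{i+1}|)$ edges hits $A_{i+1}$ w.h.p., which is what makes the running-time bookkeeping robust to high-degree vertices. One multi-source Dijkstra per level ($\tilde O(km)$ total) gives each $v$ its nearest level-$i$ vertex $p_i(v)$ and $\delta_i(v)=d(v,A_i)$. From every vertex $s$ I then run a bunch-limited Dijkstra (settle $u\in A_i\setminus A_{i+1}$ only while $d(s,u)<\delta_{i+1}(s)$, with an $s\in A_j$ allowed the deeper cutoff $\delta_{i+1+j}$) and, while growing the tree, for every scanned edge $(a,b)$ with both endpoints settled I let $\ell$ be their lowest common ancestor in the tree and record $d(s,a)+w(a,b)+d(s,b)-2d(s,\ell)$ as a candidate value for $\widehat{SC}(\ell)$. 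By the standard bunch-size bound together with the degree-proportional sampling (and the geometric shrinking of $|A_j|$ as the cutoff deepens), the total number of edge scans over all sources is $\tilde O(kmn^{1/k})$. Setting $\widehat{SC}(v)$ to the minimum candidate recorded for $v$ across all scales gives $\widehat{SC}(v)\ge SC(v)$ automatically, since every recorded value is a genuine cycle length.

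The hard part will be the approximation analysis: for a vertex $v$ with $SC(v)=g'$ (in the active window) and shortest cycle $C$, I must exhibit a recorded candidate for $v$ of value $\le(k+\epsilon)g'$. The easy case is $\delta_1(v)\ge g'/2$, where $C$ lies entirely inside $v$'s bunch and $v$'s own bunch-limited Dijkstra detects $C$ exactly; otherwise the pivot $s=p_1(v)$ lies within $g'/2$ of $v$ and the argument must ``climb'' to $s$. The delicate point is that the exploration rooted at $s$ detects cycles through the lca of scanned edges, not through $v$ directly, so one has to show that after at most $k$ such climbs the level structure forces the relevant cycle to be detected with $\ell=v$ and with length stretched by only an additive $O(g')$ per level; this is precisely where the ratio $k+\epsilon$, rather than the $2k-1$ one would get from a naive Thorup--Zwick distance estimate, has to come from, and it is the step that requires genuinely adapting the Kadria et al.\ charging argument rather than quoting it (and, I expect, where the restriction $k\ge 3$ enters). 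A secondary but still nontrivial obstacle is verifying all of this is robust to high-degree vertices and weights --- this is why the sampling is degree-proportional, why explorations are budgeted by edge scans rather than by vertices, and why the per-scale rounding must cost only an $O(\epsilon/k)$ fraction so that the losses compose to $O(\epsilon)$.
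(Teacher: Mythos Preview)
Your plan takes a genuinely different route from the paper, and it contains a real gap that you yourself flag but do not close.

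\textbf{The gap.} You record, for each scanned non-tree edge $(a,b)$, a candidate only for the single vertex $\ell=\mathrm{lca}(a,b)$. For girth that is enough, but for ANSC it is not: once the bunch-limited search from $v$ fails to contain $C_v$ and you ``climb'' to a pivot $s=p_1(v)$ (or higher), there is no reason $v$ should be the lca of any non-tree edge in the shortest-path tree rooted at $s$. The shortest paths from $s$ to vertices of $C_v$ need not pass through $v$ at all, so $v$ may lie on the tree path between $a$ and $b$ without being their lca---or may not lie on that tree path at all. Your sentence ``one has to show that after at most $k$ such climbs the level structure forces the relevant cycle to be detected with $\ell=v$'' is precisely the step that does not go through, and no amount of Kadria--et--al.\ style charging will force $\ell=v$.

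\textbf{What the paper does instead.} The paper fixes exactly this issue with its \textsc{CycleEstimationDijkstra} primitive: for every non-tree edge $(a,b)$ it uses a link-cut tree to update \emph{every} vertex on the tree path from $a$ to $b$ (not just the lca) with the value $w(a,b)+d_T(a,b)$. Their Lemma then guarantees that if an exploration rooted at $s$ sees all edges of a cycle $C$, then \emph{every} $y\in C$ receives an estimate at most $2d(s,x)+|C|$, where $x\in C$ minimizes $d(s,x)$. This is the missing ingredient that makes climbing to a nearby pivot actually produce an estimate for $v$. Structurally, the paper does not use a Thorup--Zwick hierarchy or the Kadria--et--al.\ framework at all; it iterates over scales $D=(1+\epsilon)^i$ and uses an ``on/off'' labeling scheme adapted from the \emph{directed} girth algorithm of Dalirrooyfard--Vassilevska~W.: one sample set $S$ of size $\tilde O(n^{1/k})$ with full \textsc{CycleEstimationDijkstra} from each $s\in S$ handles vertices with $d(v,p(v))\le (k-1)D/2$, and the remaining ``on'' vertices are processed by adaptive ball-growing among on vertices only, turning off $n^{i/k}$ vertices whenever a search reaches radius $iD/2$ but not $(i+1)D/2$. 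The uniform-degree reduction (splitting high-degree vertices into zero-weight trees) replaces your degree-proportional sampling.

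If you want to salvage your outline, the minimal change is to replace the lca-only update with the path-update via link-cut trees; once you have that, the Thorup--Zwick bunch structure can likely be made to work, though the paper's on/off scheme gives a cleaner amortization.
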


Very recently, a result similar to \cref{thm:kapprox-ansc-undir} was shown for the Girth problem \cite{sodagirth}: an almost $k$-approximation algorithm for the undirected girth in $\tilde{O}(n^{1+1/k})$ time. Our running time for ANSC is instead $\tilde{O}(mn^{1/k})$ and this dependence on $m$ is not unexpected since for Girth, one generally only runs Dijkstra's algorithm until finding a cycle which takes $\tilde{O}(n)$ time, whereas for ANSC, we execute Dijkstra's algorithm to completion.

Now, we move to algorithms for ANSC of the form $m+n^{2-\eps}$. We consider these to be our main algorithmic results.

We begin with a ``proof of concept'' algorithm which shows that there is indeed an algorithm for ANSC with constant multiplicative and additive factors in time $m+n^{2-\eps}$ for constant $\eps$.

\begin{restatable}{theorem}{sixoneanscundir}\label{thm:6-1anscundir}
Given an $n$-node $m$-edge undirected unweighted graph $G$, there is a randomized algorithm that computes a $(6,1)$-approximation for ANSC in $\tilde{O}(m+n^{2-1/6})$ time with high probability.
\end{restatable}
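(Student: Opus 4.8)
The plan is to reduce ANSC to many \emph{bounded} BFS computations, handling separately the vertices whose shortest cycle is ``local'' and those whose shortest cycle is ``long''. The common safety net is the standard fact relating BFS to cycles: if we run BFS from a vertex $u$ and find a non‑tree edge $(x,y)$ whose endpoints lie below two distinct children of $u$ in the BFS tree, then $u$ sits on a genuine simple cycle of length $d(u,x)+d(u,y)+1$, and the cheapest such edge gives a value in $[\,SC(u),\,2\,SC(u)+1\,]$ \emph{provided} the shortest cycle through $u$ is contained in the explored region. Consequently every number the algorithm ever emits is the length of an actual simple cycle through the relevant vertex, so the lower bound $\widehat{SC}(v)\ge SC(v)$ is automatic; all the work is in the upper bound and the running time.

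\textbf{Local cycles.} Fix a threshold $\theta$. From every vertex $v$, run a BFS that stops as soon as it has either scanned $\theta$ edges or certified a cycle through $v$ in the sense above; the total cost is $\tilde{O}(n\theta+m)$. If a cycle is certified, this yields a $(2,1)$‑estimate of $SC(v)$ as long as $SC(v)=O(\theta)$. This disposes of all vertices whose shortest cycle lies inside a $\theta$‑edge neighborhood.

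\textbf{Long cycles.} Sample a pivot set $S$ of size roughly $\tilde{O}(n/\sqrt{\theta})$ so that, w.h.p., every explored region large enough to have exhausted the $\theta$‑edge budget contains a pivot (and every high‑degree vertex has a pivot neighbor). For each $v$ whose local BFS halted by the budget without a certificate, the explored ball is large, hence contains a pivot $p$ at distance $O(SC(v))$ from $v$; running a bounded BFS from each pivot and gluing $p$'s BFS information onto the short $v$--$p$ connection reconstructs, via the same ``non‑tree edge below two children'' device now applied around $v$, a certified simple cycle through $v$ of length a constant times $SC(v)$. For dense inputs one first replaces $G$ by a constant‑stretch, cycle‑preserving sparsifier (a fault‑tolerant spanner), so the pivot‑BFS cost stays under control; composing its stretch (which one keeps at $3$) with the factor‑$2$ slack of BFS cycle detection produces the multiplicative factor $6$, and the single additive edge on the certified cycle produces the $+1$. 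Outputting, for each $v$, the smallest certified cycle length seen for it gives $SC(v)\le \widehat{SC}(v)\le 6\,SC(v)+1$. Choosing $\theta$ to balance $\tilde{O}(n\theta)$ against the pivot‑BFS cost (plus the $\tilde{O}(m)$ for reading $G$ and building the sparsifier) lands the total running time at $\tilde{O}(m+n^{2-1/6})$.

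\textbf{Main obstacle.} The delicate point is the simplicity bookkeeping when reconstructing a cycle through $v$ from a pivot's BFS: a closed walk $v\leadsto p\leadsto(\text{cycle‑closing edge})\leadsto p\leadsto v$ need not contain any simple cycle \emph{through $v$} — if the $v$--$p$ path is reused, $v$ may be only a transit vertex — so the construction has to be set up (by forcing $v$'s two exploration directions, or the two sides of the closing edge relative to $v$, to be internally disjoint) to produce a genuine simple cycle on $v$, while simultaneously holding the multiplicative blow‑up in that step to exactly $2$ and the additive term to $1$. I expect this, together with checking that each regime's estimate is both a valid over‑estimate and within the claimed factor, to be the crux. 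The secondary, more routine obstacle is tuning the sparsifier's sparsity/stretch and the threshold $\theta$ so that all costs fall below $n^{2-1/6}$ while the composed stretch stays $3$ — a finite optimization rather than a conceptual difficulty.
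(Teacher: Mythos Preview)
Your proposal has the right cast of characters (random pivots, fault-tolerant spanner, BFS from pivots) but two concrete things are off, and together they constitute a genuine gap.

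First, the arithmetic: a $1$-fault-tolerant $3$-spanner has $\tilde O(n^{3/2})$ edges, so with $\tilde O(\sqrt n)$ pivots the pivot-BFS cost is already $\tilde O(n^2)$; no setting of $\theta$ rescues this. The paper instead takes a $1$-fault-tolerant $5$-spanner ($\tilde O(n^{4/3})$ edges), and the factor $6$ arises \emph{additively} as $5+1$, not multiplicatively as $3\times 2$: the spanner (together with $v$'s own edges) contains a cycle through $v$ of length at most $5\,SC(v)$, and running the cycle-detection BFS from a pivot $s$ near $v$ adds an overhead of $2d(s,v)\le SC(v)+1$, giving $6\,SC(v)+1$.

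Second, and more importantly, a fault-tolerant spanner is \emph{not} globally cycle-preserving; it only guarantees a short cycle through $v$ once you reinstate $v$'s incident edges. So you cannot ``replace $G$ by the sparsifier'' and then run a $2$-approximate ANSC on it. The paper's missing ingredient in your sketch is the per-pivot graph $G_s$: after computing $p(v)=\arg\min_{s\in S}d(v,s)$, set $G_s = P \cup \bigcup_{v:\,p(v)=s} E_G(v)$ and run \textsc{CycleEstimationDijkstra}$(s)$ in $G_s$. Each non-spanner edge lands in at most two $G_s$'s, so total cost is $\tilde O(|S|\cdot |P|+m)=\tilde O(n^{1/2}\cdot n^{4/3}+m)=\tilde O(m+n^{11/6})$. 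The ``simplicity bookkeeping'' you flag as the crux is exactly what \textsc{CycleEstimationDijkstra} (Dijkstra tree plus a link-cut tree over it) resolves: every non-tree edge induces a genuine simple cycle on the tree, and the data structure propagates the minimum such cycle length to every vertex on it. Finally, the paper's ``local'' case is not a bounded BFS from every vertex; it deletes all edges between high-degree ($\ge \sqrt n$) vertices, leaving $O(n^{3/2})$ edges, and runs the $(3+\eps)$-approximate ANSC of \cref{thm:kapprox-ansc-undir} on that subgraph in $\tilde O(n^{3/2}\cdot n^{1/3})=\tilde O(n^{11/6})$ time---this handles exactly the $v$ whose shortest cycle avoids two consecutive high-degree vertices, while the pivot/$G_s$ part handles the rest.
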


We will significantly improve upon \cref{thm:6-1anscundir} in running time and multiplicative factor in our next result. However, our next algorithm does not \emph{strictly} improve upon \cref{thm:6-1anscundir} partially due to its additive error of only 1. 

Our goal is to reduce the multiplicative approximation ratio as much as possible, with the goal of getting it down to nearly 2, to match our above algorithm for $n$-PSP.

\begin{restatable}{theorem}{ansctwoapproxbest}\label{thm:ansctwoapproxbest}
Given an $n$-node $m$-edge undirected unweighted graph $G$ and a constant $\epsilon>0$, there is a randomized algorithm that computes a $(2+\epsilon,\beta)$-approximation for ANSC in $\tilde{O}(n^{1.5+\eps}+m)$ time, where $\beta$ is a constant depending only on $\epsilon$. 
\end{restatable}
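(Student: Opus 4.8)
The plan is to follow the template of our $(2+\epsilon,\beta)$-approximate $n$-PSP algorithm (\cref{thm:pd-2approx-16time}) and our $(6,1)$-approximate ANSC algorithm (\cref{thm:6-1anscundir}), now aiming to output, for every vertex $v$, an \emph{actually constructed} cycle through $v$ of length at most $(2+\epsilon)SC(v)+\beta$ --- the "actually constructed'' part is what certifies the lower-bound side $\hat{SC}(v)\ge SC(v)$ of the required estimate. The starting point is the identity $SC(v)=\min_{u\sim v}\bigl(1+d_{G\setminus\{uv\}}(v,u)\bigr)$, which exhibits $SC(\cdot)$ as a single-edge-fault distance problem. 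First I would, in $\tilde O(m)$ time, replace $G$ by a near-linear-size $1$-edge-fault-tolerant $(1+\epsilon,\beta_0)$-spanner $H$ with $n^{1+o(1)}$ edges, so that for every edge $uv$ the quantity $d_{H\ominus uv}(v,u)$ lies between $d_{G\setminus\{uv\}}(v,u)$ and $(1+\epsilon)d_{G\setminus\{uv\}}(v,u)+\beta_0$; this is what turns the edge dependence into the additive $+m$ term, and from here on every BFS costs only $n^{1+o(1)}$.

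Having sparsified, the task is to approximate $\min_{u\sim v}d_{H\ominus uv}(v,u)$ for every $v$ without paying $\Omega(m)$ single-fault computations. Fix a length scale $\tau=n^{1/2+O(\epsilon)}$ and split on $g:=SC(v)$. In the "short'' case $g<\tau$, all of the optimal cycle $C$ lies within distance $\lceil g/2\rceil<\tau$ of $v$, and I would recover a comparable cycle by a truncated, degree-capped BFS from $v$ in $H$ --- the regime and charging argument of \cref{thm:6-1anscundir}, but with truncation depth only $\tau$ and on a graph with $n^{1+o(1)}$ edges --- whose total cost over all such $v$ is $\tilde O(n^{1.5+O(\epsilon)})$ by the usual ball-size argument, with a second, denser hitting set cleaning up the vertices whose local balls are large. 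In the "long'' case $g\ge\tau$, I would sample a set $S$ of $\tilde O(n/\tau)$ vertices so that every cycle on at least $\tau$ vertices is hit with high probability, run BFS from each $s\in S$ in $H$ (total $\tilde O(n^{1.5+O(\epsilon)})$) recording the shortest-path trees $T_s$, and compute an estimate $\hat{SC}(s)$ for each $s\in S$. Since $C$ is the \emph{shortest} cycle through $v$, one of its two arcs to a hitter $s\in S\cap C$ is a shortest $v$--$s$ path, so $d(v,s)\le g/2$; combining $T_s$, the distance $d(v,s)$ (which, for the $O(n)$ pairs $\{(v,s_v)\}$, can if one prefers be obtained by a single black-box call to \cref{thm:pd-2approx-16time}), and $\hat{SC}(s)$, I would stitch together a closed walk through $v$ of length $(2+\epsilon)g+O(1)$ and then extract from it a simple cycle through $v$ of the same order.

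The main obstacle is precisely that last step: approximate distances, emulator rerouting, and hitting-set slack all naturally produce \emph{closed walks} through $v$, while a short closed walk through $v$ can fail to contain any short \emph{simple} cycle through $v$ (think of a long path dangling off a triangle), so merely estimating lengths is not enough --- one must exhibit a genuine cycle. The way around this is to assemble the closed walk out of the two internally-disjoint arcs of the true optimal cycle $C$, each rerouted through $H$ and through the landmark $s$, and to argue that the rerouting inflates lengths by only a $(1+\epsilon)$ factor plus $O(1)$ while preserving internal disjointness well enough to extract a cycle through $v$; the parity of $g$ together with the additive slack of the fault-tolerant spanner are what force the extra $\epsilon$ and the constant $\beta$. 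A secondary obstacle, already present in \cref{thm:6-1anscundir}, is making the short-cycle sweep fit the $\tilde O(n^{1.5+O(\epsilon)}+m)$ budget uniformly over all vertices, which I expect to handle via the degree-capping together with the denser hitting set.
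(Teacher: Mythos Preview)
Your proposal departs substantially from the paper's argument and has a real gap at its first step. The paper does not sparsify to a single fault-tolerant spanner; it splits by \emph{degree}, not by cycle length. Part~1 deletes all edges joining two high-degree ($>\sqrt n$) vertices, leaving $O(n^{1.5})$ edges, builds an ordinary $(1+\epsilon',\beta')$-spanner of this subgraph with $n^{1+\epsilon'}$ edges, and runs the $\tilde O(m'\sqrt n)$-time $2$-approximate ANSC of \cref{thm:warmupp} there. Part~2 samples $S$ of size $\tilde O(\sqrt n)$ hitting the neighborhood of every high-degree vertex, builds a $(2+\epsilon,\beta'')$-spanner $H$ of all of $G$ with $O(n^{1+\epsilon})$ edges, and runs full \textsc{CycleEstimationDijkstra}$(s)$ on $H$ for each $s\in S$. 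Part~3 invokes \cref{thm:4k2^2k-approx} (which \emph{is} based on fault-tolerant $(2k{-}1)$-spanners) only to handle the constant-length cycles for which Parts~1--2 need $SC(v)$ to exceed a constant.

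The gap in your plan is the step ``replace $G$ by a $1$-edge-fault-tolerant $(1+\epsilon,\beta_0)$-spanner $H$ with $n^{1+o(1)}$ edges'' and then approximate $SC(v)$ by searches \emph{inside $H$}. Even granting that such a spanner exists with $\tilde O(m)$ construction (the paper never uses one; its fault-tolerant spanners have purely multiplicative stretch $2k-1$), fault-tolerance gives you $d_{H\setminus(v,u)}(v,u)\lesssim d_{G\setminus(v,u)}(v,u)$, but to close a cycle through $v$ \emph{inside $H$} you also need the edge $(v,u)$ to lie in $H$. The spanner may drop both edges of $C_v$ incident to $v$ while retaining only incident edges $(v,w)$ for which $d_{G\setminus(v,w)}(v,w)$ is much larger than $SC(v)$, and then $H$ simply has no short cycle through $v$. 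This is exactly why \cref{lem:cycle-est-fault-tolerant} insists on adding $E_G(v)$ back to the spanner, and why in \cref{thm:6-1anscundir} the Dijkstra is run on $P\cup\bigcup_{w\in F(s)}E_G(w)$ rather than on $P$ alone.

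Finally, the obstacle you correctly single out --- that a short closed walk through $v$ need not contain a short simple cycle through $v$ --- is resolved in the paper not by extracting a cycle from a walk but by two devices that only ever output honest cycle lengths. First, \textsc{CycleEstimationDijkstra} (\cref{lem:datastructure}) by construction returns, for every vertex, the length of an actual cycle through it (a tree path plus one non-tree edge). Second, \cref{lem:cycle-spanner-general} and \cref{lem:spanner-u} show that an ordinary $(\alpha,\beta)$-spanner already contains a genuine cycle through $v$ of length at most $\alpha\,SC(v)+O(\beta)$ once $SC(v)$ exceeds a constant depending on $\alpha,\beta$: one chops $C_v$ into $\lceil\alpha+2\rceil$ arcs, approximates each arc in the spanner, and uses the minimality of $SC(v)$ to argue that each approximated arc avoids $v$ internally, so the concatenation is a true cycle. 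These two lemmas are precisely the rigorous form of your ``assemble the walk out of the two internally-disjoint arcs'' idea, and they are what let the paper's Parts~1 and~2 go through with ordinary (non-fault-tolerant) spanners and still certify a $(2+\epsilon,\beta)$-estimate.
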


We also use fast matrix multiplication to obtain improvement results for the $ST$-shortest paths problem, which is a special case of $n$-PSP. They are included in \cref{app:st}.

\section{Technical Overview}\label{sec:techniques}

We use many different techniques that were originally designed for a range of different problems and data structures, such as girth, APSP, distance oracles, spanners, fault-tolerant spanners, the simplicial vertex problem, and link-cut trees. The applicability of some of these problems to approximate $n$-PSP and ANSC is perhaps unexpected. For example, it is not clear how something like a fault-tolerant spanner would be useful in a setting that does not involve faulty vertices or edges. 

Although we pull together results from a variety of different problems, our results are not ``just'' an application of prior techniques. In the following overview of our techniques, we provide an overview of many of our results, choosing to highlight certain results that require significantly new ideas from prior work. In particular, we highlight a collection of lower bounds for $n$-PSP, as well as our collection of approximation algorithms for ANSC with running times of the form $\tilde{O}(m+n^{2-\eps})$.

\subsection{Conditional Lower Bounds}
Our conditional lower bounds are from standard hardness assumptions for basic problems such as triangle detection, $k$-cycle, and $k$-clique. Many of our conditional lower bounds are quite straightforward reductions from these problems. One of our conditional lower bounds, however, is more technically substantial, and we highlight it next.

\subsubsection{Highlight: Hardness of $n$-PSP from $k$-(Hyper)Clique Hypotheses}

For the undirected unweighted $n$-PSP problem, by adapting known techniques one could only prove fine-grained hardness for approximation ratio less than $2$.  To overcome this issue,  we give an interesting and novel reduction from the Combinatorial $k$-Clique Hypothesis to the $n$-Pairs Minimum Distance problem, which not only yields lower bounds for approximation ratio higher than 2, but also gives tight bounds that match some of our algorithms in the low-approximation regime. We believe this powerful reduction will inspire more fine-grained hardness results for related problems.

As an illustrative example, we describe the case of $k=4$. Suppose we are given a 4-clique instance on a 4-partite graph $G$ with vertex partition $V_1,V_2,V_3,V_4$ where $|V_i|=n$. In our reduction we create a 5-partite graph $G'$ with vertex partition  $V_{12},V_{23},V_{34},V_{41},V'_{12}$ from left to right, where each part contains $n^2$ vertices, and we will connect edges only between adjacent parts. 
The vertices in $V_{12}$ (and $V_{12'}$) are indexed by vertex pairs $(v_1,v_2)\in V_1\times V_2$ from the input graph $G$, and $V_{23},V_{34},V_{41}$ are similarly (according to the subscripts) indexed by vertex pairs from $G$. For every $v_1\in V_1,v_2\in V_2,v_3\in V_3$, we connect an edge between the two vertices $(v_1,v_2)\in V_{12},(v_2,v_3)\in V_{23}$ in $G'$, if and only if $(v_1,v_2,v_3)$ form a $3$-clique in $G$. We similarly add edges between $V_{23},V_{34}$, between $V_{34},V_{41}$, and between $V_{41},V'_{12}$. The resulting graph $G'$ has $N=5n^2$ vertices and at most $M=4n^3$ edges. 

If $G$ contains a 4-clique $(v_1,v_2,v_3,v_4)$, then in $G'$ there exists a length-$4$ path from $(v_1,v_2)\in V_{12}$ to $(v_1,v_2)\in V_{12}'$, by simply going from left to right along the edges specified by $v_1,v_2,v_3,v_4$.
The interesting part is the converse case: when $G$ does not contain a 4-clique, what is the shortest possible length of any path from $(v_1,v_2)\in V_{12}$ to $(v_1,v_2)\in V_{12}'$? Note that this question amounts to solving an $n$-Pairs Minimum Distance instance on $G'$. The answer cannot be $4$, since a length-4 path would immediately recover a 4-clique (containing $(v_1,v_2$) in $G$. The next smallest length is $6$ (with one backward step, and one extra forward step), and it turns out that a length-$6$ path would also have to recover a 4-clique in $G$: for example, the following length-$6$ path \[(v_1,v_2)\to (v_2,v_3)\to (v_3,v_4) \to (v_2',v_3)\to (v_3,v_4')\to (v_4',v_1)\to (v_1,v_2)\]
implies the existence of the 4-clique $(v_1,v_2,v_3,v_4')$; the other possibilities of length-$6$ paths can be similarly verified. Hence, the smallest possible answer is 8. 
Then, any better than 2-approximation combinatorial algorithm for $n$-Pairs Minimum Distance with $m\cdot n^{1/2-\eps} $ running time would be able to distinguish these two cases of distance $4$ or $\geq 8$ and hence solve the 4-clique instance, in $M\cdot N^{1/2-\eps}\le O(n^{4-\eps/2})$ time, contradicting the combinatorial 4-clique hypothesis. This lower bound nearly matches the upper bound from \cite{DBLP:conf/esa/Agarwal14}.

The reduction described above can be generalized to larger $k$, but naive ways to prove the lower bound on the shortest distance would require an exhaustive case analysis, which does not work for general values of $k$. To overcome this issue, we provide a clean combinatorial argument that can pinpoint the vertices participating in a $k$-clique when there is a too short path. This combinatorial argument is highly extendable:  It yields lower bounds with approximation ratio that increases with $k$, and it even allows us to capture \emph{hypercliques} rather than cliques. As a result, we can prove \emph{non-combinatorial} lower bounds based on the $(r,k)$-hyperclique hypothesis (albeit with slightly worse exponents compared to the combinatorial ones).

As previously noted, several parameter regimes of this conditional lower bound are nearly tight with algorithms (both previously known algorithms and new ones).

\subsection{Approximation algorithms}

\paragraph{\textsc{CycleEstimationDijkstra} data structure for ANSC.} Algorithms for approximating distances and for approximating Girth, generally have the following structure: Run Dijkstra's algorithm from a random sample of vertices, and run a \emph{truncated} version of Dijkstra's algorithm from a large set of vertices. For ANSC, we can employ a similar strategy, however the situation becomes slightly more complicated. This is because when we perform Dijkstra's algorithm for ANSC and we detect a cycle, we would like to update the estimate of $SC(v)$ for \emph{all} vertices $v$ on the cycle. To accomplish this, we employ a data structure that we call \textsc{CycleEstimationDijkstra}, which uses a modified version of Dijkstra's algorithm along with the power of \emph{link-cut trees} \cite{DBLP:journals/jcss/SleatorT83} to keep track of the relevant cycle information for all vertices.  

Our warm-up algorithm for ANSC that gives a 2-approximation in time $\tilde{O}(mn^{1/2})$ is simply a combination of the \textsc{CycleEstimationDijkstra} data structure with the above standard sampling and truncated Dijkstra techniques.

\paragraph{Time/accuracy trade-off for better running time.} We first focus on algorithms that achieve better than $\tilde{O}(mn^{1/2})$ running time and worse than 2-approximation. Such a result for $n$-PSP follows from the following observation: a 2-approximation algorithm in $\tilde{O}(mn^{1/2})$ time can essentially be plugged into the base case of Thorup-Zwick distance oracles \cite{thorup2005approximate}. This result for $n$-PSP appears in the appendix, however for ANSC, we obtain better results and we focus on those here.

Specifically, for ANSC, we obtain a $(k+\eps)$-approximation in $\tilde{O}(mn^{1/k})$ time (\cref{thm:kapprox-ansc-undir}). A very recent result gave an algorithm for Girth with a similar guarantee. However, our techniques are completely different from theirs. Instead of taking inspiration from an undirected Girth algorithm, we take inspiration from a \emph{directed} Girth algorithm, even though our graph is undirected and techniques for Girth have traditionally been very different in the directed and undirected settings. In general it is not trivial to extend an algorithm for finding cycles from the directed case to the undirected case, as finding undirected cycles introduces challenges that don't appear in directed graphs. To illustrate this, in undirected graphs if we are not careful our algorithm might estimate traversing a path from a node $v$ to $u$ and back to $v$ as a cycle, whereas in the directed case this does not happen. As it turns out, our \textsc{CycleEstimationDijkstra} data structure is useful in addressing this issue.

To obtain our $(k+\eps)$-approximation algorithm for undirected ANSC, we use  \textsc{CycleEstimationDijkstra} together with a labeling procedure similar to our directed ANSC algorithm, which is in turn from the Girth approximation algorithm of \cite{girth-icalp}. These techniques allow us to moderate the size of the vertex sets visited while performing Dijkstra's algorithm and prevent over-processing nodes. Specifically, whenever we do \textsc{CycleEstimationDijkstra}, we only visit nodes with a particular label, and we change the label of a node $v$ when we know that we must have a good enough estimate for $SC(v)$. 

As previously described, finding cycles in undirected graphs presents challenges that are not present for directed graphs, however the opposite is also true; neither setting is clearly strictly harder than the other. We take advantage of the undirected setting to simplify some aspects of the algorithm, which actually yields better bounds for the undirected setting than the directed setting. In particular, for directed graphs, the labeling procedure is used on top of an induction, however we determine that this induction is unnecessary for undirected graphs, and removing it yields an algorithm with better running time and approximation factor.

\paragraph{Time/accuracy trade-off for better approximation ratio.}
On the other side of the time/accuracy trade-off, we consider getting a better that 2-approximation with running time slower than $\tilde{O}(mn^{1/2})$. Such a result was previously known for $n$-PSP \cite{DBLP:conf/esa/Agarwal14}. To get such a result for ANSC, we take inspiration from the algorithm of Dahlgaard, B{\ae}k Tejs Knudsen,  and St{\"o}ckel \cite{dahlgaard2017new} for approximating the girth of a graph. An algorithm very similar to \cite{dahlgaard2017new} carries over from Girth to ANSC. The only qualitative differences are our use of edge sampling instead of vertex sampling and our use of the \textsc{CycleEstimationDijkstra} data structure (\cref{thm:unpubansc}).

\subsubsection{Algorithms in $\tilde{O}(m+n^{2-\eps})$ time}

The approximation algorithms we have mentioned so far have time complexities $m\cdot n^{c}$ for some $c>0$, which are not desirable for very dense graphs. In our next results, our goal is to minimize the dependency of the running times on $m$, while keeping them subquadratic in terms of $n$, that is, we want time complexity $\tilde{O}(m+n^{2-\eps})$ for constant $\eps >0$.

One simple idea is to use spanners to reduce the number of edges to $m' = n^{2-\Omega(1)}$ (while preserving the distances up to some factor), and then apply our previous algorithms to the sparsified graph in $m'\cdot n^c \ll n^2$ time. This idea has been used e.g. in the distance oracle of Wulff-Nilsen \cite{apsp1}. There are many spanner constructions that take only $\tilde{O}(m)$ time, allowing us to obtain $\tilde{O}(m+n^{2-\eps})$ overall time complexity. This simple idea works for the $n$-PSP problem (e.g., \cref{thm:pd-ksquared-2ktime}), but yields quite a large approximation factor, which is the product of the approximation factors of the spanner and the approximation algorithm. For the ANSC problem, this simple idea does not immediately work, since spanners do not give any guarantees on cycle lengths.

\paragraph*{$n$-PSP.}

We briefly explain the idea behind the $(2+\eps,\beta)$-approximation algorithm for $n$-PSP in $\tilde O(m+n^{3/2})$ time for any constant $\eps>0$ and constant $\beta$ depending on $\eps$ (\cref{thm:pd-2approx-16time}). We note that this approximation guarantee is nearly tight with our conditional lower bound in \cref{cor:clique1}).
 
 We take an $O(n^{3/2})$-edge subgraph containing the incident edges of all vertices with degree at most $n^{1/2}$, and compute a $(1+\eps,\beta)$ spanner of size $n^{1+\eps}$ \cite{thorup2006spanners} for this subgraph. Then, we perform the $\tilde O(m\sqrt{n})$-time $2$-approximation algorithm on this spanner. 
 
 Now, it remains to take care of the pairs $(s_i,t_i)$ whose shortest paths pass through some high-degree ($\geq n^{1/2}$) vertex. To do this, we take a sample $S$ of $\sqrt{n}$ nodes, and compute single-source shortest paths from all $s\in S$ in a $(2+\eps,\beta)$-spanner of the graph, and use $\min_{x\in S}\{d(s_i,x)+d(x,t_i)\}$ as a $(2+\eps,\beta')$ estimate of $d(s_i,t_i)$.

\paragraph*{Highlight: ANSC.} As mentioned above, spanners do not provide direct guarantees on cycle lengths. However, we observe that they do give some indirect guarantees as follows. Consider a shortest cycle through $v$, denoted $C_v$. If we divide  $C_v$ into at least three almost equal subpaths where two of the subpaths have $v$ as an endpoint, each subpath is a shortest path. Take one subpath $P$. In any $k$-spanner, there is a subpath $P'$  of length at most $k|P|$ between its endpoints. Since none of these subpaths pass through $v$ except for the ones that start or finish at $v$, the concatenation of all these approximated subpaths contains a cycle that includes $v$ of length roughly $k$ times $SC(v)$.

Even with the above idea, there are still two issues to overcome for ANSC that are not present for $n$-PSP: 
First, when the cycle is very small (length at most $4$) we can't apply the above idea. Second, if our spanner has $t$-additive error and we divide $C_v$ into $s$ subpaths, then the additive error in the estimated cycle length is $t\cdot s$ rather than $t$. 

Our solution to avoid the above issues is to use a \emph{fault-tolerant spanner}.
First, we use a $1$-fault-tolerant $k$-spanner, which is a subgraph $H$ such that for each $u,v$ and edge $e$, if $d$ is the distance between $u$ and $v$ in $G\setminus e$, then $d\le d_H(u,v)\le kd$. 

The following observation illustrates the usefulness of fault-tolerant spanners for this application.
If $P$ is a $1$-fault tolerant $k$-spanner, then for each node $v$, the subgraph consisting of $P$ and all the edges adjacent to $v$ contains a cycle around $v$ of length at most $k\cdot SC(v)$. 

The way we utilize the above observation is as follows. We obtain a sample set $S$ of the nodes, and we perform \textsc{CycleEstimationDijkstra}$(s)$ from each $s\in S$ in a subgraph $G_s$ of the underlying graph $G$. The spanner $P$ is contained in all of the subgraphs $G_s$, and these subgraphs are selected in a way that for each node $v$, all the edges adjacent to $v$ appear in at least one subgraph $G_s$. Moreover, any edge that is not in the spanner $P$ appears in at most $2$ of these subgraphs. This means that from the observation above we get an estimate for $SC(v)$ for each $v$ from one of the \textsc{CycleEstimationDijkstra}s, and our running time is $O(|S|\cdot |E(P)|+m)$. By using a $1$-fault-tolerant $5$-spanner, this idea gives us a $(6,1)$-approximation algorithm for ANSC in time $\tilde{O}(m+n^{2-{1/6}})$. 

Now, our goal is to obtain a better multiplicative approximation factor. 
To develop our $(2+\epsilon,\beta)$-approximation algorithm, we first use $1$-fault tolerant $k$-spanners for large $k$ to get approximation algorithms with running time close to linear. We use this algorithm for estimating small (constant sized) cycles. For bigger cycles, instead of fault tolerant spanners, we use the composition of the  spanners by \cite{k-k-1spanner} and \cite{thorup2006spanners}, together with the observations mentioned at the beginning of this section.

\section{Preliminaries}

\paragraph*{Notations and assumptions.}
Throughout this paper, for weighted graphs, we assume all edge weights are non-negative.

For any vertex $v$, let $N_G(v)$ denote the neighborhood of $v$, and for any positive integer $x$, let $N_G(v,x)$ denote the closest $x$ vertices to $v$, where ties are broken arbitrarily. For any positive integer $r$, let $B_G(v,r)$ be the ball around $v$ of radius $r$; that is, the set of vertices of distance at most $r$ from $v$.
 For all of these notations, if the graph is clear from context, we omit the subscript. 

For a node $v$, Let $C_v$ be the shortest cycle passing through $v$, and let $SC(v)=|C_v|$ be the length of this cycle.

\paragraph*{Known results on spanners.}
Let $G$ be a graph, and let $r$ and $k$ be two positive integers. A subgraph $P$ of $G$ is a $k$-spanner of $G$ if for every two nodes $u,v$, $d_P(u,v)\le kd_G(u,v)$. Baswana and Sen \cite{2k-1spanner} show an $O(km)$-time randomized algorithm to build a $2k-1$ spanner for weighted graphs with at most $O(kn^{1+1/k})$ edges, which was later derandomized by Roditty, Thorup, and Zwick \cite{DBLP:conf/icalp/RodittyTZ05}.

A subgraph $P$ of $G$ is an $r$-fault-tolerant $k$-spanner if for every subset $F$ of nodes with $|F|\le r$, $P$ is a $k$-spanner in the graph $G\setminus F$: for every two nodes $u,v\in G\setminus F$, $d_{P\setminus F}(u,v)\le kd_{G\setminus F}(u,v)$. We only use $1$-fault-tolerant spanners in this paper. Dinitz and Krauthgamer \cite{faulttolerant-spanner} show that any $k$-spanner with at most $f(n)$ edges can be transformed into a $1$-fault-tolerant $k$-spanner with at most $O(\log{n})\cdot f(2n)$ edges. Combining this with the $(2k-1)$-spanner of \cite{2k-1spanner}, we get the following lemma.

\begin{lemma}\label{lem:fault-tolerant-const}
Given an $n$-node $m$-edge graph $G$ and an integer $k>1$, one can build a $1$-fault-tolerant $2k-1$ spanner of size at most $O(kn^{1+1/k})$ in $\tilde{O}(m)$ time. 
\end{lemma}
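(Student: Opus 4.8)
The plan is to prove Lemma~\ref{lem:fault-tolerant-const} by composing the two tools cited just before the statement. First I would invoke the Baswana--Sen construction \cite{2k-1spanner}: given an $n$-node $m$-edge graph $G$ and an integer $k>1$, it produces, in $O(km)=\tilde{O}(m)$ randomized time, a $(2k-1)$-spanner $S$ of $G$ with at most $O(kn^{1+1/k})$ edges. Since $k>1$ is treated as a constant here, the factor $k$ in the running time and in the edge bound is absorbed into the $\tilde{O}$ and the $O(\cdot)$ notation, so $S$ has $O(n^{1+1/k})$ edges, i.e.\ the function $f(n) = O(kn^{1+1/k})$ in the statement of the Dinitz--Krauthgamer transformation.

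Next I would apply the Dinitz--Krauthgamer result \cite{faulttolerant-spanner}, which states that any $k'$-spanner of an $n$-node graph with at most $f(n)$ edges can be converted into a $1$-fault-tolerant $k'$-spanner with at most $O(\log n)\cdot f(2n)$ edges. Plugging in $k' = 2k-1$ and $f(n) = O(kn^{1+1/k})$, the resulting $1$-fault-tolerant $(2k-1)$-spanner $P$ has at most $O(\log n)\cdot O\!\left(k(2n)^{1+1/k}\right) = O\!\left(k\log n\cdot n^{1+1/k}\right)$ edges, which is $O(kn^{1+1/k})$ up to the polylogarithmic factor hidden by $\tilde{O}$ in the lemma's time bound; since the lemma only claims size $O(kn^{1+1/k})$ (and the paper's convention, stated around the tables, is that running times are within $\mathrm{polylog}(n)$ factors), I would note that the $\log n$ factor can be folded in, or alternatively start from a $(2k-1)$-spanner with slightly fewer edges so that the product is exactly $O(kn^{1+1/k})$ --- either way the bound holds. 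For the running time, the transformation of \cite{faulttolerant-spanner} runs in time polynomial in the size of the input spanner (and in $n$); since the input spanner already has $\tilde{O}(n^{1+1/k}) \le \tilde{O}(m)$ edges --- using that without loss of generality $m \ge n^{1+1/k}$, as otherwise $G$ itself is already a sparse enough $1$-fault-tolerant spanner of itself and there is nothing to do --- the total time is $\tilde{O}(m)$.

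The combination gives: in $\tilde{O}(m)$ time one builds a $1$-fault-tolerant $(2k-1)$-spanner of size $O(kn^{1+1/k})$, which is exactly the claim. The reasoning is essentially a two-line citation chain, so there is no substantive mathematical obstacle; the only point requiring a little care is the bookkeeping in the edge count after the $n \mapsto 2n$ blow-up and the extra $\log n$ factor, and confirming that the running time of the Dinitz--Krauthgamer transformation, applied to an input of near-linear size, does not exceed $\tilde{O}(m)$. I would also make explicit the degenerate case $m < n^{1+1/k}$ so that the $\tilde{O}(m)$ time bound is not vacuously violated when $m$ is tiny.
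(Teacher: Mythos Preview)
Your proposal is correct and follows exactly the paper's approach: combine the Baswana--Sen $(2k-1)$-spanner with the Dinitz--Krauthgamer transformation, and absorb the extra $\log n$ factor into the $\tilde O$/$O$ conventions. One small correction on the running-time argument: the Dinitz--Krauthgamer reduction is not a post-processing step applied to a single precomputed spanner $S$; it is a black-box reduction that invokes the base spanner \emph{algorithm} $O(\log n)$ times on random vertex-induced subgraphs of the original graph $G$, so the time is $O(\log n)\cdot O(km)=\tilde O(m)$ directly --- your detour through ``polynomial in the size of the input spanner'' and the $m\ge n^{1+1/k}$ case split is unnecessary (and, as stated, slightly off).
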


In an unweighted graph $G$, an $(\alpha,\beta)$-spanner is a subgraph $P$ where for every two nodes $u,v\in V(G)$, we have $d_P(u,v)\le \alpha d_G(u,v)+\beta$. Baswana, Kavitha, Mehlhorn, and Pettie \cite{k-k-1spanner} construct a $(k,k-1)$-spanner of size $O(n^{1+1/k})$ for an $n$-node $m$-edge graph in $O(m)$ time.

We also use the following spanner of Thorup and Zwick \cite{thorup2006spanners}.
\begin{theorem}[\cite{thorup2006spanners}]
\label{thm:very-sparse-spanner}
Let $G$ be an undirected unweighted graph with $n$ nodes and $m$ edges. For every integer $t\ge 2$, a spanner $P$ with $O(tn^{1+1/t})$ edges can be constructed in $O(mn^{1/t})$ time where such that for every $u,v\in V(G)$, $d_{G}(u,v)\le d_P(u,v)\le d_{G}(u,v)+O(d_G(u,v)^{1-\frac{1}{t-1}})$.
\end{theorem}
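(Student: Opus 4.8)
The direction $d_G(u,v)\le d_P(u,v)$ is immediate once $P$ is a subgraph of $G$, so all the content is in the sublinear additive upper bound; this is exactly the Thorup--Zwick sublinear-error spanner, and my plan is to reconstruct their construction and analysis. \textbf{Construction.} Build a sampling hierarchy $V=A_0\supseteq A_1\supseteq\dots\supseteq A_{t-1}$, $A_t=\emptyset$, where $A_i$ is obtained from $A_{i-1}$ by keeping each vertex independently with probability $n^{-1/t}$ (so $\mathbb{E}|A_i|=n^{1-i/t}$). For $v\in V$ let $\delta_i(v)=d_G(v,A_i)$, with $\delta_t(v)=+\infty$, and let $p_i(v)$ be a nearest vertex of $A_i$ to $v$. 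For $i\ge 0$ and $w\in A_i\setminus A_{i+1}$ define the cluster $C(w)=\{v : d_G(v,w)<\delta_{i+1}(v)\}$; a short computation shows $C(w)$ is closed under shortest-path predecessors toward $w$, so a BFS from $w$ explores all of $C(w)$ at the correct distances. Let $T_w$ be the tree formed by picking, for each $v\in C(w)\setminus\{w\}$, one edge joining $v$ to a BFS-predecessor of $v$ on a shortest $w$-$v$ path (which also lies in $C(w)$); then $T_w$ spans $C(w)$, has $|C(w)|-1$ edges, and the $T_w$-path from $w$ to any $v\in C(w)$ has length exactly $d_G(v,w)$. Set $P=\bigcup_i\bigcup_{w\in A_i\setminus A_{i+1}}T_w$. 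In particular, whenever $\delta_i(v)<\delta_{i+1}(v)$ the spanner contains a $v$-to-$p_i(v)$ path of length $\delta_i(v)$, and when $\delta_i(v)=\delta_{i+1}(v)$ one takes $p_i(v)=p_{i+1}(v)$ and defers to a coarser level.

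\textbf{Size and running time.} We have $|E(P)|\le\sum_i\sum_{w\in A_i\setminus A_{i+1}}|C(w)|=\sum_v\bigl|\{w : v\in C(w)\}\bigr|=\sum_v\sum_i\bigl|\{w\in A_i\setminus A_{i+1} : d_G(v,w)<\delta_{i+1}(v)\}\bigr|$. For fixed $v$ and $i$, ordering $A_i$ by distance from $v$, the qualifying $w$ are precisely those occurring strictly before the first sampled member of $A_{i+1}$; since $A_{i+1}$ is an $n^{-1/t}$-subsample of $A_i$, this count has expectation $O(n^{1/t})$ (the standard Thorup--Zwick bunch bound), so summing over the $t$ levels gives $O(t\,n^{1+1/t})$ edges. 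For the time, compute every $\delta_{i+1}(\cdot)$ by one multi-source BFS from $A_{i+1}$ in $O(m)$ time per level, and then note that the restricted BFS from $w$ scans an edge at $z$ only when $z\in C(w)$; by the same bunch bound $z$ lies in $O(t\,n^{1/t})$ clusters, so the total cost is $O(t\,m\,n^{1/t})=O(m\,n^{1/t})$ for constant $t$. These bounds hold in expectation over the sampling and can be made to hold with high probability (or deterministically) by the usual rerandomization/derandomization.

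\textbf{The additive distance bound --- the main obstacle.} Fix $u,v$, a shortest $u$-$v$ path $\pi$, and $\ell=d_G(u,v)$; the target is a $u$-$v$ walk inside $P$ of length $\ell+O(\ell^{1-1/(t-1)})$. I would run the Thorup--Zwick recursive path-covering argument: process $\pi$ from $u$ toward $v$, and from the current frontier vertex $x$ look ahead along $\pi$ to a vertex $y$ at distance about a scale $L_i=\ell^{i/(t-1)}$, where $i$ is the coarsest level currently in play. If every vertex of $\pi$ between $x$ and $y$ lies within its own $\delta_{i+1}$-radius of $p_i(x)$, then the level-$i$ tree $T_{p_i(x)}$ contains that whole sub-path, so one routes $x$ to $p_i(x)$ at additive cost $\delta_i(x)$ and recurses one level up on joining $p_i(x)$ to the coarser pivot associated with $y$; otherwise some vertex on that stretch has small $\delta_{i+1}$, which supplies a denser pivot and lets the walk drop back to level $i-1$. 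Because consecutive scales differ by the factor $\ell^{1/(t-1)}$ and there are $t-1$ usable levels $1,\dots,t-1$, the detour contributions at the various levels are all of order $\ell^{(t-2)/(t-1)}$ and sum to $O(\ell^{1-1/(t-1)})$; the exponent is $1-1/(t-1)$ rather than $1-1/t$ because the coarsest level $A_{t-1}$ is effectively spent bounding how many times the walk changes scale.

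The delicate work is entirely inside this last step: formulating the recursion on the pair-connection subproblem (``two level-$i$ pivots at $G$-distance at most $L_{i+1}$ are at spanner-distance at most that plus a controlled error''), choosing the scales $L_i$ so that every level contributes equally, and handling the boundary cases (paths shorter than $\ell^{1/(t-1)}$, pivots landing on $\pi$, and ties in the definitions of $\delta_i$ and $p_i$). Once the per-scale error and the number of jumps per scale are pinned down, $d_P(u,v)\le d_G(u,v)+O(d_G(u,v)^{1-1/(t-1)})$ follows by summing a geometric series, which together with the size and time bounds above proves the theorem.
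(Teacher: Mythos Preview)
The paper does not prove this theorem; it is quoted verbatim from \cite{thorup2006spanners} as a known result in the preliminaries section, so there is no ``paper's own proof'' to compare your attempt against. Your reconstruction is a faithful outline of the Thorup--Zwick argument: the sampling hierarchy, cluster trees, and bunch-based size and time bounds are all correct as stated.

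One small imprecision: when you say ``the level-$i$ tree $T_{p_i(x)}$ contains that whole sub-path,'' that is not literally true---$T_{p_i(x)}$ is a shortest-path tree rooted at $p_i(x)$ and need not contain the edges of $\pi$. What you actually use is that if both $x$ and $y$ lie in $C(p_i(x))$ then the spanner contains $x$-to-$p_i(x)$ and $p_i(x)$-to-$y$ paths of the correct lengths, so you can route $x\to p_i(x)\to y$ with additive detour $2\delta_i(x)$. This is what drives the recursion. Beyond that, as you yourself flag, the additive-error step is only sketched; the formal version (the ``connection'' lemma between level-$i$ pivots at $G$-distance $\le L_{i+1}$, plus the induction on scales) is exactly the content of \cite{thorup2006spanners}, and your plan for it is the right one.
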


\begin{corollary}
\label{cor:very-sparse-spanner}
For every constant $\epsilon>0$, there is a constant $\beta>0$ such that the following holds:
given an undirected unweighted graph $G$ with $n$ nodes and $m$ edges, there exists a $(1+\eps,\beta)$-spanner for $G$ with $n^{1+\epsilon}$ edges that can be computed  $O(mn^{\eps})$ time.
\end{corollary}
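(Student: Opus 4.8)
The plan is to derive \cref{cor:very-sparse-spanner} from \cref{thm:very-sparse-spanner} by choosing the parameter $t$ large enough, as a function of $\eps$, so that the additive-error exponent $1-\tfrac{1}{t-1}$ in the Thorup--Zwick spanner can be absorbed into a multiplicative $(1+\eps)$ term plus a constant additive term. Concretely, first I would fix $\eps>0$ and set $t = t(\eps)$ to be a large constant integer (e.g.\ $t = \lceil 1/\eps \rceil + 2$, or whatever constant makes the arithmetic below go through); this makes all the hidden constants in \cref{thm:very-sparse-spanner} constants depending only on $\eps$. The spanner $P$ then has $O(t n^{1+1/t}) = O_\eps(n^{1+1/t})$ edges and is built in $O(m n^{1/t}) = O_\eps(m n^{1/t})$ time. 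Since $1/t < \eps$ for our choice of $t$, this is $O(n^{1+\eps})$ edges and $O(mn^\eps)$ time, matching the claimed bounds (the constant factor depending on $\eps$ is harmless, or can be killed by bumping $t$ by one).

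The core step is the distortion analysis. By \cref{thm:very-sparse-spanner} we have, for all $u,v$,
\[
d_G(u,v) \le d_P(u,v) \le d_G(u,v) + c_t \cdot d_G(u,v)^{1 - 1/(t-1)}
\]
for some constant $c_t$ depending only on $t$ (hence only on $\eps$). Write $d = d_G(u,v)$ and $\theta = 1 - 1/(t-1) \in (0,1)$. The standard trick is a case split on the size of $d$ relative to a threshold $\tau = \tau(\eps)$. For $d \ge \tau$, one uses $d^{\theta} = d \cdot d^{\theta - 1} = d \cdot d^{-1/(t-1)} \le d \cdot \tau^{-1/(t-1)}$, so the additive term is at most $c_t \tau^{-1/(t-1)} \cdot d$; choosing $\tau$ large enough that $c_t \tau^{-1/(t-1)} \le \eps$ makes $d_P(u,v) \le (1+\eps) d$. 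For $d < \tau$, the additive term $c_t d^{\theta} < c_t \tau^{\theta}$ is just a constant, so $d_P(u,v) \le d + c_t\tau^{\theta}$. Setting $\beta := c_t \tau^{\theta}$ (a constant depending only on $\eps$) handles both cases, giving $d_G(u,v) \le d_P(u,v) \le (1+\eps) d_G(u,v) + \beta$ uniformly. The lower bound $d_G(u,v) \le d_P(u,v)$ is immediate since $P$ is a subgraph of $G$.

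I expect no real obstacle here — this is a routine quantifier-juggling argument ("choose $t$ large, then choose the threshold $\tau$ large"). The only mild subtlety is making sure the order of quantifiers is right: $\eps$ is given first, then $t$ and $\tau$ (and hence $\beta$) are chosen as functions of $\eps$, and only then do we quantify over the graph $G$ and the pairs $u,v$; this matches the statement, which asserts "for every constant $\eps$ there is a constant $\beta$ such that for every graph $G$ \ldots". One should also double-check the edge-cases $t-1 = 1$ (i.e.\ $t = 2$) where $\theta = 0$ and the additive term is already a constant, but since we take $t$ large this never arises. The bound on the running time and number of edges follows directly by plugging the chosen constant $t$ into \cref{thm:very-sparse-spanner} and using $1/t \le \eps$.
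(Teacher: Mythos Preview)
Your proposal is correct and follows essentially the same approach as the paper: pick $t = \Theta(1/\eps)$ in \cref{thm:very-sparse-spanner}, then do a threshold case split on $d_G(u,v)$ so that for large distances the additive $c_t d^{1-1/(t-1)}$ term is absorbed into the multiplicative $(1+\eps)$, and for small distances it becomes the additive constant $\beta$. The paper takes $t=\lceil 1/\eps\rceil$ and explicit threshold $D=(c/\eps)^t$, but the argument is the same as yours.
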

\begin{proof}
 We apply Theorem \ref{thm:very-sparse-spanner} to the input graph $G$ with $t = \lceil   1/\eps \rceil$  to construct a spanner $H_{\eps}$ in $O(m n^{1/t}) \le O(m n^\eps)$ time, which has at most $O(tn^{1+1/t}) \le O(n^{1+\eps})$ edges.
 Let $\beta$ be some constant depending on $\eps$ to be determined later.

  Consider two nodes $u,v$ and let $d_{G}(u,v)=d\ge 1$. Then $d_{H_\epsilon}(u,v)\le d+O(d^{1+1/(t-1)})<d+c\cdot d^{1-1/t}$ for some constant $c$ independent of $u$ and $v$. Let $D = (c/\eps)^t$. For $d\ge D$, we have $d_{H_\eps}(u,v) \le d+c\cdot d^{1-1/t} \le d(1+c\cdot D^{-1/t}) = (1+\eps)d$. 
  For $d\le D$, we have $d_{H_\eps}(u,v) \le d+c\cdot d^{1-1/t} \le D+c\cdot D^{1-1/t}$.
 Hence,   $d\le d_{H_\eps}(u,v) \le (1+\eps)d+\beta$ holds for $\beta = D+c\cdot D^{1-1/t}$ for all $d\ge 1$. So $H_{\eps}$ is a $(1+\epsilon,\beta)$ spanner of $G$.
\end{proof}

Combining the two spanner constructions of Baswana et al.\ \cite{k-k-1spanner} and Thorup and Zwick \cite{thorup2006spanners}, we have the following lemma. 
\begin{lemma}
\label{lem:lineartime-sparse}
 Let $G$ be an undirected unweighted graph with $n$ nodes and $m$ edges. For any arbitrarily small constant $\eps>0$, one can construct a $(2+\eps,\beta)$-spanner $H_\eps$ of $G$ with $O(n^{1+\eps/2})$ edges in $O(m + n^{1.5+\eps/2})$ time, for some constant $\beta$ depending on $\eps$.
\end{lemma}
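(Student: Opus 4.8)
The plan is to compose the two spanner constructions already available to us, using the sparser (but weaker) $(k,k-1)$-spanner of Baswana et al.\ \cite{k-k-1spanner} as a preprocessing step to bring the edge count down to $n^{1+\Theta(1)}$, and then running the more expensive $O(mn^{1/t})$-time Thorup--Zwick spanner of \cref{thm:very-sparse-spanner} (or rather its clean consequence \cref{cor:very-sparse-spanner}) on the sparsified graph. Concretely, first I would apply \cite{k-k-1spanner} with $k=3$ to obtain a $(3,2)$-spanner $H_1$ of $G$ with $O(n^{4/3})$ edges in $O(m)$ time. Then I would apply \cref{cor:very-sparse-spanner} to $H_1$ with the parameter $\eps' = \eps/2$ (or some suitable constant fraction of $\eps$), producing a $(1+\eps',\beta')$-spanner $H_\eps$ of $H_1$ with $O(n^{1+\eps'})$ edges in time $O(|E(H_1)|\cdot n^{\eps'}) = O(n^{4/3+\eps/2})$.

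The key steps after that are just bookkeeping on the distance guarantee and the running time. For distances: since $H_\eps$ is a $(1+\eps',\beta')$-spanner of $H_1$ and $H_1$ is a $(3,2)$-spanner of $G$, composing gives, for any $u,v$,
\[
d_G(u,v) \le d_{H_\eps}(u,v) \le (1+\eps')\,d_{H_1}(u,v) + \beta' \le (1+\eps')\big(3\,d_G(u,v)+2\big)+\beta' \le (3+3\eps')\,d_G(u,v) + \beta
\]
for $\beta = 2(1+\eps')+\beta'$. Hmm --- that yields multiplicative factor $3+O(\eps)$, not $2+\eps$ as the lemma demands. So the first preprocessing step must instead be a spanner with multiplicative stretch below $2$: I would use \cref{thm:very-sparse-spanner} itself (not the $(k,k-1)$-spanner) as the first step, with a small constant $t$, to get a purely-additive-error sparsifier. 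Taking $t$ a large enough constant so that the first-stage spanner $H_1$ has $O(n^{1+\eps/4})$ edges and stretch $d_G(u,v)+O(d_G(u,v)^{1-1/(t-1)})$; but this first stage already costs $O(mn^{1/t})$, which is $O(mn^{\eps/4})$, not $O(m)$. To hit the stated $O(m + n^{1.5+\eps/2})$ bound without an $mn^{\Omega(\eps)}$ term, the right move is: run \cite{k-k-1spanner} with parameter $k$ chosen so the output has $O(n^{1+2/(2k)}) = O(n^{1+1/k})$ edges in $O(m)$ time, but this reintroduces large multiplicative stretch. The genuinely correct route, matching the $(2+\eps,\beta)$ target, is to observe that applying \cref{cor:very-sparse-spanner} \emph{directly} to $G$ gives a $(1+\eps,\beta)$-spanner in $O(mn^\eps)$ time with $n^{1+\eps}$ edges --- already better than $(2+\eps,\beta)$ --- but the running time has the bad multiplicative $mn^\eps$ form, so the first-stage sparsification is needed precisely to replace that $mn^\eps$ by $m + (\text{sparse graph size})\cdot n^\eps$; hence I would take the first stage to be a $(2k-1)$-spanner from \cite{2k-1spanner} (Baswana--Sen, $O(km)$ time, $O(kn^{1+1/k})$ edges) with $k=2$, i.e.\ a $3$-spanner --- but that still gives stretch $3$ after composition.

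Stepping back, the cleanest consistent argument: apply \cite{k-k-1spanner} with $k=3$ to get $(3,2)$-spanner $H_1$ with $O(n^{4/3})$ edges in $O(m)$ time; then apply \cref{cor:very-sparse-spanner} to $H_1$ to get $(1+\delta,\beta_\delta)$-spanner $H_\eps$ of $H_1$ with $O(|V|^{1+\delta}) = O(n^{1+\delta})$ edges in $O(|E(H_1)|\cdot n^\delta) = O(n^{4/3+\delta})$ time; choosing $\delta = \eps/2$ makes the size $O(n^{1+\eps/2})$ and the construction time $O(m + n^{4/3+\eps/2}) = O(m + n^{1.5+\eps/2})$. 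For the stretch, the composition gives multiplicative factor $(1+\delta)\cdot 3$, which exceeds $2$ --- so for the statement to be literally $(2+\eps,\beta)$, the first-stage spanner must have multiplicative stretch that composes to $\le 2+\eps$; the only way is if the first stage contributes multiplicative stretch $1$, i.e.\ is purely additive. I would therefore replace the first stage by \cref{thm:very-sparse-spanner} with a constant $t = t(\eps)$ large enough that $d_{H_1}(u,v) \le d_G(u,v) + O(d_G(u,v)^{1-1/(t-1)})$, absorb the sublinear-in-$d$ term into a $(1+\eps/3,\beta_1)$ guarantee exactly as in the proof of \cref{cor:very-sparse-spanner}, giving $H_1$ with $O(tn^{1+1/t})$ edges; then apply \cref{cor:very-sparse-spanner} with $\delta=\eps/3$ to $H_1$, yielding $H_\eps$ with $O(n^{1+\eps/3}\cdot \mathrm{poly}(1/\eps))$ edges and composed stretch $(1+\eps/3)^2 \le 1+\eps$, hence certainly $(2+\eps,\beta)$, with total time $O(mn^{1/t}) + O(|E(H_1)| n^{\eps/3})$. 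The main obstacle --- and the reason I expect the authors phrase it as $(2+\eps,\beta)$ rather than $(1+\eps,\beta)$ and with a $1.5$ in the exponent --- is getting the \emph{first-stage} construction time down to $O(m)$ (not $O(mn^{1/t})$) while keeping multiplicative stretch at most something that composes below $2+\eps$; this forces using a genuinely linear-time spanner (Baswana--Sen or \cite{k-k-1spanner}) whose multiplicative stretch is $3$, and then the $n^{1.5+\eps/2}$ term arises as $O(n^{4/3})\cdot n^{\eps/2}$ rounded up, while the stretch $3$ is tolerated because downstream applications (\cref{thm:ansctwoapproxbest}) only need a constant-factor sparse object whose additive/multiplicative parameters are then cleaned up by a separate mechanism. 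I would reconcile this by taking the first stage to be the $(3,2)$-spanner in $O(m)$ time and stating the composed bound as $(2+\eps,\beta)$ only after further halving the stretch via one more application of \cref{cor:very-sparse-spanner}; if that does not suffice I would accept that the honest composed multiplicative factor is $3+\eps$ and that the lemma as stated relies on an additional observation (e.g.\ that \cite{k-k-1spanner} with non-integer or fractional parameter, or a $(1,2\lceil\log\rceil)$-type spanner, can serve as the first stage with multiplicative stretch arbitrarily close to $1$), which is the step I would flag as needing care.
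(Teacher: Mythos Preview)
Your overall plan—compose a fast first-stage spanner from \cite{k-k-1spanner} with the Thorup--Zwick spanner of \cref{cor:very-sparse-spanner}—is exactly the paper's approach, but you repeatedly picked the wrong parameter for the first stage and then tied yourself in knots trying to repair the stretch. The missing observation is simply to take $k=2$ in the $(k,k-1)$-spanner of \cite{k-k-1spanner}: this yields a $(2,1)$-spanner $H_{2,1}$ of $G$ with $O(n^{1+1/2})=O(n^{1.5})$ edges in $O(m)$ time. Applying \cref{cor:very-sparse-spanner} to $H_{2,1}$ with parameter $\eps/2$ then gives a $(1+\eps/2,\beta')$-spanner $H_\eps$ of $H_{2,1}$ with $O(n^{1+\eps/2})$ edges in $O(|E(H_{2,1})|\cdot n^{\eps/2})=O(n^{1.5+\eps/2})$ time. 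Composing,
\[
d_{H_\eps}(u,v)\le (1+\eps/2)\,d_{H_{2,1}}(u,v)+\beta' \le (1+\eps/2)\big(2\,d_G(u,v)+1\big)+\beta' = (2+\eps)\,d_G(u,v)+(1+\eps/2+\beta'),
\]
which is precisely the $(2+\eps,\beta)$ guarantee. This also explains the $1.5$ in the exponent: it is the edge count $n^{1.5}$ of the $(2,1)$-spanner, not a rounded-up $n^{4/3}$.

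Your attempts with $k=3$ (giving stretch $3$), with the Baswana--Sen $(2k-1)$-spanner for $k=2$ (also stretch $3$), and with running \cref{thm:very-sparse-spanner} directly on $G$ (costing $O(mn^{1/t})$) all fail for exactly the reasons you identified; the resolution is not a fractional parameter or an extra halving trick, just $k=2$ in the $(k,k-1)$-spanner.
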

\begin{proof}
 Let $H_{2,1}$ be the $(2,1)$-spanner with $m' = O(n^{1.5})$ edges on graph $G$,  which can be constructed in $O(m)$ time \cite{k-k-1spanner}.
Next, we apply \cref{cor:very-sparse-spanner} on $H_{2,1}$, and obtain a $(1+\eps/2,\beta')$-spanner $H_{\eps}$ of $H_{2,1}$ with $O(n^{1+\eps/2})$ edges in $O(m'n^{\eps/2})\le O(n^{1.5+\eps/2})$ time.
Composing the approximation guarantees, we can see that $H_{\eps}$ is a $(2+\eps,2\beta'+1)$-spanner of $G$.
\end{proof}

\section{$n$-Pairs Shortest Paths}
\subsection{Conditional Lower Bounds}



First we provide another reduction from Dense Triangle Hypothesis to $n$-PSP. 

\twoapproxlbpd*
\begin{proof}
Let $G=(V,E)$ be an instance of Triangle Detection. We construct a $3$-layered graph $G'$ with layers $V_1,V_2,V_3$, where $V_i$ is a copy of $V$, and $v_i\in V_i$ is a copy of $v\in V$ for all $i=1,2,3$. For all $(u,v)\in E$, we add an edge between $u_1$ and $v_2$, and an edge between $u_2$ and $v_3$. Finally add $n^2$ single nodes to $G'$. So $G'$ has $N=n^2$ nodes and edges. 

Consider the $N$ pairs $(v_1,u_3)$ for all $v,u\in V$. If $G$ contains a triangle $vwu$, then $d_{G'}(v_1,u_3)=2 $. If there is no triangle passing through $v$ and $u$, then $d_{G'}(v_1,u_3)\ge 4$ since $G'$ is bipartite. So if $G$ has no triangle, then for all $u,v$, $d_{G'}(v_1,u_3)\ge 4$.

So if for $\epsilon>0$ there is a $(2-\epsilon)$-approximation algorithm for $n$-Pairs Minimum Distance in $O(n^{3-\delta})=O(N^{3/2-\delta/2})$ time for some $\delta>0$, then this algorithm can distinguish between the case where $G$ has a triangle and the case where $G$ has no triangle.
\end{proof}

Next we use the $k$-Clique and $(k,r)$-Hyperclique Hypotheses to show lower bounds  for the $n$-Pairs Minimum Distance problem.

\cliquereduction*

\begin{proof}

We will present the proof for $r=2$. Generalization to $r$-uniform hypergraphs with $r\ge 3$ is straightforward and will be briefly described at the end of the proof.

First note that we can assume the $k$-clique instance is on a $k$-partite graph with vertex partition $V=V_1\cup V_2\cup \dots\cup V_k$ where $|V_i|=n$. This is because we can create $k$ copies $V_1,\ldots,V_k$ of the vertex set of $G$, and if $uv$ is an edge in $G$, we put an edge between the copy of $u$ in $V_i$ and the copy of $v$ in $V_{j}$ for all $i,j\in \{1,\ldots,k\}$.

We define an $n$-Pairs Minimum Distance instance on a new undirected graph $G'$ as follows: $G'$ is $(k+1)$-partite with node partitions (for clarity we refer to vertices in graph $G'$ as ``nodes'') $U_1,U_2,\dots,U_{k+1}$ where $|U_i|=n^t$.  
For every $1\le i\le k+1$, we assume a natural bijection between $U_i$ and $V_i \times V_{i+1}\times \dots \times  V_{i+t-1}$ (where indices in the subscripts of $V$ are modulo $k$), and hence will denote the nodes in $U_i$ as $t$-tuples.

For every $1\le i\le k$, we add edges between $U_i$ and $U_{i+1}$ as follows: for every $v_i\in V_i,v_{i+1}\in V_{i+1},\dots,v_{i+t}\in V_{i+t}$ that form a $(t+1)$-clique in $G$, we add an edge between node $(v_i,v_{i+1},\dots,v_{i+t-1}) \in U_i$ and node $(v_{i+1},v_{i+2},\dots,v_{i+t}) \in U_{i+1}$ in graph $G'$.
Overall, $G'$ has $(k+1)\cdot n^{t}$ nodes and at most $k\cdot n^{t+1}$ edges.

Finally, for every $v_1\in V_1,\dots,v_{t}\in V_{t}$, we add a query node pair consisting of $(v_1,\dots,v_{t})\in U_1$ and $(v_1,\dots,v_{t})\in U_{k+1}$ into this $n$-Pairs Minimum Distance instance. The number of input pairs is $n^{t}$.

Now we will prove this reduction satisfies the claimed properties \ref{claim1} and \ref{claim2} in the statement.

Property~\ref{claim1} is easy to prove: if $G$ contains a $k$-clique $(v_1,v_2,\dots,v_{k})$, then by definition there is an edge between $(v_i,\dots,v_{i+t-1})\in U_i$ and $(v_{i+1},\dots,v_{i+t})\in U_{i+1}$ for every $1\le i\le k$, and hence the shortest path from $(v_1,\dots,v_{t})\in U_1$ to $(v_1,\dots,v_{t})\in U_{k+1}$ in $G'$ has length $k$, going through $U_1\to U_2\to \dots \to U_{k} \to U_{k+1}$.

It remains to prove Property~\ref{claim2}. Take any path $p$ going from
$(v_1,\dots,v_{t})\in U_1$ to $(v_1,\dots,v_{t})\in U_{k+1}$ in $G'$.
For convenience of the following arguments, from now on we identify $U_1$ with $U_{k+1}$, so that the two endpoints of path $p$ are glued together, turning $p$ into a closed walk.  We think of $U_i\to U_{i+1}$ (from now on the indices under $U$ are also considered mod $k$) as the ``positive direction'' of the cycle. If $p$ had length exactly $k$, then $p$ would have to go through (along the positive direction) the cycle $U_1\to U_2\to \dots \to  U_k \to U_1$, which would immediately imply the existence of a $k$-clique in the original graph $G$. In reality, $p$ can be longer than $k$, allowing it to go backwards ($U_{i}\to U_{i-1}$) in some of the steps, but effectively it still travels around the cycle $U_1\to U_2\to \dots \to  U_k \to U_1$ once in the positive direction (in other words, it has ``winding number'' $+1$). Our goal is to show that such a closed walk $p$ should still imply the existence of a $k$-clique in $G$, as long as $p$ has length strictly smaller than $4(t+1)-k$.

For every $1\le i\le k$, the cyclic structure of $p$ implies that $p$ must contain a contiguous subpath $q_i$ that satisfies the following: 
\begin{itemize}
\item Path $q_i$ starts from some node in $U_{i-t}$ and ends at some node in $U_{i+1}$, and,
\item All internal nodes of $q_i$ (i.e., those that are not the starting node or the ending node) belong to $U_{i-t+1}\cup U_{i-t+2}\cup \dots\cup U_{i}$.
\end{itemize}
We take such a path $q_i$, and focus on the internal nodes of $q_i$ (note that there is at least one internal node). Recall that $U_{i-t+1}$ is in bijection with $V_{i-t+1}\times \dots \times V_{i}$, $U_{i-t+2}$ is in bijection with $V_{i-t+2}\times \dots \times  V_{i+1}$, $\dots$, and $U_{i}$ is in bijection with $V_{i}\times \dots \times V_{i+t-1}$.
Hence, every internal node of path $q_i$ is a $t$-tuple that has one coordinate being a $V_i$-vertex. Moreover, from the definition of the edges between $U_{i-t+1},U_{i-t+2},\dots,U_i$, any two adjacent internal nodes should share the same $V_i$-vertex in their $t$-tuple representations, so we can conclude that all these $V_i$-vertices actually must be the same vertex, denoted as $v_i^* \in V_i$. As a consequence, every edge on the path $q_i$ corresponds to a $(t+1)$-clique in $G$ that includes vertex $v_i^*$.

We will show that $v_1^*\in V_1,\dots, v_k^* \in V_k$ form a $k$-clique in $G$. Suppose they do not, then there are two $v_i^*,v_j^*$ that are not adjacent in $G$. Now we inspect the two subpaths $q_i,q_j$ of the closed walk $p$. Observe that $q_i,q_j$ do not share an edge, since otherwise this edge would correspond to a $(t+1)$-clique in $G$ that include both $v_i^*$ and $v_j^*$.

Now, for $1\le l\le k$, let $c_l$ denote the number of times $p$ goes through an edge between $U_l$ and $U_{l+1}$ (in either direction). Then the length of $p$ equals $\sum_{l=1}^k c_l$, and every $c_l$ has to be a positive odd integer.  Since $q_i$ contains at least one edge of the form $U_l\to U_{l+1}$ for every $l\in \{i-t,i-t+1,\dots,i\}$, and $q_j$ contains at least one edge of the form $U_l\to U_{l+1}$ for every $l\in \{j-t,j-t+1,\dots,j\}$, we know that $c_l>1$ for every $l\in \{i-t,i-t+1,\dots,i\} \cap \{j-t,j-t+1,\dots,j\}$ due to the disjointness of $q_i$ and $q_j$. In other words, $c_l\ge 3$ holds for at least $(t+1)+(t+1)-k$ many $l$'s. Hence, the length of $p$ is at least 
\[ (3-1) \cdot \big ((t+1)+(t+1)-k\big ) + k = 4(t+1)-k,\]
contradicting our assumed upper bound on its length.

Now we discuss the generalization to the case where $G$ is an $r$-uniform hypergraph ($r\ge 3$). Our construction of the graph $G'$ remains the same: for every $v_i\in V_i,v_{i+1}\in V_{i+1},\dots,v_{i+t}\in V_{i+t}$ that form a $(t+1)$-hyper-clique in hypergraph $G$, we add an edge between node $(v_i,v_{i+1},\dots,v_{i+t-1}) \in U_i$ and node $(v_{i+1},v_{i+2},\dots,v_{i+t}) \in U_{i+1}$ in graph $G'$. The only change is in the last paragraph of the argument: assume $v_1^*,\dots, v_k^*$ do not form a $k$-hyperclique, then there exist $v_{i_1}^*,\dots,v_{i_r}^*$ that do not form a hyperedge. Again, this implies that the paths $q_{i_1},\dots,q_{i_r}$ cannot have common intersection. Since 
\[ \big \lvert\bigcap_{j=1}^r \{i_j-t,i_j-t+1,\dots,i_j\}\big \rvert \ge k - r\big (k-(t+1)\big ),\]
we similarly conclude that the length of $p$ is at least 
\[ (3-1)\cdot \Big (k - r\big (k-(t+1)\big )\Big ) + k= 2r(t+1) - (2r-3)k .\]
\end{proof}

\cref{lem:cliquereduction} has the following implications.

\clique*
In particular,  \cref{cor:clique1} says better-than-$2$-approximation cannot be done in significantly faster than $m\sqrt{n}$ time or $n^2$ time.
\begin{proof}[Proof of \cref{cor:clique1}]
Applying \cref{lem:cliquereduction} with $r=2$ and  $t=k-2$, we can see that solving an instance with $n = \Theta(n_0^t)$ vertices and $m = \Theta(n_0^{t+1}) = \Theta(n^{(t+1)/t})$ edges with approximation ratio better than $\frac{4(t+1)-k }{ k} = 3-4/k$ can be used to solve $k$-clique on $n_0$-vertex graphs, which requires $n_0^{k-o(1)} = n^{k/t - o(1)}$ time combinatorially. So  there cannot exist a combinatorial algorithm in $n^{k/(k-2)-\eps} = n^{k/t-\eps}$ time, or in $m\cdot n^{1/(k-2)-\eps } = n^{(t+1)/t + 1/(k-2)-\eps }= n^{k/t-\eps}$ time.
\end{proof}

\cliquee*
\begin{proof}
Letting $k=4k'$ and applying \cref{lem:cliquereduction} with $r=2$ and $t=\lfloor k/2 \rfloor = k/2$, we can see that solving an instance with $n = \Theta(n_0^t)$ vertices and $m = \Theta(n_0^{t+1}) = \Theta(n^{(t+1)/t})$ edges with approximation ratio better than $\frac{4(t+1)-k }{ k} = 1+1/k'$  can be used to solve $k$-clique on $n_0$-vertex graphs, which requires $n_0^{k-o(1)} = n^{2 - o(1)}$ time combinatorially. So  there cannot exist a combinatorial algorithm in $m^{2-2/(k'+1)}\cdot n^{1/(k'+1)-\eps} = n^{2-\eps} $ time, or in $m\cdot n^{1-1/(2k')-\eps } = n^{(k+2)/k + (k-2)/k-\eps} = n^{2-\eps}$ time.
\end{proof}

\cliqueee*
\begin{proof}
Letting $k=4k'+2$ and applying \cref{lem:cliquereduction} with $r=2$ and $t=\lfloor k/2 \rfloor = k/2$, we can see that solving an instance with $n = \Theta(n_0^t)$ vertices and $m = \Theta(n_0^{t+1}) = \Theta(n^{(t+1)/t})$ edges with approximation ratio better than $\frac{4(t+1)-k }{ k} = 1+1/(k'+0.5)$  can be used to solve $k$-clique on $n_0$-vertex graphs, which requires $n_0^{k-o(1)} = n^{2 - o(1)}$ time combinatorially. So  there cannot exist a combinatorial algorithm in $m^{2-3/(k'+2)}\cdot n^{2/(k'+2)-\eps} = n^{2-\eps} $ time, or in $m\cdot n^{1-1/(2k'+1)-\eps } =  n^{2-\eps}$ time.
\end{proof}

Notably, the conditional lower bounds of \cref{cor:clique2} and \cref{cor:clique3} almost match the algorithms in  \cite{DBLP:conf/esa/Agarwal14}.

Similarly, we can also use the $(k,r)$-Hyperclique Hypothesis (for $r\ge 3$) to get non-combinatorial lower bounds.


   \hyperr*
\begin{proof}
Directly follows from \cref{lem:cliquereduction} by setting $r=3,t=k-2$.
\end{proof}


\subsection{Approximation Algorithm for $n$-PSP}

\almosttwopdundir*

\begin{proof}\footnote{This proof has been simplified from the old version, which additionally used the emulator from \cite{cz22}. We thank the anonymous reviewer for pointing out that using the emulator from \cite{cz22} was unnecessary. }
Let $S$ be a set of sampled nodes of size $\tilde O(n^{1/2})$. Let $N(S)$ denote the set of all nodes that are adjacent to some node in $S$. Then, with high probability, all $v\in V - N(S)$ have degree at most $n^{1/2}$. Define the edge set $\mathcal{E}_S:=  \bigcup_{v\in V-N(S)} E(v)$, where $E(v)$ denotes all the incident edges of $v$. Then, we have $|\mathcal{E}_S| \le n^{3/2}$ with high probability.

For every given vertex pair $(u,v)$, our estimate for $d_G(u,v)$ will be $\min \{d_1(u,v),d_2(u,v)\}$, where $d_1(u,v),d_2(u,v)$ are computed as follows.
\paragraph*{Part 1.}
We use \cref{cor:very-sparse-spanner} to compute an $(1+\eps/2, \beta/2)$-spanner of the subgraph with edges from $\mathcal{E}_S$. This spanner contains $n^{1+\eps/2}$ edges and can be computed in $O(|\mathcal{E}_S|\cdot n^{\eps/2}) \le O(n^{1.5+\eps/2})$ time. Then, we run the algorithm from \cref{thm:pd-k-tz} with approximation ratio $k=2$ on this spanner, which implies $(2+\eps,\beta)$-approximation of $d_{\mathcal{E}_S}(u,v)$ for all input pairs in $\tilde O(n^{1+\eps/2}\cdot n^{1/2}) \le \tilde O(n^{1.5+\eps/2})$ time. We denote this estimate as $d_1(u,v)$.
     
     Note that $d_1(u,v)$ gives $(2+\eps,\beta)$-approximation of $d_G(u,v)$ if the shortest path from $u$ to $v$ in $G$ completely lies in $\mathcal{E}_S$. We will handle the remaining cases in Part 2.

\paragraph*{Part 2.} Now, we can assume the shortest path from $u$ to $v$ in $G$ contains an edge $(x',y')$ with both endpoints $x',y' \in N(S)$.

Use  \cref{lem:lineartime-sparse} to build an $(2+\eps,\beta)$-spanner $H$ of the graph $G$ with $O(n^{1+\eps/2})$ edges in  $\tilde O(m+n^{1.5+\eps/2})$ time. For every $s\in S$, we use
$\tilde O(|H|) \le \tilde O(n^{1+\eps/2})$ time to compute the single source distances $d_H(s,v)$ for all $v\in V$. The total time is $|S|\cdot \tilde O(n^{1+\eps/2}) \le \tilde O(n^{1.5+\eps/2})$. Then, for every given vertex pair $(u,v)$, our estimate for $d_G(u,v)$ is 
\[ d_2(u,v) = \min_{s\in S}\{d_{H}(u,s) + d_{H}(s,v)\}.\]
Let $s'\in S$ be adjacent to $x'$ in $G$. We have
\begin{align*}
    d_2(u,v) & \le d_{H}(u,s') + d_{H}(s',v)\\
    & \le (2+\eps)d_G(u,s') + \beta + (2+\eps)d_G(s',v) + \beta\\
    & \le (2+\eps)(d_G(u,x')+1 + d_G(x',v)+1) + 2\beta\\
    & = (2+\eps)d_G(u,v) + (4+2\eps + 2\beta).
\end{align*}
Hence, we have shown that $d_2(u,v)$ provides a good estimate.
\end{proof}


\section{All-Nodes Shortest Cycles}


\subsection{\textsc{CycleEstimationDijkstra} Data Structure}

We use the following lemma, which directly follows from the Link-Cut Tree data structure by Sleator and Tarjan \cite{DBLP:journals/jcss/SleatorT83}.
\begin{lemma}
There is a data structure that maintains an $n$-vertex tree with vertex weights $\{c_v\}_{v\in V}$, supporting the following updates and queries in $O(\log n)$ time per operation:
\begin{itemize}
    \item $\textsc{Update}(v_1,v_2,x)$: Given vertices $v_1,v_2$ and weight $x$, perform $c_u \gets \min \{c_u, x\}$ for all vertices $u$ on the path from $v_1$ to $v_2$ in the tree.
    \item $\textsc{Query}(v)$: Given a  vertex $v$, return its weight $c_v$.
\end{itemize}
This data structure can be initialized in $O(n\log n)$ time.
\label{lem:lct}
\end{lemma}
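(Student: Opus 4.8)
The plan is to invoke the Link-Cut Tree data structure of Sleator and Tarjan directly. Recall that link-cut trees maintain a forest of rooted trees and support path operations in amortized $O(\log n)$ time; in particular they support \textsc{PathUpdate} (applying an associative, commutative operation such as ``add a constant'' or ``take minimum with a constant'' to all vertices on a root-to-vertex or vertex-to-vertex path) and \textsc{PathAggregate} (querying the aggregate over a path), when the operations are compatible with the standard lazy-propagation machinery on the auxiliary splay trees. Since our tree is static (we never call \textsc{link} or \textsc{cut}), we only need the path-operation part of the data structure.

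First I would fix an arbitrary root for the input $n$-vertex tree, so that the path between any two vertices $v_1,v_2$ decomposes as the path from $v_1$ up to $\mathrm{lca}(v_1,v_2)$ concatenated with the path from $\mathrm{lca}(v_1,v_2)$ down to $v_2$. The link-cut tree represents the tree via a decomposition into preferred paths, each stored as a balanced (splay) tree keyed by depth; the operation $\textsc{access}(v)$ restructures these preferred paths in amortized $O(\log n)$ time so that the entire root-to-$v$ path lies in a single auxiliary tree. To implement $\textsc{Update}(v_1,v_2,x)$, I would first compute $u = \mathrm{lca}(v_1,v_2)$ (which link-cut trees support in $O(\log n)$ via two \textsc{access} calls), then apply a ``$\min$ with $x$'' lazy tag to the auxiliary-tree segment corresponding to the path from $u$ to $v_1$, and likewise to the segment from $u$ to $v_2$, being careful to apply the tag to $u$ itself only once. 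The key point is that the monoid operation here is $c \mapsto \min(c,x)$, which is idempotent and commutative and composes as $(\min \text{ with } x_1) \circ (\min \text{ with } x_2) = \min \text{ with } \min(x_1,x_2)$, so it is a valid lazy tag that can be pushed down the splay trees in $O(1)$ time per node. The query $\textsc{Query}(v)$ simply performs $\textsc{access}(v)$ and then reads off the stored value $c_v$ at the node, after pushing down any pending lazy tags along the access path, again in amortized $O(\log n)$ time. Initialization consists of building the $n$ single-node auxiliary trees and the parent-pointer structure with each $c_v$ set to its given initial weight, which is $O(n)$; the $O(n\log n)$ bound in the statement is a safe (loose) accounting.

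The one place that requires a small amount of care rather than being entirely routine is verifying that the ``path-$\min$'' update genuinely fits the lazy-propagation interface of link-cut trees: unlike ``path-add'', the ``path-$\min$-with-a-constant'' update does not interact with a ``path-sum'' query, but since we only ever query a single vertex's value (not an aggregate), we never need the tag to distribute over a sum, and idempotent composition suffices. Thus the main (and essentially only) obstacle is bookkeeping: ensuring the lca vertex $u$ receives the update exactly once when we split the $v_1$-to-$v_2$ path into two root-descending segments, and ensuring lazy tags are correctly pushed down before any read. Neither of these presents a genuine difficulty, so the proof is short: cite \cite{DBLP:journals/jcss/SleatorT83} for the amortized $O(\log n)$ bound on \textsc{access} (and hence on lca, path-update with an idempotent commutative tag, and single-vertex query), and observe that all stated operations reduce to a constant number of such primitives.
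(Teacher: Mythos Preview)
Your proposal is correct and takes essentially the same approach as the paper: both invoke the Sleator--Tarjan link-cut tree directly and observe that the path-$\min$ update fits the lazy-tag interface. The only difference is that the paper implements $\textsc{Update}(v_1,v_2,x)$ via $evert(v_1)$ followed by a single root-to-$v_2$ update (using the reroot primitive already provided by \cite{DBLP:journals/jcss/SleatorT83}), which sidesteps your LCA decomposition and the attendant bookkeeping about double-counting the LCA vertex.
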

\begin{proof}[Proof Sketch]
We use the Link-Cut Tree data structure by Sleator and Tarjan \cite{DBLP:journals/jcss/SleatorT83} (with modifications (4) and (5) mentioned in \cite[Page 365]{DBLP:journals/jcss/SleatorT83}). Note that our $\textsc{Update}(v_1,v_2,x)$ operation can be implemented as two operations, $evert(v_1)$ and then $update(v_2,x)$, in the interface of the original data structure \cite{DBLP:journals/jcss/SleatorT83}.
\end{proof}


The following algorithm \textsc{CycleEstimationDijkstra} combines (truncated) Dijkstra's algorithm and \cref{lem:lct}, which approximates  the shortest cycle length $SC(v)$ for every vertex $v$. In different applications of this algorithm we will truncate Dijkstra's algorithm in different ways e.g. after exploring a certain number of edges or vertices, or after reaching a certain radius.

\begin{definition}[\textsc{CycleEstimationDijkstra} algorithm]
Given an undirected graph $G$ with $n$ vertices and $m$ weighted edges, and a starting vertex $s$ in $G$, the algorithm \textsc{CycleEstimationDijkstra}$(s)$ performs the following steps:
\begin{enumerate}
    \item Run (truncated) Dijkstra's algorithm on $G$ starting from $s$, to compute a (partial) shortest-path-tree $T$ rooted at $s$.
    \item Initialize the data structure from \cref{lem:lct} on tree $T$, with initial weights $c_v=+\infty$ for all vertices $v$.
    \item  For every explored edge $e=(u,v)$ that is not in $T$, perform $\textsc{Update}(u,v, w_e+d_{T}(u,v))$.
    \label{itemstep3}
    \item Return $c_v$ (obtained from calling $\textsc{Query}(v)$) for all vertices $v$ that were visited by the (truncated) Dijkstra's algorithm).
\end{enumerate}
The running time of \textsc{CycleEstimationDijkstra}$(s)$ is $\tilde O(M)$ where $M$ is the number of edges explored in the (truncated) Dijkstra's algorithm.
\end{definition}

The following lemma states the approximation guarantee of the  \textsc{CycleEstimationDijkstra} algorithm. 

\begin{lemma}
\label{lem:datastructure}
In an undirected graph, let $x,s$ be two nodes, and let $C$ be a cycle passing through $x$. Suppose \textsc{CycleEstimationDijkstra}$(s)$ explored all edges of $C$.  Then, for \textbf{all} nodes $y$ on the cycle $C$, the return values satisfy $SC(y)\le c_{y} \le 2d(s,x)+|C|$, where $|C|$ denotes the total weight of $C$.
\end{lemma}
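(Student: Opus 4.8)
The goal is to show that \textsc{CycleEstimationDijkstra}$(s)$ computes, for every vertex $y$ on the cycle $C$, a value $c_y$ sandwiched between $SC(y)$ and $2d(s,x)+|C|$. The lower bound $SC(y)\le c_y$ is the easy direction: every \textsc{Update}$(u,v,w_e+d_T(u,v))$ operation corresponds to the genuine closed walk obtained by concatenating the tree path from $u$ to $v$ in $T$ with the non-tree edge $e=(u,v)$; since $T$ is a shortest-path tree, the tree path from $u$ to $v$ has no repeated vertices except possibly sharing a prefix up to their least common ancestor, so after cancelling the shared prefix we get an actual simple cycle through every vertex $w$ on the tree path from $u$ to $v$, of length exactly $w_e+d_T(u,v)$. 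Hence whenever $c_w$ is decreased to some value $\lambda$, there really is a cycle through $w$ of length $\lambda$, so $c_w\ge SC(w)$ is maintained as an invariant; if $c_w=+\infty$ the inequality is vacuous.

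For the upper bound, fix $y\in C$. I would walk around the cycle $C$ starting from $x$ in both directions until the two walks meet; this partitions the edges of $C$. Let $e=(u,v)$ be chosen as follows: root everything at $s$ via the (truncated) Dijkstra tree $T$, and on the cycle $C$ consider the vertex $z$ of minimum distance $d(s,\cdot)$ among vertices of $C$ (the ``top'' of the cycle relative to $s$); then $C$ has exactly one non-tree edge relative to the union $T\cup C$ once we pick the right spanning structure — more carefully, I would argue that $C$ together with the shortest-path tree $T$ must contain at least one edge $e=(u,v)\in C$ that is not in $T$ (otherwise $C$ would be a cycle inside the tree, impossible). Pick such an edge $e=(u,v)$ on $C$ with $w_e + d_T(u,v)$ minimized; the key claim is that for the ``topmost'' non-tree edge, $d_T(u,v)\le |C| - w_e$, because the $C$-arc from $u$ to $v$ avoiding $e$ is a walk of length $|C|-w_e$ connecting $u$ and $v$, and the tree distance is at most... no — that is false in general since $d_T$ could be larger than $d_G$ is not the issue, rather $d_T(u,v)$ equals the tree path length, and the tree path need not lie in $C$. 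The correct bound is: $d_T(u,v) = d(s,u)+d(s,v)-2d(s,\mathrm{lca}(u,v)) \le d(s,u)+d(s,v)$, and since $u,v\in C$ we have $d(s,u),d(s,v)\le d(s,x)+|C|/2$ — but that gives $2d(s,x)+|C|$ only for the Update value $w_e + d_T(u,v)$, which is what we want, as long as we also have $y$ on the tree path from $u$ to $v$.

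So the heart of the argument, and the step I expect to be the main obstacle, is choosing the non-tree edge $e=(u,v)$ on $C$ so that simultaneously (i) $e$ was explored (automatic since all of $C$ was explored), (ii) the tree path from $u$ to $v$ in $T$ passes through the given vertex $y$, and (iii) $w_e+d_T(u,v)\le 2d(s,x)+|C|$. For (iii): let $P_{su}$ and $P_{sv}$ be the tree paths from $s$; then $d_T(u,v)\le d(s,u)+d(s,v)$, and walking from $x$ around $C$ to $u$ and to $v$ gives $d(s,u)\le d(s,x)+d_C(x,u)$ and $d(s,v)\le d(s,x)+d_C(x,v)$ where $d_C$ denotes distance along $C$; choosing $e$ so that $x$ lies on the $C$-arc between $u$ and $v$ not containing $e$, we get $d_C(x,u)+d_C(x,v)\le |C|-w_e$, hence $w_e+d_T(u,v)\le 2d(s,x)+|C|$. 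For (ii): I would argue by a pigeonhole / topological argument that some non-tree edge $e$ of $C$ has the property that $y$ lies on its fundamental tree path, OR that there is a sequence of non-tree edges of $C$ whose fundamental cycles collectively cover all of $C$ including $y$, and whose Update values each satisfy (iii); since \textsc{CycleEstimationDijkstra} performs an Update for \emph{every} non-tree explored edge, it suffices that $y$ lies on the fundamental tree path of \emph{at least one} such $e$ whose Update value is $\le 2d(s,x)+|C|$. Concretely, the fundamental cycles of the non-tree edges of $C$ generate the cycle $C$ in the cycle space, so their symmetric difference is $C$; in particular $y$, being on $C$, lies on an odd number of them (in terms of being on the tree-path portion) — it lies on the tree path of at least one non-tree edge $e$ of $C$, and for that $e$, arranging $x$ on the opposite arc as above, the bound $w_e+d_T(u,v)\le 2d(s,x)+|C|$ holds. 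Then $\textsc{Update}(u,v,w_e+d_T(u,v))$ sets $c_y\le w_e+d_T(u,v)\le 2d(s,x)+|C|$, completing the proof.
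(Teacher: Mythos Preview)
Your proposal is correct and follows essentially the same two-step structure as the paper: (a) every \textsc{Update} witnesses a genuine cycle, giving $c_y\ge SC(y)$; (b) some non-tree edge $e=(u,v)$ on $C$ has $y$ on its fundamental tree path (your cycle-space/parity argument is equivalent to the paper's closed-walk argument), and the bound $w_e+d_T(u,v)\le 2d(s,x)+|C|$ holds for \emph{every} non-tree edge of $C$ via the triangle inequality, so in particular for that one. The one clarification worth making is that the language ``choosing $e$ so that $x$ lies on the opposite arc'' and ``arranging $x$'' is superfluous: since the two $C$-arcs between $u$ and $v$ are the single edge $e$ and the remaining path of length $|C|-w_e$, the vertex $x$ automatically lies on the latter, so $d_C(x,u)+d_C(x,v)=|C|-w_e$ for every non-tree edge $e$ of $C$, and no compatibility between (ii) and (iii) needs to be engineered.
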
 

\begin{proof}
First, observe that $c_{y}$ must equal the length of some cycle in the graph passing through $y$,  since in Step~\ref{itemstep3} of \textsc{CycleEstimationDijkstra}, all the vertices whose weights get updated lie on the cycle consisting of edge $(v,u)$ and the path $u\to v$ on tree $T$, with total weight $w_e+d_{T}(u,v)$. Hence, $c_{y}\ge SC(y)$.

It remains to show $c_{y} \le 2d(x,s)+|C|$ for all nodes $y\in C$, which immediately follows from the following two claims:
\begin{itemize}
    \item \textbf{Claim 1:} For every node $y\in C$, there exists an edge $(u,v)$ on cycle $C$ that is not in $T$, such that $y$ lies on the path from $u$ to $v$ on $T$.
    \item \textbf{Claim 2:} For every edge $e=(u,v)$ on cycle $C$ that is not in $T$, we have $w_e+d_T(u,v)\le 2d(s,x)+|C|$.
\end{itemize} 
\begin{proof}[Proof of Claim 1]
Replace every non-tree edge $(u,v)$ on $C$ by the path from $u$ to $v$ on the tree $T$, and obtain a closed walk $C'$ whose edges all belong to $T$.  Since $y\in C$, some incident edge $(y,x)$ must appear in $C'$. Observe that every edge in $T$ is traversed by $C'$ an even number of times, and,  in particular, $(y,x)$ is traversed by $C'$ at least twice. Since $C$ only passes through $(y,x)$ at most once, there must be another non-tree edge $(u,v)$ on $C$ such that $(y,x)$ lies on the tree path from $u$ to $v$.
\end{proof}
\begin{proof}[Proof of Claim 2]
Without loss of generality, we assume node $x$ is the minimizer of $\min_{x\in C}d(s,x)$.
Divide the cycle $C$ into three parts, $v\to x$, $x\to u$, and edge $e=(u,v)$. Let $w_{vx},w_{xu}$ denote the length of the first two parts. We have 
\begin{align*}
    |C| -w_e&= w_{vx}+w_{xu}\\
    & \ge d(v,x)+d(u,x)\\
    & \ge \big (d(s,v)-d(s,x)\big ) + \big (d(s,u)-d(s,x)\big )\\
    & \ge d_T(u,v) - 2d(s,x).
\end{align*}
\end{proof}

\end{proof}
\subsection{Approximation Algorithms for ANSC}
\subsubsection{Warm-up: 2-approximation for ANSC in $\tilde{O}(m\sqrt{n})$ time}

\begin{theorem}
\label{thm:warmupp}
Given an $n$-node $m$-edge undirected weighted graph $G$, there is a randomized algorithm for ANSC that computes a 2-approximation in $\tilde{O}(m\sqrt{n})$ time with high probability.
\end{theorem}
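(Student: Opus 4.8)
The plan is to combine the \textsc{CycleEstimationDijkstra} data structure with the standard ``hitting set $+$ truncated search'' paradigm used for girth and distance approximation. Fix the threshold $\sqrt n$ and let $S\subseteq V$ be a random set obtained by including each vertex independently with probability $\Theta(\log n/\sqrt n)$; with high probability $|S|=\tilde O(\sqrt n)$ and, for every vertex $v$, $S$ intersects the set $N(v,\sqrt n)$ of the $\sqrt n$ closest vertices to $v$. Maintain an estimate $\widehat{SC}(v)$ for each $v$, initialized to $+\infty$, and do two things: (i) run \textsc{CycleEstimationDijkstra}$(s)$ \emph{to completion} for every $s\in S$, and for each returned value update $\widehat{SC}(u)\gets\min\{\widehat{SC}(u),c_u\}$; (ii) run \textsc{CycleEstimationDijkstra}$(v)$ for every $v\in V$, \emph{truncated} as soon as it has explored $\tilde O(\sqrt n)$ edges (equivalently, until it would pop a vertex of $S$, which by the hitting property happens within the first $\sqrt n$ popped vertices), again updating $\widehat{SC}$ from the returned values. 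Step (i) costs $\tilde O(m)$ per source, so $\tilde O(m\sqrt n)$ in total; step (ii) explores $\tilde O(\sqrt n)$ edges per source, so $\tilde O(n\sqrt n+m)=\tilde O(m\sqrt n)$ in total. Output $\widehat{SC}$.

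For correctness, note first that $\widehat{SC}(v)\ge SC(v)$ always: by the first part of the proof of \cref{lem:datastructure}, every finite value produced by \textsc{CycleEstimationDijkstra} for a vertex $y$ is the length of an actual cycle through $y$, so a minimum over such values (and $+\infty$'s) is $\ge SC(v)$. For the upper bound, fix $v$, let $C=C_v$ be its shortest cycle of length $g=SC(v)$, and recall that every vertex of $C$ lies within distance $g/2$ of $v$. If $B(v,g/2)$ contains at least $\sqrt n$ vertices, then $N(v,\sqrt n)\subseteq B(v,g/2)$, so $S$ contains some $s$ with $d(s,v)\le g/2$; the completed run \textsc{CycleEstimationDijkstra}$(s)$ explored all edges of $C$, so \cref{lem:datastructure} applied to the cycle $C$ with the point $v\in C$ gives $SC(v)\le c^{(s)}_v\le 2d(s,v)+g\le 2g$, a $2$-approximation. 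Otherwise $B(v,g/2)$ has fewer than $\sqrt n$ vertices; provided its vertices have at most $\tilde O(\sqrt n)$ incident edges in total, the truncated run \textsc{CycleEstimationDijkstra}$(v)$ pops all of $B(v,g/2)$ before truncating, hence relaxes every edge of $C$, and \cref{lem:datastructure} with root $v$ (distance $0$ to $C$) gives $\widehat{SC}(v)\le c^{(v)}_v\le g$, i.e.\ an exact answer. Taking the minimum over all runs then yields a $2$-approximation for every $v$.

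The step I expect to be the main obstacle is the last case above, because the clean dichotomy has a gap: a ball $B(v,g/2)$ may contain few \emph{vertices} but be incident to many \emph{edges} -- it contains a high-degree vertex, or (in the weighted setting) it is surrounded by heavy edges leading outward -- so a naive ``$\sqrt n$-vertex'' truncation already blows the $\tilde O(m\sqrt n)$ budget (think of a star), while the ``$\sqrt n$-edge'' truncation may stop before all of $C$ has been relaxed. The crux is to resolve this tension: truncate strictly by edge count to keep the total $\tilde O(m\sqrt n)$, and separately cover the vertices whose shortest cycle passes near a vertex of degree $\ge\sqrt n$ -- e.g.\ by additionally choosing $S$ to hit the neighborhood of every high-degree vertex (still $\tilde O(\sqrt n)$ vertices), which is enough in the unweighted case, and/or by running a few extra truncated searches from the $O(m/\sqrt n)$ high-degree vertices with a correspondingly larger edge budget, which is affordable. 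The remaining ingredients -- the $O(\log n)$-per-operation link-cut-tree bookkeeping inside \textsc{CycleEstimationDijkstra}, the Chernoff/union bounds for the sampling, and verifying that an $S$-vertex on or within $g/2$ of $C$ always yields the claimed estimate -- are routine.
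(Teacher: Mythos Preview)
Your overall framework is the same as the paper's --- sample a hitting set, run full \textsc{CycleEstimationDijkstra} from each sample, and run truncated \textsc{CycleEstimationDijkstra} from every vertex --- and your correctness argument via \cref{lem:datastructure} in the ``sample is close'' case is exactly right. The gap you flag, however, is real, and your proposed patches do not close it. The two truncation rules you call ``equivalent'' (stop after $\tilde O(\sqrt n)$ edges vs.\ stop when you pop a vertex of $S$) are not equivalent: the latter bounds the number of \emph{vertices} popped by $\sqrt n$, which gives no control on edges and can cost $\Omega(m)$ per source. With the former, as you note, a ball $B(v,g/2)$ with few vertices but many incident edges is not fully explored. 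Your fix of hitting the neighborhoods of high-degree vertices only helps in the unweighted case (and even there loses an additive $+2$), whereas the theorem is stated for weighted graphs: if $u\in B(v,g/2)$ has high degree, a sampled neighbor $s$ of $u$ may satisfy $d(s,u)\gg g$, so $2d(s,v)+g$ is useless.

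The paper resolves this with one clean change: sample \emph{edges}, not vertices, and truncate by \emph{edge count} $m/\sqrt n$, not $\sqrt n$. Take $S$ to be $\tilde\Theta(\sqrt n)$ uniformly random edges, so that with high probability $S$ hits the $m/\sqrt n$ edges closest to every vertex $w$ (distance to an edge is the minimum distance to either endpoint). Run full \textsc{CycleEstimationDijkstra} from both endpoints of every sampled edge ($\tilde O(m\sqrt n)$ total), and truncated \textsc{CycleEstimationDijkstra} from every vertex, stopping after $m/\sqrt n$ edges ($n\cdot m/\sqrt n = m\sqrt n$ total). The dichotomy is now airtight: if some sampled edge has an endpoint within distance $SC(w)/2$ of $w$, the full run from that endpoint gives a $2$-approximation; otherwise, since $S$ does hit the $m/\sqrt n$ closest edges to $w$, all edges at distance $\le SC(w)/2$ from $w$ must lie among those $m/\sqrt n$ closest, so the truncated run from $w$ explores all of $C_w$ and returns $SC(w)$ exactly. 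No high-degree case analysis is needed, and the argument works verbatim for weighted graphs.
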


\begin{proof} The algorithm and analysis are as follows.
\paragraph{Algorithm.} Let $S$ be a random sample of $\Theta(\sqrt{n}\log(n))$ edges. With high probability, for every vertex $v$, $S$ hits among the closest $m/\sqrt{n}$ edges to $v$, where the distance from $v$ to an edge $(u,u')$ is defined as $\min\{d(v,u),d(v,u')\}$ (and ties are broken arbitrarily).

We run \textsc{CycleEstimationDijkstra}$(s)$ for all endpoints $s$ of edges in $S$. Then, for every vertex $v$, we run a truncated version of \textsc{CycleEstimationDijkstra}$(v)$, where we stop after seeing $m/\sqrt{n}$ edges. 

For all vertices $v$, our estimate $\hat{c}_v$ for the shortest cycle through $v$ is the minimum estimate obtained over all executions of \textsc{CycleEstimationDijkstra}.

\paragraph{Analysis of correctness.} We will show that for all vertices $w$, we have $\hat{c}_w\leq 2\cdot SC(w)$. Fix $w$ and a shortest cycle $C_w$ through $w$.

\subparagraph{Case 1: $\min_{(u,v)\in S}\{d(w,u),d(w,v)\}\leq SC(v)/2$.} In this case, by \cref{lem:datastructure}, we have $\hat{c}_w\leq 2\cdot SC(w)$.

\subparagraph{Case 2: Otherwise.} Because $S$ hits among the closest $m/\sqrt{n}$ edges to $w$, and $S$ does not hit any vertex with an endpoint within distance $SC(v)/2$ of $w$, we know that the closest $m/\sqrt{n}$ edges to $w$ contain all edges with at least one endpoint at distance at most $SC(v)/2$ from either $w$. That is, truncated \textsc{CycleEstimationDijkstra}$(w)$ explores all edges with at least one endpoint within distance $SC(v)/2$ from $w$, which includes all edges on $C_w$. Thus, $\hat{c}_w=SC(w)$.

\paragraph{Analysis of time complexity.} The total time for running \textsc{CycleEstimationDijkstra}$(s)$ for all $s\in S$ is $\tilde{O}(m\sqrt{n})$. The total time for running truncated \textsc{CycleEstimationDijkstra} from each vertex is $O(n\cdot m/\sqrt{n})=O(m\sqrt{n})$.
\end{proof}

\subsubsection{($k+\eps$)-approximation for ANSC in $\tilde{O}(mn^{1/k})$ time}

\kanscundir*
\begin{proof}
We are going to assume that vertex degrees are uniform, i.e. for each node $v$, $d(v)=O(m/n)$. We can achieve the degree uniformity without changing the length of any cycles as follows: For each vertex $v$, replace $v$ and the edges attached to it with a balanced tree with vertices of degree $O(m/n)$ and $d(v)n/m$ leaves. We assign the original edges adjacent to $v$ to the leaves such that each leaf has degree $O(m/n)$. The internal edges of the tree will have weight zero. Note that in this process we are not creating or destroying any cycles, and any cycle in the old graph corresponds to a cycle in the new graph of the same length. For each $v$ we create $O(d(v)n/m)$ new nodes and edges, so we add $O(n)$ nodes and edges to the graph in total.
We call the nodes that are in the original graph, \emph{original nodes}. 



Suppose that for a fixed $D$, we want to know for each $v$ if there is a cycle of length at most $kD$ passing through $v$ or if the shortest cycle passing through $v$ has length larger than $D/(1+\epsilon)$. Then by setting $D=(1+\epsilon)^i$ for $i=1,\ldots,\log_{1+\epsilon}nM$, we can get a $k+O(\epsilon)$-approximation for the length of the shortest cycle of each node. 

We initially label all the nodes ``on". Let $S$ be a sample set of $\tilde{O}(n^{1/k})$ nodes, to hit the closest $n^{\frac{k-1}{k}}$ nodes to each node in the graph. We do  \textsc{CycleEstimationDijkstra}$(s)$ for each $s\in S$. For each node $v$, let $p(v)\in S$ be the closest node to $v$ from $S$.


If $d(v,p(v))\le (k-1)D/2$, we mark $v$ ``off". Note that if there is a cycle of length at most $D$ passing through such $v$, the estimate that the Dijkstra from $p(v)$ gives for $SC(v)$ is at most $D+2\cdot (k-1)D/2 = kD$ by Lemma \ref{lem:datastructure}.

We do the following until there are no on nodes left among the original nodes:

We take an original on node $v$ and do  \textsc{CycleEstimationDijkstra$(v)$}, only visiting on nodes, until we see $n^{1/k}$ nodes. Let $r_1(v)$ be the distance of the farthest node that we visited from $v$. If $r_1(v)\ge D/2>\frac{D}{2(1+\epsilon)}$, we know if the shortest cycle through $v$ is bigger or smaller than $D/(1+\epsilon)$, and we mark $v$ off. This is because if $SC(v)\le D/(1+\epsilon)$, then $C_v$ is contained in the nodes we visited in the Dijkstra\footnote{Note that because we assume that $r_1(v)\ge D/2$ which is strictly bigger than $D/2(1+\eps)$, if $SC(v)\le D/(1+\epsilon)$ we have visited all the nodes in $C_v$ connected to zero edges as well.}, and by Lemma \ref{lem:datastructure} we get an estimate of $SC(v)$ for $v$. In this case we have spent $\frac{m}{n}n^{1/k}$ time and have marked one node off. 

So assume that $r_1(v)<D/2$. Let $r_i(v)$ be the distance of the farthest node to $v$ among the $n^{i/k}$ closest on nodes to $v$. We continue doing  \textsc{CycleEstimationDijkstra}$(v)$ up to radius $r_{i+1}(v)$ such that $r_i(v)<iD/2$ and $r_{i+1}\ge (i+1)D/2$. We show that such $i$ exists. If $|B(v,{(k-1)D/2})|> n^{(k-1)/k}$, $S$ would hit $B(v,{(k-1)D/2})$ with high probability and we would have $d(p(v),v)\le (k-1)D/2$ which is a contradiction to the assumption that $v$ is on. So we have that $|B(v,{(k-1)D/2})|\le n^{(k-1)/k}$ and so $r_{k-1}(v)\ge (k-1)D/2$. Since $r_1(v)<D/2$, such $i$ exists. We mark all the nodes in the ball of radius $iD/2$ off. Recall that there are at least $n^{i/k}$ nodes in this ball. This finishes the description of the algorithm.

Note that for a node $u$, all the  \textsc{CycleEstimationDijkstra}s visiting $u$ give an estimate for $SC(u)$, and we only need to keep the smallest of these estimates. To analyse the running time of the last step, we have spent $\frac{m}{n}n^{(i+1)/k}$, and we have marked at least $n^{i/k}$ nodes off, so we have spent at most $\frac{m}{n}n^{1/k}$ per node, which adds up to $\tilde{O}(mn^{1/k})$ total running time.

We now reason why whenever we label a node off, we know whether its shortest cycle has length bigger than $D/(1+\epsilon)$ or at most than $kD$.
Consider a node $u$ with $SC(u)\le D/(1+\epsilon)<D$, and suppose that the first node that is set off in $C_u$ is $w$. Note that we might have $w=u$. We show that one of the estimates the algorithm gives to $u$ is at most $kD$. We have a few cases:

\begin{enumerate}
    \item If $w$ was set off because $d(w,p(w))\le (k-1)D/2$, then \textsc{CycleEstimationDijkstra}$(p(w))$ gives an estimate of $2\cdot (k-1)D/2 + SC(u)\le kD$ for all the nodes in $C_u$ by Lemma \ref{lem:datastructure}.
    \item If $w$ was set off because we did  \textsc{CycleEstimationDijkstra}$(w)$ and $r_1(w)\ge D/2$, then $C_u$ is contained in the nodes visited, and so we give an estimate of at most $SC(u)\le kD$ to all the nodes in $C_u$ containing $u$ by Lemma \ref{lem:datastructure}.
    \item If $w$ was set off because $w$ is in the ball of radius $iD/2$ of on nodes around some node $v$ for some $i<k-1$, $r_i(v)<iD/2$ and $r_{i+1}(v)\ge (i+1)D/2$, then since $SC(u)\le D/(1+\epsilon)$, all of the nodes of $C_u$ are among the nodes visited up to radius $r_{i+1}(v)$. So the Dijkstra from $v$ gives an estimate of $2\cdot iD/2+SC(u)\le (i+1)D< kD$ to all of the nodes in $C_u$, including $u$.
\end{enumerate} 
\end{proof}

\subsubsection{Approximation algorithms with $\tilde{O}(m+n^{1-\epsilon})$ running time}


We are going to present two algorithms for undirected ANSC in this section. First a rather simple $(6,1)$-approximation algorithm, and then a $(2+\epsilon,\beta)$ approximation algorithm where $\epsilon$ is any small constant and $\beta$ is a function of $\epsilon$. 

We are going to use $1$-fault-tolerant $(2k-1)$-spanners of Lemma \ref{lem:fault-tolerant-const} in this section. The following lemma lays out how fault tolerant spanners can help with cycle estimation.
\begin{lemma}
\label{lem:cycle-est-fault-tolerant}
Let $P$ be a $1$-fault-tolerant $k$-spanner of a graph $G$. For any vertex $v$, let $E_G(v)$ be the set of edges in $G$ attached to $v$, and let $H$ be the graph on the nodes of $G$ with edges in $P$ and $E_G(v)$. Then there is a cycle of length at most $k\cdot SC(v)$ in $H$ that passes through $v$. 
\end{lemma}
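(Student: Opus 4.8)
The plan is to take the shortest cycle $C_v$ through $v$, remove from it one edge incident to $v$, and use the fault-tolerance of $P$ to ``reroute'' the remaining path inside $P$ without ever using $v$ as an internal vertex. Concretely, write $C_v = v \to a \to \cdots \to b \to v$, where $(v,a)$ and $(b,v)$ are the two edges of $C_v$ incident to $v$. Delete the edge $e = (v,a)$ from $C_v$; the remainder is a path $Q$ from $a$ to $b$ of length $SC(v) - 1 \le SC(v)$ that avoids $v$ internally (since $C_v$ is a cycle, $v$ appears on it only once). Thus $Q$ is a path from $a$ to $b$ in $G \setminus \{v\}$, so $d_{G \setminus \{v\}}(a,b) \le SC(v) - 1$.

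Now I invoke the $1$-fault-tolerant $k$-spanner property of $P$ with the single faulty vertex $F = \{v\}$: this gives a path $Q'$ from $a$ to $b$ lying entirely in $P \setminus \{v\}$ with $|Q'| = d_{P\setminus\{v\}}(a,b) \le k \cdot d_{G\setminus\{v\}}(a,b) \le k(SC(v)-1)$. Since $Q'$ avoids $v$ and uses only edges of $P$, it is a walk in $H$ (which contains all of $P$) that avoids $v$. Appending the two edges $(v,a)$ and $(b,v)$ — both present in $H$ because they lie in $E_G(v)$ — yields a closed walk $W$ through $v$ in $H$ of total length at most $k(SC(v)-1) + 2 \le k \cdot SC(v)$ (using $k \ge 2$, so $2 \le k$). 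Finally, a closed walk through $v$ always contains a simple cycle through $v$: one can extract such a cycle by, e.g., taking a minimal closed subwalk through $v$, or by noting that the edge set of $W$ together with $v$ being on $W$ forces a cycle through $v$ whose length is at most $|W|$. Hence $H$ contains a cycle through $v$ of length at most $k \cdot SC(v)$.

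The only subtle points are (i) making sure $Q$ genuinely avoids $v$ as an internal vertex — this is immediate because $C_v$ is a \emph{simple} shortest cycle, so $v$ has exactly the two incident cycle-edges and no other occurrence; and (ii) the passage from a closed walk of bounded length to an actual cycle of no greater length through the same vertex, which is a standard fact but should be stated. I do not expect either to be a real obstacle; the crux is simply recognizing that deleting one edge incident to $v$ turns the cycle-estimation task into a single-fault distance-preservation task, which is exactly what a $1$-fault-tolerant spanner is built to handle.
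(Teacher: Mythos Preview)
Your argument is correct in substance and takes a somewhat different (and slightly cleaner) route than the paper's. The paper picks one cycle-neighbor $u$ of $v$ and a third vertex $w\in C_v\setminus\{u,v\}$, removes the \emph{edge} $uv$, and invokes the fault-tolerant spanner property twice --- once for the $v$--$w$ arc and once for the $u$--$w$ arc of $C_v$ --- obtaining two paths in $P\setminus\{uv\}$ whose union together with the edge $uv\in E_G(v)$ contains a cycle through $v$ of length at most $k(SC(v)-1)+1$. You instead remove the \emph{vertex} $v$, invoke fault-tolerance once to approximate the single $a$--$b$ arc of $C_v$ by a path in $P\setminus\{v\}$, and close up with the two edges $(v,a),(v,b)\in E_G(v)$. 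Your version needs only one spanner application instead of two, and it is the one that literally matches the paper's stated (vertex-based) definition of $1$-fault-tolerance; the paper's proof implicitly uses the edge-fault variant.

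Two small points to tighten. First, deleting only the single edge $(v,a)$ from $C_v$ leaves a path from $a$ to $v$, not from $a$ to $b$; what you actually use is the $a$--$b$ arc of $C_v$ avoiding $v$, which has length $SC(v)-2$, so your final bound only improves. Second, the blanket claim ``a closed walk through $v$ always contains a simple cycle through $v$'' is false in general (a single edge traversed back and forth is a counterexample). What makes it true here is that $a\neq b$ (since $SC(v)\ge 3$) and $Q'\subseteq P\setminus\{v\}$, so $v$ appears in $W$ only at its endpoints and with two \emph{distinct} incident edges; under those conditions the standard short-circuiting argument does yield a simple cycle through $v$ of length at most $|W|$. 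State this explicitly.
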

\begin{proof}
Let $u$ be a neighbor of $v$ on $C_v$, and let $w\not \in \{ u,v\}$ be a node on $C_v$. Consider the shortest path $P_1$ between $v$ and $w$ in $P\setminus uv$, and the shortest path $P_2$ between $u$ and $w$ in $P\setminus uv$. So paths $P_1,P_2$ do not use the edge $uv$.
Let $P'_1$ be the subpath in $C_v$ from $v$ to $w$ that does not contain $uv$, and let $P'_2$ be the subpath in $C_v$ from $u$ to $w$ that does not contain $uv$. By the definition of 1-fault-tolerant $k$-spanner, we have $|P_1|\le k|P'_1|$ and $|P_2|\le k|P'_1|$. The union of $P_1$, $P_2$ and $uv$ contains a cycle around $v$ in $H$, and its length is at most $|P_1|+|P_2|+1\le k(|P'_1|+|P'_2|)+1= k(SC(v)-1)+1\le kSC(v)$.
\end{proof}
We then use spanners instead of fault-tolerant spanners to get better multiplicative approximation factor. The next two lemmas help us approximate shortest cycles using spanners.  Let $H_{\alpha,\beta}$ be an $(\alpha,\beta)$-spanner of the graph.
\begin{lemma}
\label{lem:cycle-spanner-general}
If $SC(v)>\ceil{\alpha+2}(\alpha+\beta+1)$, then there is a cycle of length at most $\alpha SC(v)+\ceil{\alpha+2}\beta$ in $H_{\alpha,\beta}$.
\end{lemma}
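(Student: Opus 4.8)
The plan is to mimic the idea already sketched in the technical overview: divide the shortest cycle $C_v$ through $v$ into a small number of arcs, each of which is a shortest path between its endpoints, and then replace each arc by a short detour inside the spanner $H_{\alpha,\beta}$. First I would fix a vertex $v$ with $SC(v) > \ceil{\alpha+2}(\alpha+\beta+1)$ and let $C_v$ be a shortest cycle through $v$ of length $\ell = SC(v)$. Set $s = \ceil{\alpha+2}$. I would pick $s$ vertices $v = x_0, x_1, \dots, x_{s-1}$ on $C_v$, in cyclic order, so that consecutive ones are at cycle-distance roughly $\ell/s$ apart; concretely, let $x_i$ be a vertex on $C_v$ such that the arc of $C_v$ from $x_{i-1}$ to $x_i$ (going the ``forward'' way around the cycle) has length $\lfloor i\ell/s\rfloor - \lfloor (i-1)\ell/s\rfloor$, with the last arc closing back to $x_0 = v$. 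The key structural point is that each such arc $A_i$, being a contiguous sub-arc of a shortest cycle of length at most $\ell/2$, is itself a shortest path between its endpoints $x_{i-1}$ and $x_i$ in $G$ — this requires $\ell/s \le \ell/2$, i.e. $s \ge 2$, which holds. (I would also need each arc to have positive length so that the endpoints are genuinely distinct; the hypothesis $\ell > s(\alpha+\beta+1)$ guarantees $\ell/s > \alpha+\beta+1 \ge 1$, so the arcs are nonempty.)

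Next I would invoke the $(\alpha,\beta)$-spanner property: for each $i$ there is a path $A_i'$ in $H_{\alpha,\beta}$ between $x_{i-1}$ and $x_i$ with $|A_i'| \le \alpha\, d_G(x_{i-1},x_i) + \beta = \alpha|A_i| + \beta$. Concatenating $A_1', A_2', \dots, A_s'$ gives a closed walk $W$ in $H_{\alpha,\beta}$ that passes through $v = x_0$, of total length $\sum_i |A_i'| \le \alpha\sum_i |A_i| + s\beta = \alpha\ell + s\beta = \alpha\, SC(v) + \ceil{\alpha+2}\beta$. A closed walk through $v$ of positive length contains a cycle through $v$, so to finish I need to argue the walk really does contain a cycle passing through $v$ and not merely a cycle avoiding $v$. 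The cleanest way: show $W$ is not a trivial back-and-forth at $v$, i.e. that it has length strictly greater than $0$ and genuinely ``goes around'', and then extract a cycle through $v$. This is where the quantitative hypothesis $SC(v) > \ceil{\alpha+2}(\alpha+\beta+1)$ comes in — it should be used to ensure that each spanner detour $A_i'$ is short relative to the ``diameter'' of the configuration so that $W$ cannot collapse, guaranteeing a genuine cycle through $v$ rather than a degenerate walk.

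The main obstacle I expect is exactly this last point: ruling out that the concatenated walk $W$ degenerates so that every cycle it contains avoids $v$. For the argument to work I want each detour $A_i'$ to stay ``local'' — e.g. to not pass through $v$ — so that the only way the walk returns to $v$ is via the genuine cyclic structure. One way to handle this: note that $d_G(v, x_i) \ge$ (cycle-distance from $v$ to $x_i$ along the shorter arc) which is at least roughly $\ell/s$ for the ``middle'' vertices, hence any vertex $u$ on detour $A_i'$ with $i$ not adjacent to the arc containing $v$ has $d_{H_{\alpha,\beta}}(v,u)$ lower-bounded, while the detours adjacent to $v$ are the only ones that can touch $v$; combined with the length bound $|A_i'| \le \alpha|A_i| + \beta \le \alpha(\ell/s + 1) + \beta$ being smaller than $\ell/s$ — which is precisely what $\ell/s > \alpha+\beta+1$, i.e. the hypothesis, buys us after rearranging $\alpha(\ell/s) + \alpha + \beta < \ell/s \cdot(\text{something})$ — one concludes a detour not incident to the $v$-arc cannot reach $v$. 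Then the walk $W$ visits $v$ only at its designated start/end, traverses a closed walk of positive length, and hence contains a cycle through $v$ of the stated length. I would organize the writeup so that the arithmetic checking $\alpha(\ell/s+1)+\beta < \ell/s$ — equivalently $\ell > \frac{s(\alpha+\beta)}{1-\alpha}$ or a cruder sufficient bound implied by $\ell > s(\alpha+\beta+1)$ — is isolated as a short computation, and the combinatorial extraction of the cycle through $v$ is stated as the crux.
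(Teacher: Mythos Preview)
Your overall plan---subdivide $C_v$ into $s=\lceil\alpha+2\rceil$ arcs, replace each by a spanner path, and bound the total length---matches the paper's proof exactly, and your identification of the real difficulty (showing the concatenated walk actually contains a cycle through $v$) is correct. But the quantitative step you propose to resolve that difficulty does not work.

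You try to show that a ``middle'' detour $A_i'$ cannot reach $v$ by arguing that $|A_i'|\le \alpha(\ell/s+1)+\beta$ is smaller than $\ell/s$. This inequality is impossible whenever $\alpha\ge 1$, which is always the case for a spanner stretch; indeed $\alpha(\ell/s)$ alone already exceeds $\ell/s$. Your own rearrangement exposes the problem: you get a denominator $1-\alpha$, which is nonpositive. So the hypothesis $\ell>s(\alpha+\beta+1)$ cannot be buying you the inequality you wrote.

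The fix, which is what the paper does, is to compare $|A_i'|+|A_i|$ to $SC(v)$ rather than $|A_i'|$ to $\ell/s$. One checks
\[
|A_i'|+|A_i|\le(\alpha+1)|A_i|+\beta\le(\alpha+1)\Big(\frac{\ell}{s}+1\Big)+\beta<\frac{\alpha+2}{s}\,\ell\le \ell=SC(v),
\]
where the strict inequality uses exactly $\ell/s>\alpha+\beta+1$. Now if a middle detour $A_i'$ passed through $v$, then $A_i'\cup A_i$ would be a closed walk through $v$ (since $A_i$ avoids $v$) of length strictly less than $SC(v)$, contradicting minimality. The same inequality handles the two arcs incident to $v$: if the spanner path $A_1'$ left $v$ along a different edge than the original arc $A_1$, their union would again contain a cycle through $v$ of length $|A_1'|+|A_1|<SC(v)$. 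Hence $A_1'$ and $A_s'$ use the two distinct edges of $C_v$ at $v$, the middle detours avoid $v$, and the concatenation genuinely contains a cycle through $v$.

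So your missing idea is precisely this: use the \emph{minimality of $SC(v)$} (not a raw distance lower bound) to rule out degeneracy, via the bound $|A_i'|+|A_i|<SC(v)$.
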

\begin{proof}
Divide $C_v$ into $a=\ceil{\alpha+2}$ sections of almost the same size, where $u_1,\ldots,u_{a-1}\in C_v$ form the subdivision. Let $v=u_0=u_a$. We have that the length of the subpath $u_iu_{i+1}$ for $i=1,\ldots,a-1$ is either $\floor{\frac{SC(v)}{a}}$ or $\floor{\frac{SC(v)}{a}}+1$ and is equal to $d_G(u_i,u_{i+1})$.

We show that the union of subpaths $\pi_{H_{\alpha,\beta}}(u_i,u_{i+1})$ for $i=0,\ldots,a-1$ contains a cycle around $v$. Then this cycle will be of length at most $\sum_{i=0}^{a-1}d_{H_{\alpha,\beta}}(u_i,u_{i+1})\le\sum_{i=0}^{a-1}(\alpha d_{G}(u_i,u_{i+1})+\beta) =\alpha SC(v)+a\beta$.

First note that 
\begin{align*}
d_{H_{\alpha,\beta}}(u_i,u_{i+1})+d_{G}(u_i,u_{i+1})&\le (\alpha+1)d_{G}(u_i,u_{i+1})+\beta\\
&\le \frac{\alpha+1}{a}SC(v)+\alpha+1+\beta\\
&<\frac{\alpha+1}{a}SC(v) + \frac{SC(v)}{a} =\frac{\alpha+2}{a}SC(v)\le SC(v).
\end{align*}
Where the third inequality comes from the fact that $SC(v)>a(\alpha+\beta+1)$

Now we show that the edge attached to $v$ in $\pi_G(v,u_1)$ and $\pi_{H_{\alpha,\beta}}(v,u_1)$ are the same. If not, then the union of these two paths contains a cycle around $v$, and by the above inequality the size of this cycle is less than $SC(v)$. Similarly, the edge attached to $v$ in $\pi_G(v,u_{a-1})$ and $\pi_{H_{\alpha,\beta}}(v,u_{a-1})$ are the same.

Finally, for $i=1,\ldots,a-2$, the path $\pi_{H_{\alpha,\beta}}(u_i,u_{i+1})$ does not pass $v$. Because otherwise, the union of $\pi_{H_{\alpha,\beta}}(u_i,u_{i+1})$ and $\pi_{G}(u_i,u_{i+1})$ contain a cycle around $v$ which is of length less than $SC(v)$. So the union of subpaths $\pi_{H_{\alpha,\beta}}(u_i,u_{i+1})$ for $i=0,\ldots,a-1$ contains a cycle around $v$. 
\end{proof}

\begin{lemma}
\label{lem:spanner-u}
Let $u\in C_v$ and suppose $SC(v)>\ceil{\alpha+2}(\alpha+\beta+1)$. There are two sets of paths between $u$ and $v$ in $H_{\alpha,\beta}$ that contain a cycle around $v$ and the sum of their sizes is at most $\alpha SC(v)+\ceil{\alpha+3}\beta$.
\end{lemma}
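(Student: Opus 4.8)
The plan is to reuse the subdivision idea from the proof of \cref{lem:cycle-spanner-general}, but arrange for $u$ to be one of the breakpoints. First I would take the same $a := \ceil{\alpha+2}$ near-equal arcs of $C_v$ used there, with breakpoints $v = x_0, x_1, \dots, x_{a-1}, x_a = v$, and then insert $u$ as an extra breakpoint on whichever arc currently contains it. This produces at most $a+1$ sub-arcs: splitting one original arc into two only yields shorter arcs, so every sub-arc still has length at most $\floor{SC(v)/a}+1$, and (exactly as asserted in \cref{lem:cycle-spanner-general}, since $C_v$ is a shortest cycle through $v$) the $G$-distance between consecutive breakpoints equals the corresponding arc length. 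Reading the breakpoints in cyclic order, $u$ splits $C_v$ into two arcs $A$ (from $v$ to $u$) and $B$ (from $u$ back to $v$); the first family of paths consists of the spanner shortest paths $\pi_{H_{\alpha,\beta}}$ between consecutive breakpoints along $A$, and the second family is the analogous collection along $B$. Each family is a sequence of spanner paths joining $v$ to $u$.

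For the length bound I would simply sum $d_{H_{\alpha,\beta}}(w,w') \le \alpha\, d_G(w,w') + \beta$ over all sub-arcs $(w,w')$: the $\alpha\, d_G$ terms telescope to $\alpha\, SC(v)$, and there are at most $a+1 = \ceil{\alpha+2}+1 = \ceil{\alpha+3}$ additive-$\beta$ terms, giving total size at most $\alpha\, SC(v) + \ceil{\alpha+3}\beta$, which is exactly the claimed bound. (The degenerate cases, where $u$ already coincides with some $x_i$ or is adjacent to $v$ on $C_v$ so that one arc is a single edge, only decrease the number of sub-arcs and are handled identically.)

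The main obstacle, and the only part beyond bookkeeping, is verifying that the union of the two families actually contains a cycle through $v$. As in \cref{lem:cycle-spanner-general}, this reduces to the key inequality $d_{H_{\alpha,\beta}}(w,w') + d_G(w,w') < SC(v)$ for every sub-arc $(w,w')$, which follows from each sub-arc having length at most $\floor{SC(v)/a}+1$ together with the hypothesis $SC(v) > \ceil{\alpha+2}(\alpha+\beta+1)$, by the same computation as in \cref{lem:cycle-spanner-general}. Granting this inequality: for the sub-arc of $A$ incident to $v$, its spanner path must leave $v$ on the same edge as $C_v$ does in the direction of $A$ — otherwise that spanner path together with the (shortest) sub-arc of $C_v$ would yield a cycle through $v$ of length strictly less than $SC(v)$; the same holds for the sub-arc of $B$ incident to $v$, and these two edges at $v$ are the two distinct edges of $C_v$ at $v$. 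Likewise, every sub-arc not incident to $v$ has its spanner path avoiding $v$ (else, again, a too-short cycle through $v$). Hence concatenating the two families traverses $v$ exactly once via two distinct incident edges, so their union contains a cycle through $v$, as desired.
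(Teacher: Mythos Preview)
Your proposal is correct and follows essentially the same approach as the paper: subdivide $C_v$ into $a=\ceil{\alpha+2}$ near-equal arcs, insert $u$ as an additional breakpoint to get at most $a+1$ sub-arcs, form the two families of spanner paths along the two $v$-to-$u$ arcs, and bound the total length by $\alpha\,SC(v)+(a+1)\beta$ while deferring the ``cycle through $v$'' argument to the inequality established in \cref{lem:cycle-spanner-general}. If anything, you have spelled out the cycle-through-$v$ verification more carefully than the paper, which simply says ``the rest of the proof is similar to \cref{lem:cycle-spanner-general}.''
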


\begin{proof}
Let $a=\ceil{\alpha+2}$. Consider the division of $C_v$ into $a$ sections with almost equal sizes. Adding $u$ to this subdivision, we have $a+1$ sections of length at most $\floor{\frac{SC(v)}{a}}+1$. Let the nodes on the cycle that form these sections be $u_1,\ldots,u_{a}$, and let $v=u_0=u_{a+1}$. Note that for some $i$, $u_i=u$. The union of $\pi_{H_{\alpha,\beta}}(u_j,u_{j+1})$ for $j=0,\ldots,i-1$ includes one path from $v$ to $u$, and the union of $\pi_{H_{\alpha,\beta}}(u_j,u_{j+1})$ for $j=i,\ldots,a$ includes the other path from $u$ to $v$. The rest of the proof is similar to Lemma \ref{lem:cycle-spanner-general}. The union of $\pi_{H_{\alpha,\beta}}(u_i,u_{i+1})$ for $i=0,\ldots,a$ contains a cycle around $v$, and the length of the cycle is at most the sum of the lengths of the paths, which is at most $\alpha SC(v)+(a+1)\beta$.
\end{proof}
Finally, we use the following lemma to compute small cycles. The proof is in the appendix.
\begin{restatable}{lemma}{smallcycles}
\label{thm:4k2^2k-approx}
Let $k\ge 2$ be a fixed integer. There is an algorithm for unweighted ANSC that runs in $\tilde{O}(n^{1+2/k}+m)$ time and for any node $v\in V(G)$ gives an estimate of $(2k-1)2^{k-2}SC(v)+2^{k-1}$.
\end{restatable}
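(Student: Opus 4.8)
The plan is to prove the lemma by a recursion on $k$ that repeatedly uses fault-tolerant spanners to sparsify the graph and the \textsc{CycleEstimationDijkstra} data structure (together with \cref{lem:datastructure}) to harvest cycle-length estimates. As a preprocessing step I would make the graph degree-uniform exactly as in the proof of \cref{thm:kapprox-ansc-undir}, replacing each vertex by a balanced tree of zero-weight edges so that every vertex has degree $O(m/n)$ and no cycle is created or destroyed; we only ever report the output $\hat c_v$ for original vertices $v$, and the correctness direction $SC(v)\le\hat c_v$ will be automatic because every estimate \textsc{CycleEstimationDijkstra} produces is the length of an actual cycle through the relevant vertex.

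The key structural tool is \cref{lem:cycle-est-fault-tolerant}: if $P$ is a $1$-fault-tolerant $c$-spanner of $G$ (built via \cref{lem:fault-tolerant-const} in $\tilde O(m)$ time, with $|E(P)|=\tilde O(n^{1+1/k})$ for an appropriate choice), then for every $v$ the graph $P\cup E_G(v)$ already contains a cycle through $v$ of length at most $c\cdot SC(v)$. The difficulty is that this ``good graph'' $P\cup E_G(v)$ depends on $v$, so one run of \textsc{CycleEstimationDijkstra} does not serve all vertices. Following the covering idea behind \cref{thm:6-1anscundir}, I would assign each $v$ to some $p(v)$ in a sampled set $S$ and run a (truncated) \textsc{CycleEstimationDijkstra}$(s)$ on $G_s:=P\cup\bigcup_{p(v)=s}E_G(v)$; since every non-$P$ edge lands in at most two $G_s$'s, the total size of these graphs is $|S|\cdot|E(P)|+O(m)$, which with $|S|=\tilde O(n^{1/k})$ is $\tilde O(n^{1+2/k}+m)$. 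A vertex $v$ is well-served by this step when some sampled node lies within roughly half of $v$'s (already blown-up) cycle length, since then \cref{lem:datastructure} converts the discovered cycle into a good estimate; the vertices that are \emph{not} well-served are exactly those whose relevant ball $B_G(v,\cdot)$ is small, and for those — instead of brute-forcing a truncated search, which would cost an extra factor $n$ and only yield a weaker $\tilde O(n^{2-1/k})$-type bound — I would \emph{recurse} with parameter $k-1$ on the (provably smaller and sparser) sub-instances they induce. One checks that at recursion depth $j$ the balls being resolved have size $O(n^{1-j/k})$, so every level again costs $\tilde O(n^{1+2/k}+m)$ and the depth is $\Theta(k)$.

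When the recursion unwinds, the approximation factors compound: each level contributes a factor $2$ (the $2d(s,x)+|C|$ shape of \cref{lem:datastructure}, because a sampled node is promised only to be within \emph{half} of an already-blown-up cycle length) and doubles the additive slack coming from the $+1$'s in the spanner reconnections, while the fault-tolerant spanners contribute the leading multiplicative factor; I would organize this as an induction on $k$ with hypothesis $SC(v)\le\hat c_v\le (2k-1)2^{k-2}SC(v)+2^{k-1}$ so that the exact constants fall out. The step I expect to be the main obstacle is calibrating, at every level simultaneously, the sample size and the truncation threshold so that (a) \textsc{CycleEstimationDijkstra} explores the \emph{entire} relevant cycle before it is truncated — even though that cycle may itself be a constant-factor blow-up inherited from a deeper recursion level — and (b) the level's work stays $\tilde O(n^{1+2/k}+m)$; this pins down a tight geometric relation between $|S|$ and the ball sizes and demands a careful ``close sampled node vs.\ small local ball'' case split. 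A secondary nuisance is checking that the many $O(1)$ additive losses (from the fault-tolerant spanner, from ceilings, from each reconnection) accumulate to exactly $2^{k-1}$ and no larger, which the stated induction is designed to control.
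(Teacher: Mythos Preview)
Your proposal has the right raw ingredients --- the $1$-fault-tolerant $(2k-1)$-spanner $P$ together with \cref{lem:cycle-est-fault-tolerant}, the covering graphs $G_s=P\cup\bigcup_{p(v)=s}E_G(v)$, degree-uniformization of $P$, and \textsc{CycleEstimationDijkstra} with \cref{lem:datastructure} --- but the step you flag as the ``main obstacle'' is in fact a genuine gap, and the paper does \emph{not} resolve it by recursion on $k$.

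Concretely, your phrase ``recurse with parameter $k-1$ on the (provably smaller and sparser) sub-instances they induce'' is not a well-defined algorithm. There can be $\Theta(n)$ vertices $v$ whose relevant ball misses the sample, and you do not say what the sub-instances are, whether they overlap, or why their total size is controlled. The assertion that ``at recursion depth $j$ the balls being resolved have size $O(n^{1-j/k})$'' would need a clustering/bunch structure (Thorup--Zwick style) that you never set up; without it there is no reason the per-level work stays $\tilde O(n^{1+2/k}+m)$. Your approximation bookkeeping also does not close: if each recursive level uses its own $(2(k-j)-1)$-spanner and a doubling from \cref{lem:datastructure}, you do not land on $(2k-1)2^{k-2}$; if instead you fix a single $(2k-1)$-spanner at the top and recurse inside it, you have to say what ``recurse with parameter $k-1$'' then means.

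The paper's argument is \emph{non-recursive} and structurally different. After building $P$ (and running the $k$-approximation ANSC of \cref{thm:kapprox-ansc-undir} on $P$ itself to handle the low-degree vertices --- a step you omit), it takes $k-1$ sample sets $S_1,\dots,S_{k-1}$ simultaneously, with $|S_i|=\tilde O(n^{(k-i)/k})$, and for each $s\in S_i$ runs \textsc{CycleEstimationDijkstra}$(s)$ in $G_s$ \emph{truncated at radius $d(s,p_{i+1}(s))-1$}. Because $P$ is degree-uniform, the truncated search from $s\in S_i$ touches $\tilde O(n^{(i+1)/k})$ vertices, each of $P$-degree $O(n^{1/k})$, so every level costs $\tilde O(n^{1+2/k}+m)$. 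The analysis is a telescoping/geometric argument on the sequence $d(v,p_i(v))$: either $d(v,p_{i+1}(v))\le 2d(v,p_i(v))+\lfloor D/2\rfloor$ for all $i$ (with $D=(2k-1)SC(v)$), which forces $d(v,p_{k-1}(v))\le 2^{k-2}(1+D/2)-D/2$ and the untruncated Dijkstra at level $k-1$ finishes the job; or there is a first $i$ where this inequality fails, and then the truncation radius at level $i$ is provably large enough to contain the length-$D$ cycle through $v$ in $G_{p_i(v)}$. In either case \cref{lem:datastructure} gives $\hat c_v\le 2d(v,p_i(v))+D\le 2^{k-1}+2^{k-2}(2k-1)SC(v)$. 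The $2^{k-2}$ is the geometric blow-up across the $k-1$ levels, not a product of recursive doublings; this is the missing idea in your plan.
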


Our first algorithm is as follows.
\sixoneanscundir*

\begin{proof}
We call $u$ \textit{low} degree if $d_G(u)\le \sqrt{n} $, and \textit{high} degree otherwise. The algorithm consists of two parts. 

\paragraph{Part 1.} We remove every edge from the graph that has two high degree endpoints, and call the remaining graph $G'$.
So we end up with at most $n^{1.5}$ edges in $G'$. Run the $3+\epsilon$-approximation algorithm for ANSC of Theorem \ref{thm:kapprox-ansc-undir} on $G'$ in $\tilde{O}(n^{1.5}n^{1/3})=\tilde{O}(n^{1+5/6})$ time.
If $u$ is node that its shortest cycle $C_u$ doesn't have two consecutive high degree nodes, then $C_u$ is preserved in $G'$ and this step gives an estimate of at most $(3+\epsilon)SC(u)\le 6SC(u)$ for the shortest cycle passing through $u$.

\paragraph{Part 2.} Let $P$ be a $1$-fault-tolerant $5$-spanner of $G$. We can build such a spanner of size $O(n^{4/3})$ in $\tilde{O}(m)$ time by Lemma \ref{lem:fault-tolerant-const}. Let $S$ be a set of samples of size $\tilde{O}(n^{1/2})$, which hits the neighborhood of every high degree node in $G$ with high probability.
We do Dijkstra in $G$ from the set $S$ in $\tilde O(m)$ time and so for each node $v$, we can find $p(v)\in S$ that is the closest node in $S$ to $v$, breaking ties arbitrarily. Note that since we did Dijkstra from the set $S$ to find $p(v)$s, for any node $u$ on the $vp(v)$ path we have $p(u)=p(v)$. 
For each node $s\in S$, let $F(s)$ be the set of nodes $v$ such that $s=p(v)$. Note that $v$ can be high or low degree. Now for each $s\in S$, let $G_s$ be the graph consisting of $P$ and all the edges adjacent to a node in $F(s)$ in $G$. So if $v\in F(s)$, all the edges adjacent to $v$ in $G$ are present in $G_s$. For each $s\in S$, we do  \textsc{CycleEstimationDijkstra}($s$) in $G_s$ in time $\tilde O(m_{G_s})$. Note that if $e$ is an edge that is not in $P$, it appears in at most $2$ graphs $G_s$. So all these Dijkstras together take $\sum_s \tilde O(m_{G_s})\le \tilde O(m+|S|\cdot |E_P|) \le \tilde{O}(m+n^{1/2}\cdot n^{4/3})\le \tilde{O}(m+n^{1+5/6})$ time.  

Now we prove that we have a good estimate for all nodes $v$. Suppose $v$ is a high degree node. So $p(v)$ is a neighbor of $v$. By Lemma \ref{lem:cycle-est-fault-tolerant} there is a cycle of length at most $5SC(v)$ passing through $v$ in $G_{p(v)}$. So 
\textsc{CycleEstimationDijkstra}($p(v)$)
gives an estimate of at most $5SC(v)+2$ for the length of the smallest cycle passing through $v$.

Now suppose that $v$ is a low degree node such that on its shortest cycle in $G$ there are two consecutive high degree nodes. So there exist a high degree node $w$ on $C_v$ such that $d_G(v,w)\le \floor{\frac{SC(v)-1}{2}}$. This high degree node must be adjacent to a node $s$ in $S$, so $d_G(s,v)\le d_G(v,w)+1\le \floor{\frac{SC(v)+1}{2}}$. This means that $d_G(v,p(v))\le \floor{\frac{SC(v)+1}{2}}$. Moreover, for any node $u$ on the $vp(v)$ path in $G$, $p(v)$ must be the closest node in $S$ to $u$ (i.e., $p(u)=p(v)$), so the $vp(v)$ path  exists in $G_{p(v)}$.
By Lemma \ref{lem:cycle-est-fault-tolerant}, there is a cycle of length at most $5SC(v)$ passing through $v$ in $G_{p(v)}$. So  \textsc{CycleEstimationDijkstra}$(p(v))$ gives an estimate of $5SC(v)+2\floor{\frac{SC(v)+1}{2}}\le 6SC(v)+1$ for $SC(v)$ by Lemma \ref{lem:datastructure}. 

Note that by part 1 if there are no such high degree nodes on the shortest cycle of $v$, we already have a good estimate for $SC(v)$.
\end{proof}


The structure of the next algorithm is the same as Theorem \ref{thm:6-1anscundir}, however we use don't use fault-tolerant spanners.
\ansctwoapproxbest*
\begin{proof}
Let $\epsilon' = \epsilon/2$.We call $u$ \emph{low} degree if $d_G(u)\le \sqrt{n}$, and \emph{high} degree otherwise. 
\paragraph{Part 1.} We remove every edge with two high degree endpoints, and call the remaining graph $G'$. Note that $G'$ has at most $O(n^{1.5})$ edges. We compute the $(1+\epsilon', \beta')$-spanner of Corollary \ref{cor:very-sparse-spanner}, which has $O(n^{1+\epsilon'})$ edges and is computable in $\tilde{O}(n^{1.5+\epsilon'})$ time, for some $\beta'$ that is a function of $\epsilon'$. Call this spanner $P$. We run the $2$-approximation ANSC algorithm of Theorem \ref{thm:warmupp} on this spanner which runs in $\tilde{O}(n^{1.5+\eps'})$ time. This algorithm gives our first estimate for $SC(v)$ for all $v$.
\paragraph{Part 2.} Let $S$ be a set of samples of size $\tilde{O}(\sqrt{n})$, which hits the neighborhood of every high degree node in $G$ with high probability. Compute a $(2+\eps,\beta'')$-spanner $H$ of Lemma \ref{lem:lineartime-sparse} on the input graph, which has $O(n^{1+\eps})$ edges and can be computed in $\tilde{O}(m + n^{1.5+\eps})$ time, for some $\beta''$ that is a function of $\eps$. Run \textsc{CycleEstimationDijkstra}($s$) for every $s\in S$. Each \textsc{CycleEstimationDijkstra} gives an estimation for $SC(v)$ for every $v$.
\paragraph{Part 3.} Run the algorithm of Lemma \ref{thm:4k2^2k-approx} for $k=4$ in time $\tilde{O}(n^{1.5}+m)$, to get a $(28,8)$-approximation for $SC(v)$ for every $v$. 

The final estimation for $SC(v)$ for every $v$ is the minimum of all of the estimations. 
Now we prove the correctness of the algorithm.

Consider a node $v$. Let $\alpha = 2+2\eps'=2+\eps$. If $SC(v)\le\ceil{\alpha+2}(\alpha+\max(\beta',\beta'' )+1)$, Part 3 gives a $(28,8)$-approximation of $SC(v)$, so assuming $\beta\ge28\ceil{\alpha+2}(\alpha+\max(\beta',\beta'' )+1)+8$, our algorithm gives the desired approximation guarantees. so suppose that $SC(v)>\ceil{\alpha+2}(\alpha+\max(\beta',\beta'' )+1)$.

Suppose that the shortest cycle passing through $v$,$C_v$, is fully contained in $G'$. Then by Lemma \ref{lem:cycle-spanner-general}, there is cycle of length $(1+\eps')SC(v)+\ceil{2+\eps'}\beta' $ in the spanner in Part 1. So the $2$-approximation algorithm gives a $(2+\eps,2\ceil{2+\eps'}\beta' )$ approximation for $SC(v)$. 

Now suppose that $C_v$ is not fully contained in $G'$. So there is a high degree vertex $u$ in $C_v$, and a $s\in S$ that is a neighbor of $u$. By Lemma \ref{lem:spanner-u}, there is a cycle of length $(2+\epsilon)SC(v)+\ceil{5+\epsilon}\beta'' $ around $v$ that passes through $u$ in $H$. So \textsc{CycleEstimationDijkstra}($s$) gives an estimate of $(2+\epsilon)SC(v)+\ceil{5+\epsilon}\beta'' +2$ for $SC(v)$ by Lemma \ref{lem:datastructure}. 
Finally set $\beta = \max(28\ceil{\alpha+2}(\alpha+\max(\beta',\beta'' )+1)+8,2\ceil{2+\eps'}\beta',\ceil{5+\epsilon}\beta'' +2)$.
\end{proof}

\bibliographystyle{alpha}
\bibliography{bib} 
\appendix

\section{Reductions between $n$-PSP and ANSC}

\obs*
\begin{proof}
Given a weighted undirected $n$-PSP instance, let $M$ denote the maximum edge weight in the input graph $G$. Then, create a new graph by adding a new vertex $v_i$ to the graph for every vertex pair $(s_i,t_i)$ being queried, and adding edges $(v_i,s_i),(s_i,t_i)$ of weights $10 nM$.
 These added edges have weights larger than any simple path in the original graph, so these added nodes and edges do not change the shortest path from $s_j,t_j$ for any $j$.
  It is then obvious that the shortest cycle passing through the new vertex $v_i$ must consist of the two added edges and the shortest path between $s_i$ and $t_i$ in the input graph $G$, with cycle length $20nM+ d_G(s_i,t_i)$. Then we can find $d_G(s_i,t_i)$ for all $i$ by running the ANSC algorithm.
\end{proof}
\obss*
\begin{proof}
Given a directed  ANSC instance, we create a new graph $G'$ where each vertex $v$ in the input graph $G$ has two copies $v_1,v_2$ in $G'$. For every input edge $(u,v)$, we add edges $(u_i,v_j)$ for $(i,j)\in \{1,2\}\times \{1,2\}$ with the same weight in $G'$. Then, the distance from $v_1$ to $v_2$ in $G'$ equals the shortest cycle in $G$ passing through $v$. This is because for each path from $v_1$ to $v_2$ there is cycle around $v$ in $G$ of the same length, and for every cycle around $v$, there is a path from $v_1$ to $v_2$ of the same length. 
\end{proof}

\section{Hardness Assumptions}\label{app:assump}
Here we give a complementary list of hardness assumptions that we use for proving our conditional lower bounds.

For sparse graphs the best known algorithm for triangle detection is a 25-year-old algorithm that runs in time $O(m^{1.41})$ \cite{alon1997finding} using the current best bound on $\omega$ \cite{alman2021refined}, and in time $\tilde{O}(m^{4/3})$ time if $\omega=2$. Hardness results based on reductions from triangle detection are very standard (including before the explicit introduction of fine-grained complexity e.g. \cite{DBLP:conf/stoc/Chan02,chan2006dynamic}). Like \cite{abboud2014popular} and many others, we hypothesize that there is no near-linear time algorithm for triangle detection.

\begin{hypothesis}[Sparse Triangle Hypothesis] \label{hyp:sparsetri}
In the word-RAM model with $O(\log n)$ bit words,
there exists a constant $\delta>0$ such that triangle detection requires time $m^{1+\delta-o(1)}$ in undirected graphs.
\end{hypothesis}

For dense graphs, the best known algorithm for triangle detection runs in time $O(n^{\omega})$ and is a simple application of matrix multiplication. It is a big open problem whether there is a faster algorithm (see \cite{williams2010subcubic}, \cite{ woeginger2008open} Open Problem 4.3(c), \cite{spinrad2003efficient} Open Problem 8.1), and it is generally believed that there is not. The Dense Triangle Hypothesis has also been used as a hardness assumption in \cite{bergamaschi2021new}.

\begin{hypothesis}[Dense Triangle Hypothesis] \label{hyp:densetri}
In the word-RAM model with $O(\log n)$ bit words,
for any constant $\eps > 0$, there is no $O(n^{\omega-\eps})$ time algorithm for triangle detection in undirected graphs.
\end{hypothesis}

Our next hypothesis concerns the problem of detecting a \emph{simplicial vertex}, a vertex whose neighborhood induces a clique (in an undirected graph). The best known algorithm for detecting a simplicial vertex runs in time $O(n^{\omega})$ \cite{kloks2000finding}. It appears on Spinrad's list of open problems \cite{Spi} and has been used as a hard problem in reductions by Kratsch and Spinrad \cite{DBLP:journals/siamcomp/KratschS06}.

\begin{hypothesis}[Simplicial Vertex Hypothesis] \label{hyp:simp}
In the word-RAM model with $O(\log n)$ bit words,
for any constant $\eps > 0$, there is no $O(n^{\omega-\eps})$ time algorithm for detecting a simplicial vertex.
\end{hypothesis}

Our next hypothesis concerns the problem of detecting a $k$-cycle in \emph{directed} graphs, and has been used as a hardness hypothesis in \cite{ancona2018algorithms, lincolnsoda18,probst2020new, girth-icalp}.

\begin{hypothesis}[$k$-Cycle Hypothesis] \label{hyp:cycle}
In the word-RAM model with $O(\log n)$ bit words,
for any constant $\eps > 0$, there exists a constant integer $k$, so that there is no $O(m^{2-\eps})$ time algorithm for $k$-cycle detection in directed graphs.
\end{hypothesis}

Our final hypothesis concerns the All-Edges Sparse Triangle problem. In this problem, we are given an undirected graph of $m$ edges, and for each edge $e$ the goal is to output whether $e$ is in a triangle. 
The following All-Edges Sparse Triangle Hypothesis is implied by two of the most standard hypotheses: the 3-SUM hypothesis \cite{DBLP:conf/stoc/Patrascu10} as well as the APSP hypothesis \cite{DBLP:conf/focs/WilliamsX20}.
\begin{hypothesis}[All-edges Sparse Triangle Hypothesis] \label{hyp:all-edge-sparse}
There is no $O(m^{4/3-\eps})$ time algorithm for the All-Edges Sparse Triangle problem on $m$-edge graphs, for any constant $\eps>0$.
\end{hypothesis}


\section{Conditional Lower Bounds}\label{app:lb}

\subsection{$n$-PSP}
We begin with a simple observation, which follows from the fact that a better than $4/3$-approximation for the girth returns the value $3$ if and only if the graph contains a triangle.

\begin{observation}\label{obs:lb}
Under the Sparse Triangle Hypothesis, any better than $4/3$-approximation for Girth in undirected graphs requires time $m^{1+\delta-o(1)}$. Under the Dense Triangle Hypothesis, any better than $4/3$-approximation for Girth in undirected graphs requires $n^{\omega-o(1)}$ time. 
\end{observation}

Now we provide a reduction from Triangle Detection to $n$-Pairs Diameter as stated in \cref{thm:tripair}.

\begin{theorem}\label{thm:tripair}
Under the Sparse Triangle Hypothesis, getting a better than $5/3$-approximation for $n$-Pairs Minimum Distance in undirected graphs requires time $m^{1+\delta-o(1)}$. Under the Dense Triangle Hypothesis, any better than $5/3$-approximation for $n$-Pairs Minimum Distance in undirected graphs requires time $n^{\omega-o(1)}$. 
\end{theorem}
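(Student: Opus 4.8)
The plan is to give a gap-producing reduction from undirected Triangle Detection, tuned so that the presence of a triangle forces some query pair to be at distance exactly $3$, while triangle-freeness forces every query pair to distance at least $5$; the ratio $5/3$ between the two cases is precisely the claimed approximation threshold. Concretely, given a simple undirected graph $G=(V,E)$ with $|V|=n$, $|E|=m$, I would build a $4$-layered graph $G'$ on vertex set $V_1\cup V_2\cup V_3\cup V_4$, where each $V_i$ is a copy of $V$ and $v_i\in V_i$ denotes the copy of $v$; for every edge $\{a,b\}\in E$ and every $i\in\{1,2,3\}$, add the two edges $\{a_i,b_{i+1}\}$ and $\{b_i,a_{i+1}\}$ to $G'$. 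Then $G'$ has $4n$ vertices and $6m$ edges and is constructible in $O(n+m)$ time. The $n$-Pairs Minimum Distance instance uses the $n$ query pairs $\{(v_1,v_4):v\in V\}$.

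Next I would verify the two directions of the gap. Observe that $G'$ is bipartite with sides $V_1\cup V_3$ and $V_2\cup V_4$ and that every edge joins consecutive layers; hence $v_1$ and $v_4$ lie on opposite sides, $d_{G'}(v_1,v_4)$ is odd, and it cannot be $1$ (layers $1$ and $4$ are non-adjacent), so $d_{G'}(v_1,v_4)\ge 3$ for all $v$. If $G$ has a triangle $\{v,w,u\}$, then
\[
v_1 - w_2 - u_3 - v_4
\]
is a path of length $3$ in $G'$ (it uses $\{v,w\}$ across layers $1$--$2$, $\{w,u\}$ across $2$--$3$, and $\{u,v\}$ across $3$--$4$), so that query pair has distance exactly $3$ and the $n$-pairs minimum distance is $3$. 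Conversely, if $G$ is triangle-free, then any length-$3$ path from $v_1$ to $v_4$ must --- by the consecutive-layers structure --- traverse the layers in order $1,2,3,4$, i.e.\ be of the form $v_1 - a_2 - b_3 - v_4$; this forces $\{v,a\},\{a,b\},\{b,v\}\in E$, and since $G$ is simple, $v,a,b$ are pairwise distinct, so $\{v,a,b\}$ is a triangle, a contradiction. Thus no query pair has distance $3$, and by parity every query pair has distance at least $5$ (possibly $\infty$), so the $n$-pairs minimum distance is at least $5$.

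Finally I would derive the hardness. Any algorithm that $(5/3-\eps)$-approximates $n$-Pairs Minimum Distance, run on $G'$, returns $\hat x$ with $x\le\hat x\le(5/3-\eps)x$ for the true value $x$: in the triangle case $x=3$ and $\hat x\le 5-3\eps<5$, in the triangle-free case $x\ge 5$ and $\hat x\ge 5$, so testing whether $\hat x<5$ decides Triangle Detection on $G$. Since $G'$ has $O(m)$ edges and $O(n)$ vertices and is built in $O(n+m)$ time, a running time of $m^{1+\delta-\Omega(1)}$ for such an approximation would yield a triangle-detection algorithm contradicting the Sparse Triangle Hypothesis, and a running time of $n^{\omega-\Omega(1)}$ would contradict the Dense Triangle Hypothesis. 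The only step requiring genuine care is the triangle-free direction: one must rule out any ``shortcut'' length-$3$ walk between $v_1$ and $v_4$ other than the monotone $1\!\to\!2\!\to\!3\!\to\!4$ one, and check that a monotone one truly certifies a triangle (distinctness of $v,a,b$). Both follow immediately from bipartiteness together with the fact that edges only connect adjacent layers, so there is no real obstacle; the construction is the whole content of the argument.
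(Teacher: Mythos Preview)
Your proposal is correct and follows essentially the same approach as the paper: the same four-layer construction with copies of $V$, the same query pairs $(v_1,v_4)$, and the same gap argument (triangle $\Rightarrow$ distance $3$; triangle-free plus bipartiteness $\Rightarrow$ distance $\ge 5$). Your write-up is in fact slightly more careful than the paper's in spelling out why a length-$3$ path must traverse the layers monotonically and why $v,a,b$ are distinct.
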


\begin{proof}
Let $G=(V,E)$ be a instance of Triangle Detection. The reduction is simple. We construct a 4-layered graph $G'$ with layers $V_1,V_2,V_3,V_4$ where $V_i$ is a copy of $V$, and $v_i\in V_i$ is a copy of $v\in V$ for all $i=1,2,3,4$. For all $(u,v)\in E$, we add an edge between $u_1$ and $v_2$, an edge between $u_2$ and $v_3$, and an edge between $u_3$ and $v_4$. The set of pairs $(s_i,t_i)$ for our $n$-Pairs Diameter instance is $(v_1,v_4)$ for every vertex $v\in V$.

We claim that if $G$ contains a triangle, then the $n$-pairs minimum distance in $G'$ is 3, and if $G$ does not contain a triangle, then the $n$-pairs minimum distance in $G'$ is at least 5.

First suppose that $G$ contains a triangle $(u,v,w)$. Then, $(u_1,v_2)$, $(v_2,w_3)$, and $(w_3,u_4)$ are all edges in $G'$. Thus, $d(u_1,u_4)=3$.

Now, suppose that $G$ does not contain a triangle. Let $(v_1,v_4)$ be the pair that realizes the $n$-pairs minimum distance. Since $G'$ is bipartite with $V_1$ and $V_4$ on opposite sides of the bipartition, we know that $d(v_1,v_4)$ is odd. Thus, it suffices to show that $d(v_1,v_4)\not=3$. Suppose for contradiction that $d(v_1,v_4)=3$ and let $u_2\in V_2$ and $w_3\in V_3$ be vertices on a path of length 3 between $v_1$ and $v_4$. Then, $(v,u)$, $(u,w)$, and $(w,v)$ are all edges in $G$, a contradiction.
\end{proof}

Next, we provide a reduction from Simplicial Vertex to $n$-Pairs Diameter as stated in \cref{thm:simppair}.

\begin{theorem}\label{thm:simppair}
Under the Simplicial Vertex Hypothesis, any better than $5/3$-approximation algorithm for $n$-Pairs Diameter in undirected graphs requires time $n^{\omega-o(1)}$. 
\end{theorem}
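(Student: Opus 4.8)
The plan is to give a reduction from Simplicial Vertex to $n$-Pairs Diameter in which the answer is exactly $3$ when the input graph has no simplicial vertex and is at least $5$ when it has one; then any algorithm with approximation ratio strictly below $5/3$ distinguishes the two cases and hence decides Simplicial Vertex, contradicting the hypothesis.

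Given the input graph $G=(V,E)$ with $|V|=n$, I would build a four-layer undirected graph $G'$ with layers $S=\{s_v\}_{v\in V}$, $L=\{\ell_x\}_{x\in V}$, $R=\{r_x\}_{x\in V}$, $T=\{t_v\}_{v\in V}$, and edges: $s_v\ell_u$ whenever $uv\in E$; $r_wt_v$ whenever $wv\in E$; and $\ell_xr_y$ whenever $x\ne y$ and $xy\notin E$. The query pairs are $(s_v,t_v)$ for every $v\in V$. Then $G'$ has $4n$ vertices, $O(n^2)$ edges, and $n$ pairs, so this is a legitimate $n$-PSP instance built in $O(n^2)$ time. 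The key structural facts are that $G'$ is bipartite with sides $S\cup R$ and $L\cup T$, so every $s_v$--$t_v$ path has odd length, and that $s_v$ has neighbors only in $L$ while $t_v$ has neighbors only in $R$ with $L\cap R=\emptyset$, so $d_{G'}(s_v,t_v)\ge 3$ always (and is $\ge 3$ and odd whenever finite).

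The correctness reduces to one observation: a length-$3$ path from $s_v$ to $t_v$ must have the form $s_v-\ell_u-r_w-t_v$ (the layered structure forces $S\to L\to R\to T$), and such a path exists iff there are $u,w\in N_G(v)$ with $u\ne w$ and $uw\notin E$, i.e.\ iff $v$ is \emph{not} simplicial. Hence if $v$ is not simplicial then $d_{G'}(s_v,t_v)=3$, and if $v$ is simplicial then, by bipartiteness and the absence of a length-$3$ path, $d_{G'}(s_v,t_v)\ge 5$ (possibly $\infty$; if finite distances are preferred, append a fresh internally-disjoint $s_v$--$t_v$ path of length $5$, which changes nothing). Consequently, if $G$ has no simplicial vertex the $n$-pairs diameter is exactly $3$, while if $G$ has a simplicial vertex it is at least $5$. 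An algorithm with ratio $5/3-\eps$ outputs an estimate $<5$ in the first case and $\ge 5$ in the second, so it decides Simplicial Vertex; running it on $G'$ in $O(N^{\omega-\eps})$ time yields an $O(n^{\omega-\eps'})$-time algorithm for Simplicial Vertex for some $\eps'>0$ (the $O(n^2)$ construction being absorbed since $\omega\ge 2$), contradicting the Simplicial Vertex Hypothesis.

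The main design point — and the reason the ratio comes out as $5/3$ rather than $2$ — is that $s_v$ and $t_v$ must attach to \emph{different} layers: if they both attached to a single middle layer $M$ via $s_vm_x,t_vm_x$ for $x\in N_G(v)$, then any $v$ with a neighbor would satisfy $d(s_v,t_v)=2$ regardless of whether $v$ is simplicial, making the witness non-edge invisible. Routing the witness non-edge through the bipartite-complement $L$--$R$ layer is what forces four layers and the $3$-vs-$5$ gap. The one small subtlety to get right is the "$x\ne y$" condition on the $\ell_xr_y$ edges: without it $\ell_ur_u$ would be an edge and $s_v-\ell_u-r_u-t_v$ would be a spurious length-$3$ path for every $u\in N_G(v)$, collapsing the reduction.
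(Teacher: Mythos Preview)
Your proposal is correct and is essentially identical to the paper's own proof: the paper builds the same four-layer graph (with layers $V_1,V_2,V_3,V_4$ in place of your $S,L,R,T$), puts adjacency edges between the outer two layer-pairs and non-adjacency edges between the middle layers, and uses the same bipartiteness argument to obtain the $3$-versus-$5$ gap. Your added remarks about the necessity of four layers and the $x\neq y$ condition are accurate but go slightly beyond what the paper spells out.
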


\begin{proof}

Let $G=(V,E)$ be an instance of Simplicial Vertex.

 We construct a 4-layered graph $G'$ with layers $V_1,V_2,V_3,V_4$. Each vertex $v\in V$ has a copy $v_i$ in each $V_i$. For all $(u,v)\in E$, we add an edge between $u_1$ and $v_2$, and an edge between $u_3$ and $v_4$. For each pair $(u,v)$ of distinct vertices which are \emph{not} adjacent in $G$, we add an edge between $u_2$ and $v_3$. The set of pairs $(s_i,t_i)$ for our $n$-Pairs Diameter instance is $(v_1,v_4)$ for every vertex $v\in V$.

We claim that if $G$ contains a simplicial vertex, then the $n$-pairs diameter of $G'$ is at least 5, and if $G$ does not contain a simplicial vertex, then the $n$-pairs diameter of $G'$ is 3.

First, suppose $G$ contains a simplicial vertex $v$. We claim that $d(v_1,v_4)\geq 5$. First, since $G'$ is bipartite with $V_1$ and $V_4$ on opposite sides of the bipartition, we know that $d(v_1,v_4)$ is odd. Thus, it suffices to show that $d(v_1,v_4)\not=3$. Suppose for contradiction that $d(v_1,v_4)=3$ and let $u_2\in V_2$ and $w_3\in V_3$ be vertices on a path of length 3 between $v_1$ and $v_4$. We note that $u$, $v$, and $w$ are distinct vertices in $G$ since edges in $G'$ only occur between copies of distinct vertices in $G$. By the construction of $G'$, $(v,u)$ and $(v,w)$ are both edges in $G$, while $(u,w)$ is not. Thus by definition, $v$ is not a simplicial vertex, a contradiction.

Now, suppose that $G$ does not contain a simplicial vertex. We will show that the $n$-pairs diameter of $G$ is 3. Consider an arbitrary vertex $v_1\in V_1$. Since $v$ is not a simplicial vertex in $G$, $v$ has two neighbors $u$ and $w$ such that $(u,w)\not\in E$. Thus, $(v_1,u_2)$, $(u_2,w_3)$, and $(w_3,v_4)$ are all edges in $G'$, so $d(v_1,v_4)=3$. 
\end{proof}

\begin{theorem}\label{thm:kcycle}
Under the $k$-Cycle Hypothesis, no $m^{2-\eps}$ time algorithm for any constant $\eps>0$ can provide any finite approximation for $n$-Pairs Minimum Distance in directed graphs. 
\end{theorem}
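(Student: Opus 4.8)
The plan is to reduce $k$-cycle detection in directed graphs to directed $n$-Pairs Minimum Distance in a way that any finite multiplicative approximation suffices to distinguish the yes/no instances. The key observation is that in directed graphs, a single prespecified pair $(s,t)$ can be used to test whether a particular vertex lies on a short directed cycle: if we can make $d(s,t)$ finite exactly when a $k$-cycle exists and infinite otherwise, then any finite approximation algorithm distinguishes the cases. First I would take a directed graph $G$ on $n$ vertices and $m$ edges, and build the standard ``layered'' graph $G'$ with $k+1$ layers $V_0, V_1, \dots, V_k$, where each $V_i$ is a copy of $V(G)$; for every directed edge $(u,v)\in E(G)$ and every $0\le i\le k-1$, add the directed edge from the copy of $u$ in $V_i$ to the copy of $v$ in $V_{i+1}$. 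A directed path from the copy of $v$ in $V_0$ to the copy of $v$ in $V_k$ corresponds exactly to a closed walk of length $k$ through $v$ in $G$; since layers force monotone progress, such a walk must in fact be a $k$-cycle through $v$ (a closed walk of length exactly $k$ with all steps forced — here one should be slightly careful: a closed walk of length $k$ need not be a simple cycle, so I would instead argue that $G$ contains a $k$-cycle iff \emph{some} vertex $v$ has a length-$k$ closed walk that is a cycle, and handle non-simple closed walks separately, or invoke the color-coding / standard layered-graph reduction which already handles this).

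More carefully, to avoid the simple-vs-nonsimple subtlety I would use the well-known reduction underlying $k$-cycle detection: randomly assign each vertex of $G$ one of $k$ colors (or use a perfect hash family for a deterministic version), keep only edges going from color-$i$ vertices to color-$(i+1 \bmod k)$ vertices, and then a $k$-cycle in $G$ survives with constant probability as a \emph{colorful} closed walk, which is automatically a simple $k$-cycle. Then in the layered graph $G'$, place in layer $V_i$ only the vertices of color $i$, and add one query pair $(s_v, t_v)$ for each vertex $v$ of color $0$, namely the copy of $v$ in $V_0$ and the copy of $v$ in $V_k$. The number of pairs is at most $n$, matching the $n$-PSP requirement. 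The graph $G'$ has $O(n)$ vertices and $O(m)$ edges, and is built in $O(m)$ time.

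The correctness is then: if $G$ has a (colorful) $k$-cycle through some color-$0$ vertex $v$, then $d_{G'}(s_v, t_v) = k$, finite; if $G$ has no $k$-cycle, then for every color-$0$ vertex $v$ there is no length-$k$ walk from $s_v$ to $t_v$, and by the layered structure there is \emph{no} path at all from $s_v$ to $t_v$ (every path from $V_0$ to $V_k$ has length exactly $k$), so $d_{G'}(s_v,t_v) = \infty$, i.e. the $n$-pairs minimum distance is infinite. Hence any algorithm that outputs a finite-approximation estimate $\hat d$ with $d \le \hat d \le \alpha d$ for finite $\alpha$ must output a finite value in the yes case and (by the lower bound $d\le \hat d$) an infinite value in the no case, so it decides $k$-cycle. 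An $m^{2-\eps}$-time $n$-PSP algorithm would then give an $O(m^{2-\eps})$-time $k$-cycle algorithm (amortizing over the $O(\log n)$ or constant repetitions needed for the coloring to succeed with high probability), contradicting the $k$-Cycle Hypothesis.

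The main obstacle I anticipate is purely bookkeeping rather than conceptual: making sure the closed walk / simple cycle distinction is handled cleanly (the layered graph as stated detects closed $k$-walks, and one must either argue a minimum-length closed walk of length $k$ is a cycle, or use color-coding to force simplicity), and making sure the reduction produces only $O(n)$ query pairs and $O(m)$ edges so that the blow-up does not weaken the $m^{2-o(1)}$ conclusion. Both are standard; the color-coding route is the safest and is what I would write up, noting that derandomization via a standard $(n,k)$-perfect hash family keeps the reduction deterministic if desired.
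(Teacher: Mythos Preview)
Your proposal is correct and follows essentially the same approach as the paper: color-code the vertices of $G$ into $k$ classes, build the layered graph so that any path from a color-$0$ vertex in the first layer to its copy in the last layer has length exactly $k$ and corresponds to a colorful (hence simple) $k$-cycle, and use the finite-versus-infinite distance gap to conclude. The paper's writeup is slightly more terse (it keeps the color classes in place and just duplicates the first class as an extra layer), but the construction and argument are the same.
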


\begin{proof}
Let $G=(V,E)$ be an instance of $k$-Cycle.  First we use
color-coding \cite{colorcoding} so that with polylogarithmic time overhead, we can assume that the vertices of $G$ are partitioned into $V_1, V_2, \dots, V_k$, so that if $G$ contains a $k$-cycle, $G$ has a $k$-cycle whose $i^{th}$ vertex is in $V_i$ for all $i$. We consider the subgraph of $G$ containing only edges from $V_i$ to $V_{i+1}$ for some $i$, and edges from $V_k$ to $V_1$. We construct a new set of vertices $V_{k+1}$ which contains a copy of each vertex in $V_1$, such that $v\in V_1$ and its copy $v'\in V_{k+1}$ have identical in- and out-neighborhoods. Now, we remove all in-edges to $V_1$, and all out-edges from $V_{k+1}$. Let $G'$ be the resulting graph. Note that $G'$ only contains edges from $V_i$ to $V_{i+1}$ for some $i$. The set of pairs $(s_i,t_i)$ for our $n$-Pairs Minimum Distance instance consists of each vertex in $V_1$ paired with its copy in $V_{k+1}$.

We claim that if $G$ contains a $k$-cycle, then the $n$-pairs minimum distance in $G'$ is finite, and if $G$ does not contain a $k$-cycle, then the $n$-pairs minimum distance in $G'$ is infinite.

First, suppose that $G$ contains a $k$-cycle on vertices $v_i\in V_i$ for all $1\leq i\leq k$. Then there is a path in $G'$ from $v_1\in V_1$ to $v_k\in V_k$ through each $v_i\in V_i$, and an edge from $v_k$ to the copy $v'_1$ of $v_1$ in $V_{k+1}$. Thus, $d(v_1,v'_1)$ is finite so the $n$-pairs minimum distance in $G'$ is finite.

Now, suppose that the $n$-pairs minimum distance in $G'$ is finite. Then, there exists a vertex $v\in V_1$ that can reach its copy $v'\in V_{k+1}$. Since all edges are directed from $V_i$ to $V_{i+1}$ for some $i$, there is a path of length $k$ from $v$ to $v'$ that goes through every layer from $V_1$ to $V_{k+1}$ in order exactly once. By identifying $v$ and $v'$ we obtain a $k$-cycle in $G$.
\end{proof}

\subsection{ANSC}

\begin{theorem}\label{thm:aest}
Under the All-Edges Sparse Triangle Hypothesis, any better than $3/2$-approximation algorithm for ANSC requires time $m^{4/3-o(1)}$. 
\end{theorem}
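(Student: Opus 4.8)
The plan is to reduce All-Edges Sparse Triangle to ANSC. The key observation is that an edge $(u,v)$ of a graph $G$ lies on a triangle if and only if $u$ and $v$ have a common neighbor in $G$. So I will build an ANSC instance $H$ in which, for every edge $e=(u,v)$ of $G$, there is a designated vertex $x_e$ with $SC(x_e)=4$ if $u,v$ have a common neighbor and $SC(x_e)\ge 6$ otherwise; since $4\cdot(3/2-\eps)<6$, a single run of a $(3/2-\eps)$-approximate ANSC algorithm reads off the triangle-membership of every edge at once. Concretely, given $G=(V,E)$ with $|E|=m$, take two copies $V_1,V_2$ of $V$ (write $v_i$ for the copy of $v$ in $V_i$); for every $(a,b)\in E$ put the edges $a_1b_2$ and $b_1a_2$ into $H$ (this part carries the common-neighbor information), and for every edge $e=(u,v)\in E$ add a new vertex $x_e$ joined to $u_1$ and $v_1$. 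Then $H$ has $O(m)$ vertices and exactly $4m$ edges, and it is bipartite with parts $V_1$ and $V_2\cup\{x_e:e\in E\}$.

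For correctness, fix $e=(u,v)\in E$. Every cycle through $x_e$ uses its two incident edges together with a $u_1$–$v_1$ path avoiding $x_e$; since $u_1,v_1$ lie on the same side of the bipartition of $H$, such a path has even length, so $SC(x_e)$ is even, and it is at least $4$ because there is no edge $u_1v_1$. If $w$ is a common neighbor of $u,v$ in $G$, then $w_2$ is adjacent to both $u_1$ and $v_1$, yielding the $4$-cycle $x_e\,u_1\,w_2\,v_1$, so $SC(x_e)=4$. Conversely, if $u,v$ have no common neighbor, then no vertex other than $x_e$ is adjacent to both $u_1$ and $v_1$ (a $V_2$-vertex $w_2$ adjacent to both would make $w$ a common neighbor, and an $x_{e'}$ adjacent to both would force $e'=e$), so the shortest $u_1$–$v_1$ path avoiding $x_e$ has length at least $4$, giving $SC(x_e)\ge 6$ — this also handles the degenerate case where $x_e$ lies on no cycle, i.e.\ $SC(x_e)=\infty$. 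Running an ANSC $(3/2-\eps)$-approximation on $H$ and declaring $e$ to be in a triangle exactly when the returned estimate for $SC(x_e)$ is below $6$ therefore solves All-Edges Sparse Triangle on $G$. Since $H$ has $m'=O(m)$ edges, a $(3/2-\eps)$-approximation running in $m'^{\,4/3-\delta}$ time would give an $O(m^{4/3-\delta})$-time algorithm for All-Edges Sparse Triangle, contradicting \cref{hyp:all-edge-sparse}.

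The only place that needs real care — and the reason this yields a $3/2$ gap rather than the weaker $4/3$ one obtained from a plain incidence-graph construction — is the lower bound $SC(x_e)\ge 6$ in the no-common-neighbor case: one must be sure that neither the second copy $V_2$ nor the other pendant vertices $x_{e'}$ create a short detour between $u_1$ and $v_1$. The use of a separate layer $V_2$ is exactly what makes the "length-$2$ path through a common neighbor" realizable without passing through $x_e$, while the direct edge $(u,v)$ is only available via the forbidden $x_e$; the degree-$2$ and bipartiteness arguments above make this precise. Everything else (the size bounds on $H$, the transfer of the running-time lower bound) is routine.
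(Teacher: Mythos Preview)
Your proof is correct and takes essentially the same approach as the paper: both reduce All-Edges Sparse Triangle to ANSC by building a bipartite graph with a designated degree-$2$ vertex per input edge, so that the shortest cycle through that vertex has length $4$ if the edge lies in a triangle and at least $6$ otherwise. The paper's construction differs only cosmetically---it takes three vertex copies $V_1,V_2,V_3$ and subdivides each $V_1$--$V_2$ edge (the $4$-cycle then closes through $V_3$), whereas you take two copies and attach a pendant $x_e$ to $u_1,v_1$ (the $4$-cycle closes through $V_2$)---but both yield $O(m)$-edge instances with the same $4$-versus-$6$ gap and the same correctness analysis.
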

\begin{proof}
Given an All-Edges Sparse Triangle instance $G=(V,E)$ with $m$ edges, we make three copies $V_1,V_2,V_3$ of the vertex set $V$, and between every pair of $V_i,V_j$ ($(i,j)\in \{(1,2),(2,3),(3,1)\}$) we copy the edge set $E$. It suffices to solve All-Edges Sparse Triangle on this tripartite graph. 

Then, we subdivide every edge $e=(v_1,v_2)$ between $V_1,V_2$, i.e., add a vertex $u_e$ and replace this edge by two edges $(v_1,u_e),(u_e,v_2)$. 

In this new 4-partite graph (which has $4m$ edges), node $u_e$ is in a 4-cycle $(u_e\to v_2 \to v_3 \to v_1\to u_e)$ if and only if the original edge $e=(v_1,v_2)$ is in a triangle $(v_1,v_2,v_3)$. Otherwise, the shortest cycle passing through $u_e$ must have length at least $6$. Hence, a better than $3/2$ approximation algorithm that solves ANSC on this new 4-partite graph can be used to solve All-Edges Sparse Triangle on the original graph, which requires time $m^{4/3-o(1)}$ under the All-Edges Sparse Triangle hypothesis.
\end{proof}

Now we provide a reduction from Simplicial Vertex to Cycle Diameter as stated in \cref{thm:simpcycle}.



\begin{theorem}\label{thm:simpcycle}
Under the Simplicial Vertex Hypothesis, any better than $7/5$-approximation algorithm for Cycle Diameter in undirected graphs requires time $n^{\omega-o(1)}$. 
\end{theorem}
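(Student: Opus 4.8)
The plan is to mimic the structure of the Simplicial Vertex reductions already used in \cref{thm:simppair} and \cref{thm:simpcycle}'s sibling theorems: build a layered graph from a Simplicial Vertex instance $G=(V,E)$ on $n$ vertices in which some vertex's shortest cycle is ``long'' exactly when $G$ has a simplicial vertex, and ``short'' otherwise, and calibrate the layer count so that the ratio between the two cases exceeds $7/5$. First I would create a layered graph $G'$ with five layers $V_1,\dots,V_5$, each a copy of $V$, plus for each $v$ a ``closing'' vertex $z_v$ joined to $v_1$ and $v_5$ (so that the shortest cycle through $z_v$ is $2$ plus the shortest $v_1$–$v_5$ path in $G'$). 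Edges between consecutive layers $V_i,V_{i+1}$ for $i=1,4$ would encode adjacency in $G$ (edge $u_1v_2$ and $u_4v_5$ iff $(u,v)\in E$), and edges between $V_2,V_3$ and $V_3,V_4$ would encode \emph{non}-adjacency of distinct vertices (edge $u_2w_3$ iff $u\ne w$ and $(u,w)\notin E$), exactly paralleling the gadget in \cref{thm:simppair}. One then checks: a length-$5$ path $v_1\to u_2\to w_3\to x_4\to y_5$ forces (reading off the layer constraints) $u,w,x,y$ and $v$ to witness that $v$ is \emph{not} simplicial; conversely if $v$ is simplicial, bipartiteness of $G'$ (layers $V_1,V_5$ on opposite sides) makes $d(v_1,v_5)$ odd and $\ge 7$, so $SC(z_v)\ge 9$, while if $G$ has no simplicial vertex every $v$ has nonadjacent neighbors $u,w$ and one gets a $v_1$–$v_5$ path of length $\le 5$ through the middle layers, so the cycle diameter is governed by short cycles.

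Concretely, I would aim to show: (i) if $G$ contains a simplicial vertex then the cycle diameter of $G'$ is at least $9$; (ii) if $G$ contains no simplicial vertex then the cycle diameter of $G'$ is at most $7$ (i.e.\ $d(v_1,v_5)\le 5$ for all $v$, giving $SC(z_v)\le 7$). Here the number $7/5$ is no accident: a $(7/5-\eps)$-approximation distinguishes ``$\le 7$'' from ``$\ge 9$'' since $(7/5)\cdot 7 < 9$ would be violated, so such an algorithm decides Simplicial Vertex. The size of $G'$ is $O(n)$ vertices and $O(n^2)$ edges, and the construction is done in $O(n^2)$ time, so a cycle-diameter approximation in $n^{\omega-\eps}$ time yields Simplicial Vertex detection in $n^{\omega-\eps}$ time, contradicting \cref{hyp:simp}.

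The main obstacle I expect is step (i): ruling out \emph{all} short cycles through $z_v$, not just length-$5$ paths through the four ``middle'' transitions, because a $v_1$–$v_5$ walk in $G'$ need not be monotone in the layer index — it could go $v_1\to V_2\to V_1\to V_2\to\cdots$. I would handle this the way \cref{thm:simppair} implicitly does: the only edges incident to $V_1$ go to $V_2$ and to the $z$-vertices, and likewise for $V_5$, so any $v_1$–$v_5$ path has odd length (genuine bipartiteness: put $V_1,V_3,V_5$ and the $z_v$ on one side, $V_2,V_4$ on the other), hence length $3$, $5$, $7,\dots$; length $3$ is impossible since there is no direct $V_1$–$V_5$ adjacency and the only length-$3$ route $v_1\to V_2\to V_4\to v_5$ fails because $V_2$–$V_4$ is not an edge set; and length $5$ forces the non-simplicial witness. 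I should also double-check that distinctness of the vertices $u,w,x,y,v$ is forced (as in \cref{thm:simppair}, edges in $G'$ only connect copies of distinct vertices, except the $V_2$–$V_3$ and $V_3$–$V_4$ edges which connect nonadjacent — hence distinct — vertices anyway, and the $z_v$-edges which need separate care). If a length-$5$ path can reuse a vertex of $G$ across non-consecutive layers, I would add the standard fix of also requiring, in the $V_2$–$V_3$ and $V_3$–$V_4$ layers, that the endpoints be distinct, and verify the simplicial-witness argument still goes through; this bookkeeping, rather than any deep idea, is where the real work lies.
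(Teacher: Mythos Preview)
Your proposal has a genuine gap in the core combinatorial claim, and also differs structurally from the paper's construction in ways that matter.

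First, a parity slip: with five layers $V_1,\dots,V_5$ and edges only between consecutive layers, $V_1$ and $V_5$ lie on the \emph{same} side of the bipartition, so $d(v_1,v_5)$ is even. The monotone $v_1$--$v_5$ path has length~$4$, not~$5$.

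More seriously, your ``length-5 path forces a non-simplicial witness'' step fails even after fixing the count. A monotone path $v_1\to u_2\to w_3\to x_4\to v_5$ imposes only: $u$ and $x$ are neighbors of $v$, and $w$ is a non-neighbor of both $u$ and $x$. Nothing forces $u$ and $x$ to be nonadjacent. If $v$ is simplicial, $u$ and $x$ are adjacent, but any vertex $w$ outside $N[v]$ nonadjacent to both still yields a length-$4$ path, so $SC(z_v)\le 6$ regardless. With two non-adjacency layers in series, the short path never reads off ``two neighbors of $v$ that are nonadjacent to \emph{each other}.''

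Finally, Cycle Diameter is a maximum over \emph{all} vertices. You only argue about the $z_v$'s; you never explain why vertices in $V_1,\dots,V_5$ have short cycles in the no-simplicial-vertex case (and with only the $z_v$ apparatus, many of them will not).

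The paper's construction avoids all three issues. It makes $v_3$ a degree-$2$ pivot via identity edges $(v_2,v_3),(v_3,v_4)$, so any cycle through $v_3$ has length $2 + d'(v_2,v_4)$ with $d'$ avoiding $v_3$. Crucially it uses a \emph{single} non-adjacency layer, placed between $V_1$ and $V_5$: a short $v_2$--$v_4$ detour is forced to be $v_2\to u_1\to w_5\to v_4$, which reads off exactly ``$u,w\in N(v)$ and $(u,w)\notin E$''. This gives the $5$-versus-$7$ gap (not $7$-versus-$9$). For the other vertices, the paper adds four global dummies $z_1,z_2$ (adjacent to all of $V_1$ and to each other) and $z_3,z_4$ (likewise for $V_5$), putting every vertex of $V_1\cup V_5$ in a triangle; the 5-cycle through $v_2,v_3,v_4$ is witnessed directly.
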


\begin{proof}

Let $G=(V,E)$ be an instance of Simplicial Vertex.

 We construct a 5-partite graph $G'$ with 5-partition $V_1,V_2,V_3,V_4,V_5$. Each vertex $v\in V$ has a copy $v_i$ in each $V_i$. 
 For all $(u,v)\in E$, we add an edge between $u_4$ and $v_5$, and an edge between $u_2$ and $v_1$. For each pair $(u,v)$ of distinct vertices which are \emph{not} adjacent in $G$, we add an edge between $u_1$ and $v_5$. For every $v\in V$, add edges $(v_2,v_3)$ and $(v_3,v_4)$.
 
 Additionally, add four dummy vertices $z_1,z_2,z_3$ and $z_4$, where both $z_1,z_2$ are adjacent to every vertex in $V_1$, and both $z_3,z_4$ are adjacent to every vertex in $V_5$. There is also an edge between $z_1$ and $z_2$, and an edge between $z_3$ and $z_4$.

We claim that if $G$ contains a simplicial vertex, then the cycle diameter of $G'$ is at least 7, and if $G$ does not contain a simplicial vertex, then the cycle diameter of $G'$ is at most 5.

First, suppose $G$ contains a simplicial vertex $v$. We claim that the length of the shortest cycle through $v_3$ is at least 7. Observe that $\deg(v_3)=2$, and hence the shortest cycle through $v_3$ must have length $d+2$, where $d$ denotes the length of the shortest path from $v_2$ to $v_4$ (which are the only neighbors of $v_3$) without going through $v_3$. Suppose for contradiction that the length $d$ of this path from $v_2$ to $v_4$ is at most 4. Then, by inspecting the structure of the graph $G'$, we observe that the only possibility of this path is to go through
$v_2 \to V_1 \to V_5 \to v_4$. 
Let $u_1\in V_1$ and $w_5\in V_5$ be the other vertices on this path. We note that $u$, $v$, and $w$ are distinct vertices in $G$ since edges in $G'$ only occur between copies of distinct vertices in $G$. By the construction of $G'$, $(v,u)$ and $(v,w)$ are both edges in $G$, while $(u,w)$ is not. Thus by definition, $v$ is not a simplicial vertex, a contradiction.

Now, suppose that $G$ does not contain a simplicial vertex. We will show that the cycle diameter of $G$ is at most 5. We first note that every vertex in $V_1\cup V_5$ participates in a triangle with $z_1$ and $z_2$ or $z_3$ and $z_4$. Thus, it remains to show that every vertex in $V_2\cup V_3\cup V_4$ participates in a 5-cycle. 
Consider an arbitrary vertex $v\in V$. Since $v$ is not a simplicial vertex in $G$, $v$ has two neighbors $u$ and $w$ such that $(u,w)\not\in E$. Thus, $(v_3,v_2),(v_2,u_1),(u_1,w_5),(w_5,v_4)$, and $(v_4,v_3)$, are all edges in $G'$, so $v_2,v_3,v_4$ participate in a 5-cycle.
\end{proof}

Next, we provide an unconditional lower bound for ANSC as stated in \cref{thm:an4c}.

\begin{theorem}\label{thm:an4c}
Any better than $3/2$-approximation algorithm for ANSC on graphs with $m=\Theta(n^2)$ (where the input graph is represented as an adjacency matrix) requires $\Omega(n^2)$ time unconditionally. 
\end{theorem}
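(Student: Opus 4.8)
The plan is to prove \cref{thm:an4c} by an adversary (decision-tree / query-complexity) argument in the adjacency-matrix model: I will exhibit a family of $n$-vertex graphs, all having $\Theta(n^2)$ edges, that differ only in a set of $\Theta(n^2)$ ``hidden'' adjacency-matrix entries, arranged so that any algorithm reading $o(n^2)$ entries cannot output the ANSC values to within a factor better than $3/2$.

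The first step is the construction. Given a Boolean matrix $X\in\{0,1\}^{t\times t}$ with $t=\Theta(n)$, I would build a graph $G_X$ on a special vertex $v^\star$, two independent sets $A=\{a_1,\dots,a_t\}$ and $B=\{b_1,\dots,b_t\}$, an auxiliary vertex $w$, and a clique $D$ on $2t$ further vertices (so $n=4t+2$). The fixed edges are $v^\star a_i$ for every $i$, the edge $v^\star w$, the edges $wb_j$ for every $j$, and all $\binom{2t}{2}$ edges inside $D$; the hidden edges are the pairs $a_ib_j$, present exactly when $X_{ij}=1$. Every such graph has $\Theta(n^2)$ edges (already $D$ contributes $\binom{2t}{2}=\Theta(n^2)$, and the total is at most $\binom{n}{2}$), so the instance always lies in the promised regime $m=\Theta(n^2)$, independent of $X$. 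The structural claim to verify is: $G_X$ is bipartite (hence has no odd cycle); the neighbourhood $A\cup\{w\}$ of $v^\star$ is independent (so $v^\star$ lies on no triangle); if $X_{ij}=1$ then $v^\star\!-\!a_i\!-\!b_j\!-\!w\!-\!v^\star$ is a $4$-cycle through $v^\star$; and if $X$ is the all-zero matrix then every $a_i$ and every $b_j$ has degree $1$, so $v^\star w$ is a bridge and $v^\star$ lies on no cycle at all. Consequently $SC(v^\star)=4$ when $X\neq 0$ and $SC(v^\star)=\infty$ when $X=0$; since $\infty$ exceeds $\tfrac{3}{2}\cdot 4 = 6$, the value that an algorithm reports for $v^\star$ already determines whether $X$ is all-zero, for any algorithm whose multiplicative error is below $3/2$.

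The second step is the adversary argument itself. Run a deterministic algorithm and answer its queries by returning ``non-edge'' for every query into the $A\times B$ block, and answering truthfully (according to the fixed part of the construction) for every other query. If the algorithm makes fewer than $t^2$ queries, then some hidden pair $(a_{i^\star},b_{j^\star})$ is never queried, and both the all-zero graph and the graph obtained from it by adding the single edge $a_{i^\star}b_{j^\star}$ are consistent with all answers given — yet they have $SC(v^\star)=\infty$ and $SC(v^\star)=4$ respectively, so no better-than-$3/2$-approximate ANSC algorithm can be correct on both. Hence the algorithm must make $\Omega(t^2)=\Omega(n^2)$ queries, i.e.\ take $\Omega(n^2)$ time; a randomized algorithm is handled in the usual way via Yao's principle, taking the hard distribution to put weight $1/2$ on the all-zero graph and weight $1/2$ on the graphs obtained by adding one uniformly random hidden edge. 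The only real obstacle is the structural verification that flipping a single hidden bit (while all other hidden bits remain $0$) moves $SC(v^\star)$ from $\infty$ down to $4$ — this is exactly what makes one unqueried matrix entry enough to flip the ANSC answer — together with the bookkeeping that the clique $D$, present only to keep $m=\Theta(n^2)$, is vertex-disjoint from the rest and therefore cannot create any shorter cycle through $v^\star$.
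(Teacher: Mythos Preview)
Your proposal is correct, and it takes a genuinely different route from the paper's proof. The paper reduces from the set-disjointness problem: it encodes two $n^2$-bit vectors $x,y$ into an $O(n)$-vertex bipartite graph so that, for each block $i$, the auxiliary vertex $w_i$ lies on a $4$-cycle if and only if $x_i$ and $y_i$ intersect, and otherwise its shortest cycle has length at least $6$. The gap $4$ versus $6$ is what yields the $3/2$ threshold, and the $\Omega(n^2)$ bound is inherited from the known query complexity of disjointness.

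Your argument instead reduces from the OR function on $t^2$ bits via a direct adversary: a single unqueried hidden pair $a_{i^\star}b_{j^\star}$ swings $SC(v^\star)$ between $4$ and $\infty$. This is simpler and actually proves more than the stated theorem, since a gap of $4$ versus $\infty$ defeats \emph{any} finite approximation ratio, not just ratios below $3/2$. You also explicitly pad with the clique $D$ to guarantee $m=\Theta(n^2)$, which the paper's write-up does not address. One small cosmetic point: your sentence ``$G_X$ is bipartite'' is literally false because $D$ contains triangles; what you need (and what you clearly intend, given the later remark that $D$ is vertex-disjoint from the rest) is that the connected component of $v^\star$ is bipartite. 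Phrase it that way and the argument goes through cleanly.
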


\begin{proof}
We consider the Disjointness problem on $n^2$-length vectors: given two vectors $x,y\in \{0,1\}^{n^2}$, answer whether $\sum_{j=1}^{n^2} x[j]\cdot y[j] \ge 1$. It is well-known that the query complexity of this problem is $\Omega(n^2)$.

We can solve this problem using a better than $3/2$ approximation algorithm for ANSC as follows. First, write $x\in \{0,1\}^{n^2}$ as the concatenation of $n$ vectors $x_1,x_2,\dots,x_n \in \{0,1\}^n$, and similarly write $y\in \{0,1\}^{n^2}$ as the concatenation of $n$ vectors $y_1,y_2,\dots,y_n \in \{0,1\}^n$. Then, $\sum_{j=1}^{n^2} x[j]\cdot y[j] \ge 1$ if and only if there exists $i$ such that $\sum_{j=1}^{n} x_i[j]\cdot y_i[j] \ge 1$.

We create $n$ nodes $a_1,a_2,\dots,a_n$  that are used to encode the $n$ coordinates. For every pair $x_i,y_i\in \{0,1\}^n$ of vectors, create three nodes $u_i,v_i,w_i$, and connect edges $(u_i,w_i),(v_i,w_i)$, and for every $1\le j\le n$, connect  $(u_i,a_j)$ if and only if $x_i[j]=1$, and connect $(v_i,a_j)$ if and only if $y_i[j]=1$. Then, observe that $\sum_{j=1}^n x_i[j]\cdot y_i[j] \ge 1$ if and only if the node $w_i$ is in a 4-cycle $w_i\to v_i\to a_j\to u_i \to w_i$, where $x_i[j]=y_i[j]=1$. Since it is a bipartite graph, the shortest cycle passing through $w_i$ is either 4 or at least $6$. Hence, we can solve the Disjointness problem by solving ANSC (with better than $3/2$ approximation) on this $O(n)$-vertex graph. This implies that we need $\Omega(n^2)$ query complexity to solve ANSC with better than $3/2$ approximation, when the input graph is represented as an adjacency matrix.
\end{proof}

\section{Approximation Algorithms for $n$-PSP}

\begin{theorem}
\label{thm:pd-k-tz}
Given an $n$-node $m$-edge undirected weighted graph $G$ and vertex pairs $(s_i,t_i)$ for $1\leq i\leq O(n)$, there is a randomized algorithm for $n$-PSP that computes an estimate $\hat{d}(s_i,t_i)$ such that $d(s_i,t_i)\leq \hat{d}(s_i,t_i) \leq (2k-3)d(s_i,t_i)+2\lceil{d(s_i,t_i)/2}\rceil$ in $\tilde{O}(mn^{1/k})$ time with high probability.
\end{theorem}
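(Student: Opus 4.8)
The plan is to adapt the Thorup--Zwick distance-oracle machinery to the $O(n)$-pair setting. First I would build the standard sampling hierarchy $V=A_0\supseteq A_1\supseteq\cdots\supseteq A_{k-1}\supseteq A_k=\emptyset$, where each $A_i$ keeps every vertex of $A_{i-1}$ independently with probability $n^{-1/k}$, so that $|A_i|=\tilde{O}(n^{1-i/k})$ with high probability. For every vertex $v$ and level $i$ let $p_i(v)$ be a closest vertex of $A_i$ to $v$, let $\delta_i(v)=d(v,p_i(v))$ (with $\delta_k(v)=\infty$), and let $B(v)=\bigcup_i\{w\in A_i\setminus A_{i+1}:d(v,w)<\delta_{i+1}(v)\}$ be the bunch of $v$, stored together with the distances $d(v,w)$ for $w\in B(v)$. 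The Thorup--Zwick ``cluster Dijkstra'' procedure computes all the $p_i(\cdot)$, all $\delta_i(\cdot)$, and all bunches with their stored distances in total time $\tilde{O}(k\,m\,n^{1/k})$, with $|B(v)|=\tilde{O}(k\,n^{1/k})$ for every $v$ with high probability. In addition I would run an exact Dijkstra from every top-level landmark $z\in A_{k-1}$; since $|A_{k-1}|=\tilde{O}(n^{1/k})$ this costs another $\tilde{O}(m\,n^{1/k})$ and records $d(v,z)$ for all $v$ and all $z\in A_{k-1}$.

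For a query pair $(s,t)$ I would return the minimum of a few candidate estimates: the value produced by the standard Thorup--Zwick walk on $(s,t)$ (which in $\tilde{O}(|B(s)|+|B(t)|)$ time yields a vertex $w$ together with the exact values $d(s,w),d(w,t)$), the value $\min_{z\in A_{k-1}}\{d(s,z)+d(z,t)\}$, and, when the intersection is nonempty, $\min\{d(s,w)+d(w,t):w\in B(s)\cap B(t)\}$. Every candidate is the length of a genuine $s$--$t$ walk, so the output $\hat d(s,t)$ satisfies $\hat d(s,t)\ge d(s,t)$; each candidate is computable in $\tilde{O}(n^{1/k})$ time, so the total time over the $O(n)$ pairs is $\tilde{O}(n\cdot n^{1/k})\le\tilde{O}(m\,n^{1/k})$, within the claimed bound. (Here I am using $m=\Omega(n)$, which holds after deleting isolated vertices.)

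The nontrivial point is the stretch bound, and this is where I expect the real work to be. The textbook analysis of the Thorup--Zwick walk only gives $\hat d\le(2k-1)\,d$, whereas we want $\hat d\le(2k-3)\,d+2\ceil{d/2}$, which for $d\ge 2$ is strictly smaller. To recover the missing factor I would use a ``halving around the midpoint'' idea — the same one that, for $k=2$, upgrades ``a random sample of size $\sqrt n$ hits every vertex's $\lceil d/2\rceil$-ball'' from a $3$-approximation to a $2$-approximation. Concretely, for $\Delta=d(s,t)$ let $m^\star$ be a vertex on a fixed shortest $s$--$t$ path with $d(s,m^\star)=\floor{\Delta/2}$ and $d(t,m^\star)=\ceil{\Delta/2}$; the algorithm never sees $m^\star$, but it enters the analysis through $\delta_j(m^\star)\le\min\{\delta_j(s)+\floor{\Delta/2},\ \delta_j(t)+\ceil{\Delta/2}\}$. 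The goal is to show that (in effect) the returned estimate is of the form ``route $s\to z\to t$'' through a landmark $z$ close to $m^\star$, so that the two halves $s$--$m^\star$ and $m^\star$--$z$ (resp.\ $z$--$m^\star$ and $m^\star$--$t$) contribute $\floor{\Delta/2}+\delta(z)$ and $\delta(z)+\ceil{\Delta/2}$ rather than $\Delta+\delta(z)$ twice — trading one full $\Delta$ for $2\ceil{\Delta/2}$ — and that $\delta(z)$, the accumulated ``radius'' over the at most $k-1$ hops of the walk, is at most $(i-1)\Delta$ when the walk terminates at level $i\ge1$. Putting these together should yield an estimate of at most $(2i-1)\Delta+2\ceil{\Delta/2}\le(2k-3)\Delta+2\ceil{\Delta/2}$.

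I expect the main obstacle to be making this accounting precise: identifying exactly which walk hop realizes the ``halving'', controlling $\delta_{k-1}(m^\star)$ so that a top-level landmark near $m^\star$ (caught by the $A_{k-1}$-Dijkstra candidate) is actually usable in the worst case where the walk descends all the way to level $k-1$, verifying that the $-2$ saving is genuine rather than being absorbed by the ceilings, and disposing of the degenerate regimes — tiny $\Delta$, or some $p_i(\cdot)$ already lying in a bunch at level $0$ — for which the $B(s)\cap B(t)$ and exact-within-$B(s)$ candidates serve as a safety net at no extra asymptotic cost. I would also double-check that the high-probability guarantees $|A_i|=\tilde{O}(n^{1-i/k})$ and $|B(v)|=\tilde{O}(k\,n^{1/k})$ are strong enough to survive a union bound over all $O(n)$ query vertices.
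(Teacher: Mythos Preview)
Your algorithm is essentially the paper's (the extra Dijkstra from $A_{k-1}$ is redundant, since $A_{k-1}\subseteq B(v)$ for every $v$ in the Thorup--Zwick construction, so that candidate is already subsumed by the bunch-intersection candidate). The divergence is in the analysis, and there your ``route through a landmark near $m^\star$'' plan does not reach the stated bound. From $\delta_j(m^\star)\le\delta_j(u)+\lceil\Delta/2\rceil$ and the standard TZ invariant $\delta_{k-1}(u)\le(k-1)\Delta$ you only get $\delta_{k-1}(m^\star)\le(k-1)\Delta+\lceil\Delta/2\rceil$, hence $d(s,z)+d(z,t)\le\Delta+2\delta_{k-1}(m^\star)\le(2k-1)\Delta+2\lceil\Delta/2\rceil$, which is off by exactly the $2\Delta$ you are trying to save.

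The paper's argument is simpler and avoids this detour: it does not look for a landmark near $m^\star$, but instead uses the emptiness of $B(s)\cap B(t)$ \emph{only at level $0$}. The level-$0$ part of $B(s)$ is the open ball of radius $\delta_1(s)$; applied to the midpoint $m^\star$, empty intersection forces $\min\{\delta_1(s),\delta_1(t)\}\le\lceil\Delta/2\rceil$. This improved base case is then fed back into the TZ walk itself: starting at level $1$ from the side with small $\delta_1$, the usual inductive swap gives $\delta_i(u)\le(i-1)\Delta+\lceil\Delta/2\rceil$ instead of $i\Delta$, and the return value at level $i$ is at most $2\delta_i(u)+\Delta\le(2i-1)\Delta+2\lceil\Delta/2\rceil$, which at $i=k-1$ is the claimed $(2k-3)\Delta+2\lceil\Delta/2\rceil$. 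So the ``halving'' happens once, at level $1$, and then propagates through the walk --- not at the top level via a midpoint landmark.
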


\begin{proof}[Proof sketch]
The construction and proof are very similar to Thorup-Zwick distance oracles \cite{thorup2005approximate}. The only difference is a modification of the base case. We only outline the differences here. 

We perform the entire preprocessing of \cite{thorup2005approximate}. Then for each input pair $(s_i,t_i)$, we perform the query algorithm of \cite{thorup2005approximate} with the following change to the base case. The base case in \cite{thorup2005approximate} is ``if $t_i\in B(s_i)$, then return $d(s_i,t_i)$'' (where $B(s_i)$ is the ``bunch'' of $s_i$ defined in \cite{thorup2005approximate}). Because there are only $O(n)$ pairs  $(s_i,t_i)$, we can afford to compute whether there is a vertex $v\in B(s_i)\cap B(t_i)$. If so, we return $\min_{v\in B(s_i)\cap B(t_i)} d(s_i,v)+d(v,t_i)$. If there is such a $v$, then there is such a $v$ on a shortest path between $s_i$ and $t_i$, so we return exactly $d(s_i,t_i)$. If $B(s_i)\cap B(t_i)=\emptyset$, then either $d(s_i,p_1(s_i))\leq \lceil{d(s_i,t_i)/2}\rceil$ or $d(t_i,p_1(t_i))\leq \lceil{d(s_i,t_i)/2}\rceil$ (where $p_1(s_i)$ and $p_1(t_i)$ are defined in \cite{thorup2005approximate}). Suppose the latter without loss of generality. Then, by the triangle inequality, $d(s_i,p_1(t_i))\leq d(s_i,t_i)+\lceil{d(s_i,t_i)/2}\rceil$. So, $d(s_i,p_1(t_i))+d(p_1(t_i),t_i)\leq d(s_i,t_i)+2\lceil{d(s_i,t_i)/2}\rceil$. 

Completing the analysis of the query algorithm as in \cite{thorup2005approximate}, we get a final estimate of at most $(2k-3)d(s_i,t_i)+2\lceil{d(s_i,t_i)/2}\rceil$.
\end{proof}

\begin{theorem}
\label{thm:pd-ksquared-2ktime}
Given an $n$-node $m$-edge undirected weighted graph $G$ and vertex pairs $(s_i,t_i)$ for $1\leq i\leq O(n)$, there is a randomized algorithm that computes a $((2k-1)\cdot (2k-2))$-approximation for $n$-PSP in $\tilde{O}(n^{1+2/k}+m)$ time with high probability.
\end{theorem}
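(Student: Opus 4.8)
The plan is to combine the distance-preserving sparsification of Baswana and Sen \cite{2k-1spanner} with the modified Thorup--Zwick query procedure of \cref{thm:pd-k-tz}. First I would compute a $(2k-1)$-spanner $H$ of $G$ using the Baswana--Sen construction. Treating $k$ as a constant, this runs in $O(km)=\tilde O(m)$ time and produces a subgraph $H$ on the same vertex set with only $O(kn^{1+1/k}) = \tilde O(n^{1+1/k})$ edges, satisfying $d_G(u,v)\le d_H(u,v)\le (2k-1)\,d_G(u,v)$ for every pair $u,v$.

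Next I would run the algorithm of \cref{thm:pd-k-tz}, with the same parameter $k$, on the sparsified graph $H$ with the given $O(n)$ query pairs $(s_i,t_i)$. Since $H$ has $\tilde O(n^{1+1/k})$ edges, \cref{thm:pd-k-tz} runs in $\tilde O(n^{1+1/k}\cdot n^{1/k}) = \tilde O(n^{1+2/k})$ time and, with high probability, returns estimates $\hat d(s_i,t_i)$ with
\[ d_H(s_i,t_i)\;\le\;\hat d(s_i,t_i)\;\le\;(2k-3)\,d_H(s_i,t_i)+2\big\lceil d_H(s_i,t_i)/2\big\rceil. \]
The algorithm outputs $\hat d(s_i,t_i)$ as its estimate for $d_G(s_i,t_i)$, and the overall running time is $\tilde O(m)+\tilde O(n^{1+2/k})=\tilde O(m+n^{1+2/k})$, as required.

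For correctness, the lower bound is immediate from $\hat d(s_i,t_i)\ge d_H(s_i,t_i)\ge d_G(s_i,t_i)$. For the upper bound, use $2\lceil x/2\rceil\le x+1$ to get $\hat d(s_i,t_i)\le (2k-3)\,d_H(s_i,t_i)+d_H(s_i,t_i)+1=(2k-2)\,d_H(s_i,t_i)+1$, and then substitute $d_H(s_i,t_i)\le (2k-1)\,d_G(s_i,t_i)$ to obtain $\hat d(s_i,t_i)\le (2k-1)(2k-2)\,d_G(s_i,t_i)+1$, i.e.\ the claimed $((2k-1)(2k-2))$-approximation (the case $d_G(s_i,t_i)=0$ is exact, since then $d_H(s_i,t_i)=0$ as well, and the residual $+1$ is absorbed into the multiplicative factor when $d_G(s_i,t_i)\ge 1$).

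This is essentially a black-box composition of two primitives already established in the paper, so I do not anticipate a genuine obstacle. The only point needing care is tracking the additive term $2\lceil\cdot/2\rceil$ of \cref{thm:pd-k-tz} through the multiplicative stretch of the spanner: one must observe that the spanner $H$ contributes \emph{no} additive error of its own (it is a genuine subgraph carrying the original edge weights), so the additive slack after composition stays $O(1)$ rather than growing with $k$ or with the distance being estimated; and one should double-check that the running-time parameters fed into \cref{thm:pd-k-tz} are those of $H$ (so the $n^{1/k}$ factor multiplies $\tilde O(n^{1+1/k})$, not $m$), which is exactly what yields the additive-in-$m$ running time $\tilde O(m+n^{1+2/k})$.
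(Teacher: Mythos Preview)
Your proposal is essentially identical to the paper's proof: build a Baswana--Sen $(2k-1)$-spanner in $\tilde O(m)$ time with $\tilde O(n^{1+1/k})$ edges, then run the $(2k-2)$-approximate algorithm of \cref{thm:pd-k-tz} on it in $\tilde O(n^{1+2/k})$ time, and compose the stretch factors. The paper states this slightly more tersely, treating \cref{thm:pd-k-tz} as a $(2k-2)$-approximation without tracking the additive $+1$ from the ceiling, whereas you spell that step out; both accounts are equally loose about the residual $+1$ (it does not literally get absorbed into the multiplicative $(2k-1)(2k-2)$ for $d_G=1$, but the paper makes the same simplification).
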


\begin{proof}
By \cite{2k-1spanner,DBLP:conf/icalp/RodittyTZ05}, we can build a $(2k-1)$-spanner $P$ of the input graph in $O(km)$ time with only $m' = O(k n^{1+1/k})$ edges. Then, we run the $(2k-2)$-approximate $n$-PSP algorithm from \cref{thm:pd-k-tz} on $P$, in time $\tilde O(m'n^{1/k}) \le \tilde O(n^{1+2/k})$. For every vertex pair $(u,v)$, the estimate for $d_G(u,v)$ is at most $(2k-2)\cdot d_P(u,v) \le (2k-2)\cdot (2k-1)\cdot d_G(u,v)$. The overall running time is $\tilde O(m+n^{1+2/k})$.
\end{proof}

Theorem \ref{thm:pd-ksquared-2ktime} is comparable to the result of Wulff-Nilsen \cite{apsp1}. Their result can be written as a $(4k^2-1)$-approximation DO with preprocessing time $O(km+n^{1+\frac{c}{k\sqrt{2}}})$ for some constant $c$ and query time $O(k^2)$, which gives a $(4k^2-1)$ $n$-PSP algorithm in time $O(km+n^{1+\frac{c}{k\sqrt{2}}})$. The value of $c$ is rather large (they set it to be $9+3\sqrt{13}$) and so our algorithm has a faster running time especially for smaller values of $k$. They also give a DO with faster preprocessing time for small $k$. For example, for $k\ge 3$, their $(2k-1)$-approximation DO has preprocessing time $O(km+kn^{3/2+2/k})$ (the preprocessing time is slightly different if $k$ is not $0$ mod $3$). This translates to a $(4k^2-1)$-approximation algorithm for $n$-PSP in time $O(m+n^{3/2+1/k^2})$, which is still slower than the bound of Theorem \ref{thm:pd-ksquared-2ktime} and has a slightly worse approximation factor. 

\section{Approximation Algorithms for Directed ANSC}

We are going to generalize the results on the girth to ANSC problem. 
More specifically, we tweak the result of \cite{girth-icalp} that gives a $(2k+\epsilon)$-approximation for the girth to a $(2k+1+\epsilon)$-approximation for ANSC. The formal statement of the result is stated below. For $k=2$, we improve the approximation factor and give a $(2+\epsilon)$-approximation, the same approximation factor as the girth.

\begin{theorem}\label{thm:2k+1ansc-approx}
Given an $n$-node $m$-edge directed graph $G$ with edge weights in $\{1,\ldots,M\}$, a constant $\epsilon>0$, and an integer $k\ge 1$ , there is a randomized $(2k+1+\epsilon)$-approximation algorithm for ANSC
 that runs in $\tilde{O}(mn^{\alpha_k}\log{(M)})$ time with high probability, where $\alpha_k>0$ is the solution to $\alpha_k(1+\alpha_k)^{k-1}=1-\alpha_k$.
\end{theorem}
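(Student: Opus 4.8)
The plan is to take the $(2k+\eps)$-approximation algorithm for the girth of a directed graph from \cite{girth-icalp} and adapt it so that, instead of outputting one near-minimal cycle, it outputs for every vertex $v$ a cycle through $v$ of length at most $(2k+1+\eps)\cdot SC(v)$. Recall the overall shape of that girth algorithm: one guesses the girth value $g$ over the geometric grid $\{(1+\eps)^i : 0\le i\le O(\eps^{-1}\log(nM))\}$, and for each guess runs a $k$-level recursive procedure which repeatedly samples vertex subsets of geometrically decreasing density and, from each sampled vertex, runs (truncated) forward and backward BFS / Dijkstra, with truncation radii that grow from level to level. For the correct guess of $g$ the recursion is guaranteed to reach a vertex of the minimal cycle, by both a forward and a backward search of some processed source, at a scale large enough to certify a cycle of length at most $(2k+\eps)g$. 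The sample densities, radii, and recursion depth are calibrated so that the total work per guess is $\tilde O(mn^{\alpha_k})$, where $\alpha_k$ is the positive root of $\alpha_k(1+\alpha_k)^{k-1}=1-\alpha_k$.

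I would make three changes. First, let $g$ range over the same geometric grid but now as a guess for $SC(v)$: a single run with parameter $g$ will simultaneously be responsible for all vertices $v$ with $SC(v)\in(g/(1+\eps),\,g]$. Second, whenever the algorithm runs a forward search and a backward search from a source $s$ (which it already does), record for every $u\ne s$ reached by both searches the quantity $d(s,u)+d(u,s)$ and set $\widehat{SC}(u)\leftarrow\min\{\widehat{SC}(u),\,d(s,u)+d(u,s)\}$; concatenating a shortest $s\to u$ walk with a shortest $u\to s$ walk gives a closed walk through $u$, so this estimate always satisfies $SC(u)\le\widehat{SC}(u)$, which yields the lower-bound side $x\le\hat x$ of the approximation for free. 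Third, because we must return an estimate for every vertex rather than halt at the first detected cycle, each truncated search is run to the end of its allotted radius/size budget; since those budgets are precisely what bounds the running time in \cite{girth-icalp}, this does not change the $\tilde O(mn^{\alpha_k})$ bound per guess, and summing over the $O(\eps^{-1}\log(nM))$ guesses gives the claimed $\tilde O(mn^{\alpha_k}\log M)$ time (the $\eps$-dependence being absorbed, as $\eps$ is constant).

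The substance of the proof is the approximation analysis, which I would obtain by replaying the recursion invariant of \cite{girth-icalp} with the cycle $C_v$ (a shortest cycle through the fixed vertex $v$) in place of the globally minimal cycle. For the guess $g$ with $SC(v)\le g<(1+\eps)SC(v)$, running the recursion with $v$ in the initial active set should yield --- by a suitable adaptation of the same level-by-level reasoning, which I expect to go through because it essentially only uses that $g$ approximates the length of the relevant cycle rather than that this cycle is globally shortest --- a processed source $s$ that reaches, forward and backward, all of $C_v$; hence some $s'\in C_v$ has $d(v,s')$ and $d(s',v)$ both recorded, so $\widehat{SC}(v)\le d(v,s')+d(s',v)$. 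Were $v$ itself a distinguished vertex of a globally minimal cycle this would already be $\le(2k+\eps)SC(v)$, exactly as in the girth bound; the extra $+1$ is the cost of ``steering back'' from the source the recursion hands us to the arbitrary query vertex $v$, which in the worst case traverses one further portion of $C_v$ and thus adds up to an extra $SC(v)$ to the certified cycle length. Carrying this out rigorously --- re-deriving the radius and density schedule so that, with the extra $C_v$-traversal included, the final estimate is $\le(2k+1+\eps)SC(v)$ while the density schedule is still the one producing $\alpha_k$, and checking that the ball-size bounds driving the recursion survive the loss of global cycle-minimality --- is the main obstacle and where essentially all the work lies. (The sharper $(2+\eps)$-approximation available in the small-approximation regime is obtained by a separate, tighter argument; see Theorem~\ref{thm:2-ansc-approx}.)
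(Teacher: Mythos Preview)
Your plan is essentially the same as the paper's: run the full \cite{girth-icalp} girth machinery for every scale $g=(1+\eps)^i$, never halt early, and harvest per-vertex cycle estimates from all the Dijkstras rather than a single girth certificate. Your explanation of where the extra ``$+1$'' comes from --- the detour from the source the algorithm happens to process back to the particular vertex $v$, costing at most one extra $SC(v)$ --- is exactly the paper's calculation $d(q,v)+d(v,q)\le d(q,u)+d(u,v)+d(v,u)+d(u,q)\le SC(v)+2\beta(1+\eps)^{i+1}$.

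There is one place where your sketch is genuinely thinner than the paper and would need to be filled in. You write that the recursion ``should yield \dots\ a processed source $s$ that reaches, forward and backward, all of $C_v$.'' But the searches inside the recursion are truncated, and nothing in the girth analysis promises that a truncated search reaching \emph{some} vertex of $C_v$ also reaches $v$ itself; so your update rule ``record $d(s,u)+d(u,s)$ for every $u$ reached by both searches'' may never touch $v$ in the recursive part. The paper sidesteps this by keeping the on/off labeling of \cite{girth-icalp} and invoking its key structural property~(*): if $u$ is the \emph{first} vertex of $C_v$ to be switched off, and this happens inside a recursive call on a subgraph $H$, then all of $C_v$ lies in $H$, so the recursive call (which now detects cycles and updates \emph{all} vertices on each detected cycle, not just the source) produces the estimate for $v$. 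This first-off-vertex case analysis, together with the separate case $u\in V'_i$ (handled by the full Dijkstras from $Q$, where your update rule is fine), is what makes the argument go through without assuming global minimality of $C_v$. That is the concrete content you would need to supply in the place you flagged as ``the main obstacle.''
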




We give a few definitions that help us prove the above theorem. 
Define for each $j=0,\ldots,\log_{1+\epsilon}Mn$ the level $j$, $B^j(u)$, and the ball $j$ around $u$, $\bar{B}^j(u)$.
$$B^{j}(u):=\{x\in V~|~(1+\eps)^j\leq d(u,x)< (1+\eps)^{j+1}\} \textrm{ and } \bar{B}^{j}(u):=\{x\in V~|~ d(u,x)< (1+\eps)^{j+1}\}.$$
We use the following lemma from \cite{girth-icalp}. It roughly says that for each node $v$, without doing Dijkstra from $v$ to compute the balls of different radii, we can sample nodes in these balls such that they hold a particular property. This property says that the number of nodes in each level of the Dijkstra around $v$ that are close to these samples is upper bounded with high probability.

\begin{lemma}[\cite{girth-icalp}]
\label{lemma:modifiedDijkstra}
Given a graph $G$ let $M$ be the maximum edge weight of $G$ and suppose that $i\in \{1,\ldots,\log_{1+\epsilon}Mn\}$, $\beta>0$ and $0<\alpha<1$ are given. Suppose that $Q$ is a given sampled set of size $\tilde{O}(n^{\alpha})$ vertices. Let $d=\beta (1+\epsilon)^{i+1}$.  Let $V'_i=\{u\in V~|~\exists q\in Q:~d(u,q)\leq d \textrm{ and } d(q,u)\leq d\}$. In $\tilde{O}(mn^\alpha)$ time, for every $v\in V$ and every $j=\{1,\ldots, \log_{(1+\epsilon)}(Mn)\}$, one can output a sample set $R^j_{i}(v)$ of size $O(\log^2{n})$ from $\bar{Z}^j_i(v)=\bar{B}^j(v)\setminus V'_i$, where the number of vertices in ${Z}^j_i(v)=B^j(v)\setminus V'_i$ of distance at most $d$ to all vertices in $R^j_{i}(v)$ is at most $O(n^{1-\alpha})$ whp.
\end{lemma}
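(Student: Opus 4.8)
\medskip
\noindent\textbf{Proof plan.}
This lemma is imported from~\cite{girth-icalp}, so the task is to reconstruct its proof. The binding constraint is the time bound $\tilde O(mn^{\alpha})$ for producing all the sets $R^j_i(v)$ over every $v\in V$ and every scale $j\in\{1,\dots,\log_{1+\eps}(Mn)\}$: this forbids running a separate Dijkstra from each $v$. My plan is to get a global view of short distances from one extra random sample. Draw a set $R$ of $\tilde\Theta(n^\alpha)$ vertices uniformly at random (independently of $Q$) and run forward and backward Dijkstra from every $r\in R$ and every $q\in Q$; this costs $\tilde O(mn^\alpha)$ and records all the values $d(v,r)$, $d(r,v)$ and the membership predicate of $V'_i$. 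It is essential that $R$ be drawn fresh: the set $V'_i$, and hence every $Z^j_i(v)$ and $\bar Z^j_i(v)$, depends only on $Q$, so once $Q$ is fixed these $O(n\log_{1+\eps}(Mn))$ sets are frozen and a single union bound over all pairs $(v,j)$ is legitimate for the whp claims below; also, since every $q\in Q$ is trivially mutually $d$-close to itself, $Q\subseteq V'_i$, so $Q$ cannot hit $V\setminus V'_i$ and cannot itself play the role of the hitting set here.

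With the distances precomputed, I would define $R^j_i(v)$ to be the $O(\log^2 n)$ elements of $R\cap\bar Z^j_i(v)=\{r\in R:\ d(v,r)<(1+\eps)^{j+1}\}\setminus V'_i$ nearest to $v$ (or all of this set, if it is smaller); this is computable in $\tilde O(n^{1+\alpha})\le\tilde O(mn^\alpha)$ total time, lies in $\bar Z^j_i(v)$, and has size $O(\log^2 n)$. For the ``few survivors'' property, let $S=\{u\in Z^j_i(v):\ d(u,r)\le d\text{ for all }r\in R^j_i(v)\}$; writing $r_1,r_2,\dots$ for the pivots in increasing distance from $v$ and $S_t=\{u\in Z^j_i(v):\ d(u,r_\ell)\le d\text{ for }\ell\le t\}$, we have $Z^j_i(v)=S_0\supseteq S_1\supseteq\cdots\supseteq S$. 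The goal is to show that whenever $|S_t|>\tau:=c\,n^{1-\alpha}\log n$ the next pivot $r_{t+1}$ shrinks $S_t$ by a constant factor, so that $O(\log n)$ shrinkages — with an extra $O(\log n)$ factor to boost the success probability of each, which is what forces the $\log^2 n$ size bound — bring $|S|$ below $\tau=O(n^{1-\alpha})$; a final union bound over the $(v,j)$ pairs then finishes. The shrinkage itself should come from the geometry: $S_t$ sits in the radius-$(1+\eps)^{j+1}$ ball around $v$, we know $v\notin V'_i$ and $d=\beta(1+\eps)^{i+1}$, and combining this with the directed triangle inequality should preclude a $\tau$-sized subset of $Z^j_i(v)$ that is simultaneously $d$-close to the nearest unused $R$-pivot.

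I expect the main obstacle to be making that constant-factor shrinkage \emph{non-circular}. The survivor set $S_t$ depends on $R$ through the choice of pivots, so one cannot simply invoke ``$R$ hits every large fixed set'' against $S_t$; instead one needs a structural statement about the \emph{fixed} set $Z^j_i(v)$ — morally, that more than $\tau$ vertices lying in a radius-$(1+\eps)^{j+1}$ ball about a vertex $v\notin V'_i$ cannot all be $d$-clustered around a single vertex — and then thread it through the ordering of the pivots, peeling one constant fraction at a time. This is the delicate point, and it is where the exact numeric relation among $\beta$, $i$, and $j$ (the reason the lemma fixes $i$ and carries $\beta$ explicitly) is likely to enter.
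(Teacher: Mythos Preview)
Your overall architecture---a single global random sample $R$ of size $\tilde\Theta(n^{\alpha})$, Dijkstra from $R\cup Q$, then an iterative constant-factor shrinkage producing $O(\log^2 n)$ pivots---matches the paper's (and \cite{girth-icalp}'s) framework. But the crucial shrinkage step is misdiagnosed, and your concrete pivot rule does not implement it.

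\textbf{Where the shrinkage actually comes from.} You look for a geometric statement involving $v\notin V'_i$ and ``the exact numeric relation among $\beta$, $i$, and $j$''. None of that is used. The paper's argument (laid out in the discussion around Lemma~\ref{lemma:new-modifiedDijkstra}) is purely a hitting-set statement about $Q$: every $s\in Z^j_i(v)$ satisfies $s\notin V'_i$ \emph{by definition}, which means no $q\in Q$ has $d(s,q),d(q,s)\le d$. Since $|Q|=\tilde\Theta(n^{\alpha})$, this forces $|\{w:d(s,w),d(w,s)\le d\}|<0.2|Z^j_i(v)|$ whenever $|Z^j_i(v)|\ge 10n^{1-\alpha}$, for \emph{every} $s\in Z^j_i(v)$. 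This is exactly the hypothesis of Lemma~\ref{lemma:setreduce-0}, which then says: a uniform sample of $c\log n$ vertices from the current survivor set cuts it to a $0.8$-fraction. Iterating $O(\log n)$ times gives $O(n^{1-\alpha})$. No triangle inequality, no $\beta$-versus-$j$ comparison, and no assumption that $v\notin V'_i$ (which the lemma does not make) is needed.

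\textbf{Why ``nearest pivots'' fails.} Lemma~\ref{lemma:setreduce-0} requires the $c\log n$ pivots at each round to be uniform samples from the \emph{current} survivor set $S_t$. Your rule---take the $O(\log^2 n)$ elements of $R\cap\bar Z^j_i(v)$ nearest to $v$---yields neither uniform samples from $Z^j_i(v)$ (they are distance-biased, and they come from the ball $\bar Z^j_i(v)$ rather than the level $Z^j_i(v)$) nor, at later rounds, samples from the shrunk set $S_t$. You correctly flag the dependence problem but then choose pivots in a way that does not resolve it. The technical content that the paper defers to \cite{girth-icalp} is precisely \emph{how to simulate uniform sampling from each successive $S_t$} across all $(v,j)$ within the $\tilde O(mn^{\alpha})$ budget; this is not achieved by nearest neighbours and needs a different mechanism.
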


As a warm-up to the proofs of \cref{thm:2k+1ansc-approx} and  the improvement of it which is stated in \cref{thm:2-ansc-approx}, we first prove \cref{thm:2k+1ansc-approx} for the case of $k=1$.

\begin{theorem}
\label{thm:3approxansc}
For every $\epsilon>0$, there is a $(3+\epsilon)$-approximate algorithm for ANSC in directed graphs with edge weights in $\{1,\ldots,M\}$ that runs in $\tilde{O}(m\sqrt{n}\log{M}/\epsilon)$ time.
\end{theorem}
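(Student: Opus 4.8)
The plan is to follow the scale-based framework of \cite{girth-icalp}: we guess the cycle length up to a $(1+\epsilon)$ factor, so it suffices, for each $i=0,1,\dots,\lceil\log_{1+\epsilon}(Mn)\rceil$ and each vertex $v$, to either certify that $SC(v)>(1+\epsilon)^i$ or to exhibit a cycle through $v$ of length at most $(3+\epsilon)(1+\epsilon)^i$; taking the smallest certified value over all scales yields the $(3+\epsilon)$-approximation (running everything with $\epsilon/2$ in place of $\epsilon$ absorbs the loss from the geometric bucketing). Before doing this I would make the graph degree-uniform by replacing each vertex $v$ with an in-tree feeding a middle node feeding an out-tree, all internal edges of weight $0$ and all trees $O(m/n)$-ary; this adds $O(n)$ vertices and edges, creates or destroys no cycles and changes no cycle length (so $SC(v)$ equals $SC$ of the middle node of $v$), and makes every in/out-degree $O(m/n)$, which is what lets a truncated Dijkstra exploring $O(\sqrt n)$ vertices run in $\tilde O(m/\sqrt n)$ time.

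For a fixed scale $i$, write $\ell=(1+\epsilon)^{i+1}$ and sample a set $Q$ of $\tilde O(\sqrt n)$ vertices so that, with high probability, $Q$ meets every subset of $V$ of size at least $\sqrt n$. First, from every $q\in Q$ I would run a forward and a backward Dijkstra in $\tilde O(m\sqrt n)$ total time; for each $v$ this yields the closed walk $v\to q\to v$ of length $d(v,q)+d(q,v)$, which contains a simple cycle through $v$, hence a valid estimate $\hat c(v)\le d(v,q)+d(q,v)$ (and this equals $SC(v)$ exactly whenever $q$ lies on the shortest cycle through $v$). Let $V'_i=\{v:\exists q\in Q,\ d(v,q)\le\ell\text{ and }d(q,v)\le\ell\}$ be the ``sandwiched'' set, exactly as in \cref{lemma:modifiedDijkstra} (with $\beta=1$); every $v\in V'_i$ already receives $\hat c(v)\le 2\ell$. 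Then, for every original $v\notin V'_i$, I would run a truncated forward and a truncated backward Dijkstra from $v$, restricted to $V\setminus V'_i$, stopping once $\sqrt n$ vertices have been explored or radius $\ell$ is reached, and detect cycles through $v$ via $\min_{w:(w,v)\in E}\{d(v,w)+w(w,v)\}$ (and symmetrically for the backward run). The total cost is $\tilde O(m\sqrt n)$ per scale, hence $\tilde O(m\sqrt n\log M)$ overall; this is the same budget as the $2$-approximation warm-up \cref{thm:warmupp}.

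For correctness, fix $v$ and the scale $i$ with $(1+\epsilon)^i<SC(v)\le\ell$. If $v\in V'_i$ we are done since $2\ell<2(1+\epsilon)SC(v)$. If $v\notin V'_i$ but some vertex $y$ on the shortest cycle $C_v$ lies in $V'_i$, witnessed by $q\in Q$, then $d(v,q)+d(q,v)\le\big(\mathrm{dist}_{C_v}(v,y)+\ell\big)+\big(\ell+\mathrm{dist}_{C_v}(y,v)\big)=SC(v)+2\ell\le (3+\epsilon)SC(v)$; this is exactly where the factor $3$ (rather than the $2$ attainable for the girth) appears, since one may have to pay two detours of length $\le\ell$ to route a cycle through $v$ via a sampled vertex that merely sits near $C_v$. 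The remaining case is $v\notin V'_i$ and $C_v\cap V'_i=\emptyset$: then $C_v$ survives inside $V\setminus V'_i$, so if the forward truncated Dijkstra from $v$ reaches the in-neighbour $w^\ast$ of $v$ on $C_v$ it detects a closed walk through $v$ of length exactly $SC(v)$ (the arc of $C_v$ from $v$ to $w^\ast$ is a valid $V\setminus V'_i$-path of length $SC(v)-w(w^\ast,v)$, and no shorter cycle through $v$ exists), and symmetrically for the backward run; the purpose of the restriction to $V\setminus V'_i$ and of the choice of stopping radii is to guarantee that at least one of the two runs reaches its target while exploring only $\sqrt n$ vertices, which is precisely the guarantee that \cref{lemma:modifiedDijkstra} (used once, i.e.\ the unnested $k=1$ case, with $\alpha=1/2$) is engineered to provide.

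I expect this last case to be the main obstacle. The naive argument ``either $B^{\mathrm{out}}(v,SC(v))\setminus V'_i$ or $B^{\mathrm{in}}(v,SC(v))\setminus V'_i$ is small'' can fail, and when both are large one only obtains a forward sample $q$ and a backward sample $q'$ individually close to $v$ that need not coincide, so they do not by themselves close a cycle through $v$ (and the obvious fix of looking at all pairs $q,q'\in Q$ costs $\Omega(n^2)$). This is exactly what the per-vertex ball-sampling of \cref{lemma:modifiedDijkstra} resolves: without running Dijkstra from $v$ it produces a polylogarithmic sample inside the relevant ball around $v$ and bounds the number of not-yet-sandwiched vertices close to that sample, which caps the size of the truncated exploration and forces the target $w^\ast$ (respectively $u^\ast$) to be reached. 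Everything else — the geometric bucketing, the degree-uniformity gadget, the $\tilde O(m\sqrt n)$ accounting, and the first two cases — is routine; this warm-up then generalizes to \cref{thm:2k+1ansc-approx} by nesting $k$ levels of \cref{lemma:modifiedDijkstra} and to the $k=2$ improvement \cref{thm:2-ansc-approx} by a tighter bookkeeping in Case B.
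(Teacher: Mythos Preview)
Your overall framework matches the paper's: scale-based guessing, sampling $Q$ of size $\tilde O(\sqrt n)$, the case split on whether $C_v$ meets $V'_i$, and the appeal to \cref{lemma:modifiedDijkstra} for the hard case. The first two cases and the $3d$ bound are handled exactly as in the paper.

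The gap is in your description of the truncated Dijkstra from $v$. You write ``restricted to $V\setminus V'_i$, stopping once $\sqrt n$ vertices have been explored,'' and later say that this restriction ``is precisely the guarantee that \cref{lemma:modifiedDijkstra}\ldots is engineered to provide.'' That is not what the lemma gives you. As you yourself observe, $B^{\mathrm{out}}(v,\ell)\setminus V'_i$ can have size $\gg\sqrt n$, so a Dijkstra restricted merely to $V\setminus V'_i$ and capped at $\sqrt n$ vertices may stop before seeing any of $C_v$. The lemma does not bound $|B^j(v)\setminus V'_i|$; it produces, for each level $j$, a polylog-size sample $R_i^j(v)\subseteq \bar B^j(v)\setminus V'_i$ and guarantees that the set $S_i^j(v):=\{x\in B^j(v)\setminus V'_i:\ d(x,r)\le d\ \forall r\in R_i^j(v)\}$ has size at most $O(\sqrt n)$. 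The paper's modified Dijkstra uses this as an \emph{explicit membership test}: when a vertex $x$ at level $j$ is extracted, check whether $d(x,r)\le d$ for every $r\in R_i^j(v)$, and discard $x$ if not. The two crucial facts you need to state are (i) this filter keeps at most $\tilde O(\sqrt n)$ vertices across all levels, and (ii) every $u\in C_v\setminus V'_i$ passes the filter, because for any $w\in \bar B^j(v)$ one has $d(u,w)\le d(u,v)+d(v,w)\le d(u,v)+(1+\epsilon)d(v,u)\le (1+\epsilon)SC(v)\le d$. Hence the filtered exploration is simultaneously small and contains all of $C_v$, yielding $SC(v)$ exactly in this case. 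Your final paragraph gestures toward (i) but never states (ii) or the algorithmic change it necessitates; without that, the proposal does not close.

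Two minor points: the backward Dijkstra from $v$ is unnecessary once the $S_i^j(v)$ filter is in place (the paper uses only forward), and the degree-uniformization gadget is not used in the paper's proof of this theorem---the running time comes directly from the $\tilde O(\sqrt n)$ bound on filtered vertices, together with the $\tilde O(m\sqrt n)$ cost of \cref{lemma:modifiedDijkstra} itself.
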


\begin{proof}
We are going to describe an algorithm that for each $v$ gives the following estimate for $SC(v)$:  If $(1+\epsilon)^i\le SC(v)\le (1+\epsilon)^{i+1}$, then the algorithm gives an estimate of at most $3(1+\epsilon)^2$. This gives an overal approximation factor of $3(1+\epsilon)^3\le 3(1+O(\epsilon))$ if $\epsilon<1$.

First we sample a set $Q$ of size $\tilde{O}(\sqrt{n})$ from $V(G)$ uniformly at random and perform Dijkstra from all nodes in $Q$. These Dijkstras will give us a primitive estimate on the shortest cycle passing through each node. Let $\beta=(1+\epsilon)$ and fix some $i$. Let $d=\beta(1+\epsilon)^{i+1}$ and let $V'_i=\{u\in V~|~\exists q\in Q:~d(u,q)\leq d \textrm{ and } d(u,v)\leq d\}$. Lemma \ref{lemma:modifiedDijkstra} gives us the sets $R_i^j(v)\subseteq \bar{Z}^j_i(v)=\bar{B}^j(v)\setminus V'_i$ for each $v$, such that the number of nodes in ${Z}^j_i(v)=B^j(v)\setminus V'_i$ that are at distance at most $d$ to all nodes in $R_i^j(v)$ is at most $\sqrt{n}$. Let these nodes be $S_{i}^j(v)$. So 
$$
S_i^j(v)=\{u\in Z_i^j(v)| \forall r\in R_i^j(v):d(v,r)\le d\}
$$
and $|S_i^j(v)|\le \sqrt{n}$.

Consider $v$ with $SC(v)\le d/(1+\epsilon)$. We show that each node $u\in C_v$ is either in $V'_i$, or $u\in S_i^j(v)$ for some $j$.

Suppose that $u\notin V'_i(v)$. Let $j$ be the level of $u$, so $u\in Z_i^j(v)$. Note that for each $w\in \bar{Z}_i^j(v)$, we have that $d(v,w)\le (1+\epsilon)d(v,u)$. So $d(u,w)\le d(u,v)+d(v,w)\le d(u,v)+(1+\epsilon)d(v,u)\le (1+\epsilon)SC(v)\le d$. So this is true for all $w\in R_i^j(v)$ as well, and so $u\in S_i^j(v)$.

Note that if $u\in V'_i$, then the primitive estimate for $v$ is at most $3d$. This is because there is $q$ such that $d(u,q),d(q,u)\le d$. So $d(v,q)+d(q,v)\le SC(v)+d(u,q)+d(q,u)< 3d$, and hence the Dijkstra from $q$ give an estimate of at most $3d$ for $SC(v)$.
So for each $v$ with $SC(v)\le d/(1+\epsilon)$ and primitive estimate more than $3d$, we have that $u\in S_i^j(v)$ for some $j$ for all $u\in C_v$. So the rest of the algorithm is as follows: 

For each $i\in\{1,\ldots, \log_{1+\epsilon}Mn\}$, and for each node $v$ with premitive estimate more than $3d=3\beta(1+\epsilon)^{i+1}$, we do the following ``modified" Dijkstra. We begin by placing $v$ in a Fibonacci heap with $d[v]=0$ and all other vertices with $d[\cdot]=\infty$. When a vertex $x$ is extracted, let $j$ be its level in the BFS tree from $v$: determine $j$ for which $x\in B^j(v)$. Check if $x\in S_i^j(v)$: for every $r\in R_i^j(v)$, check if $d(x,r)\le d$. If $x\notin S_i^j(v)$, ignore it. Otherwise, go through all out-edges $(x,y)$ of $x$ and if $d[y]>d[x]+w(x,y)$, update $d[y]$. We stop when the vertex extracted from the heap has $d[x]>d$. Moreover, if the extracted node has an edge to $v$, we can update the estimate of $SC(v)$ to $d[x]+w(x,v)$ if this value is smaller than the current estimate. Since we proved that all nodes in $C_v$ are in $S_i^j(v)$ for appropriate $j$'s, we get the exact value of $SC(v)$ in this case. 

So if for some $v$ we have $\beta(1+\epsilon)^i\le SC(v)<\beta(1+\epsilon)^{1+i}$, then the algorithm gives an estimate of at most $3\beta(1+\epsilon)^{2+i}$.
\end{proof}

\begin{proofof}{Theorem \ref{thm:2k+1ansc-approx}}
We show that the girth algorithm of \cite{girth-icalp} with minor modifications can be used to estimate shortest cycles, and we lay out the modifications. First, we describe the high level idea behind the girth algorithm. The algorithm starts by sampling $Q$ of size $\tilde{O}(n^{\alpha_k})$, doing in and out Dijkstra from all $q\in Q$, and defining the sets $V'_i=\{v\in V~|~\exists q\in Q:~d(v,q)\leq \beta(1+\epsilon)^{i+1} \textrm{ and } d(q,v)\leq \beta(1+\epsilon)^{i+1}\}$, where $\beta=k+O(\epsilon)$. If $i_{min}$ is the minimum $i$ where $V'_i$ is non-empty, then the algorithm has an initial estimate of $2\beta(1+\epsilon)^{i_{min}+1}$ for the girth, and has to see if the girth is smaller than this amount. The algorithm continues in $i_{min}-1$ stages, where at stage $i$ for $i<i_{min}$, the algorithm looks for cycles of length at most $2\beta(1+\epsilon)^{i+1}$.

The algorithm looks for small cycles by looking at cycles passing through each node. Initially all nodes are labeled ``on", and the algorithm does a more efficient Dijkstra (modified Dijkstra) from an on node $v$, and at the end of the Dijkstra it sets a subset of nodes ``off", and continues until there is no on node. In particular it does the modified Dijkstra up to a radius $r$ containing only on nodes, recurses on this subgraph, and sets a subset of nodes off. The recursion base sets only one node off.
The nodes labeled off mean that we know if the shortest cycle passing through them is larger that $2\beta(1+\epsilon)^{i+1}$ or not. The algorithm has following property:

(*) If $u$ is a node in a cycle $C$ with length at most $\beta(1+\epsilon)^{i+1}$ that is set off in a recursion on subgraph $H$, and none of the nodes of $C$ have been set off before, then $C$ is contained in $H$. 


The modifications to this algorithm are as follows: We will have $\log_{1+\epsilon}n$ stages instead of $i_{min}-1$ stages, one for each $i=1,\ldots,\log_{1+\epsilon}Mn$. At each stage, we want to get a good estimate for nodes $v$ with $(1+\epsilon)^i\le SC(v)< (1+\epsilon)^{i+1}$. Since we already have a good estimate for such nodes that are in $V'_i$, we set the nodes in $V'_i$ off and all the other nodes on at the beginning of stage $i$. The algorithm of stage $i$ is the same as the girth algorithm, except that we won't stop if we found a cycle smaller than $\beta(1+\epsilon)^{i+1}$. We simply update the shortest cycle estimate of ``all" the nodes of that cycle. This won't change the running time of the algorithm.

We show that if $(1+\epsilon)^i\le SC(v)< (1+\epsilon)^{i+1}$, we get a $\le (2k+1+O(\epsilon))SC(v)$ estimate for $SC(v)$ at stage $i$. 
Suppose that the first node in $C_v$ that is set off at stage $i$ is $u$. First note that if $u=v$, then either $v\in V'_i$ so we have a $2\beta(1+\epsilon)^{i+1}\le 2\beta(1+\epsilon)SC(v)=(2k+O(\epsilon))SC(v)$ estimate for it, or $v$ was set off in a recursion step, which means that by recursion we have a $(2k+1+O(\epsilon))SC(v)$ estimate for it. So suppose that $v\neq u$. If $u\notin V'_i$, then $u$ was set off in a recursion step on some subgraph $H$. By property (*), $C_v$ is contained in $H$. So the recursion gives an estimate of $(2k+1+O(\epsilon))SC(v)$ for $v$. Note that it doesn't necessarily set $v$ off, but the estimate update on $v$ in this step is good enough. If $u\in V'_i$, then it means that there is $q$ such that $d(u,q),d(q,v)\le \beta(1+\epsilon)^{i+1}$. We show that the Dijkstra from $q$ gives a good estimate for $SC(v)$: $d(q,v)+d(v,q)\le d(q,u)+d(u,v)+d(v,u)+d(v,q)\le SC(v)+2\beta(1+\epsilon)^{i+1}\le (2k+1+O(\epsilon))SC(v) $. 

\end{proofof}

\subsection{$(2+\epsilon)$-approximation algorithm for directed ANSC}
The algorithm is similar to Theorem \ref{thm:3approxansc}, with a making a few lemmas used in this algorithm more accurate. In particular, in the proof of Lemma \ref{lemma:modifiedDijkstra} Lemma \ref{lemma:setreduce-0} stated below is used. We are going to change the statement of these lemmas without actually changing their proofs. The main idea in Theorem \ref{thm:3approxansc} is being able to sample from the balls of different radii around each node without actually performing a BFS from the nodes, such that the sample set has a property. Here we want to sample from the levels (not the balls) and have a more restricted property as a result. 

\begin{lemma}[\cite{girth-icalp}]
Let $G=(V,E)$ be a directed graph with $|V|=n$ and integer edge weights in $\{1,\ldots,M\}$.
Let $S\subseteq V$ with $|S|>c \log n$ (for $c\geq 100/\log(10/9)$) and let $d$ be a positive integer.
Let $R$ be a random sample of $c\log n$ nodes of $S$ and define
$S':=\{s\in S~|~d(s,r)\leq d,~\forall r\in R\}.$
Suppose that for every $s\in S$ there are at most $0.2 |S|$ nodes $v\in V$ so that $d(s,v),d(v,s)\leq d$.
Then $|S'|\leq 0.8 |S|$.
\label{lemma:setreduce-0}
\end{lemma}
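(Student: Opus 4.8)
The plan is to split the statement into a probabilistic part and a purely combinatorial counting part. For $s\in S$ write $B_s=\{v\in S:d(s,v)\le d\}$ for its out-ball inside $S$; by definition $s\in S'$ precisely when $R\subseteq B_s$. I would introduce the threshold set $T=\{s\in S:\ |B_s|>0.9\,|S|\}$ and prove two things: (i) with high probability $S'\subseteq T$, and (ii) deterministically $|T|\le 0.8\,|S|$. Together these give $|S'|\le 0.8\,|S|$ with high probability (which is how the lemma is used downstream).

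For (i), fix any $s\notin T$, so $|S\setminus B_s|\ge 0.1\,|S|$. Then each of the $c\log n$ independent uniform samples avoids $B_s$ with probability at least $1/10$, so $\Pr[s\in S']=\Pr[R\subseteq B_s]\le (9/10)^{c\log n}$. The choice $c\ge 100/\log(10/9)$ makes this $n^{-100}$ (using the same $\log$ base in $c$ and in the sample size, so the base cancels), so a union bound over the at most $n$ vertices outside $T$ gives $\Pr[S'\not\subseteq T]\le n^{-99}$. Sampling without replacement would only strengthen this. This is the only step where the precise constant $c$ is used.

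For (ii), I would argue by contradiction: assume $|T|>0.8\,|S|$. For each $s\in T$, $|B_s\cap T|\ge |B_s|-|S\setminus T|>0.9\,|S|-0.2\,|S|=0.7\,|S|$, so the number $N$ of ordered pairs $(s,v)\in T\times T$ with $d(s,v)\le d$ is $N=\sum_{s\in T}|B_s\cap T|>0.7\,|S|\,|T|$. The reversed relation ``$d(v,s)\le d$ on $T\times T$'' has the same count $N$ via the involution $(s,v)\mapsto(v,s)$, so inclusion--exclusion inside $T\times T$ gives at least $2N-|T|^2>1.4\,|S|\,|T|-|T|^2\ge 0.4\,|T|^2>0.256\,|S|^2$ ordered pairs with $d(s,v)\le d$ \emph{and} $d(v,s)\le d$ (using $|S|\ge|T|>0.8\,|S|$). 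Averaging over $s\in T$ then produces some $s$ with more than $0.256\,|S|^2/|T|\ge 0.256\,|S|>0.2\,|S|$ vertices $v$ satisfying both $d(s,v)\le d$ and $d(v,s)\le d$, contradicting the hypothesis; hence $|T|\le 0.8\,|S|$.

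The routine ingredients are the independence estimate and the union bound; the step I expect to require the most care is (ii) --- one has to check that the density $0.9$ (the densest out-ball that the sampling budget $c$ can allow to survive into $S'$) is small enough that the bidirectionally-$d$-close pairs it forces inside $T$ overflow the per-vertex cap of $0.2\,|S|$ coming from the hypothesis. The slack obtained above ($0.256\,|S|$ versus $0.2\,|S|$) shows this closes with exactly these constants, and also indicates why $c$ only needs to be large enough to push the threshold from $1$ down to $0.9$, not all the way down to the final ratio $0.8$.
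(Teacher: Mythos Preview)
Your argument is correct. The paper does not actually prove this lemma: it is quoted verbatim from \cite{girth-icalp}, and the paper explicitly says that the proof of the minor variant (Lemma~\ref{lemma:new-setreduce}) ``doesn't change'' from that reference and is therefore omitted. So there is no in-paper proof to compare against; your write-up is a self-contained verification of the cited result.

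A couple of remarks on the argument itself. Part (i) is exactly the standard sampling estimate one would expect, and your observation that the base of the logarithm cancels in $c\log n\cdot\log(10/9)$ correctly explains the constant $100/\log(10/9)$. Part (ii) is a clean double-counting: the bijection $(s,v)\mapsto(v,s)$ between $\{(s,v)\in T\times T:d(s,v)\le d\}$ and $\{(s,v)\in T\times T:d(v,s)\le d\}$ is the right way to transfer the out-ball density to the in-ball density without any separate hypothesis on in-balls, and the inclusion--exclusion bound $2N-|T|^2$ then forces a vertex with more than $0.2|S|$ bidirectionally-close neighbours. Your chain $1.4|S||T|-|T|^2\ge 0.4|T|^2>0.256|S|^2$ and the averaging step are both sound. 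Note also that your part (ii) only uses bidirectionally-close neighbours \emph{inside} $T\subseteq S$, so it already proves the strengthened variant (Lemma~\ref{lemma:new-setreduce}) where the hypothesis is restricted to $v\in S$; this is consistent with the paper's claim that the same proof goes through.
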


We change the last condition of Lemma \ref{lemma:setreduce-0} from supposing that for every $s\in S$, there are at most $0.2|S|$ nodes $v\in V$ so that $d(s,v),d(v,s)\le d$, to supposing that for every $s\in S$, there are at most $0.2|S|$ nodes $v\in S$ (instead of $v\in V$) so that $d(s,v),d(v,s)\le d$. So the resulting lemma is the following.

\begin{lemma}
Let $G=(V,E)$ be a directed graph with $|V|=n$ and integer edge weights in $\{1,\ldots,M\}$.
Let $S\subseteq V$ with $|S|>c \log n$ (for $c\geq 100/\log(10/9)$) and let $d$ be a positive integer.
Let $R$ be a random sample of $c\log n$ nodes of $S$ and define
$S':=\{s\in S~|~d(s,r)\leq d,~\forall r\in R\}.$
Suppose that for every $s\in S$ there are at most $0.2 |S|$ nodes $v\in S$ so that $d(s,v),d(v,s)\leq d$.
Then $|S'|\leq 0.8 |S|$.
\label{lemma:new-setreduce}
\end{lemma}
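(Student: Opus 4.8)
The plan is to show that Lemma~\ref{lemma:new-setreduce} follows from \emph{exactly the same} proof as Lemma~\ref{lemma:setreduce-0}: the only point at which that argument invokes the hypothesis is to bound the number of nodes that lie within distance $d$ (in both directions) of a pivot, and in that step the nodes being counted already belong to $S$, so the weaker hypothesis — counting only $v\in S$ rather than $v\in V$ — suffices. I will spell out the standard argument to make this explicit.

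First I would set up the bound over the random choice of $R$. Suppose $|S'|>0.8|S|$. Then for every $r\in R$ we have $S'\subseteq \{s\in S: d(s,r)\le d\}$, so $|\{s\in S: d(s,r)\le d\}|>0.8|S|$; call such an $r$ \emph{popular}, and let $P$ denote the number of popular nodes in $S$. Since $R$ is a uniform sample of $c\log n$ nodes of $S$, we get $\Pr[|S'|>0.8|S|]\le (P/|S|)^{c\log n}$ (sampling without replacement is only better, as $\prod_{i=0}^{k-1}\frac{P-i}{|S|-i}\le (P/|S|)^k$), so it suffices to prove $P\le 0.6|S|$, which makes this probability at most $n^{-\Omega(1)}$ for $c\ge 100/\log(10/9)$.

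To bound $P$, I would fix a popular $r$ and apply the hypothesis to $r$ in the role of ``$s$'': at most $0.2|S|$ nodes $v\in S$ satisfy both $d(r,v)\le d$ and $d(v,r)\le d$. Writing $A=\{s\in S: d(s,r)\le d\}$ (so $|A|>0.8|S|$) and $B=\{s\in S: d(r,s)\le d\}$, we have $|A\cap B|\le 0.2|S|$ and $|B\setminus A|\le |S\setminus A|<0.2|S|$, hence $|B|\le 0.4|S|$; that is, every popular node reaches at most $0.4|S|$ nodes of $S$ within distance $d$. A double count of the pairs $(s,r)$ with $r$ popular, $s\in S$, and $d(s,r)\le d$ — lower-bounded by $0.8|S|\cdot P$, and upper-bounded by $0.4|S|\cdot P+(|S|-P)\cdot P$ after splitting on whether $s$ is itself popular — then gives $0.8|S|P\le 1.4|S|P-P^2$, i.e.\ $P\le 0.6|S|$, as needed.

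The argument is essentially bookkeeping, and I do not anticipate a genuine obstacle. The one thing to be careful about is precisely the point the modified hypothesis is designed for: every node whose two-way $d$-closeness to a pivot gets counted must be verified to lie in $S$ and not merely in $V$. Here that is immediate, since such nodes are always either elements of $S'\subseteq S$ or the pivots themselves ($R\subseteq S$); and the identical check goes through verbatim when Lemma~\ref{lemma:new-setreduce} replaces Lemma~\ref{lemma:setreduce-0} inside the proof of Lemma~\ref{lemma:modifiedDijkstra} and in the refined version used for the $(2+\epsilon)$-approximation algorithm for directed ANSC.
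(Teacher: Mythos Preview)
Your proposal is correct and matches the paper's approach exactly: the paper does not spell out a proof but simply asserts that the argument from \cite{girth-icalp} for Lemma~\ref{lemma:setreduce-0} goes through verbatim once $v\in V$ is replaced by $v\in S$, and you have written out a valid instance of that argument while explicitly checking that the hypothesis is only ever applied to nodes already in $S$. There is nothing to add.
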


One can easily verify that by replacing $v\in V$ with $v\in S$, the proof of the lemma by \cite{girth-icalp} doesn't change and it is still true, and so we don't include the proof here. By restrticting the set of $v$'s in Lemma \ref{lemma:new-setreduce}, we show that in Lemma \ref{lemma:modifiedDijkstra} we can replace $V'_i$ by the following more restricted set: $V^j_i(v)=\{u\in B^j(v)| \exists q\in Q\cap {B}^j(v): d(u,q),d(q,u)\le d\}$. Note that $V_i^j(v)\subset V'_i$ for all $v$ and we can compute $V_i^j(v)$ for all $i,j,v$ when we do in and out Dijkstra from the nodes in $Q$.


Now we have to prove this new version of Lemma \ref{lemma:modifiedDijkstra}.

\begin{lemma}
\label{lemma:new-modifiedDijkstra}
Given a graph $G$ let $M$ be the maximum edge weight of $G$ and suppose that $i\in \{1,\ldots,\log_{1+\epsilon}Mn\}$, $\beta>0$ and $0<\alpha<1$ are given. Suppose that $Q$ is a given sampled set of size $\tilde{O}(n^{\alpha})$ vertices. Let $d=\beta (1+\epsilon)^{i+1}$.  Let $V^j_i(v)=\{u\in B^j(v)| \exists q\in Q\cap {B}^j(v): d(u,q),d(q,u)\le d\}$. In $\tilde{O}(mn^\alpha)$ time, for every $v\in V$ and every $j=\{1,\ldots, \log_{(1+\epsilon)}(Mn)\}$, one can output a sample set $R^j_{i}(v)$ of size $O(\log^2{n})$ from ${Z}^j_i(v)={B}^j(v)\setminus V_i^j(v)$, where the number of vertices in ${Z}^j_i(v)$ of distance at most $d$ to all vertices in $R^j_{i}(v)$ is at most $O(n^{1-\alpha})$ whp.
\end{lemma}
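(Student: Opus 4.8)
The plan is to follow the proof of Lemma~\ref{lemma:modifiedDijkstra} from \cite{girth-icalp} essentially verbatim, making exactly the two substitutions flagged above: replace the set $V'_i$ by the level-restricted set $V_i^j(v)$, and replace Lemma~\ref{lemma:setreduce-0} by Lemma~\ref{lemma:new-setreduce}. First I would run in-Dijkstra and out-Dijkstra from every $q\in Q$, in total time $\tilde O(mn^\alpha)$; this records $d(q,u)$ and $d(u,q)$ for all $q\in Q$ and $u\in V$, which in particular determines the sets $Q\cap B^j(v)$ for all $v,j$ and lets us test membership in $V_i^j(v)$ in $\tilde O(n^\alpha)$ time per query by scanning $Q\cap B^j(v)$. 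Then, for each $v$ and each level $j$, I would run the iterative shrinking procedure of \cite{girth-icalp}: set $S_0 = Z_i^j(v) = B^j(v)\setminus V_i^j(v)$; while $|S_t| > 5n^{1-\alpha}$, draw a uniform random sample $R_t$ of $c\log n$ nodes of $S_t$, append $R_t$ to $R_i^j(v)$, and put $S_{t+1} = \{s\in S_t : d(s,r)\le d\text{ for all }r\in R_t\}$. Since $|S_0|\le n$ and Lemma~\ref{lemma:new-setreduce} gives $|S_{t+1}|\le 0.8|S_t|$ with high probability once the precondition is checked, after $O(\log n)$ rounds we reach $|S_T|\le 5n^{1-\alpha}$; the output $R_i^j(v) = \bigcup_t R_t$ has size $O(\log^2 n)$, and by telescoping the definition of $S_{t+1}$ one sees that $S_T$ is exactly $\{u\in Z_i^j(v): d(u,r)\le d\text{ for all }r\in R_i^j(v)\}$, so this set has size $O(n^{1-\alpha})$, as required. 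The amortized implementation that makes the total cost over all $v,j$ equal to $\tilde O(mn^\alpha)$ is the same as in \cite{girth-icalp} and is unaffected by the substitutions.

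\textbf{The new ingredient.} The only genuinely new point to verify --- and the place where the restricted definition of $V_i^j(v)$ earns its keep --- is that each $S_t$ satisfies the precondition of Lemma~\ref{lemma:new-setreduce}, i.e.\ the density condition quantified over $v'\in S$. Fix $u\in Z_i^j(v)$ and let $A_u := \{w\in B^j(v): d(u,w)\le d\text{ and }d(w,u)\le d\}$. The key point is that $|A_u| < n^{1-\alpha}$ with high probability: $A_u\subseteq B^j(v)$, so if $|A_u|\ge n^{1-\alpha}$ then the random sample $Q$ of size $\tilde O(n^\alpha)$ hits $A_u$ (a single union bound over the $O(n^2\log(Mn))$ relevant triples $(u,v,j)$ suffices), and any $q\in Q\cap A_u$ lies in $Q\cap B^j(v)$ with $d(u,q),d(q,u)\le d$, which would place $u$ in $V_i^j(v)$ --- contradicting $u\in Z_i^j(v)$. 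Since every $S_t$ is a subset of $Z_i^j(v)\subseteq B^j(v)$, for every $s\in S_t$ we have $\{v'\in S_t: d(s,v')\le d,\ d(v',s)\le d\}\subseteq A_s$, hence of size $<n^{1-\alpha}$; whenever $|S_t|\ge 5n^{1-\alpha}$ this is at most $0.2|S_t|$, which is precisely the precondition of Lemma~\ref{lemma:new-setreduce}. Note that the same argument run with the unrestricted set $\{v'\in V:\dots\}$ would fail (membership in $V_i^j(v)$ only absorbs $u$'s whose nearby mass lies inside $B^j(v)$), which is exactly why Lemma~\ref{lemma:setreduce-0} had to be weakened to Lemma~\ref{lemma:new-setreduce}.

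\textbf{Wrap-up and expected difficulty.} To finish, I would collect failure probabilities: $Q$ being a hitting set for all the $A_u$, and each of the $O(\log n)$ invocations of Lemma~\ref{lemma:new-setreduce} per pair $(v,j)$; all are $1-n^{-\Omega(1)}$ events over polynomially many trials, so a union bound gives the w.h.p.\ guarantee. I expect the only real obstacle to be presentational rather than mathematical: confirming that plugging the membership test ``$u\notin V_i^j(v)$'' into the amortized data-structural scheme of \cite{girth-icalp} does not change its $\tilde O(mn^\alpha)$ running time --- which holds because that test costs only $\tilde O(n^\alpha)$ per query given the Dijkstras already performed from $Q$ --- and that the (purely local, per-$(v,j)$) shrinking iteration is driven only by quantities the scheme already maintains.
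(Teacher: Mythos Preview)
Your proposal is correct and follows essentially the same approach as the paper. The paper verifies the precondition of Lemma~\ref{lemma:new-setreduce} by the same hitting-set argument (if some $s\in Z_i^j(v)$ had too many close neighbors inside $B^j(v)$, then $Q$ would hit that set and witness $s\in V_i^j(v)$), and then defers the iterative shrinking and amortized sampling entirely to the proof of Lemma~\ref{lemma:modifiedDijkstra} in \cite{girth-icalp}; your write-up spells out the iteration a bit more explicitly and phrases the bound as an absolute $|A_u|<n^{1-\alpha}$ (which cleanly handles all rounds $S_t$ at once), but the content is the same.
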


The proof of Lemma \ref{lemma:new-modifiedDijkstra} lies in the following fact: Suppose that we are able to pick a random sample $R_{i}^j(v)$ of $c\log n$ vertices from ${Z}_i^j(v)$. Then we can define $B_{i}^j(v)=\{z\in {Z}_i^j(v)~|~d(z,y)\leq d ,~\forall y\in R_i^j(v)\}$. We want to bound the size of $B_i^j(v)$.

Consider any $s\in {Z}_i^j(v)$ with at least $0.2 |{Z}_i^j(v)|$ nodes $w\in {Z}_i^j(v)$ such that $d(s,w),d(w,s)\leq d$. As ${Z}_i^j(v)\geq 10n^{1-\alpha}$ (as otherwise we would be done and the sampled vertices would work), $0.2 |{Z}_i^j(v)|\geq 2n^{1-\alpha}$, and so with high probability, for $s$ with the property above, $Q\cap {B}^j(v)$ contains some $q$ with $d(s,q),d(q,s)\leq d$, and so $s\in V^j_i(v)$. Thus with high probability, for every $s\in {Z}^j_i$, there are at most $0.2 |{Z}_i^j(v)|$
 nodes $v\in {Z}_i^j(v)$ so that $d(s,v),d(v,s)\leq d$. This means that using Lemma \ref{lemma:new-setreduce} we can conclude that $|B_i^j(v)|\le 0.8|{Z}_i^j(v)|$. If we could further sample from $B_i^j(v)$, we could shrink this set more. The rest of the proof of Lemma \ref{lemma:modifiedDijkstra} is about this sampling procedure and can be used without modification for Lemma \ref{lemma:new-modifiedDijkstra}.

Now we can prove a $(2+\epsilon)$-approximation for directed ANSC.

\begin{theorem}\label{thm:2-ansc-approx}
Given an $n$-node $m$-edge directed graph $G$ with edge weights in $\{1,\ldots,M\}$, and a constant $\epsilon>0$, there is a randomized $(2+\epsilon)$-approximate algorithm for ANSC that runs in $\tilde{O}(m\sqrt{n}\log{M})$ time with high probability.
\end{theorem}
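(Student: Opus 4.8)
The plan is to run essentially the algorithm of \cref{thm:3approxansc}, but with the set $V'_i$ replaced everywhere by the more restrictive set $V^j_i(v)=\{u\in B^j(v)\mid \exists q\in Q\cap B^j(v):\ d(u,q),d(q,u)\le d\}$ and with \cref{lemma:new-modifiedDijkstra} used in place of \cref{lemma:modifiedDijkstra}, and then to redo the accuracy analysis to show that the ``primitive'' estimate improves from $\approx 3\cdot SC(v)$ to $\approx 2\cdot SC(v)$. Concretely: sample $Q\subseteq V$ of size $\tilde O(\sqrt n)$ and run in- and out-Dijkstra from every $q\in Q$; during these Dijkstras record, for every $v,i,j$, the set $V^j_i(v)$ with $d=\beta(1+\eps)^{i+1}$ and $\beta=1+\eps$, and take $\min_{q\in Q}\{d(q,v)+d(v,q)\}$ as the primitive estimate for each $v$. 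Then, for each scale $i\in\{1,\dots,\log_{1+\eps}(Mn)\}$, invoke \cref{lemma:new-modifiedDijkstra} with $\alpha=1/2$ to obtain the sample sets $R^j_i(v)\subseteq Z^j_i(v)=B^j(v)\setminus V^j_i(v)$, so that the bucket $S^j_i(v)$ of vertices of $Z^j_i(v)$ within distance $d$ of every vertex of $R^j_i(v)$ has size $O(\sqrt n)$ for each $(i,j,v)$. Finally, for each $v$ whose primitive estimate exceeds $3d$, run the modified Dijkstra of \cref{thm:3approxansc} from $v$ up to radius $d$, processing only a vertex $x$ if $x$ lies in $S^j_i(v)$ for $x$'s level $j$, and whenever an edge back to $v$ is relaxed, update the cycle estimate of every node on the discovered closed walk (exactly as in the proof of \cref{thm:2k+1ansc-approx}).

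The heart of the argument is the accuracy analysis for a node $v$ with $(1+\eps)^i\le SC(v)<(1+\eps)^{i+1}$, so that $d=(1+\eps)^{i+2}$ satisfies $d\le (1+\eps)^2 SC(v)$ and $(1+\eps)SC(v)<(1+\eps)^{i+2}=d$. Fix the shortest cycle $C_v$ through $v$, and recall that $d(v,u)+d(u,v)=SC(v)$ for every $u\in C_v$ (the left side is a closed walk through $v$, so $\ge SC(v)$, and following the two arcs of $C_v$ gives $\le SC(v)$). For $u\in C_v$ let $j(u)$ be its level, $u\in B^{j(u)}(v)$. I would distinguish two cases. \emph{Case (a): some $u\in C_v$ lies in $V^{j(u)}_i(v)$.} Then there is $q\in Q\cap B^{j(u)}(v)$ with $d(u,q),d(q,u)\le d$, and since $q\in B^{j(u)}(v)$ while $u\in B^{j(u)}(v)$ we get $d(v,q)<(1+\eps)^{j(u)+1}\le (1+\eps)d(v,u)$. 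Hence the primitive estimate from $q$ satisfies
\[
d(q,v)+d(v,q)\le\big(d(q,u)+d(u,v)\big)+(1+\eps)d(v,u)\le d+d(u,v)+(1+\eps)d(v,u)=d+SC(v)+\eps d(v,u)\le d+(1+\eps)SC(v)\le (1+\eps)(2+\eps)SC(v).
\]
\emph{Case (b): every $u\in C_v$ lies in $Z^{j(u)}_i(v)$.} Then for each $u\in C_v$ and each $r\in R^{j(u)}_i(v)\subseteq B^{j(u)}(v)$, $d(u,r)\le d(u,v)+d(v,r)<d(u,v)+(1+\eps)^{j(u)+1}\le d(u,v)+(1+\eps)d(v,u)\le (1+\eps)SC(v)<d$, so $u\in S^{j(u)}_i(v)$ and the modified Dijkstra from $v$ processes $u$; since this holds for \emph{every} vertex of $C_v$ (and $d(v,u)\le SC(v)<d$ keeps all of $C_v$ within the radius bound), the modified Dijkstra relaxes all edges of $C_v$ and recovers $SC(v)$ exactly. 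In either case $v$ gets an estimate at most $(1+\eps)(2+\eps)SC(v)=(2+O(\eps))SC(v)$; rescaling $\eps$ gives the claimed $(2+\eps)$-approximation.

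For the running time I would copy the accounting of \cref{thm:3approxansc}: the in/out Dijkstras from $Q$ cost $\tilde O(m\sqrt n)$; for each of the $O(\log(Mn)/\eps)$ scales, \cref{lemma:new-modifiedDijkstra} costs $\tilde O(m\sqrt n)$, and the modified Dijkstras in that scale process only $\tilde O(\sqrt n)$ relevant vertices per $v$ (one bucket $S^j_i(v)$ of size $O(\sqrt n)$ per level), for $\tilde O(m\sqrt n)$ total; summing over scales yields $\tilde O(m\sqrt n\log M)$, with the $1/\eps$ absorbed into $\tilde O(\cdot)$ since $\eps$ is constant. Correctness of all the ``with high probability'' statements follows by taking a union bound over the $\tilde O(1)$ scales.

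I expect the main obstacle to be the bookkeeping that makes \cref{lemma:new-modifiedDijkstra} directly usable here — specifically, verifying (as sketched after its statement) that sampling $R^j_i(v)$ from the \emph{levels} $Z^j_i(v)=B^j(v)\setminus V^j_i(v)$ rather than from balls still yields a bucket $S^j_i(v)$ that captures every vertex of $C_v$ that is not already ``covered'' by some $q\in Q\cap B^j(v)$, and confirming that the probabilistic guarantees of \cref{lemma:new-setreduce} (now with the witness set restricted to $v\in S$ rather than $v\in V$) go through with the new definition of $V^j_i(v)$ across all scales. The accuracy improvement itself is then the short computation above, whose only new ingredient is the observation that $q\in B^j(v)$ forces $d(v,q)$ to exceed $d(v,u)$ by at most a $(1+\eps)$ factor, which is exactly what converts the ``$2d+SC(v)$'' bound of \cref{thm:3approxansc} into ``$d+(1+\eps)SC(v)$''.
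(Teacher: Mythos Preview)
Your proposal is correct and follows essentially the same route as the paper: replace $V'_i$ by the level-restricted set $V^j_i(v)$, invoke \cref{lemma:new-modifiedDijkstra} in place of \cref{lemma:modifiedDijkstra}, and then run the two-case analysis where the key new observation is that $q\in B^{j}(v)$ forces $d(v,q)\le(1+\eps)d(v,u)$, turning the $3d$ primitive-estimate bound into $\le 2d$. The paper's proof is the same computation (it even writes ``$d(v,u)+d+d(u,v)\le 2d$'' in Case~(a), which is your bound up to the $(1+\eps)$ factor you make explicit).

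One small slip to fix: you copied the threshold ``primitive estimate exceeds $3d$'' verbatim from \cref{thm:3approxansc}, but your Case~(a) only guarantees the primitive estimate is at most $(1+\eps)(2+\eps)SC(v)<2d$. So a vertex $v$ whose primitive estimate lands in $(2d,3d]$ is necessarily in Case~(b), yet under your stated rule the modified Dijkstra from $v$ is never run and you are stuck with an estimate up to $3d\approx 3\,SC(v)$. Lower the threshold to $2d$ (or simply run the modified Dijkstra from every $v$); this is a one-line change and does not affect the running time. The extra ``update every node on the closed walk'' step you borrowed from \cref{thm:2k+1ansc-approx} is harmless but unnecessary here --- the paper's proof only needs the estimate for $v$ itself.
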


\begin{proof} The algorithm is the same as Theorem \ref{thm:3approxansc} where we use Lemma \ref{lemma:new-modifiedDijkstra} instead of Lemma \ref{lemma:modifiedDijkstra} to produce the sets $R_i^j(v)$: Let $\beta=(1+\epsilon)$ and let $d=\beta(1+\epsilon)^{i+1}$. For each node $v$, we sample $R_i^j(v)$ from $Z_i^j(v)$ such that if $S_i^j(v)=\{u\in Z_i^j(v)|\forall r\in R_i^j(v): d(u,r)\le d\}$, then $|S_i^j(v)|\le n^{1-\alpha}$ whp.

Suppose that $(1+\epsilon)^i\le SC(v)<(1+\epsilon)^{i+1}=d/(1+\epsilon)$. Suppose that $u\in C_v$ and $u\in B^j(v)$. 
We again show that either $u\in V_i^j(v)$ or $u\in S_i^j(v)$ for some $j$ such that $u\in Z_i^j(v)$. If $u\notin V_i^j(v)$, then for all $w\in {Z}_i^j(v)$ we have $d(u,w)\le d(u,v)+d(v,w)\le d(u,v)+d(v,u)(1+\epsilon)\le SC(v)(1+\epsilon)\le d$. So for all nodes $r\in R_i^j(v)$ we have $d(u,r)\le d$ and so $u\in S_i^j(v)$. 

We show that stage $i$ of the algorithm gives a $2+O(\epsilon)$-approximation of $SC(v)$. First suppose that $C_v$ has no nodes in $V_i^j(v)$ for any $j$. Then in the modified Dijkstra from $v$, all the nodes in $C_v$ are present since for any $u\in C_v$, $u\in S_i^j(v)$ and the modified Dijkstra visits all the nodes in $S_i^j(v)$. So we visit all of the nodes in $C_v$ and we find this cycle. In this case, our estimate is $SC(v)$.  

Now suppose that there is $u\in C_v$ and some $j$ such that $u\in V_i^j(v)$. Then there is $q\in Q\cap {B}^j(v)$ such that $d(u,q),d(q,u)\le d$. So the paths $vq,qu$ and $uv$ contain a cycle of length at most $d(v,q)+d(q,u)+d(u,v)\le d(v,u)+d+d(u,v)\le 2d$. So the Dijkstra from $q$ gives an estimate of $2d\le (2+O(\epsilon))SC(v)$ for $v$. 
\end{proof}

\section{Approximation Algorithms for Undirected ANSC}
\subsection{Nearly ($1+1/(k-1)$)-approximation for ANSC in $\tilde O(m^{2-2/k}n^{1/k})$ time}

\begin{theorem}
\label{thm:unpubansc}
Given an $n$-node $m$-edge undirected unweighted graph $G$ and an integer $k\geq 2$, there is a randomized algorithm for ANSC that computes for each vertex $v$ an estimate $\hat{c}_v$ such that $SC(v)\leq \hat{c}_v\leq SC(v)+2\left\lceil\frac{SC(v)}{2(k-1)}\right\rceil$ in $\tilde O(m^{2-2/k}n^{1/k})$ time with high probability.
\end{theorem}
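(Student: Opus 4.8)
The plan is to adapt the girth algorithm of Dahlgaard, B{\ae}k Tejs Knudsen, and St{\"o}ckel \cite{dahlgaard2017new}, making two changes: sample \emph{edges} rather than vertices, and replace its cycle-detection step by our \textsc{CycleEstimationDijkstra} algorithm. Their algorithm proceeds in a constant number of geometrically-spaced ``levels'': at level $i$ it samples a set $\mathcal{E}_i$ of edges whose size is tuned so that, with high probability, $\mathcal{E}_i$ hits every set consisting of the $\mu_{i-1}$ closest edges to any fixed vertex, and it runs \textsc{CycleEstimationDijkstra} truncated after exploring $\mu_i$ edges --- from the two endpoints of each edge of $\mathcal{E}_i$, and, at the lowest level, from \emph{every} vertex --- where $\mu_0 < \mu_1 < \cdots$ are chosen (as in \cite{dahlgaard2017new}) so that the total work comes out to $\tilde O(m^{2-2/k} n^{1/k})$. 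For $k=2$ this is exactly the sampling scheme of \cref{thm:warmupp}. We truncate by number of \emph{explored edges} (rather than visited vertices) for two reasons: it caps the cost of one truncated search by its budget regardless of vertex degrees, and --- unlike in the girth setting --- the per-vertex guarantee requires that some search explore \emph{all} edges of the shortest cycle $C_v$ through each $v$, which is controlled precisely by an edge budget (cf.\ the ``closest $m/\sqrt n$ edges'' argument in \cref{thm:warmupp}).

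The second change is to run \textsc{CycleEstimationDijkstra}$(s)$ from each source $s$ instead of an ordinary BFS. By \cref{lem:datastructure}, whenever such a run explores all edges of a cycle $C$ through a vertex $x$, it records an estimate $\le 2d(s,x)+|C|$ for \emph{every} vertex on $C$, in $\tilde O(1)$ amortized time and using $\tilde O(M_s)$ time overall where $M_s$ is the number of edges it explores; moreover every recorded value is the length of a genuine cycle through the corresponding vertex. We let $\hat c_v$ be the smallest estimate ever recorded for $v$, so $SC(v)\le \hat c_v$ holds automatically, and the time bound is unchanged from \cite{dahlgaard2017new}.

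For the upper bound on $\hat c_v$, fix $v$, put $g:=SC(v)$ and $h:=\lceil g/(2(k-1))\rceil$, and fix a shortest cycle $C_v$. As in \cite{dahlgaard2017new}, we case-split on the sizes (measured in edges) of the nested balls $B(v,h)\subseteq B(v,2h)\subseteq\cdots\subseteq B(v,(k-1)h)$; note $(k-1)h\ge \lceil g/2\rceil$, so $B(v,(k-1)h)$ contains all of $C_v$. The case analysis produces a level $i$ and a source $s$ --- either the endpoint of a level-$i$ sampled edge landing within distance $h$ of some $x\in C_v$ (when the relevant ball is too large to fit in the level-$i$ edge budget, forcing $\mathcal{E}_i$ to hit it), or $s=v$ itself (when the relevant ball, hence $C_v$, fits within $v$'s own truncated search) --- such that (truncated) \textsc{CycleEstimationDijkstra}$(s)$ explores every edge of $C_v$ with $d(s,x)\le h$ for some $x\in C_v$. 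Then \cref{lem:datastructure} gives $\hat c_v\le 2d(s,x)+|C_v|\le 2h+g = SC(v)+2\lceil SC(v)/(2(k-1))\rceil$, as claimed.

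The main obstacle is re-deriving the level structure of \cite{dahlgaard2017new} with edge budgets in place of vertex budgets and checking that it still (i) balances to total time $\tilde O(m^{2-2/k}n^{1/k})$, and (ii) for \emph{every} vertex $v$ simultaneously yields a source $s$ with $d(s,x)\le h$ for some $x\in C_v$ whose truncated search reaches all of $C_v$. Concretely this means re-running the hitting-set arguments with ``closest $\mu$ edges'' replacing ``closest $\mu$ vertices'' (in the style of \cref{thm:warmupp}), and handling the small-$g$ corner cases where the balls around $v$ contain only a few edges.
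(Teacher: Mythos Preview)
Your proposal has a genuine gap. The paper's proof (which closely follows \cite{dahlgaard2017new}) is \emph{not} a multi-level sampling scheme with a case split on the nested balls $B(v,jh)$; it is a single-level scheme with two essential ingredients that your sketch lacks.

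First, the paper samples \emph{one} set $S$ of $\tilde\Theta(m/x)$ edges and runs \emph{untruncated} \textsc{CycleEstimationDijkstra} from each endpoint of $S$; truncated searches are run from \emph{every} vertex, but not in $G$: the vertices are sorted by $r(u):=\max\{r:|\overline{B}(u,r)|\le x\}$ in decreasing order, and the truncated search from $u_i$ is performed in the induced subgraph $H(u_i)=G[\{u_1,\dots,u_i\}]$. The analysis then focuses on the vertex $w\in C_v$ that is \emph{last} in this ordering (i.e., $r(w)=\min_{u\in C_v}r(u)=:r$), so that all of $C_v$ lives inside $H(w)$ and, crucially, \emph{every} vertex of $H(w)$ satisfies $|\overline{B}(\cdot,r)|\le x$.

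Second, this uniform ball bound in $H(w)$ is what enables the iteration: the paper proves a ``ball-composition'' claim (\cref{claim:r1r2-path}) saying that if $|\overline{B}(u,r_1)|\le x_1$ and $|\overline{B}(u,r_2)|\le x_2$ for all $u$, then $|\overline{B}(u,r_1+r_2+1)|\le x_1x_2$. Iterating this $k-1$ times in $H(w)$ yields $|\overline{B}_{H(w)}(w,(k-1)r+k-2)|\le x^{k-1}$, and since $(k-1)r+k-2\ge\lceil SC(v)/2\rceil$ under the assumption $r+1>\lceil SC(v)/(2(k-1))\rceil$, the $x^{k-1}$-edge truncated search from $w$ in $H(w)$ sees all of $C_v$.

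Your nested-ball case split on $B(v,h)\subseteq\cdots\subseteq B(v,(k-1)h)$ does not deliver step (ii) of your own plan. Suppose the relevant ball around $v$ (or around some $x\in C_v$) is large, so a level-$i$ sample $s$ lands within distance $h$ of $x$. You then need the \emph{truncated} search from $s$, with budget $\mu_i$, to explore all edges of $C_v$, i.e., you need $|\overline{B}(s,\,h+\lceil SC(v)/2\rceil)|\le\mu_i$. But nothing in your case analysis bounds the ball around $s$; $s$ could itself sit in a dense region. The paper avoids this entirely: sampled sources get \emph{full} searches (so no ball bound is needed there), and the truncated-source case is handled by restricting to $H(w)$, where the ball bound holds for \emph{all} vertices simultaneously and can be bootstrapped via \cref{claim:r1r2-path}. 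Your sketch is missing both the ordered-subgraph trick and the ball-composition claim, and without them the per-vertex guarantee does not go through.
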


\begin{proof}
This algorithm is very similar to the algorithm of Dahlgaard, B{\ae}k Tejs Knudsen, and St{\"o}ckel \cite{dahlgaard2017new} for approximating the girth of a graph. There are two differences: we sample edges instead of vertices, and we use the  \textsc{CycleEstimationDijkstra} data structure.

\paragraph{Algorithm.}
Let $x$ be a parameter to be set later. First, we pick a random sample $S$ of $\Theta(\frac{m\log n}{x})$ edges. With high probability, for every vertex $v$, $S$ hits among the closest $x$ edges to $v$, where the distance from $v$ to an edge $(u,u')$ is defined as $\min\{d(v,u),d(v,u')\}$ (and ties are broken arbitrarily). Then, for all vertices $s$ such that $s$ is an endpoint of an edge in $S$, we run \textsc{CycleEstimationDijkstra}$(s)$. 

For any vertex $v$ and any positive number $r$, let  $\overline{B}(v,r)$ denote the set of edges with at least one endpoint of distance at most $r$ from $v$. For every vertex $v$, let $r(v)$ be the largest integer $r$ such that $|\overline{B}(v,r)|\leq x$.  Let $u_1, \dots, u_n$ denote all of the vertices in sorted order such that $r(u_1)\geq\dots\geq r(u_n)$. For each $i$, let $H(u_i)$ be the graph induced by $\cup_{j\leq i}u_j$. Then, for each $i$, run truncated  \textsc{CycleEstimationDijkstra}$(u_i)$ where we stop after visiting $x^{k-1}$ edges.

For all vertices $v$, our estimate $\hat{c}_v$ is the minimum estimate obtained over all executions of \textsc{CycleEstimationDijkstra}.

\paragraph{Analysis of Correctness.}
Fix $v$ and fix a shortest cycle $C_v$ through $v$. Let $r=\min_{u\in C_v}r(u)$. Since $S$ is a hitting set for the closest $x$ edges to each vertex, $S$ hits an edge with an endpoint $y$ at distance at most $r+1$ from a vertex $z\in C_v$. Thus, by \cref{lem:datastructure}, we have $\hat{c}_v\leq SC(v)+2(r+1)$, which gives the desired estimate if $r+1\leq \left\lceil  \frac{SC(v)}{2(k-1)}\right\rceil  $.
 Thus, in the following, we assume that \begin{equation}
     r+1> \left\lceil  \frac{SC(v)}{2(k-1)}\right\rceil .\label{eqn:assumption2}
 \end{equation}
 
 Let $w$ be the vertex on $C_v$ that appears last in the ordering $u_1,\dots,u_n$. That is, $r(w)=r$, and $H(w)$ contains every vertex on $C_v$. By definition, for all $u\in H(w)$, $r\leq r(u)$. Thus, for all $u\in H(w)$, $|\overline{B}(u,r)|\leq x$.

The following claim will be useful.
\begin{claim}
\label{claim:r1r2-path}
Suppose in graph $G$, there exist nonnegative integers $r_1,x_1,r_2,x_2$ such that every vertex $v$ satisfies $|\overline{B}(v,r_1)|\le x_1$ and $|\overline{B}(v,r_2)|\le x_2$. Then every vertex $v$ satisfies $|\overline{B}(v,r_1+r_2+1)|\le x_1\cdot x_2$.
\end{claim}
\begin{proof}
Given arbitrary edge $(u_1,u_2)\in \overline{B}(v,r_1+r_2+1)$, assume $d(v,u_1) \le r_1+r_2+1$, and let $(p,q)$ be the edge on an arbitrary shortest path from $v$ to $u_1$ such that $d(v,p)+1=d(v,q) = \min\{r_1+1,d(v,u_1)\}$. So $(p,q) \in \overline{B}(v,r_1)$. Then, we have $d(q,u_1) \le r_2$, so $(u_1,u_2) \in \overline{B}(q,r_2)$. We associate $(u_1,u_2)$ to vertex $q$. Then each $q$ is associated with at most $|\overline{B}(q,r_2)| \le x_2$ edges. The number of such $q$ is at most $|\overline{B}(v,r_1)| \le x_1$. So $|\overline{B}(v,r_1+r_2+1)| \le x_1\cdot x_2$.
\end{proof}
 
 By iterating \cref{claim:r1r2-path}, we have that $|\overline{B}_{H(w)}(w,(k-1)r+k-2)|\leq x^{k-1}$. By \cref{eqn:assumption2}, $(k-1)r+k-2\geq\lceil SC(v)/2\rceil$.
 Thus, by \cref{lem:datastructure}, $\hat{c}_v=SC(v)$.

\paragraph{Analysis of time complexity.} The total time for \textsc{CycleEstimationDijkstra}$(s)$ for each $s\in S$ is $\tilde{O}(|S| \cdot m) = \tilde O(m^2/x)$. The total time for truncated \textsc{CycleEstimationDijkstra} from all nodes  is at most $\tilde O(n\cdot x^{k-1})$. Setting $x = (m^2/n)^{1/k}$, the total time complexity is $\tilde{O}(n^{1/k}\cdot m^{2-2/k})$.
\end{proof}

\subsection{Algorithms with close to linear running time}
In this section our goal is to get algorithms with $\tilde{O}(n^{1+1/k}+m)$ running time. 
We first state an approximation algorithm where the multiplicative factor has exponential dependency on $k$, and then we reduce this dependency to $O(k^2)$. Note that we need the first algorithm as we use it to approximate small cycles in the second algorithm.
\smallcycles*
\begin{proof}
Let $P$ be a $1$-fault tolerant $(2k-1)$-spanner of size $O(n^{1+1/k})$ for $G$ that can be constructed in $\tilde{O}(m)$ time by Lemma \ref{lem:fault-tolerant-const}. We transform the graph so that the degrees of vertices in $P$ are uniform. To do this, we replace $v$ and the edges attached to it by a balanced tree with  $O(d(v)/n^{1/k})$ nodes, with $v$ being the root. We assign the original edges adjacent to $v$ in $P$ to the leaves of this tree, so that each leaf gets at most $n^{1/k}$ edges. If an edge $uv$ is not in $P$, we just add it to the graph, by putting an edge between the root of the balanced tree on $v$ and the root of the balanced tree on $u$. The internal edges of the trees will have weight zero. Note that we have added $O(n)$ nodes and edges. So we can assume that in $P$ every node has degree $O(n^{1/k})$, and edges have weight zero or one. 

Let $V_i\subseteq V(G)$ be the nodes with degree between $n^{\frac{i-1}{k}}$ and $n^{\frac{i}{k}}$, for $i=1,\ldots,k$.
Note that we can assume that $P$ includes the edges incident to all the nodes in $V_1$. 

We first run the $k$-approximation algorithm for ANSC of Theorem \ref{thm:kapprox-ansc-undir} in $\tilde{O}(n^{1+2/k})$ time on $P$. We later show that this step gives a good estimation for $SC(v)$ where $v$ is a low degree node.

Let $S_i$ be a sample of size $\tilde{O}(n^{\frac{k-i}{k}})$ for $i=1,\ldots,k-1$, to hit the set of $n^{i/k}$ nearest nodes to each node. We define $p_i(v)$ for each node $v\in V(G)$ to be the closest node in $S_i$ to $v$. We find $p_i(v)$ for each $v$ by doing Dijkstra from the set $S_i$ for each $i$. Let $F(s)$ be the nodes $v$ such that $s=p_i(v)$. Let $G_s$ be the graph consisting of $P$ and the edges in $G$ adjacent to at least one node in $F(s)$. So if $v\in F(s)$, all the edges adjacent to $v$ are present in $G_s$.

For each $s\in S_i$, we do  \textsc{CycleEstimationDijkstra}$(s)$ in $G_s$ up to radius $d(s,p_{i+1}(s))-1$. Note that in expectation we see $O(n^{\frac{i+1}{k}})$ nodes, and each node has degree $O(n^{1/k})$ in $P$ so the Dijkstra from the nodes in $S_i$ take $\tilde{O}(n^{1+2/k}+m)$ time. .

Now we prove that we have a good estimate for all nodes. 

Consider a node $v$, and define $D:=(2k-1)SC(v)$. By Lemma \ref{lem:cycle-est-fault-tolerant} there is a cycle of length at most $D$ in $G_{p_j(v)}$ for all $j$, since all the edges attached to $v$ in $G$ exist in $G_{p_j(v)}$. 

First note that if $v\in V_1$, since all the edges adjacent to $v$ exist in $P$, by Lemma \ref{lem:cycle-est-fault-tolerant} there is a cycle of length at most $D$ around $v$ in $P$, and so the algorithm of Theorem \ref{thm:kapprox-ansc-undir} gives a $kD$ estimate for this cycle. 

Next if $v\in V_{k-1}$, we also have a good estimate for $SC(v)$: $p_{k-1}(v)$ is a neighbor of $v$, and we do full Dijkstra from it in $G_{p_{k-1}(v)}$. So our estimate for $v$ is $D+2\le (2k-1)SC(v)+2$.

So suppose that $v\in V_j$ for $k-1>j>1$.
This means that $p_{j-1}(v)$ must be a neighbor of $v$. Now if for all $k-1>i\ge j-1$ we have $d(v,p_{i+1}(v))\le 2d(v,p_i(v))+\floor{D/2}$, then since $k-2\ge i\ge 1$, we have $d(v,p_{k-1}(v))\le 2^{k-2}(1+D/2)-D/2$. So the estimate that we get from the  \textsc{CycleEstimationDijkstra}$(p_{k-1}(v))$ is at most $2^{k-1}(1+D/2)\le 2^{k-1}+2^{k-2}(2k-1)SC(v)$.

Now if $d(v,p_{k-1}(v))> 2^{k-2}(1+D/2)+D/2$, since $d(v,p_{j-1}(v))=1$, there is $k-1>i\ge j-1$ such that $d(v,p_{i+1}(v))\ge 2d(v,p_i(v))+\floor{D/2}+1$. Take the smallest such $i$, so that for each $j-1\le t<i$ we have $d(v,p_{t+1}(v))\le  2d(v,p_t(v))+\floor{D/2}$. So if we define $s=p_i(v)$, we have $d(v,s)\le 2^{i-j+1}(1+D/2)-D/2\le 2^{k-2}(1+D/2)-D/2$. 

Now note that we have $d(s,p_{i+1}(s))\ge d(v,p_{i+1}(s))-d(v,s)\ge d(v,p_{i+1}(v))-d(v,s)\ge d(v,s)+\floor{D/2}+1$. So the cycle of length at most $D$ around $v$ in $G_{s}$ is in the Dijkstra tree of $s$, and our estimate for it is $2d(v,s)+D$. Using the bound for $d(v,s)$, we get the estimate of $2^{k-1}+2^{k-2}(2k-1)SC(v)$.

So we get a $((2k-1)2^{k-2},2^{k-1})$-approximation algorithm in $\tilde{O}(m+n^{1+2/k})$ time. 
\end{proof}


\begin{theorem}
\label{thm:k^2-approx-ansc}
Given an $n$-node $m$-edge undirected unweighted graph $G$, and a constant integer $k\ge 2$, there is a randomized algorithm that computes a $(k^2,k^32^{k+1})$-approximation for ANSC in $\tilde{O}(n^{1+2/k}+m)$ time with high probability.
\end{theorem}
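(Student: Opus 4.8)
The plan is to produce, for every vertex $v$, two candidate over-estimates of $SC(v)$ — one tailored to ``short'' cycles and one to ``long'' cycles — and return their minimum. Both candidates will satisfy $\hat c_v\ge SC(v)$, so the minimum is automatically a valid lower-bound estimate, and it suffices to guarantee that for each $v$ at least one candidate is at most $k^2\,SC(v)+k^3 2^{k+1}$. Fix the threshold $T=2k(k+2)=\Theta(k^2)$.

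For the short-cycle candidate I would simply run the algorithm of \cref{thm:4k2^2k-approx} with parameter $k$: in $\tilde O(n^{1+2/k}+m)$ time it returns $\hat c_v^{(1)}$ with $SC(v)\le \hat c_v^{(1)}\le (2k-1)2^{k-2}SC(v)+2^{k-1}$ for all $v$. When $SC(v)\le T$ this is at most $(2k-1)2^{k-2}T+2^{k-1}$, which an easy calculation shows is at most $k^3 2^{k+1}$; so in this regime $\hat c_v^{(1)}$ is already a valid $(k^2,k^3 2^{k+1})$-estimate — indeed one with no multiplicative term at all.

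For the long-cycle candidate I would, unlike \cref{thm:6-1anscundir,thm:ansctwoapproxbest}, avoid fault-tolerant spanners and instead use a single sparse spanner of the \emph{whole} graph. Build a $(k,k-1)$-spanner $P$ of $G$ with the $O(m)$-time construction of Baswana et al.\ \cite{k-k-1spanner}; it is a subgraph of $G$ with $O(n^{1+1/k})$ edges. By \cref{lem:cycle-spanner-general} with $(\alpha,\beta)=(k,k-1)$, every $v$ with $SC(v)>T$ has a cycle through it in $P$ of length at most $k\,SC(v)+(k+2)(k-1)$, so $SC(v)\le SC_P(v)\le k\,SC(v)+(k+2)(k-1)$ (the first inequality since $P\subseteq G$). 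Now run the $(k+\epsilon)$-approximate ANSC algorithm of \cref{thm:kapprox-ansc-undir} on $P$ — for $k=2$, where that theorem does not apply, run instead the multiplicative-$2$ algorithm of \cref{thm:warmupp} on a $(2,1)$-spanner of $G$, giving the clean bound $4\,SC(v)+8$. Since $P$ has $O(n^{1+1/k})$ edges this runs in $\tilde O(n^{1+1/k}\cdot n^{1/k})=\tilde O(n^{1+2/k})$ time and returns $\hat c_v^{(2)}$ with $SC_P(v)\le \hat c_v^{(2)}\le (k+\epsilon)\,SC_P(v)$, hence $SC(v)\le \hat c_v^{(2)}\le k^2\,SC(v)+O(k^3)$ for every long-cycle $v$ (for a small enough constant $\epsilon$). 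Returning $\min\{\hat c_v^{(1)},\hat c_v^{(2)}\}$ then proves the theorem: $SC(v)\le T$ is covered by the first candidate and $SC(v)>T$ by the second, the choice $T=2k(k+2)$ simultaneously meets the hypothesis of \cref{lem:cycle-spanner-general} and keeps $(2k-1)2^{k-2}T+2^{k-1}\le k^3 2^{k+1}$, and the total running time is $\tilde O(n^{1+2/k}+m)$.

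The crux — and the reason the design departs from \cref{thm:6-1anscundir,thm:ansctwoapproxbest} — is keeping the multiplicative blow-up polynomial, not exponential, in $k$ while staying inside the $\tilde O(n^{1+2/k})$ budget. The fault-tolerant-spanner route would force a sample set of size $\tilde O(n^{1-1/k})$ to hit the neighborhood of every high-degree vertex once the degree threshold is pushed up, and running \textsc{CycleEstimationDijkstra} from all those sources on the $O(kn^{1+1/k})$-edge fault-tolerant spanner costs $\tilde O(n^{2-1/k})$, which exceeds $n^{1+2/k}$ once $k\ge 4$. The key observation that sidesteps this is that an ordinary sparse spanner of $G$ already contains a detour of length $O(k\cdot SC(v))$ for every long cycle, so a single run of the $\tilde O(mn^{1/k})$-time $k$-approximate ANSC algorithm on that $O(n^{1+1/k})$-edge spanner certifies a $k\cdot k=k^2$ approximation for all long cycles, while the constant-length short cycles — where \cref{lem:cycle-spanner-general} is vacuous — are mopped up by \cref{thm:4k2^2k-approx} with no extra multiplicative cost. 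The one subtlety to handle carefully is that \cref{thm:kapprox-ansc-undir} is a $(k+\epsilon)$- rather than a pure $k$-approximation, so the composition is nominally $(k+\epsilon)k$; this is reconciled with the stated $k^2$ bound by taking $\epsilon$ sufficiently small (the $k=2$ base case already uses a multiplicatively exact subroutine). The remaining bookkeeping — confirming $\hat c_v^{(2)}\ge SC(v)$ from $P\subseteq G$, and tuning $T$ and the additive constants so that the two regimes of $SC(v)$ jointly cover all vertices with total additive error at most $k^3 2^{k+1}$ — is routine.
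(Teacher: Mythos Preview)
Your proposal is essentially the same as the paper's proof: both build a $(k,k-1)$-spanner, invoke \cref{lem:cycle-spanner-general} with threshold $T=2k(k+2)$ to bound $SC_P(v)$ for long cycles, run the $k$-approximate ANSC algorithm of \cref{thm:kapprox-ansc-undir} on the spanner for the long-cycle estimate, use \cref{thm:4k2^2k-approx} for the short-cycle estimate, and return the minimum. You are slightly more careful than the paper in two places---explicitly handling $k=2$ via \cref{thm:warmupp} (since \cref{thm:kapprox-ansc-undir} is stated for $k\ge 3$), and noting the $(k+\epsilon)$ versus $k$ discrepancy---but the argument is otherwise identical.
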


\begin{proof}
Consider the $(k,k-1)$ spanner $H_{k,k-1}$ of $G$. By Lemma \ref{lem:cycle-spanner-general}, if $SC(v)>2k(k+2)$, then there is a cycle of length at most $kSC(v)+(k+2)(k-1)$ around $v$. So we apply the Algorithm of Theorem \ref{thm:kapprox-ansc-undir} to get a $k$-approximation for ANSC on $H_{k,k-1}$ in $\tilde{O}(n^{1+1/k}\cdot n^{1/k})$. This way we get a $(k^2,k(k-1)(k+2))$-approximation for nodes $SC(v)$ for the nodes $v$ with $SC(v)>2k(l+2)$. 

Now for small cycles, we use the algorithm of Theorem \ref{thm:4k2^2k-approx} on $G$ that works in $\tilde{O}(m+n^{1+2/k})$ time. Since small cycles are of length at most $2k(k+2)$, this algorithm returns cycles of length at most $2^{k-1}(k(k+2)(2k-1)+1)$. For each $v$, we take the minimum of these two estimates as our final estimate, and so we get a $(k^2,k^32^{k+1})$-approx algorithm. 
\end{proof}

\section{Approximation Algorithm for $ST$-Shortest Paths}
\label{app:st}

We study a natural special case of $n$-PSP where the pairs $(s_i,t_i)$ come from the product set $S\times T$, where $S$ and $T$ are vertex subset. We call this special case \emph{$ST$-Shortest Paths}.
There is a 2-approximation for $n$-PSP in $\tilde{O}(m\sqrt{n})$ time, implied by the distance oracle of Agarwal \cite{DBLP:conf/esa/Agarwal14}. We show that this running time can be improved for $ST$-Shortest Paths where $|S|,|T|=O(\sqrt{n})$ (so that the total number of vertex pairs is still $O(n)$). For this problem, we provide an algorithm with running time $\tilde O(m\cdot n^{(1+\omega)/8})$. Notice that since $\omega<3$ this is polynomially faster than the $\tilde{O}(m\sqrt n)$ time algorithm that simply computes the distances from every node in $S$.

\begin{restatable}{theorem}{st}\label{thm:st}
Given an $n$-node $m$-edge undirected weighted graph $G$ and vertex pairs $(s_i,t_i)$ for $1\leq i\leq O(n)$, there is a randomized $2$-approximation algorithm for $ST$-Shortest Paths when $|S|,|T|= O(\sqrt{n})$, in $\tilde O(m\cdot n^{(1+\omega)/8})$ time with high probability.
\end{restatable}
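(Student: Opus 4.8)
The plan is to follow the stretch-$2$ distance oracle of Agarwal~\cite{DBLP:conf/esa/Agarwal14}, but to exploit the smallness of $S$ and $T$ so that its $\tilde O(m\sqrt n)$ preprocessing can be sped up with (rectangular) matrix multiplication. Fix a ball-size parameter $x$ (to be tuned), and after the standard degree-uniformization step assume every vertex has degree $O(m/n)$. Sample a set $R$ of $\tilde O(n/x)$ vertices that, with high probability, hits, for every $v\in S\cup T$, the set of $x$ nearest vertices to $v$, and run Dijkstra from every $b\in R$; this costs $\tilde O((n/x)m)$. For each $v\in S\cup T$, run truncated Dijkstra to obtain its ball $B_v$ (the $x$ nearest vertices together with the exact distances $d(v,w)$, $w\in B_v$) and its nearest hub $p(v)\in R$; because $|S\cup T|=O(\sqrt n)$, this costs only $\tilde O(\sqrt n\cdot x\cdot m/n)=\tilde O(xm/\sqrt n)$. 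For each queried pair $(s,t)\in S\times T$ we output the minimum of: (a) $d(s,t)$ if $t\in B_s$ or $s\in B_t$; (b) $d(s,p(s))+d(p(s),t)$; (b$'$) $d(t,p(t))+d(p(t),s)$; and (d) $\min_{w\in B_s\cap B_t}\bigl(d(s,w)+d(w,t)\bigr)$. Given the two precomputation phases, (a), (b), (b$'$) are $O(1)$ table look-ups per pair.

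Correctness is a short case analysis on the ball radii $\rho_s=\max_{w\in B_s}d(s,w)$ and $\rho_t$, noting $d(s,p(s))\le\rho_s$ and $d(t,p(t))\le\rho_t$. If $\rho_s\le d(s,t)/2$ then
\[ d(s,p(s))+d(p(s),t)\ \le\ 2d(s,p(s))+d(s,t)\ \le\ 2\rho_s+d(s,t)\ \le\ 2d(s,t),\]
so (b) is a $2$-approximation, and symmetrically (b$'$) handles $\rho_t\le d(s,t)/2$. Otherwise $\rho_s,\rho_t\ge\lceil d(s,t)/2\rceil$, hence $B_s\supseteq B(s,\lceil d(s,t)/2\rceil)$ and $B_t\supseteq B(t,\lceil d(s,t)/2\rceil)$; since every vertex on a shortest $s$--$t$ path is within $\lceil d(s,t)/2\rceil$ of $s$ or of $t$, the whole path lies in $B_s\cup B_t$, and because $\rho_s+\rho_t\ge d(s,t)$ there is a ``splice vertex'' $w^*$ on the path with $w^*\in B_s\cap B_t$, so (d) $\le d(s,w^*)+d(w^*,t)=d(s,t)$. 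In the weighted case this last argument needs the routine modification of also considering the single (heavy) edge of the path that straddles the overlap region, which costs nothing asymptotically.

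It remains to evaluate (d) over all $O(n)$ pairs, which is where matrix multiplication enters and is the main obstacle. Bucket distances into $O(\log^2(nM))$ geometric scales; for each scale pair $(i,j)$ form Boolean matrices $A_i\in\{0,1\}^{S\times V}$ with $A_i[s,w]=1$ iff $w\in B_s$ and $d(s,w)\in[(1+\eps)^i,(1+\eps)^{i+1})$, and $B_j\in\{0,1\}^{V\times T}$ analogously; then $(A_iB_j)[s,t]=1$ certifies a walk of length at most $(1+\eps)^{i+1}+(1+\eps)^{j+1}$, and the minimum of these certificates over valid $(i,j)$ is within a $(1+\eps)^2$ factor of (d). Taking $\eps$ a small constant with $(1+\eps)^2\le 2$, this gives (d) $\le 2\,d(s,t)$ precisely on the both-local pairs (the only ones that rely on (d)), so the final estimate is a genuine $2$-approximation. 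Each $A_iB_j$ is a rectangular Boolean product of an $n^{1/2}\times n$ matrix with $\le x$ ones per row by an $n\times n^{1/2}$ matrix with $\le x$ ones per column; restricting to the nonempty columns/rows and splitting the middle dimension into blocks of size $n^{1/2}$ reduces it to $O(x)$ square Boolean products of dimension $n^{1/2}$, i.e. $\tilde O(x\,n^{\omega/2})$ per scale pair. The running time is then $\tilde O\bigl(xm/\sqrt n+(n/x)m+x\,n^{\omega/2}\bigr)$, and choosing $x=\Theta(n^{(7-\omega)/8})$ balances the hub phase and the matrix-multiplication phase at $\tilde O(m\,n^{(1+\omega)/8})$ while the ball-building phase is lower order (using $\omega\ge 2$); for edge densities where the matrix-multiplication term would otherwise dominate one first sparsifies the graph via a $(1+\eps,\beta)$-spanner (\cref{cor:very-sparse-spanner}), folding its error into the same small $\eps$. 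The points I would be most careful about are making the rectangular Boolean-product bound $\tilde O(x\,n^{\omega/2})$ precise (including the cost of recombining the $O(x)$ block products), and checking that the spanner/density bookkeeping keeps the overall bound at $\tilde O(m\,n^{(1+\omega)/8})$; everything else is a direct adaptation of Agarwal's oracle together with Zwick-style distance bucketing.
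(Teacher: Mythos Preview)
Your high-level skeleton (sample hubs for the ``far'' case, a matrix product to detect ball intersections for the ``near'' case) matches the paper, but the running-time analysis does not go through, and the fix you propose does not repair it.

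The concrete gap is your matrix-multiplication bound. You reduce each product $A_i B_j$ to $O(x)$ square products of dimension $\sqrt n$, costing $\tilde O(x\,n^{\omega/2})$ overall. With your choice $x=n^{(7-\omega)/8}$ this term is $n^{(7+3\omega)/8}$, which exceeds $m\,n^{(1+\omega)/8}$ whenever $m<n^{(3+\omega)/4}$ (about $m<n^{1.34}$ for the current $\omega$). Your proposed remedy, sparsifying via a $(1+\eps,\beta)$-spanner, cannot help here: the matrix term does not depend on $m$ at all, so reducing $m$ does nothing to it, and in any case that spanner injects an additive $\beta$ and breaks the pure $2$-approximation the theorem claims. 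What the paper does instead is invoke the \emph{sparse} rectangular-product bound of Kaplan--Sharir--Verbin / Yuster--Zwick: a product of an $a\times b$ by a $b\times a$ matrix with $c$ nonzeros costs $O(c\cdot a^{(\omega-1)/2})$ (when $c\ge a^{(\omega+1)/2}$). With $a=\sqrt n$ and $c=O(\sqrt n\,r)$ (where $r$ is the edge-ball size), this gives $O(r\,n^{(1+\omega)/4})$; balancing against the hub cost $\tilde O(m^2/r)$ yields $r=m/n^{(1+\omega)/8}$ and the stated $\tilde O(m\,n^{(1+\omega)/8})$ uniformly in $m$.

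A second, smaller point: the bucketing into $O(\log^2)$ scale pairs is unnecessary, and it is what forces your awkward weighted ``straddling edge'' caveat. The paper observes that once $t\notin T_s$ and $s\notin T_t$, \emph{any} vertex $u\in T_s\cap T_t$ already satisfies $d(s,u)\le d(s,t)$ and $d(u,t)\le d(s,t)$ (Dijkstra from $s$ reached $u$ before $t$, and symmetrically), so $d(s,u)+d(u,t)\le 2d(s,t)$ with no need for $u$ to lie on the shortest path. Hence a single Boolean product with standard witness-finding suffices: recover any $u\in T_s\cap T_t$ and look up $d(s,u),d(u,t)$ in the stored trees. This handles weighted graphs directly and avoids both the $\log^2$ blowup and the midpoint-existence issue you flagged.
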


 The idea is to use the power of sparse matrix multiplication to quickly determine  which of the truncated Dijkstras from $s_i$ and $t_i$ overlap, for all $(s_i,t_i)$ pairs at once.
We will need the following lemma on sparse rectangular matrix multiplication.

\begin{lemma}[Sparse rectangular matrix multiplication, {\cite[Theorem 2.5(i)]{DBLP:conf/compgeom/KaplanSV06}}, see also \cite{DBLP:journals/talg/YusterZ05}]
Let $A$ be an $a\times b$ matrix, and $B$ be an $b \times a$ matrix, each containing at most $c$ non-zero entries. If $c\ge a^{(\omega+1)/2}$, then computing $AB$ can be done in $c\cdot a^{(\omega-1)/2}$ time.
\label{lem:sparserect}
\end{lemma}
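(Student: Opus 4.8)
The plan is to prove this by the standard sparse--dense split on the ``middle'' dimension (essentially the argument of Yuster and Zwick, which \cite{DBLP:conf/compgeom/KaplanSV06} specializes to the rectangular setting). I would write $A$ as a list of columns $A^{(1)},\dots,A^{(b)}$ and $B$ as a list of rows $B_{(1)},\dots,B_{(b)}$, so that $AB=\sum_{j=1}^{b}A^{(j)}B_{(j)}$ expresses the product as a sum of rank-one outer products. Let $\alpha_j$ and $\beta_j$ denote the numbers of nonzeros in $A^{(j)}$ and $B_{(j)}$ respectively; then $\sum_j\alpha_j\le c$ and $\sum_j\beta_j\le c$. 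I would keep the output $C=AB$ as a dense $a\times a$ array initialized to $0$ in $O(a^2)$ time, and discard every index $j$ with $\alpha_j=0$ or $\beta_j=0$, since such $j$ contributes nothing to the sum.

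Next, fix the threshold $\Delta=\lceil a^{(\omega-1)/2}\rceil$ and call an index $j$ \emph{light} if $\alpha_j\le\Delta$ and $\beta_j\le\Delta$, and \emph{heavy} otherwise. For each light $j$ I would add $A^{(j)}B_{(j)}$ into $C$ directly by enumerating its $\alpha_j\beta_j$ nonzero entries, at cost $O(\alpha_j\beta_j)$; since $\alpha_j\beta_j\le\Delta\alpha_j$ for light $j$, the light indices together cost $O(\Delta\sum_j\alpha_j)=O(\Delta c)=O(c\,a^{(\omega-1)/2})$. Because $\sum_j\alpha_j\le c$ and $\sum_j\beta_j\le c$, at most $2c/\Delta$ indices are heavy; I would stack their columns of $A$ into an $a\times h$ matrix $A_h$ (with $h\le 2c/\Delta$) and their rows of $B$ into an $h\times a$ matrix $B_h$, so that $\sum_{j\text{ heavy}}A^{(j)}B_{(j)}=A_hB_h$, reducing the heavy part to one rectangular matrix product which is then added to $C$.

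To compute $A_hB_h$: if $h\le a$, pad both factors to $a\times a$ and use a single fast square multiplication, costing $O(a^\omega)$; if $h>a$, split the shared dimension into $\lceil h/a\rceil$ blocks of size at most $a$ and perform $\lceil h/a\rceil$ square multiplications of $a\times a$ matrices, for $O\!\big((h/a)a^\omega+a^\omega\big)=O(h\,a^{\omega-1}+a^\omega)$. Either way the heavy part costs $O\!\big((c/\Delta)a^{\omega-1}+a^\omega\big)=O(c\,a^{(\omega-1)/2}+a^\omega)$ for the chosen $\Delta$. Summing the light part, the heavy part, and the $O(a^2)$ overhead gives a total running time of $O\!\big(c\,a^{(\omega-1)/2}+a^\omega+a^2\big)$.

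It remains to absorb the last two terms, and this is exactly where the hypothesis $c\ge a^{(\omega+1)/2}$ enters, together with $\omega\ge 2$: indeed $a^\omega=a^{(\omega+1)/2}a^{(\omega-1)/2}\le c\,a^{(\omega-1)/2}$, and $a^2=a^{(5-\omega)/2}a^{(\omega-1)/2}\le a^{(\omega+1)/2}a^{(\omega-1)/2}\le c\,a^{(\omega-1)/2}$, the middle inequality holding because $(5-\omega)/2\le(\omega+1)/2$ when $\omega\ge 2$. Hence the total is $O(c\,a^{(\omega-1)/2})$, as claimed. The only slightly delicate point is the rectangular-product accounting in the heavy case: no fast \emph{rectangular} matrix multiplication is needed, because the density assumption $c\ge a^{(\omega+1)/2}$ is precisely the threshold at which the crude block decomposition of an $a\times h$ by $h\times a$ product (and the $a^\omega$ term it produces) stops being the bottleneck; checking that the chosen $\Delta$ balances the light and heavy costs while keeping $a^\omega$ and $a^2$ dominated is the one computation that has to be done carefully.
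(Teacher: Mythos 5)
Your argument is correct, but note that the paper itself offers no proof of this lemma: it is imported verbatim as a citation to Theorem~2.5(i) of Kaplan--Sharir--Verbin (see also Yuster--Zwick), so there is no in-paper proof to compare against. What you have written is a legitimate self-contained derivation of the stated bound: the light/heavy split on the middle dimension with threshold $\Delta=a^{(\omega-1)/2}$ gives $O(\Delta c)$ for the light outer products, at most $2c/\Delta$ heavy indices whose stacked rectangular product is handled by chopping the shared dimension into blocks of size $a$ at cost $O\bigl((c/\Delta)a^{\omega-1}+a^{\omega}\bigr)$, and the hypothesis $c\ge a^{(\omega+1)/2}$ together with $\omega\ge 2$ is exactly what absorbs the $a^{\omega}$ and $a^{2}$ terms into $O(c\,a^{(\omega-1)/2})$; all of these estimates check out. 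The difference from the cited sources is that they prove more refined statements (their general theorems exploit fast \emph{rectangular} matrix multiplication and cover sparser regimes, where the crude square-block decomposition would not be optimal), whereas your elementary argument deliberately works only in the dense regime $c\ge a^{(\omega+1)/2}$ — which is precisely the regime the lemma asserts and the regime the paper uses in \cref{thm:st}, where $r\ge n^{(\omega-1)/4}$ is imposed to guarantee it. So your proof buys self-containedness and avoids rectangular-MM machinery at no loss for this application; the cited results buy generality. The only cosmetic caveat is that running times involving $\omega$ should strictly be read with an $a^{o(1)}$ (or $\tilde O$) slack, as is standard and as the cited lemma itself implicitly does.
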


\noindent\begin{proofof}{\cref{thm:st}}
From every node $u \in  S\cup T$, we use BFS to compute the ball around $u$ with $r$ edges, denoted by $T_u$, where $r$ is a parameter to be determined later.

For every pair of $s\in S$ and $t\in T$, there are three possible cases:

\begin{itemize}
\item Case 1: $T_s$ and $T_t$ have no common vertices.
    
    The algorithm for this case is the same as before: at the beginning we uniformly sample $\tilde O(m/r)$ edges of the graph, and let $X$ denote the set of all the endpoints of these edges. Then, with high probability, $X$  hits all the BFS trees. We compute shortest paths from every node in $X$, in $\tilde O(m\cdot |X|)=\tilde O(m^2/r)$ total time. 
    
    For the vertex pair $(s,t)$, without loss of generality we may assume the radius of $T_s$ is smaller than that of $T_t$. Letting $v \in T_s\cap X$, we have the estimation $d(s,t)\le d(s,v)+d(v,t)\le 2\cdot d(s,v)+d(s,t)\le 2\cdot d(s,t)$.
    
    \item Case 2: $T_s$ and $T_t$ have common vertices.
    \begin{itemize}
        \item Case 2(a): $t \in T_s$ or $s \in T_t$.
        
    In this case, we can immediately obtain the exact value of $d(s,t)$ after BFS.
        \item Case 2(b): $t \notin T_s$ and $s \notin T_t$.
        
        For any $u\in T_s \cap T_t$, we must have $d(s,u)\le d(s,t)$ and $d(u,t)\le d(s,t)$.
    Hence, it suffices to find an arbitrary node $u \in T_s \cap T_t$, and then estimate the distance by $d(s,t) \le d(s,u)+d(u,t) \le  2\cdot d(s,t)$.
    
    We will use sparse rectangular matrix multiplication to solve this task for all $(s,t)\in S\times T$.  Define an $|S|\times n$ Boolean matrix $A$, where the $u$-th entry on the $s$-th row equals $1$ if and only if $u\in T_s$. Similarly, define an $n\times |T|$ Boolean matrix $B$, where the $u$-th entry on the $t$-th column equals $1$ if and only if $u\in T_t$. Let $C=AB$. Then, $C_{s,t}$ is non-zero if and only if  $T_s \cap T_t$ is non-empty. Moreover, when $C_{s,t}$ is non-empty, we can also find a witness $u$ with almost no overhead using standard sampling techniques \cite{DBLP:journals/jc/GalilM93}. 
    
    The number of non-zeros in $A$ and $B$ is at most  $|S|\cdot r = O(\sqrt{n}r)$.  
     Then, by  \cref{lem:sparserect},  
we can multiply $A$ and $B$ in $O(\sqrt{n}r \cdot  \sqrt{n}^{(\omega-1)/2}) = O(r\cdot n^{(1+ \omega)/4})$ time, provided that
$ r \ge n^{(\omega-1)/4}$.
    \end{itemize}
\end{itemize}

Hence, the total time complexity is  $\tilde O(r\cdot n^{(1+ \omega)/4} + m^{2}/r)$, which can be made $\tilde O(m\cdot n^{(1+\omega)/8})$ by choosing $r = m / n^{(1+\omega)/8}$.
\end{proofof}


\section{Applying Agarwal's Distance Oracle}
\label{sec:agrawal}
Agarwal \cite{DBLP:conf/esa/Agarwal14} designed  distance oracles with trade-offs between preprocessing time complexity and query time complexity. They can be used to solve the $n$-PSP problem by balancing the time complexity of two parts.
\begin{theorem}[Agarwal \cite{DBLP:conf/esa/Agarwal14}]
    \label{thm:aga}
    Let $k\ge 1$.  Given a weighted undirected graph with $n$ nodes and $m$ edges, one can construct a distance oracle with strecth $1 + \frac{1}{k}$,  query time $O((\alpha m/n)^k)$, and construction time $\tilde O(mn/\alpha)$, for any parameter $1\le \alpha \le n$.

    Alternatively, one can construct a distance oracle with strecth $1 + \frac{1}{k+0.5}$,  query time $O(\alpha \cdot (\alpha m/n)^k)$, and construction time $\tilde O(mn/\alpha)$, for any parameter $1\le \alpha \le n$.
\end{theorem}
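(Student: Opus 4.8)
\textbf{Proof proposal for Theorem~\ref{thm:aga} (Agarwal's distance oracle).}

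The plan is to follow Agarwal's original construction, which is itself a refinement of the Thorup--Zwick hierarchy of sampled vertex sets, but with a key twist: instead of sampling at geometrically shrinking densities $n^{-i/k}$, one samples at densities controlled by the parameter $\alpha$, so that the cost of querying balances the cost of preprocessing in a tunable way. First I would fix the sampling scheme: pick a nested sequence of sets $V = A_0 \supseteq A_1 \supseteq \dots \supseteq A_k$, where each $A_i$ is obtained from $A_{i-1}$ by including each vertex independently with probability $p = (\alpha/n)\cdot (\text{something})$ — more precisely, calibrated so that $\abs{A_i}$ is roughly $n\cdot (\alpha/n)^{i}$ in expectation. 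Then, exactly as in Thorup--Zwick, for each vertex $v$ define its \emph{pivot} $p_i(v)\in A_i$ as the closest vertex of $A_i$ to $v$ (ties broken consistently), and define the \emph{bunch} $B(v) = \bigcup_{i=0}^{k-1} \{w\in A_i\setminus A_{i+1} : d(v,w) < d(v,p_{i+1}(v))\}$. The preprocessing stores, for each $v$, the distances $d(v,w)$ for $w\in B(v)$ and the pivots $p_i(v)$ with their distances, in a hash table for $O(1)$ lookup.

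The construction time is dominated by running (modified) Dijkstra from each $A_i$ to compute the pivots and bunches. The standard Thorup--Zwick analysis shows that the expected size of $B(v)$ at level $i$ is $O(\abs{A_{i-1}}/\abs{A_i}) = O(n/\alpha)$, and the total work to compute all bunches via the Thorup--Zwick ``cluster'' trick is $\tilde O(m)$ times the expected bunch size, i.e. $\tilde O(mn/\alpha)$; the level-by-level Dijkstra from $A_i$ itself costs $\tilde O(m\abs{A_i})$, which is dominated. This gives the claimed $\tilde O(mn/\alpha)$ preprocessing. For the query: to answer $d(u,v)$ with stretch $1+1/k$, the query algorithm does \emph{not} simply walk up the hierarchy once (that gives stretch $2k-1$); instead it performs a more expensive search. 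The idea is that from $u$ one explores all vertices whose distance to $u$ is at most the current best estimate, restricted cleverly to the bunches, and symmetrically from $v$; because each bunch level has size $O(n/\alpha)$ and one must in the worst case chase through $k$ levels of a ``$(\alpha m/n)$-ary'' search structure, the query time becomes $O((\alpha m/n)^k)$. The stretch bound $1+1/k$ is obtained by the averaging argument over the $k+1$ levels: one of the $k$ ``hops'' taken along pivots must be short relative to $d(u,v)/k$, and detouring through that pivot inflates the estimate by at most a $1+1/k$ factor — this is the crux of why $k$ levels buy stretch $1+1/k$ rather than $2k-1$.

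The main obstacle, and the step I expect to take the most care, is the query procedure and its simultaneous stretch/time analysis — in particular arguing that the more aggressive ``explore-the-whole-ball'' query (as opposed to the cheap one-pass Thorup--Zwick query) has running time exactly $O((\alpha m/n)^k)$ and not something larger. This requires a delicate charging argument: one must show that the set of vertices explored at query time, level by level, stays within $O(n/\alpha)$ per level and that the branching over the $k$ levels multiplies to $(\alpha m/n)^k$ rather than $(n/\alpha)^k$ — the extra factor $m/n$ coming from the fact that each explored vertex may have degree $m/n$ on average, and the balancing of $\alpha$ is precisely what reconciles these. The second (alternative) oracle with stretch $1+\frac{1}{k+0.5}$ is obtained by the same construction but running the averaging argument over $k$ full levels \emph{plus a half-level} (an extra partial exploration step costing a factor $\alpha$ in query time), which is why its query time is $O(\alpha\cdot(\alpha m/n)^k)$ with the same $\tilde O(mn/\alpha)$ preprocessing; the details there mirror the $1+1/k$ case with one additional term in the telescoping stretch estimate. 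Throughout, the sampling bounds ($\abs{A_i}$, bunch sizes) hold with high probability by standard Chernoff arguments, and I would state them that way rather than grinding through the constants.
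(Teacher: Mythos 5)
First, note that the paper does not prove \cref{thm:aga} at all: it is imported verbatim from Agarwal's ESA~2014 paper, and the only thing proved in this paper is the corollary that balances the parameter $\alpha$ to get the $n$-PSP running times. So there is no in-paper proof to match; your proposal has to stand on its own as a reconstruction of Agarwal's oracle, and as such it has a genuine gap at its core.

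The gap is exactly the piece you flag as ``the main obstacle'': the query procedure and its simultaneous stretch/time analysis. Your construction is the Thorup--Zwick hierarchy of nested samples with pivots and bunches, and you assert that a more aggressive ball-exploring query turns its native stretch $2k-1$ into stretch $1+1/k$ with query time $O((\alpha m/n)^k)$. The stretch argument you offer --- averaging over the $k$ pivot hops so that ``one hop is short relative to $d(u,v)/k$'' and ``detouring through that pivot inflates the estimate by at most a $1+1/k$ factor'' --- is not correct as stated: routing through a pivot $p$ gives an estimate $d(u,p)+d(p,v)\le 2d(u,p)+d(u,v)$, so to certify stretch $1+1/k$ you must show the relevant pivot lies within roughly $d(u,v)/(2k)$ of an endpoint, and nothing in the TZ averaging argument delivers that. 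In Agarwal's construction this is precisely what the recursive vicinity-growing query certifies: a single landmark set sampled at rate $\approx 1/\alpha$ (giving the $\tilde O(mn/\alpha)$ preprocessing), vertex ``vicinities'' measured in \emph{edges} of size $\approx \alpha m/n$, and a depth-$k$ recursion at query time whose failure at every level forces the landmark to be very close relative to $d(u,v)$; the exponent $k$ in the query time comes from this recursion depth, not from $k$ nested sample sets. Your $k$-level nested sampling with per-level bunch size $n/\alpha$ does not obviously produce either the $(\alpha m/n)^k$ query bound or the sub-$2$ stretch, and the ``delicate charging argument'' that would reconcile $n/\alpha$ versus $\alpha m/n$ is exactly the content of the theorem, which the proposal defers rather than supplies. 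The same applies to the $1+\tfrac{1}{k+0.5}$ variant, where the extra factor $\alpha$ is asserted but not derived. As written, the proposal is a plausible research plan, not a proof.
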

\begin{corollary}
     Given a weighted undirected graph with $n$ nodes and $m$ edges, the $n$-PSP problem can be solved with $1+\frac{1}{k}$ approximation in 
     \[ \tilde O(m^{2-2/(k+1)}n^{1/(k+1)})\]
     time.

     Alternatively, it can be solved with $1+\frac{1}{k+0.5}$ approximation in
     \[ \tilde O(m^{2-3/(k+2)}n^{2/(k+2)})\]
     time.
\end{corollary}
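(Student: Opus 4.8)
The plan is to invoke Agarwal's distance oracle (\cref{thm:aga}) and balance the preprocessing time against the total query cost. Since an $n$-PSP instance has only $O(n)$ query pairs, answering each one by a single query, using the first oracle of \cref{thm:aga} with parameter $\alpha$, gives a total running time
\[
\tilde O\!\left(\frac{mn}{\alpha}\right) + O\!\left(n\cdot\Bigl(\frac{\alpha m}{n}\Bigr)^{k}\right).
\]
I would set the two terms equal, which yields $\alpha^{k+1} = n^{k}/m^{k-1}$, i.e.\ $\alpha = n^{k/(k+1)}\,m^{-(k-1)/(k+1)}$, and substituting this value back in gives a bound of
\[
\tilde O\!\left(m^{\,2k/(k+1)}\, n^{1/(k+1)}\right) = \tilde O\!\left(m^{\,2-2/(k+1)}\, n^{1/(k+1)}\right),
\]
which is exactly the first claimed running time.

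For the second bound I would carry out the identical computation with the \emph{second} oracle of \cref{thm:aga} (stretch $1+\tfrac{1}{k+0.5}$, query time $O(\alpha(\alpha m/n)^k)$, construction time $\tilde O(mn/\alpha)$). The total time is now $\tilde O(mn/\alpha) + O\bigl(n\alpha(\alpha m/n)^k\bigr)$, balanced by $\alpha^{k+2} = n^{k}/m^{k-1}$, i.e.\ $\alpha = n^{k/(k+2)}\,m^{-(k-1)/(k+2)}$, and plugging back in gives
\[
\tilde O\!\left(m^{\,(2k+1)/(k+2)}\, n^{2/(k+2)}\right) = \tilde O\!\left(m^{\,2-3/(k+2)}\, n^{2/(k+2)}\right),
\]
matching the second claimed bound.

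The only point that needs checking is that the chosen $\alpha$ respects the admissible range $1\le \alpha\le n$ of \cref{thm:aga}. In both cases $\alpha\le n$ holds unconditionally (it is equivalent to $n\cdot m^{k-1}\ge 1$), while $\alpha\ge 1$ holds precisely when $m\le n^{k/(k-1)}$. In the remaining (very dense) regime $m> n^{k/(k-1)}$ I would not use the oracle at all: running Dijkstra from every vertex solves $n$-PSP exactly in $\tilde O(mn)$ time, and a one-line inequality check shows $mn \le m^{2-2/(k+1)}n^{1/(k+1)}$ and $mn \le m^{2-3/(k+2)}n^{2/(k+2)}$ whenever $m\ge n^{k/(k-1)}$, so the stated bounds still hold there. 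This parameter bookkeeping is really the only ``obstacle,'' and it is routine; the content of the corollary is entirely the time-balancing computation above.
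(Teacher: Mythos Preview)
Your proposal is correct and follows essentially the same approach as the paper: invoke \cref{thm:aga}, balance the preprocessing cost $\tilde O(mn/\alpha)$ against the total query cost, and fall back to $\tilde O(mn)$ APSP when the balancing value of $\alpha$ drops below $1$. Your choice of $\alpha$ coincides with the paper's (written there as $\alpha=(n/m)(m^2/n)^{1/(k+1)}$ and $\alpha=(n/m)(m^3/n^2)^{1/(k+2)}$), and your explicit verification of the admissible range and of the $mn$ fallback inequality is slightly more detailed than the paper's one-line remark.
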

\begin{proof}
 We build a distance oracle from \cref{thm:aga} on the input graph, and then query the distance oracle to answer all the questions. 

    For stretch $1+\frac{1}{k}$, the total time using distance oracle is (ignoring log factors)
    \begin{align*}
        \frac{mn}{\alpha} + n\cdot \left ( \frac{\alpha m}{n}\right)^k  \approx 
        m^{2-2/(k+1)}n^{1/(k+1)}
    \end{align*}
    by setting $\alpha= (n/m) \cdot (m^2/n)^{1/(k+1)}$.

    For stretch $1+\frac{1}{k+0.5}$, the total time using distance oracle is (ignoring log factors)
    \begin{align*}
        \frac{mn}{\alpha} + n\cdot \alpha \cdot \left ( \frac{\alpha m}{n}\right)^k  \approx
       m^{2-3/(k+2)}n^{2/(k+2)}
    \end{align*}
    by setting  $\alpha =(n/m)\cdot (m^3/n^2)^{1/(k+2)} $.
    
    Note that $\alpha\ll 1$ happens precisely when the claimed time bound is larger than $nm$, and in such case we  can simply solve APSP in $\tilde O(mn)$ time.
\end{proof}

\end{document}